\tikzset{mode/.style={font=\scriptsize}}
\tikzset{gadget/.style={->,>=stealth,initial text=,minimum size=7pt,auto,on grid,scale=1,inner sep=1pt,node distance=1cm}}
\tikzset{every state/.style={minimum size=15pt,inner sep=1pt,fill=black!10,draw=black!70,thick}}
\newcommand{\oset}[3][0ex]{%
  \mathrel{\mathop{#3}\limits^{
    \vbox to#1{\kern-2\ex@
    \hbox{$\scriptstyle#2$}\vss}}}}
\DeclareDocumentCommand{\autstep}{O{}}{\oset{#1}{\rightarrow}}
\DeclareDocumentCommand{\autsteps}{O{}}{\oset{#1}{\rightarrow}}
\DeclareDocumentCommand{\deriv}{O{}}{\,\Rightarrow_{#1}\,}
\DeclareDocumentCommand{\derivs}{O{}}{\,\oset{*}{\Rightarrow}_{#1}\,}
\DeclareDocumentCommand{\zvassnz}{o}{%
	\ensuremath{%
		\IfNoValueTF{#1}{%
			\Z\operatorname{-\mathsf{VASS}}_{\mathsf{nz}}
		}{
			\Z^{\pm}_{#1}\operatorname{-\mathsf{VASS}}_{\mathsf{nz}}
		}
	}
}
\newcommand{\nreach}{\leadsto}
\newcommand{\nreachs}{\oset{*}{\nreach}}
\newcommand{\set}[1]{\{ #1 \}}
\newcommand{\tuple}[1]{\langle #1 \rangle}
\newcommand{\dM}{\mathbb{M}}
\newcommand{\cA}{\mathcal{A}}
\newcommand{\cB}{\mathcal{B}}
\newcommand{\cV}{\mathcal{V}}
\newcommand{\cG}{\mathcal{G}}
\newcommand{\bA}{\mathbf{A}}
\newcommand{\bB}{\mathbf{B}}
\newcommand{\bC}{\mathbf{C}}
\newcommand{\ba}{\mathbf{a}}
\newcommand{\bb}{\mathbf{b}}
\newcommand{\bd}{\mathbf{d}}
\newcommand{\bc}{\mathbf{c}}
\newcommand{\be}{\mathbf{e}}
\newcommand{\bg}{\mathbf{g}}
\newcommand{\bp}{\mathbf{p}}
\newcommand{\bu}{\mathbf{u}}
\newcommand{\bv}{\mathbf{v}}
\newcommand{\bx}{\mathbf{x}}
\newcommand{\by}{\mathbf{y}}
\newcommand{\bzero}{\mathbf{0}}
\newcommand{\B}{\mathbb{B}}
\newcommand{\M}{\mathbb{M}}
\newcommand{\N}{\mathbb{N}}
\newcommand{\Z}{\mathbb{Z}}
\newcommand{\REACH}{\mathsf{REACH}}
\newcommand{\REVREACH}{\mathsf{BIREACH}}
\newcommand{\SUBMEM}{\mathsf{SUBMEM}}
\newcommand{\SCpm}{\mathsf{SC}^\pm}
\newcommand{\ILD}{\mathsf{ILD}}
\newcommand{\compL}{\mathsf{L}}
\newcommand{\compP}{\mathsf{P}}
\newcommand{\EXPSPACE}{\mathsf{EXPSPACE}}
\newcommand{\NL}{\mathsf{NL}}
\newcommand{\EXPTIME}{\mathsf{EXP}}
\newcommand{\NEXPTIME}{\mathsf{NEXP}}
\newcommand{\NP}{\mathsf{NP}}
\newcommand{\PTIME}{\mathsf{P}}
\newcommand{\Cfour}{\mathsf{C4}}
\newcommand{\Pfourlooped}{\mathsf{P4}^\circ}
\newcommand{\Cfourlooped}{\mathsf{C4}^\circ}
\newcommand{\Eff}{\mathsf{Eff}}
\newcommand{\Reach}{\mathsf{Reach}}
\newcommand{\rt}[1]{{#1}^{\tiny{\trianglepa}}}
\newcommand{\nort}[1]{{#1}^{\tiny{\trianglepafillha}}}
\newcommand{\vertup}{\hat{V}}
\newcommand{\vertdown}{\check{V}}
\newcommand{\projup}{\hat{\pi}}
\newcommand{\projdown}{\check{\pi}}
\newcommand{\inv}[1]{{#1}^\dagger}
\newcommand{\xparagraph}[1]{\subsection*{#1}}
\newcommand{\zvass}[2]{
	\begin{tikzpicture}
		\path (0,-#2) -- (0,#2);
		\node[state] (s) {$s$};
		\node[state,above right=0.2cm and 1cm of s] (q1) {};
		\node[state,below right=0.2cm and 1cm of s] (q2) {};
		\node[state,right=2cm of s] (t) {$t$};
		\path[->] 
		(s) edge[very thick] node[above] {$5$} (q1)
		(s) edge[very thick] node[below] {$2$} (q2)
		(q1) edge node[left] {$3$} (q2)
		(q1) edge node[above] {$2$} (t)
		(q2) edge[very thick] node[below] {$1$} (t)
		;
	\end{tikzpicture}
}
\newcommand{\pnpZero}[1]{
\begin{tikzpicture}[every circle/.style={}, scale=#1]
\fill (0,0) circle (2pt) node (a) {}    (1,0) circle (2pt) node (b)  {}   (2,0) circle (2pt) node (c) {};
\draw (a.center) -- (b.center) -- (c.center);
\end{tikzpicture}
}
\newcommand{\threezcounters}[1]{
\begin{tikzpicture}[every circle/.style={}, scale=#1]
\fill (0:0) circle (2pt) node (aa) {}    +(60:1) circle (2pt) node (ab) {}      +(0:1) circle (2pt) node (ac) {} ;
\draw (aa.center) -- (ab.center) -- (ac.center) -- (aa.center);
\draw (aa.center) ++(-150:3pt) circle (3pt);
\draw (ab.center) ++(90:3pt) circle (3pt);
\draw (ac.center) ++(-30:3pt) circle (3pt);
\end{tikzpicture}
}
\newcommand{\twononezcounter}[1]{
\begin{tikzpicture}[every circle/.style={}, scale=#1]
\fill (0:0) circle (2pt) node (aa) {}    +(60:1) circle (2pt) node (ab) {}      +(0:1) circle (2pt) node (ac) {} ;
\draw (ab.center) ++ (90:3pt) circle (3pt);
\draw (aa.center) -- (ab.center) -- (ac.center) -- (aa.center);
\end{tikzpicture}
}
\newcommand{\pnpOne}[1]{
\begin{tikzpicture}[every circle/.style={}, scale=#1]
\fill (0,0) circle (2pt) node (a) {}    (1,0) circle (2pt) node (b)  {}   (2,0) circle (2pt) node (c) {};
\draw (a.center) -- (b.center) -- (c.center);
\draw (a.center) ++(90:3pt) circle (3pt);
\end{tikzpicture}
}
\newcommand{\pnpTwo}[1]{
\begin{tikzpicture}[every circle/.style={}, scale=#1]
\fill (0,0) circle (2pt) node (a) {}    (1,0) circle (2pt) node (b)  {}   (2,0) circle (2pt) node (c) {};
\draw (a.center) -- (b.center) -- (c.center);
\draw (a.center) ++(90:3pt) circle (3pt);
\draw (c.center) ++(90:3pt) circle (3pt);
\end{tikzpicture}
}
\newcommand{\unloopedcycle}[1]{
\begin{tikzpicture}[every circle/.style={}, scale=#1]
\fill (0,0) circle (2pt) node (a) {}    (1,0) circle (2pt) node (b)  {}   (0,-1) circle (2pt) node (c) {}   (1,-1) circle (2pt) node (d) {};
\draw (a.center) -- (b.center) -- (d.center) -- (c.center) -- (a.center);
\end{tikzpicture}
}
\newcommand{\twopushdowns}[1]{
\begin{tikzpicture}[every circle/.style={}, scale=#1]
\fill (0,0) circle (2pt) node (a) {}    (1,0) circle (2pt) node (b)  {}   (0,-1) circle (2pt) node (c) {}   (1,-1) circle (2pt) node (d) {};
\draw (a.center) -- (b.center) -- (c.center) -- (d.center) -- (a.center);
\end{tikzpicture}
}
\newcommand{\loopedcycle}[1]{
\begin{tikzpicture}[every circle/.style={}, scale=#1]
\fill (0,0) circle (2pt) node (a) {}    (1,0) circle (2pt) node (b)  {}   (0,-1) circle (2pt) node (c) {}   (1,-1) circle (2pt) node (d) {};
\draw (a.center) -- (b.center) -- (d.center) -- (c.center) -- (a.center);
\draw (a.center) ++(135:3pt) circle (3pt);
\draw (b.center) ++(45:3pt) circle (3pt);
\draw (d.center) ++(-45:3pt) circle (3pt);
\draw (c.center) ++(225:3pt) circle (3pt);
\end{tikzpicture}
}
\newcommand{\loopedpathFour}[2]{
\begin{tikzpicture}[every circle/.style={}, scale=#1]
\path (0,-#2) -- (0,#2);
\fill (0,0) circle (2pt) node (a) {}    (1,0) circle (2pt) node (b)  {}   (2,0) circle (2pt) node (c) {}   (3,0) circle (2pt) node (d) {};
\draw (a.center) -- (b.center) -- (c.center) -- (d.center);
\draw (a.center) ++(90:3pt) circle (3pt);
\draw (d.center) ++(90:3pt) circle (3pt);
\draw (b.center) ++(90:3pt) circle (3pt);
\draw (c.center) ++(90:3pt) circle (3pt);
\end{tikzpicture}
}
\newcommand{\treeLevels}[1]{
\begin{tikzpicture}[scale=#1]

\tikzstyle{level}=[dotted, black!60!green]
\draw (-2.5,0.3) edge[level] (4.5,0.3);
\draw (-2.5,-0.5) edge[level] (4.5,-0.5);
\draw (-2.5,-1.3) edge[level] (4.5,-1.3);
\draw (-2.5,-2.1) edge[level] (4.5,-2.1);
\draw (-2.5,-2.9) edge[level] (4.5,-2.9);
\draw (-2.5,-3.7) edge[level] (4.5,-3.7);

\node [black!60!green] at (4.8,0.3) {\scriptsize level 5};
\node [black!60!green] at (4.8,-0.5) {\scriptsize level 4};
\node [black!60!green] at (4.8,-1.3) {\scriptsize level 3};
\node [black!60!green] at (4.8,-2.1) {\scriptsize level 2};
\node [black!60!green] at (4.8,-2.9) {\scriptsize level 1};
\node [black!60!green] at (4.8,-3.7) {\scriptsize level 0};

\node[fill=white, inner sep = 0pt] at (1.3,0.3) {$\trianglepa$};
\node[fill=white, inner sep = 0pt] at (1.3,-0.5) {$\trianglepafillha$};
\node[fill=white, inner sep = 0pt] at (3,-1.3) {$\trianglepa$};
\node[fill=white, inner sep = 0pt] at (3,-2.1) {$\trianglepafillha$};
\node[fill=white, inner sep = 0pt] at (-0.7,-2.9) {$\trianglepa$};
\node[fill=white, inner sep = 0pt] at (-0.7,-3.7) {$\trianglepafillha$};
\node[fill=white, inner sep = 0pt] at (1.9,-2.9) {$\trianglepa$};
\node[fill=white, inner sep = 0pt] at (1.9,-3.7) {$\trianglepafillha$};
\node[fill=white, inner sep = 0pt] at (4,-2.9) {$\trianglepa$};
\node[fill=white, inner sep = 0pt] at (4,-3.7) {$\trianglepafillha$};

\tikzstyle{bag}=[fill = blue!15, circle, draw = blue!15]
\tikzstyle{con}=[circle connection bar switch color=from (blue!15) to (blue!15)]

\node[bag, minimum size = 3.4em] (a) at (0.3,-0.15) {};
\node[bag, minimum size = 3.6em] (b) at (2.1,-1.7) {};
\node[bag, minimum size = 3.4em] (c) at (-1.5,-3.3) {};
\node[bag, minimum size = 3.4em] (d) at (1.1,-3.3) {};
\node[bag, minimum size = 3.4em] (e) at (3.1,-3.3) {};

\path (a) to[con] (b);
\path (a) to[con] (c);
\path (b) to[con] (d);
\path (b) to[con] (e);

\tikzset{every node/.style={fill = black, circle, inner sep = 2pt}}
\node (a1) at (0,0) {};
\node (a2) at (0.6,0) {};
\node (a3) at (0.3,-0.5) {};
\node (b1) at (-1.5,-3.0) {};
\node (b2) at (-1.8,-3.5) {};
\node (b3) at (-1.2,-3.5) {};
\node (c1) at (2.35,-1.45) {};
\node (c2) at (1.85,-1.45) {};
\node (c3) at (2.35,-1.95) {};
\node (c4) at (1.85,-1.95) {};
\node (d1) at (1.1,-2.9) {};
\node (d2) at (1.5,-3.17) {};
\node (d3) at (0.7,-3.17) {};
\node (d4) at (1.35,-3.65) {};
\node (d5) at (0.85,-3.65) {};
\node (e1) at (3.1,-3.0) {};
\node (e2) at (3.4,-3.5) {};
\node (e3) at (2.8,-3.5) {};

\draw (a1.center) ++(135:3pt) circle (3pt);
\draw (a2.center) ++(45:3pt) circle (3pt);
\draw (a3.center) ++(-90:3pt) circle (3pt);
\draw (a1) -- (a2);
\draw (a1) -- (a3);
\draw (a2) -- (a3);

\draw (b1.center) ++(90:3pt) circle (3pt);
\draw (b1) -- (b2);
\draw (b1) -- (b3);
\draw (b2) -- (b3);

\draw (c1.center) ++(45:3pt) circle (3pt);
\draw (c2.center) ++(135:3pt) circle (3pt);
\draw (c3.center) ++(-45:3pt) circle (3pt);
\draw (c4.center) ++(-135:3pt) circle (3pt);
\draw (c1) -- (c2);
\draw (c1) -- (c3);
\draw (c1) -- (c4);
\draw (c2) -- (c3);
\draw (c2) -- (c4);
\draw (c3) -- (c4);

\draw (d1) -- (d2);
\draw (d1) -- (d3);
\draw (d1) -- (d4);
\draw (d1) -- (d5);
\draw (d2) -- (d3);
\draw (d2) -- (d4);
\draw (d2) -- (d5);
\draw (d3) -- (d4);
\draw (d3) -- (d5);
\draw (d4) -- (d5);

\draw (e1.center) ++(90:3pt) circle (3pt);
\draw (e2.center) ++(-45:3pt) circle (3pt);
\draw (e3.center) ++(225:3pt) circle (3pt);
\draw (e1) -- (e2);
\draw (e1) -- (e3);
\draw (e2) -- (e3);

\end{tikzpicture}
}
\newcommand{\pushdownzvass}[1]{
	\newcommand{\distance}{0.7}
	\begin{tikzpicture}[scale=#1]
		
		\tikzset{every node/.style={fill = black, circle, inner sep = 1.5pt}}
		\foreach \x/\angle in {1/90, 2/162, 3/234, 4/306, 5/378}
			\node (d\x) at (\angle:\distance) {};

		\foreach \x in {1,2,5}
			\foreach \y in {3,4}
				\draw (d\x) -- (d\y);

		\foreach \x in {1,2,5}
			\foreach \y in {1,2,5}
				\draw (d\x) -- (d\y);
		\draw (d1.center) ++(90:3pt) circle (3pt);
		\draw (d2.center) ++(180:3pt) circle (3pt);
		\draw (d5.center) ++(0:3pt) circle (3pt);
	\end{tikzpicture}
}
\newcommand{\pzvassTree}[1]{
\begin{tikzpicture}[scale=#1]

\tikzstyle{level}=[dotted, black!60!green]
\draw (0,-1.3) edge[level] (4.5,-1.3);
\draw (0,-2.1) edge[level] (4.5,-2.1);
\draw (0,-3) edge[level] (4.5,-3);

\node [black!60!green] at (4.8,-1.3) {\scriptsize level 2};
\node [black!60!green] at (4.8,-2.1) {\scriptsize level 1};
\node [black!60!green] at (4.8,-3) {\scriptsize level 0};

\node[fill=white, inner sep = 0pt] at (3,-1.3) {$\rt{c}$};
\node[fill=white, inner sep = 0pt] at (3,-2.1) {$\nort{c}$};
\node[fill=white, inner sep = 0pt] at (0.5,-3) {$\nort{a}$};
\node[fill=white, inner sep = 0pt] at (3.7,-3) {$\nort{b}$};

\tikzstyle{bag}=[fill = blue!15, circle, draw = blue!15]
\tikzstyle{con}=[circle connection bar switch color=from (blue!15) to (blue!15)]

\node[bag, minimum size = 3em] (b) at (2.1,-1.7) {};
\node[bag, minimum size = 3em] (d) at (1.3,-3) {};
\node[bag, minimum size = 3em] (e) at (2.9,-3) {};

\path (b) to[con] (d);
\path (b) to[con] (e);

\tikzset{every node/.style={fill = black, circle, inner sep = 2pt}}
\node[label = {[label distance=-1pt]left:$c$}] (c1) at (2.1,-1.7) {};
\node[label = {[label distance=-1pt]left:$a$}] (d1) at (1.3,-3) {};
\node[label = {[label distance=-1pt]left:$b$}] (e1) at (2.9,-3) {};

\draw (c1.center) ++(90:3pt) circle (3pt);

\end{tikzpicture}
}
\newcommand{\citet}[1]{\textcite{#1}}
\newtheorem{theorem}{Theorem}[section]
\newtheorem{lemma}[theorem]{Lemma}
\newtheorem{proposition}[theorem]{Proposition}
\newtheorem{corollary}[theorem]{Corollary}
\Crefname{theorem}{Theorem}{Theorems}
\Crefname{lemma}{Lemma}{Lemmas}
\Crefname{proposition}{Proposition}{Propositions}
\Crefname{lemma}{Lemma}{Lemmas}
\Crefname{corollary}{Corollary}{Corollaries}
\title[The Complexity of Bidirected Reachability in Valence Systems]{The Complexity of Bidirected Reachability in Valence Systems}
\newcommand{\ouraffiliation}{%
\institution{Max Planck Institute for \\ Software Systems (MPI-SWS)}
\city{Kaiserslautern}
\country{Germany}
}
\author{Moses Ganardi}
\affiliation{\ouraffiliation}
\author{Rupak Majumdar}
\affiliation{\ouraffiliation}
\author{Georg Zetzsche}
\affiliation{\ouraffiliation}
\keywords{Reachability, complexity, infinite-state systems, bidirected, reversible, vector addition systems, pushdown, counters}
\begin{document}

\begin{abstract}

Reachability problems in infinite-state systems are often subject to extremely
high complexity. %
This motivates the investigation of efficient overapproximations, where we add transitions to
obtain a system in which reachability can be decided more efficiently. We consider
\emph{bidirected} infinite-state systems, where for every transition there is a
transition with opposite effect.  

We study bidirected reachability in the
framework of valence systems, an abstract model featuring finitely many control
states and an infinite-state storage that is specified by a finite graph.
By picking suitable graphs, valence systems can uniformly model counters as in vector addition systems,
pushdowns, integer counters, and combinations thereof.

We provide a comprehensive complexity landscape for bidirected reachability and show that
the complexity drops (often to polynomial
time) from that of general reachability, for almost every storage mechanism
where reachability is known to be decidable. 

\end{abstract}

\maketitle

\section{Introduction}
The reachability problem is one of the most fundamental tasks for
infinite-state systems: It asks for a given infinite-state system and two
configurations $c_1$ and $c_2$, whether it is possible to reach $c_2$ from
$c_1$. It is a basic component in many types of verification (both for safety
and liveness), and has been studied intensively for decades. Unfortunately,
the reachability problem often exhibits prohibitively high complexity or is
subject to long-standing open problems: In \emph{vector addition
systems with states} (VASS), which are systems with counters over the natural
numbers that can be incremented and decremented, reachability is
Ackermann-complete~\cite{DBLP:conf/focs/Leroux21,DBLP:conf/focs/CzerwinskiO21}.
Furthermore, if we allow a pushdown in addition to the counters, we obtain
\emph{pushdown vector addition systems} (PVASS), for which decidability of the
reachability problem is a long-standing open
problem~\cite{DBLP:conf/icalp/LerouxST15}.

One way to circumvent this is to consider overapproximations of the
reachability problem: Instead of deciding reachability in the original system,
we decide reachability in a system that has more transitions. Then, a negative
answer still certifies safety. A promising overapproximation is \emph{bidirected reachability}: We assume that in our system,
for each transition, there is another with opposite effect.
For example, in pushdown reachability, a push of a symbol on the forward edge is reverted by the pop of the symbol on the backward edge.
In a vector addition system, incrementing a counter is reverted by decrementing it by the same amount. 
In addition to its theoretical interest, bidirected reachability is also of practical interest:
for example, several program analysis problems can be formulated or practically approximated
as bidirected pushdown reachability \cite{ChatterjeeCP2018,DBLP:conf/pldi/ZhangLYS13} 
or bidirected interleaved pushdown reachability 
\cite{DBLP:conf/ecoop/XuRS09,DBLP:conf/issta/YanXR11,DBLP:conf/popl/ZhangS17,DBLP:conf/pldi/LiZR20,DBLP:journals/pacmpl/LiZR21,DBLP:journals/pacmpl/KP22}. A remarkable recent result is that bidirected reachability in PVASS with one counter is decidable~\cite{DBLP:journals/pacmpl/KP22}.

An important measure for the utility of bidirectedness is: \emph{Does
bidirectedness reduce complexity?} It is known that bidirected reachability in
VASS is $\EXPSPACE$-complete and in logarithmic space for a fixed number of
counters~\cite{MAYR1982305}.  Bidirected pushdown reachability can be solved in
almost linear time \cite{ChatterjeeCP2018} whereas a truly subcubic algorithm
for pushdown reachability is a long-standing open problem \cite{Chaudhuri}.
However, the general complexity landscape is far from understood. For example,
it is hitherto not even known if bidirected pushdown reachability is
$\compP$-hard.  Further, it is not known whether bidirected reachability in
$\Z$-VASS (which allow counter values to become negative) is $\NP$-complete as
for reachability~\cite{DBLP:conf/rp/HaaseH14}.

\xparagraph{Valence systems} In this paper, we systematically compare the
complexity of bidirected reachability with general reachability. To this end,
we use the framework of \emph{valence systems over graph
monoids}~\cite{DBLP:journals/eatcs/Zetzsche16,Zetzsche2016c,DBLP:journals/iandc/Zetzsche21,DBLP:conf/rp/Zetzsche21}, which are
an abstract model that features finitely many
control states and an infinite storage specified by a finite graph
(with self-loops allowed). By picking a suitable graph, one can obtain
classical infinite-state models: A clique of $d$ unlooped nodes
corresponds to VASS with $d$ counters. If the vertices are looped, one obtains
$\Z$-VASS.  Two isolated unlooped vertices yield pushdown systems.

In order to compare bidirected with general reachability, we focus on storage
mechanisms where general reachability is known to be decidable. These
mechanisms correspond to a well-understood class of graphs studied
in~\cite{DBLP:journals/iandc/Zetzsche21}.  The latter work characterizes a
class of graphs that precisely capture PVASS with one counter.  Then, it is
shown in~\cite{DBLP:journals/iandc/Zetzsche21} that for every graph that avoids
these PVASS graphs as induced subgraphs, reachability is decidable iff the
graph is a transitive forest. This class of graphs is called $\SCpm$.

\xparagraph{Contribution} We provide a comprehensive complexity landscape for
bidirected reachability for valence systems over graphs in $\SCpm$.  For every
graph in $\SCpm$, we prove that bidirected reachability is either
$\compL$-complete or $\compP$-complete.  More generally, we consider the
setting where the graph is part of the input and drawn from a class
$\cG\subseteq\SCpm$: This is the case, e.g.\ when the input can consist of
$\Z$-VASS with an arbitrary number of counters. Even in this general setting,
we obtain an almost complete complexity picture: Then, bidirected reachability
falls within five possible categories: $\compL$-complete, $\ILD$-complete,
$\compP$-complete, in $\EXPTIME$, or $\EXPSPACE$-complete.  Here, the
complexity class $\ILD$ captures those problems that are logspace-reducible to
the feasibility problem for integer linear Diophantine equations (it lies
between $\compL$ and $\compP$).  

To obtain the $\compP$-hardness, we exploit a connection to
algorithmic group theory, which allows us to reduce from the subgroup
membership problem for free groups~\cite{avenhaus1984nielsen}. The same
connection yields undecidability of bidirected reachability if the graph has an
induced 4-cycle via a well-known group theoretic undecidability due to
Mikhailova~\cite{mikhailova1966occurrence}. In particular, bidirected
reachability is undecidable in systems with two pushdowns, which answers an
open problem from~\cite{DBLP:journals/pacmpl/LiZR21}\footnote{This problem was
	answered independently in~\cite{DBLP:journals/pacmpl/KP22}.}.
	This connection also yields an example of a graph where reachability is
	\emph{undecidable}, but bidirected reachability is \emph{decidable}.
	Finally, we can translate back to group
	theory: Our results imply that the subgroup membership problem in graph
	groups over any transitive forest is in $\compP$.

Our results show that the complexity drops for almost every
storage (pure pushdowns without any additional counters are the only
exceptions).  For example, reachability for $\Z$-VASS and for pushdown
$\Z$-VASS (which have a pushdown and several $\Z$-valued counters) are both
known to be $\NP$-complete~\cite{DBLP:conf/rp/HaaseH14,HagueLin2011}, but we
place bidirected reachability in $\PTIME$. The same holds for $\Z$-VASS that,
in addition, have a fixed number of $\N$-counters.  Our results also apply to
stacks where each entry contains several counter values.
Moreover, in addition to such stacks, one can have $\Z$-counters, then build
stacks of such configurations, etc.  Our characterization implies that when the
number of alternations between building stacks and adding $\Z$-counters is not
fixed, then bidirected reachability can still be solved in $\EXPTIME$; in
contrast, general reachability is
$\NEXPTIME$-complete~\cite{HaaseZetzsche2019}.

\xparagraph{Key techniques} These lower complexities are achieved using novel
techniques. The aforementioned connection to algorithmic group theory is
the logspace inter-reducibility with the subgroup membership problem if the
storage graph has self-loops everywhere.  This connection between bidirected
automata and subgroups has been observed in~\cite{lohrey2010automata}; we
provide logspace reductions.

While the connection to group theory is used for lower bounds, our upper bounds
also require new methods. The decidability for reachability in $\SCpm$
in~\cite{DBLP:journals/iandc/Zetzsche21} employs VASS with nested zero
tests~\cite{reinhardt2008reachability,DBLP:conf/mfcs/Bonnet11}, which we manage
to avoid completely. Instead, we rely on results about bidirected VASS
reachability sets~\cite{DBLP:conf/fct/KoppenhagenM97} to essentially eliminate
$\N$-counters in our systems first. However, the main innovation is a
modified approach to reachability in systems with stacks and
$\Z$-counters~\cite{HaaseZetzsche2019,HagueLin2011}. Those algorithms use a
technique of \citet{DBLP:conf/cade/VermaSS05}, which constructs,
given a context-free grammar, an existential Presburger formula for its Parikh
image. While existential Presburger arithmetic is equivalent to systems of
integer linear \emph{inequalities} (where feasibility is $\NP$-complete), we
show that for the new notion of \emph{bidirected grammars}, the Parikh image
can, in some appropriate sense, be described using only \emph{equations}.  This leads
to a $\compP$ upper bound, since feasibility of systems of integer linear
equations is in $\compP$ \cite{DBLP:journals/siamcomp/ChouC82}.

\begin{acks}
The authors are grateful to Markus Lohrey for
discussions about the algorithm in~\cite{lohrey2010automata}.
This work was sponsered in part by the \grantsponsor{DFG}{DFG}{https://www.dfg.de/} under project~\grantnum{DFG}{389792660 TRR 248--CPEC}.
\end{acks}

\section{Bidirected Valence Systems}

\xparagraph{Algebraic Preliminaries}
We assume familiarity with basic notions of monoids, groups, etc.~\cite{Lang}.
A \emph{subgroup} of a group $G$ is a subset of the elements of $G$ that themselves form a group; i.e., it is a subset of elements closed under the 
binary operation as well as inverses.
A subgroup $H$ of a group $G$ can be used to decompose the underlying set of $G$ into disjoint equal-size subsets called \emph{cosets}.
The \emph{left cosets} (resp.\ \emph{right cosets}) of $H$ in $G$ are the sets obtained by multiplying each element of 
$H$ by a fixed element $g$ of $G$: $gH = \set{g\cdot h \mid h \in H}$ (resp.\ $Hg = \set{h \cdot g\mid h\in H}$).
For a subset $S$, we write $\tuple{S}$ for the smallest subgroup containing $S$; this is the set of all elements of $G$ that can be written as finite
products of elements from $S$ and their inverses.
If $\tuple{S} = G$, we say $S$ \emph{generates} $G$ and call the elements of $S$ the \emph{generators} of $G$.

A \emph{presentation} $(\Sigma \mid R)$ of a monoid is a description of a monoid in terms of a set $\Sigma$ of generators and 
a set of binary relations $R \subseteq \Sigma^* \times \Sigma^*$ on the free monoid $\Sigma^*$ generated by $\Sigma$. 
For a set $R \subseteq \Sigma^* \times \Sigma^*$ define the step relation $\to_R$ by $sut \to_R svt$ for all $(u,v) \in R$ and $s,t \in \Sigma^*$.
Define $\equiv_R$ to be the smallest equivalence relation containing $\to_R$.
Then $\equiv_R$ is a \emph{congruence}, meaning that if $u\equiv_R v$, then $sut\equiv_R svt$ for every $s,t\in\Sigma^*$.
The monoid is then presented as the quotient of $\Sigma^*$ by the congruence $\equiv_R$.
For a word $w\in\Sigma^*$, we write $[w]_{\equiv_R}$ for the equivalence class of $w$ under $\equiv_R$.
A {\em commutative semigroup presentation} is a presentation $( \Sigma \mid R )$
where $(xy,yx) \in R$ for all $x \neq y \in \Sigma$.
The {\em word problem for commutative semigroups} asks,
given a commutative semigroup presentation $( \Sigma \mid R )$ and two words $u,v \in \Sigma^*$,
does $u \equiv_R v$ hold?
This problem is known to be $\EXPSPACE$-complete \cite{MAYR1982305}.

\xparagraph{Graph Monoids} 
A \emph{graph} is a tuple $\Gamma=(V, I)$, 
where $V$ is a finite set of vertices and $I \subseteq \{e\subseteq V \mid 1\le |e|\le 2\}$ is a finite set of undirected edges, 
which can be self-loops. 
Thus, if $\{v\}\in I$, we say that $v$ is \emph{looped}; otherwise, $v$ is \emph{unlooped}.
If all nodes of $\Gamma$ are (un)looped, we call $\Gamma$ \emph{(un)looped}.
By $\Gamma^\circ$ and $\Gamma^-$, we denote the graph obtained from $\Gamma$ by adding (removing) self-loops on all vertices.
The edge relation is also called an \emph{independence relation}. 
We also write $uIv$ for $\{u,v\}\in I$.
A subset $U\subseteq V$ is a \emph{clique} if $uIv$ for any two distinct $u,v\in U$.
We say that $U\subseteq V$ is an \emph{anti-clique} if we do not have $uIv$ for any distinct $u,v\in U$.
Given a graph $\Gamma$, we define a monoid as follows. 
We define the alphabet $X_\Gamma=V \cup \bar V$ where $\bar V = \{\bar{v}\mid v\in V\}$.
We define $R_\Gamma = \{ (v \bar v, \varepsilon) \mid v \in V\} \cup \{ (xy,yx) \mid x\in\{u,\bar{u}\}, \, y\in\{v,\bar{v}\}, \, uIv \}$.
We write $\to_\Gamma$ instead of $\to_{R_\Gamma}$ and
$\equiv_\Gamma$ for the smallest equivalence relation containing $\to_\Gamma$.
As observed above, $\equiv_\Gamma$ is a congruence.
In particular, if $v$ has a self-loop, then $\bar{v}v\equiv_\Gamma\varepsilon$.
We define the monoid $\dM\Gamma:=X_{\Gamma}^*/\equiv_\Gamma$.
We write $[w]$ for the equivalence class of $w$ under $\equiv_\Gamma$ and
$1$ for $[\varepsilon]$.
For each word $w\in X_\Gamma^*$, we define its \emph{inverse} $\bar{w}$ as follows.
If $w=v$ for some $v\in V$, then $\bar{w}$ is the letter $\bar{v}$. If $w=\bar{v}$ for $v\in V$, then $\bar{w}=v$. 
Finally, if $w=w_1\cdots w_n$ with $w_1,\ldots,w_n\in X_\Gamma$, then $\bar{w}=\bar{w}_n\cdots \bar{w}_1$.
It is known that $w \equiv_\Gamma \varepsilon$ is witnessed by a derivation $w = w_0 \to_\Gamma  w_1 \to_\Gamma \dots \to_\Gamma  w_n = \varepsilon$~\cite[Equation (8.2)]{Zetzsche2016c}.

\xparagraph{Valence systems and reachability}
Valence systems are an abstract model for studying finite-state transition systems with ``storage''.
They consist of a state transition system on a finite set of states, as well as a monoid that represents an auxiliary storage
and determines which paths in the automata form valid computations in the presence of the auxiliary storage. 
For example, if the underlying storage is a stack, the monoid can encode push and pops and determine computations 
that produce an empty stack.

In this work, we only consider graph monoids as the underlying monoids. 
Many classes of infinite-state systems involving combinations of stacks and counters can be 
modeled as valence systems over graph monoids; see~\cite{Zetzsche2016c} for detailed examples. They have been studied in terms of expressiveness~\cite{BuckheisterZetzsche2013a,Zetzsche2013a}, computing downward closures~\cite{Zetzsche2015a}, and various forms of reachability problems~\cite{DOsualdoMeyerZetzsche2016a,MeyerMuskallaZetzsche2018a,ShettyKrishnaZetzsche2021a,DBLP:journals/iandc/Zetzsche21}, see \cite{DBLP:conf/rp/Zetzsche21} for a survey on the latter.

A \emph{valence system} $\cA$ over a graph $\Gamma$ consists of a finite set of states $Q$, and 
a finite transition relation $\rightarrow\subseteq Q \times X_{\Gamma}^* \times Q$.
We also write a transition $(p,u,q)$ as $p \autsteps[u] q$.
A {\em run} is a sequence $(q_0,u_1,q_1)(q_1,u_2,q_2) \dots (q_{n-1},u_n,q_n)$ of transitions,
also abbreviated $q_0 \autsteps[u] q_n$ if $u=u_1\cdots u_n$.

The \emph{reachability problem} ($\REACH$) for valence systems is the following:
\begin{description}
 	\item[Given] A valence system $\cA$ and states $s,t$ in $\cA$.
 	\item[Question] Is there a run $s\autsteps[w]t$ for some $w\in X_\Gamma^*$ with $[w]=1$?
\end{description}

If the reachability problem is restricted to valence systems over a particular graph $\Gamma$, then we denote the problem by $\REACH(\Gamma)$. 
If we restrict the input systems to a class $\cG$ of graphs, then we write $\REACH(\cG)$. 
For example, if $\cV$ is the class of all unlooped cliques, then $\REACH(\cV)$ is the reachability problem for vector addition systems with states (VASS).

A valence system $\cA$ is \emph{bidirected} if for any transition
$p\autstep[w]q$, we also have $q\autstep[\bar{w}]p$.  The \emph{bidirected
reachability problem} ($\REVREACH$) is the reachability problem where $\cA$ is
restricted to bidirected valence systems. As above, we consider the case where
the system is over a particular graph $\Gamma$, denoted $\REVREACH(\Gamma)$, or
where the graph is drawn from a class $\cG$, denoted $\REVREACH(\cG)$.

\subsection{Decidability Landscape for Reachability}

\xparagraph{PVASS-graphs}
A graph $\Gamma$ is a \emph{PVASS-graph} if it is isomorphic to one of the
following three graphs: 

{\centering
  $ \displaystyle
    \begin{aligned}
	\pnpZero{1} && \pnpOne{1} && \pnpTwo{1} 
    \end{aligned}
  $
\par}
\noindent We say that a graph is \emph{PVASS-free} if it has no
PVASS-graph as an induced subgraph. 
Here, a graph $\Gamma'=(V',I')$ is an \emph{induced subgraph} of $\Gamma=(V,I)$ if there is an injective map $\iota\colon V'\to V$ with $\{\iota(u),\iota(v)\}\in I$ iff $\{u,v\}\in I'$ for $u,v\in V'$.
Observe that a graph $\Gamma$ is PVASS-free
if and only if in the neighborhood of each unlooped vertex, any two vertices
are adjacent.  The terminology stems from the fact that if $\Gamma$ is a
PVASS-graph, then $\REACH(\Gamma)$ is inter-reducible with reachability for
PVASS with one counter~\cite{DBLP:journals/iandc/Zetzsche21}. Whether
reachability is decidable for these is a long-standing open
problem~\cite{DBLP:conf/icalp/LerouxST15} (however, bidirected reachability is
decidable~\cite{DBLP:journals/pacmpl/KP22}).

\xparagraph{Transitive forests}
A graph $\Gamma$ is a \emph{transitive forest} if it can be built as follows.
First, the empty graph is a transitive forest. Moreover, if $\Gamma_1$
and $\Gamma_2$ are transitive forests, then (i)~the disjoint union of
$\Gamma_1$ and $\Gamma_2$ is a transitive forest and (ii)~if $\Gamma$ is the
graph obtained by adding one looped or unlooped vertex $v$ to $\Gamma_1$ so that $v$ is adjacent
to every vertex in $\Gamma_1$, then $\Gamma$ is also a transitive forest.

The complexity of our algorithms will depend on the height of the trees in
transitive forests. Formally, every non-empty transitive forest is either (i)~a
disjoint union of connected transitive forests, or (ii)~has a \emph{universal
vertex}, i.e. a vertex that is adjacent to all other vertices (take the root of
the underlying tree). This induces a successive decomposition of the transitive
forest into smaller ones: For a disjoint union, take the disjoint connected
transitive forests. If there is a universal vertex, remove that vertex to
obtain a smaller transitive forest.

The decomposition is unique up to isomorphism: This is obvious in the
case of a disjoint union. If there are several universal vertices,
then all removals result in isomorphic graphs. 
We define the \emph{height} $h(\Gamma)$ of a
transitive forest $\Gamma=(V,I)$: If $V=\emptyset$, then $h(\Gamma)=0$. If
$\Gamma$ is a disjoint union of connected transitive forests
$\Gamma_1,\ldots,\Gamma_n$, then
$h(\Gamma)=\max\{h(\Gamma_i)\mid i\in[1,n]\}+1$. If $\Gamma$
has a universal vertex whose removal leaves $\Gamma'$, then
$h(\Gamma)=h(\Gamma')$.

Among the PVASS-free graphs, it is well-understood when reachability
is decidable:
\begin{theorem}[\cite{DBLP:journals/iandc/Zetzsche21}]\label{pvass-free-reachability}
	Let $\Gamma$ be PVASS-free. Then $\REACH(\Gamma)$ is decidable iff $\Gamma$ is a transitive forest.
\end{theorem}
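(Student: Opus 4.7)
The plan is to prove both implications: (if) a PVASS-free transitive forest yields decidable reachability, and (only if) a PVASS-free non-transitive-forest yields undecidable reachability. Each direction uses a different strategy tuned to the structural characterization of transitive forests.

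For the \emph{decidability} direction, I would argue by induction on the height of the transitive-forest decomposition of $\Gamma$. The base case (empty graph) gives a finite-state automaton, so $\REACH(\Gamma)$ is trivial. For the inductive step I would handle the two constructors. First, when $\Gamma = \Gamma_1 \sqcup \Gamma_2$ is a disjoint union, letters from $X_{\Gamma_1}$ and $X_{\Gamma_2}$ do not commute, so the joint storage behaves like two sub-storages used in alternation; my plan would be to decompose any witnessing run into maximal segments living in one side at a time, solve each segment by induction, and glue the segments together via a saturation argument à la pushdown prefix-rewriting. Second, when $\Gamma$ arises from $\Gamma'$ by adjoining a universal vertex $v$, the symbol $v$ commutes with every letter of $\Gamma'$. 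If $v$ is looped, it contributes a $\Z$-counter orthogonal to $\Gamma'$; reachability then reduces to reachability in $\Gamma'$ combined with an existential Presburger constraint asserting that the $v$-balance is zero. If $v$ is unlooped, PVASS-freeness forces the sub-forest $\Gamma'$ to behave like a clique (a VASS or $\Z$-VASS layer) under $v$, and one can reduce to VASS-type reachability with an added $\N$-counter.

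For the \emph{undecidability} direction, I would exploit the well-known characterization of (loop-free) transitive forests as the class of graphs avoiding $P_4$ and $C_4$ as induced subgraphs. Thus a graph that is not a transitive forest must contain one of these two patterns (with some loop assignment). Under the PVASS-free assumption, the loop pattern is heavily constrained: both middle vertices of any induced $P_4$ must be looped (otherwise a PVASS subgraph appears at the unlooped middle vertex), and similarly all four vertices of an induced $C_4$ must be looped. I would enumerate these minimal forbidden induced subgraphs and, for each, construct a Turing-machine simulation — e.g., the all-looped $C_4$ is rich enough to encode two synchronized (bi-infinite) stacks, a classical undecidable setting, and the looped-middle $P_4$ likewise admits a two-counter Minsky simulation.

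The main obstacle will be the decidability direction, specifically the disjoint-union step together with the unlooped-universal-vertex step. Both require combining reachability in smaller systems with global non-commutative interactions, and naïve reductions blow up the complexity beyond decidability. The technical heart would likely be a reachability algorithm for a VASS-like model augmented with a nested zero-test discipline that matches the tree structure of the transitive forest — subsuming the classical algorithms for VASS, $\Z$-VASS, and pushdown $\Z$-VASS as special cases — and arguing that this discipline is powerful enough to encode PVASS-free transitive-forest valence systems while remaining decidable.
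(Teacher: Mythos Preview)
This theorem is not proved in the present paper; it is quoted from~\cite{DBLP:journals/iandc/Zetzsche21} as background, so there is no in-paper proof to compare against. That said, your outline matches the architecture of the cited proof: decidability goes via a reduction to VASS with nested zero tests (Reinhardt, Bonnet), exactly the ``VASS-like model augmented with a nested zero-test discipline'' you anticipate in your final paragraph, and undecidability goes via the forbidden-induced-subgraph characterization of transitive forests as the $P_4$- and $C_4$-free graphs, combined with the observation that PVASS-freeness pins down the loop pattern on any such obstruction.

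Two imprecisions in the undecidability half are worth flagging. For $\Cfourlooped$ the monoid is $F_2\times F_2$, a product of free \emph{groups}, and undecidability of $\REACH(\Cfourlooped)$ is the undecidability of rational-subset (already subgroup) membership there---a Mikhailova-type argument, not a two-pushdown simulation. For the induced $P_4$ with looped middle vertices, the endpoints may still be unlooped, and a ``two-counter Minsky'' encoding is not what happens; one instead reduces from $\REACH(\Pfourlooped)$, shown undecidable by Lohrey and Steinberg, to the partially-unlooped variants via a coding trick in the spirit of \cref{sim-f2}. Your inductive decidability sketch for the disjoint-union step is also optimistic: a pushdown-style saturation does not close on its own here, which is precisely why the cited proof passes through nested-zero-test VASS rather than a direct structural induction.
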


By $\SCpm$, we denote the class of PVASS-free graphs $\Gamma$ that are
transitive forests. The abbreviation $\SCpm$ reflects that valence systems over
$\SCpm$ are equivalent to \emph{stacked counter machines}, as explained in
\cref{sec:main-results}.  Hence, \cref{pvass-free-reachability} says that for
every graph in $\SCpm$, the reachability problem is decidable.  Moreover, for
every graph $\Gamma$ outside of $\SCpm$, either $\Gamma$ contains a PVASS-graph
(thus showing decidability of $\REACH(\Gamma)$ would in particular solve a
long-standing open problem) or $\REACH(\Gamma)$ is known to be undecidable.
Therefore, $\SCpm$ is the largest class of graphs $\Gamma$ for which
$\REACH(\Gamma)$ is currently known to be decidable.

\section{Main Results}\label{sec:main-results}

We assume familiarity with the basic complexity classes  $\compL$
(deterministic logspace), $\compP$ (deterministic polynomial time), $\NP$ (non-deterministic polynomial time), $\EXPTIME$
(deterministic exponential time), $\NEXPTIME$ (non-deterministic exponential
time), and $\EXPSPACE$ (exponential space). 
By $\ILD$, we denote the class of problems that are 
logspace-reducible to the problem of solvability 
of integer linear Diophantine equations (ILD): 
\begin{description}
	\item[Given] A matrix $\bA\in\Z^{m\times n}$ and a vector $\bb\in\Z^m$.
	\item[Question] Is there a vector $\bx\in\Z^n$ with $\bA\bx=\bb$?
\end{description}
It is well known that ILD is solvable in polynomial time.
\begin{theorem}[\cite{DBLP:journals/siamcomp/ChouC82}]\label{ild-in-p}
	ILD is solvable in polynomial time.
\end{theorem}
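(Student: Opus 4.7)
The plan is to reduce ILD to the computation of a Hermite Normal Form (HNF) of the matrix $\bA$. Every $\bA \in \Z^{m \times n}$ admits a factorization $\bA U = \mathbf{H}$, where $U \in \Z^{n \times n}$ is unimodular (i.e.\ $\det U = \pm 1$, hence $U^{-1} \in \Z^{n \times n}$) and $\mathbf{H}$ is in HNF, a lower-triangular form (up to zero columns) with standard constraints on the off-diagonal entries. Because the map $\bx \mapsto U^{-1} \bx$ is a bijection of $\Z^n$, the system $\bA \bx = \bb$ has an integer solution iff $\mathbf{H} \by = \bb$ does.

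Solvability for a matrix in HNF is decided by a straightforward back-substitution, column by column. For each pivot column the value of the associated unknown is forced by a single integer division, so one only needs to check that the corresponding divisibility holds; the remaining rows consisting entirely of zeros yield trivial equations $0 = b_i$, satisfiable iff $b_i = 0$. This check runs in polynomial time once $\mathbf{H}$ and the appropriately transformed right-hand side are available.

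The crux is therefore to produce $\mathbf{H}$ from $\bA$ within polynomial time, and this is the main obstacle. The naive approach of iteratively applying Euclidean column operations to triangulate $\bA$ may cause the bit-length of intermediate entries to grow doubly exponentially, defeating the polynomial bound. The key idea of \citet{DBLP:journals/siamcomp/ChouC82}, refining Kannan--Bachem, is to bound intermediate entries a priori: by Hadamard's inequality the absolute value of any nonsingular maximal minor of $\bA$ is at most an exponential $D$ in the input size, so every entry of the final $\mathbf{H}$ is bounded by $D$. Performing all column reductions modulo $D$ keeps intermediate entries of polynomial bit-length, while only polynomially many elementary operations are needed to reach HNF. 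Combined with the previous steps, this yields an overall polynomial-time decision procedure for ILD.

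One could equally well phrase the argument through the Smith Normal Form $P \bA Q = \operatorname{diag}(d_1, \ldots, d_r, 0, \ldots, 0)$, in which case solvability reduces transparently to divisibility of the entries of $P\bb$ by the invariant factors $d_i$; however, the cost is again concentrated in controlling the bit-size of intermediate matrices during the unimodular reduction, so no genuine new idea is needed beyond the modular-arithmetic trick outlined above.
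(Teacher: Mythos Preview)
Your sketch is correct and in fact goes well beyond what the paper does: the paper gives no proof at all for this theorem, merely pointing to \cite[Theorems~1 and~13]{DBLP:journals/siamcomp/ChouC82}. Your outline via Hermite Normal Form, with the modular reduction trick to control intermediate bit-growth, is the standard account of why ILD lies in $\compP$ and is consistent with the cited source. There is nothing to correct; if anything, you have supplied the explanation the paper omits.
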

See \cite[Theorems 1 and 13]{DBLP:journals/siamcomp/ChouC82}.
In particular, the class $\ILD$ lies in between $\compL$ and $\compP$.
The exact complexity of ILD seems to be open~\cite{DBLP:journals/cc/AllenderBO99}.
It is conceivable that $\ILD$ coincides with $\compL$ or $\compP$ or that
it lies strictly in between. Hence, we have the inclusions
\[ \compL\subseteq\ILD\subseteq\compP\subseteq\NP\subseteq\EXPTIME\subseteq\NEXPTIME\subseteq\EXPSPACE. \]

In order to formulate our main result about the complexity of
$\REVREACH$, we need some terminology.   We say that
$\cG$ is \emph{UC-bounded} if there is a $k$ such that for every $\Gamma$ in
$\cG$, every unlooped clique in $\Gamma$ has size at most $k$. Otherwise, it is
called \emph{UC-unbounded}. Similarly, \emph{LC-bounded} (\emph{LC-unbounded},
respectively) if the same condition holds for looped cliques. We say that $\cG$
is \emph{height-bounded} if there is a $k$ with $h(\Gamma)\le k$ for every
$\Gamma$ in $\cG$. Otherwise, $\cG$ is \emph{height-unbounded}.
We now present an almost complete complexity classification of
$\REVREACH(\cG)$, where $\cG$ is a subclass of $\SCpm$.  Here, we assume that
$\cG$ is closed under taking induced subgraphs.  This is a mild assumption that
only rules out some pathological exceptions.

\begin{theorem}[Classification Theorem]\label{main-result}
	Let $\cG\subseteq \SCpm$ be closed under induced subgraphs. Then $\REVREACH(\cG)$ is
	\begin{enumerate}
		\item\label{main-bounded-cliques} $\compL$-complete if $\cG$ consists of cliques of bounded size,
		\item\label{main-lc-unbounded-cliques} $\ILD$-complete if $\cG$ consists of cliques, is UC-bounded, and LC-unbounded,
		\item\label{main-bounded-height} $\compP$-complete if $\cG$ contains a graph that is not a clique, and $\cG$ is UC-bounded and height-bounded,
		\item\label{main-unbounded-height} in $\EXPTIME$ if $\cG$ is UC-bounded and height-unbounded, 
		\item\label{main-uc-unbounded} $\EXPSPACE$-complete otherwise.
	\end{enumerate}
\end{theorem}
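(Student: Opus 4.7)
The plan is to establish upper and lower bounds for each of the five cases separately, leveraging the structural decomposition of transitive forests: every non-empty $\Gamma\in\SCpm$ is either a disjoint union of smaller transitive forests (which contributes free-product / pushdown structure in the valence system) or obtained from a smaller transitive forest by adding a universal vertex (which corresponds to a direct-product counter extension). This inductive structure connects the height parameter with pushdown nesting depth and the clique parameters (UC- and LC-boundedness) with the number of $\N$- and $\Z$-counters. Throughout, I will use the logspace inter-reducibility between bidirected reachability and the subgroup membership problem for the graph group (when every vertex is looped) as a bridge to algorithmic group theory, in the spirit of~\cite{lohrey2010automata}.

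\textbf{Upper bounds.} For case (1), the class $\cG$ contains only finitely many graphs up to isomorphism, the storage is $\N^a\times\Z^b$ for a fixed $a+b$, and bidirected reachability in fixed dimension can be placed in $\compL$ by a routine argument on the semilinear reachability set. For case (2), I first eliminate the bounded number of $\N$-counters using the known description of bidirected VASS reachability sets~\cite{DBLP:conf/fct/KoppenhagenM97}, reducing to bidirected reachability in $\Z$-VASS; this collapses to solvability of a single integer system $\bA\bx=\bb$ whose entries come from the transition effects and the cycle structure of the valence system, giving the logspace reduction to ILD. For cases (3) and (4), I extend the bidirected grammar approach outlined in the introduction: after eliminating bounded $\N$-counters, the remaining graph is a transitive forest of fully looped vertices nested above non-clique (pushdown-like) structure. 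Adapting the Verma--Seidl--Schwentick Parikh construction~\cite{DBLP:conf/cade/VermaSS05} to bidirected grammars, I show that the Parikh image of a bidirected grammar is captured by integer linear \emph{equations} (not inequalities), reducing reachability to ILD; for bounded height the system has polynomial size, so \cref{ild-in-p} yields the $\compP$ bound, while unbounded height incurs an exponential blowup in the grammar and lands in $\EXPTIME$. Case (5) follows from the $\EXPSPACE$ algorithm for bidirected VASS, which is equivalent to the word problem for commutative semigroups~\cite{MAYR1982305}; the $\N$-counter dimension is the only component that can grow, and the remaining $\Z$-counter and looped structure is subsumed.

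\textbf{Lower bounds.} Case (1) inherits $\compL$-hardness from directed graph reachability encoded in the finite-state part. For case (2), I embed an ILD instance $\bA\bx=\bb$ into a bidirected $\Z$-VASS, using one counter per variable and one transition per column, with bidirectedness matching the closure of $\Z^n$ under negation. For case (3), I reduce from subgroup membership in a free group, which is $\compP$-complete~\cite{avenhaus1984nielsen}: a graph $\Gamma\in\cG$ that is not a clique contains two non-adjacent vertices; in the bidirected setting this suffices to embed a free group $F_2$ inside the valence-system storage (whether the vertices are looped or unlooped, bidirectedness forces generator--inverse pairs), and the generators of the subgroup are simulated by transitions. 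Case (5) inherits $\EXPSPACE$-hardness from the commutative semigroup word problem, which is exactly bidirected reachability over an unlooped clique of unbounded size; UC-unboundedness of $\cG$, together with closure under induced subgraphs, supplies these cliques.

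\textbf{Main obstacles.} The hardest step will be the $\compP$ upper bound in case (3), because showing that the Parikh image of a \emph{bidirected} grammar is definable by linear equations rather than inequalities requires a genuinely new combinatorial argument exploiting that every derivation can be mirrored; this is also the delicate point in lifting the argument to case (4) while confining the blowup to exponential. A secondary difficulty is the $\compP$-hardness reduction for case (3): the induced non-clique in an arbitrary $\cG$ may sit arbitrarily deep inside a larger graph, so one must argue that the embedded $F_2$ can be accessed without additional resources and that the overall reduction fits in logspace. Finally, showing that the five cases exhaust all induced-subgraph-closed $\cG\subseteq\SCpm$ requires a careful combinatorial case split on UC-, LC-, and height-boundedness together with the presence of a non-clique.
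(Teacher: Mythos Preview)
Your proposal follows essentially the same architecture as the paper: lower bounds via graph reachability, ILD embedding, free-group subgroup membership~\cite{avenhaus1984nielsen}, and the commutative-semigroup word problem~\cite{MAYR1982305}; upper bounds via $\N$-counter elimination using~\cite{DBLP:conf/fct/KoppenhagenM97}, followed by a bidirected-grammar construction in which the Parikh image is captured by linear \emph{equations} rather than the inequalities of~\cite{DBLP:conf/cade/VermaSS05}. Your identification of the $\compP$ upper bound in case~(\ref{main-bounded-height}) as the crux is accurate.

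There is one genuine gap. Your treatment of the $\EXPSPACE$ upper bound in case~(\ref{main-uc-unbounded}) is not correct: you write that it ``follows from the $\EXPSPACE$ algorithm for bidirected VASS'' and that ``the remaining $\Z$-counter and looped structure is subsumed.'' But case~(\ref{main-uc-unbounded}) covers all of $\SCpm$ once UC is unbounded, including graphs of unbounded height with arbitrarily deep nesting of stacks and counters; the bidirected-VASS algorithm by itself does not handle this, and there is no obvious way to ``subsume'' a stack hierarchy into a flat VASS. The paper instead reuses the full grammar and coset-circuit pipeline from cases~(\ref{main-bounded-height})--(\ref{main-unbounded-height}): the change is that eliminating the now-unbounded $\N$-counters at each leaf of the tree decomposition costs exponential space rather than polynomial time, producing grammars whose numeric entries may need exponentially many bits. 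The coset-circuit evaluation then yields equation systems of exponential dimension with exponentially long entries, and ILD on such systems is still solvable in exponential space. You need this argument, not a direct appeal to the VASS bound.

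Two minor corrections. For $\compL$-hardness you must reduce from \emph{undirected} graph reachability: directed reachability is $\NL$-complete, and in a bidirected valence system every edge has its reverse, so you cannot encode direction in the finite-state part. In the ILD-hardness reduction, the $\Z$-counters correspond to the rows (equations) of $\bA$, not to the variables.
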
	
From \cref{main-result}, we can deduce our dichotomy for individual graphs $\Gamma$:
Take as $\cG$ the set of graphs containing $\Gamma$ and its induced subgraphs.
Then $\cG$ is UC-bounded, LC-bounded, and height-bounded and thus falls into
case (\ref{main-bounded-cliques}) or (\ref{main-bounded-height})
above. 
\begin{corollary}[Dichotomy for $\REVREACH$]
	Let $\Gamma\in\SCpm$ be a graph. If $\Gamma$ is a clique, then
	$\REVREACH(\Gamma)$ is $\compL$-complete. Otherwise, the problem
	$\REVREACH(\Gamma)$ is $\compP$-complete.
\end{corollary}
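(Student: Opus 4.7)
The plan is to deduce the Corollary as an immediate special case of the Classification Theorem (\cref{main-result}) by choosing a suitable class $\cG$ for the given single graph $\Gamma$. First I would set $\cG$ to be the (finite) collection consisting of $\Gamma$ together with all its induced subgraphs. Since taking an induced subgraph of an induced subgraph again yields an induced subgraph, $\cG$ is closed under induced subgraphs, which is exactly the hypothesis of \cref{main-result}.

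Next I would verify that $\cG\subseteq\SCpm$, so that the Classification Theorem is actually applicable. This has two parts: transitive forests are closed under taking induced subgraphs (easy induction on the inductive definition of transitive forest, since disjoint unions and the addition of a universal vertex both survive the removal of vertices), and PVASS-freeness is preserved under induced subgraphs directly from the definition of ``induced subgraph'' in terms of forbidden induced substructures. Since $\Gamma\in\SCpm$, both properties are inherited by every induced subgraph, giving $\cG\subseteq\SCpm$. Because $\cG$ is finite, it is trivially UC-bounded, LC-bounded, and height-bounded.

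Now I would split into the two cases of the dichotomy. If $\Gamma$ is a clique, then every induced subgraph of $\Gamma$ is also a clique, so every graph in $\cG$ is a clique. Hence $\cG$ falls under case~(\ref{main-bounded-cliques}) of \cref{main-result}, giving that $\REVREACH(\Gamma)=\REVREACH(\cG|_\Gamma)$ is $\compL$-complete (the lower bound comes from \cref{main-result} applied even to the singleton class consisting of one nontrivial clique; the instance-restricted problem $\REVREACH(\Gamma)$ is sandwiched between the trivial case and $\REVREACH(\cG)$, both $\compL$-complete). If $\Gamma$ is not a clique, then $\cG$ contains the non-clique $\Gamma$ while still being UC-bounded and height-bounded, so case~(\ref{main-bounded-height}) of \cref{main-result} applies and yields $\compP$-completeness.

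The only point requiring any care in this derivation is to convince oneself that the $\compL$- and $\compP$-hardness lower bounds from the Classification Theorem, which are stated for the class $\cG$, actually transfer to the single-graph problem $\REVREACH(\Gamma)$. This is not really an obstacle: in case~(\ref{main-bounded-height}), the proof of \cref{main-result} must at some point exhibit a particular non-clique graph in $\cG$ witnessing the hardness, and one expects that any fixed non-clique $\Gamma\in\SCpm$ already suffices; similarly for case~(\ref{main-bounded-cliques}), $\compL$-hardness for the single-graph problem on any fixed clique of size $\ge 1$ is a standard reachability-in-a-graph lower bound. Thus the real content of the Corollary lives entirely inside \cref{main-result}; the Corollary itself is just a specialisation of that theorem to singleton-induced classes.
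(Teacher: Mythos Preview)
Your proposal is correct and follows essentially the same route as the paper: the paper also takes $\cG$ to be $\Gamma$ together with all its induced subgraphs, observes this class is UC-bounded, LC-bounded, and height-bounded, and invokes cases~(\ref{main-bounded-cliques}) and~(\ref{main-bounded-height}) of \cref{main-result}. Your additional care about the lower bounds transferring to the single-graph problems is justified and matches what the paper actually proves in the lower-bound section (\cref{l-hardness} and \cref{p-hardness} give $\compL$- and $\compP$-hardness for fixed $\Gamma$, not just for the class).
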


\begin{figure}[t]
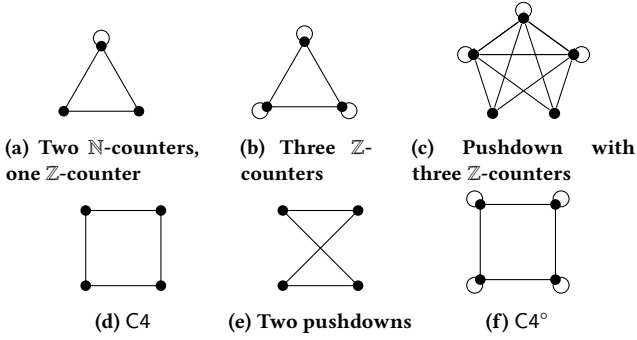

	\subcaptionbox{Two $\N$-counters, one $\Z$-counter\label{graph-onentwoz}}[0.3\columnwidth]{\twononezcounter{1}}
\hspace{0.05\columnwidth}
\subcaptionbox{Three $\Z$-counters\label{graph-threez}}[0.2\columnwidth]{\threezcounters{1}}
\hspace{0.05\columnwidth}
\subcaptionbox{Pushdown with three $\Z$-counters\label{graph-pthreez}}[0.35\columnwidth]{\pushdownzvass{1}}

\subcaptionbox{$\Cfour$\label{graph-cfour}}[0.2\columnwidth]{\unloopedcycle{1}}
\subcaptionbox{Two pushdowns\label{graph-cfourdifferent}}[0.4\columnwidth]{\twopushdowns{1}}
\subcaptionbox{$\Cfourlooped$\label{graph-cfourlooped}}[0.2\columnwidth]{\loopedcycle{1}}
\caption{Example graphs for storage mechanisms}
\end{figure}

We complement \cref{main-result} with the following undecidability result. The graph $\Cfour$ is shown in \cref{graph-cfour}.

\begin{restatable}{theorem}{undecidableunderlyingcfour}\label{undecidable-underlying-c4}
	If $\Gamma^-$ contains $\Cfour$ as an induced subgraph, then $\REVREACH(\Gamma)$ is undecidable.
\end{restatable}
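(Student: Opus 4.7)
My plan is to reduce Mikhailova's undecidable subgroup membership problem for $F_2 \times F_2$ to $\REVREACH(\Gamma)$. The bridge is the observation that the graph group of the fully-looped 4-cycle $\Cfourlooped$ is isomorphic to $F_2 \times F_2$: labeling the vertices cyclically $v_1, v_2, v_3, v_4$, the non-adjacent pairs $\{v_1, v_3\}$ and $\{v_2, v_4\}$ each generate a copy of $F_2$, and the two copies commute with one another. I will combine this with the connection between bidirected reachability and subgroup membership in graph groups from~\cite{lohrey2010automata}.

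First, I reduce to the induced subgraph on the four $\Cfour$-vertices. Let $\Gamma'$ be the induced subgraph of $\Gamma$ on $\{v_1, v_2, v_3, v_4\}$; this is a copy of $\Cfour$ possibly augmented by self-loops on some subset of its vertices. Since $\dM{\Gamma'}$ embeds as a submonoid of $\dM\Gamma$ via induced-subgraph inclusion (a standard property of graph monoids), every instance of $\REVREACH(\Gamma')$ transfers verbatim to $\REVREACH(\Gamma)$, so it suffices to show $\REVREACH(\Gamma')$ is undecidable.

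Given subgroup generators $h_1, \ldots, h_k \in F_2 \times F_2$ and a target element $g$, I construct a bidirected valence system with two states $s, t$: a path from $s$ to $t$ labeled by a word $w_g$ encoding $g$ over the letters $\{v_i, \bar{v}_i\}$, and at state $s$ a closed loop for each $h_i$ labeled by a word encoding $h_i$. Bidirectedness supplies the reverse of every transition. A run from $s$ to $t$ parses into alternating loops at $s$ (whose labels multiply to elements of $H$) and back-and-forth traversals of the $g$-edge: each round trip contributes $w_g\,\bar{w}_g$, which reduces to $\varepsilon$ by nested $v\bar{v} \to \varepsilon$ cancellations that are valid in every graph monoid regardless of self-loops. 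The effective label thereby reduces to $h \cdot w_g$ with $h \in H$, and requiring $[w] = 1$ forces $g \in H$.

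The main obstacle is that $\dM{\Gamma'}$ need not be a group: if some $v_i$ lack self-loops in $\Gamma$, then $[w] = 1$ in $\dM{\Gamma'}$ is strictly stronger than $w = 1$ in the ambient graph group $\dM{\Cfourlooped} \cong F_2 \times F_2$. I will resolve this by choosing the encodings of $g$ and the $h_i$ as words whose combinations arising in runs always reduce using only the one-sided rule $v\bar{v} \to \varepsilon$ (never requiring $\bar{v}v \to \varepsilon$), exploiting the palindromic form produced by bidirected traversals. This is the key step that lifts Lohrey's equivalence from the fully-looped case to arbitrary $\Gamma'$ of the required form. Combined with Mikhailova's theorem, this yields undecidability of $\REVREACH(\Gamma')$ and hence of $\REVREACH(\Gamma)$.
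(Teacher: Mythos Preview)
Your high-level strategy matches the paper's: reduce Mikhailova's undecidable problem $\SUBMEM(\Cfourlooped)$ to $\REVREACH(\Gamma)$ via the standard two-state gadget, and pass to the induced subgraph on the four $\Cfour$-vertices. You also correctly isolate the real obstacle: when some of $v_1,\ldots,v_4$ are unlooped, triviality in $\dM{\Gamma'}$ is strictly stronger than triviality in $\dM{\Cfourlooped}\cong F_2\times F_2$, since the one-sided rule $\bar v v\to\varepsilon$ is unavailable.

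The gap is that you do not actually bridge this obstacle. The sentence ``choosing the encodings of $g$ and the $h_i$ as words whose combinations arising in runs always reduce using only the one-sided rule $v\bar v\to\varepsilon$'' is not a construction, and the ``palindromic form'' intuition does not survive contact with a concrete example. Take $\Gamma'$ with $v_1$ unlooped, $H=\langle v_1\rangle$, $g=v_1$; clearly $g\in H$. With your fixed-encoding gadget (loop $v_1$ at $s$, edge $s\xrightarrow{v_1}t$, plus reverses), every $s\to t$ run has label equivalent in the bicyclic monoid to $\bar v_1 v_1$, which is never $\varepsilon$ when $v_1$ is unlooped. So completeness fails: $g\in H$ but no accepting run exists. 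Round trips $w_g\bar w_g$ cancel, but cancellations \emph{between} a loop label and the $g$-edge, or between consecutive loop labels, are where the forbidden $\bar v v$ arises, and no single fixed encoding avoids this.

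What the paper does instead is supply the missing mechanism: for each non-adjacent pair $\{u,v\}$ with $u$ unlooped, it builds a finite set $W=\{uv^i:1\le i\le 4\}$ that simulates a four-letter pushdown inside $\dM\Gamma$ (their \cref{sim-pushdown}), and then a morphism $\varphi\colon(W\cup\bar W)^*\to X_\Gamma^*$ with \emph{multi-valued} preimages for each group letter (their \cref{sim-f2}). The reduction replaces each transition labeled $y\in X_{\Gamma^\circ}$ by transitions labeled by \emph{all} $w\in W\cup\bar W$ with $\varphi(w)=y$. This nondeterministic choice of encoding is exactly what lets the accepting run pick, at each step, the variant that cancels via $w\bar w$ rather than $\bar w w$. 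Your proposal lacks any analogue of this construction, and without it the argument does not go through.
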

In the case $\Gamma=\Cfour$, undecidability also follows from an independent
result of \citet{DBLP:journals/pacmpl/KP22} that uses different techniques.

\xparagraph{Intuition on graph classes} Let us phrase our results in terms
of infinite-state systems.  In~(\ref{main-bounded-cliques})
and~(\ref{main-lc-unbounded-cliques}), we have cliques. A clique with $d$
unlooped vertices and $e$ looped vertices corresponds to counter machines with
$d$-many $\N$-counters and $e$-many $\Z$-counters (see \cref{graph-onentwoz} with $d=2$, $e=1$).  Thus,
(\ref{main-bounded-cliques}) and~(\ref{main-lc-unbounded-cliques}) show
$\compL$-completeness for for fixed $d$ and $e$, and $\ILD$-completeness if
only $d$ is fixed. This is in contrast to reachability: Even for fixed $d$,
reachability in machines with $d$-many $\N$-counters can be
non-elementary~\cite{DBLP:conf/focs/Leroux21,DBLP:conf/focs/CzerwinskiO21}.
Moreover, for $d=0$, these graphs correspond to $\Z$-VASS (see \cref{graph-threez} for $e=3$). Thus, we show: For a
fixed number of counters in $\Z$-VASS, the complexity drops from $\NL$ (for
general reachability) to $\compL$. If the number of counters in $\Z$-VASS is
not fixed, then we have a drop from $\NP$~\cite{DBLP:conf/rp/HaaseH14} to
$\ILD\subseteq\compP$. 

In (\ref{main-bounded-height}), we go beyond just counters.  
Recall the recursive structure of every graph in $\SCpm$.
In terms of storage, taking a disjoint union
$\Gamma_1\cup\Gamma_2$ is the same as \emph{building stacks}: One obtains a
stack, where each entry of the stack is either a storage content of $\Gamma_1$
or $\Gamma_2$.  Moreover, adding a universal looped vertex corresponds to \emph{adding
a $\Z$-counter}~\cite{DBLP:journals/eatcs/Zetzsche16,Zetzsche2016c}: This means, in
addition to the configuration of the previous model, we also have a new
$\Z$-counter. 
After decomposing $\Gamma$ further and further, we are left with
a clique of unlooped vertices.  Therefore, valence systems over $\SCpm$ are
called \emph{stacked counter machines}: We start with a number of $\N$-counters
and then alternate between building stacks and adding
$\Z$-counters%
	.
The height $h(\Gamma)$ is the number of alternations between the
two steps (building stacks and adding $\Z$-counters).  
	
For example, if $\Gamma_k$ has two non-adjacent nodes $v_1,v_2$ and looped
nodes $u_1,\ldots,u_k$ such that each $u_i$ is adjacent to all other vertices
(see \cref{graph-pthreez} for $\Gamma_3$), then valence systems over
$\Gamma_k$ are pushdown $\Z$-VASS with $k$ counters.  Thus,
(\ref{main-bounded-height}) says that in stacked counter automata, with a
bounded number of $\N$-counters and a bounded number of alternations (but
arbitrarily many $\Z$-counters!), bidirected reachability is still in $\compP$.
This is again a striking complexity drop: In pushdown $\Z$-VASS, reachability
is $\NP$-complete~\cite{HagueLin2011}, and the same is true for any fixed
number of alternations (if there are no
$\N$-counters)~\cite{HaaseZetzsche2019}.  In fact, $\NP$-hardness holds for
pushdown $\Z$-VASS already for a single counter~\cite{HaaseZetzsche2019}.
Moreover, our $\compP$ upper bound still holds with a bounded
number of $\N$-counters.

In (\ref{main-unbounded-height}), not even the number of alternations is fixed.
We obtain an $\EXPTIME$ upper bound, which is again a drop from reachability:
In stacked counter automata (without $\N$-counters), reachability is
$\NEXPTIME$-complete~\cite{HaaseZetzsche2019}.

In (\ref{main-uc-unbounded}), we show that if neither the alternations
nor the number of $\N$-counters is fixed, $\REVREACH$ is $\EXPSPACE$-complete. This strengthens $\EXPSPACE$-completeness
of $\REVREACH$ in VASS. Again, general reachability has much higher
complexity: Since our model includes VASS, the problem is
Ackermann-hard~\cite{DBLP:conf/focs/Leroux21,DBLP:conf/focs/CzerwinskiO21}
and could be even higher: The algorithm for reachability
for general $\SCpm$ in~\cite{DBLP:journals/iandc/Zetzsche21} uses VASS with
nested zero tests, for which no complexity upper bound is known.

Finally, our undecidability result (\cref{undecidable-underlying-c4}) says in
particular that reachability in bidirected two-pushdown machines is
undecidable: By drawing $\Cfour$ as in \cref{graph-cfourdifferent}, one can see
that it realizes two stacks (the left two nodes act as a stack, and the right
two nodes act as a stack). %

\section{$\REVREACH$ and subgroup membership}

If $\Gamma$ is looped, then $\dM\Gamma$ is a group. 
The groups of this form are called \emph{graph groups} or \emph{right-angled Artin groups} 
and have been studied intensively over the last decades \cite{lohrey2008submonoid,DiekertM06,lohrey2007graph,LohreyZ18}, in part because of their rich subgroup structure (see, e.g.~\cite{Wise2012}).
The \emph{subgroup membership problem} ($\SUBMEM$) is the following.
\begin{description}
	\item[Given] A looped graph $\Gamma$, words $w_1,\ldots,w_k,w\in X_\Gamma^*$.
	\item[Question] Does $[w]\in\langle [w_1],\ldots,[w_k]\rangle$?
\end{description}
If the input graph $\Gamma$ is fixed, we write $\SUBMEM(\Gamma)$. If $\Gamma$
is drawn from a class $\cG$, then we write $\SUBMEM(\cG)$.  Surprisingly,
describing the class of graphs $\Gamma$ for which $\SUBMEM(\Gamma)$ is
decidable is a longstanding open problem~\cite{lohrey2013rational}.

Our first observation is that if $\Gamma$ is looped, then the complexity of $\REVREACH(\Gamma)$ 
matches that of subgroup membership over $\dM\Gamma$. The connection between subgroups and bidirected valence automata (albeit under different names) is a prominent theme in group theory. It is implicit in the well-known concept of Stallings graphs and was used by \citet{lohrey2010automata} in decidability results. We show that the conversion can be done in logspace, in both directions.

\begin{restatable}{theorem}{revreachsubmem}\label{revreach-submem}
	For any looped graph $\Gamma$, the problems $\REVREACH(\Gamma)$ and $\SUBMEM(\Gamma)$ are logspace inter-reducible.
\end{restatable}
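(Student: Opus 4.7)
The plan is to give two logspace many-one reductions. For the easier direction $\SUBMEM(\Gamma) \le_{\log} \REVREACH(\Gamma)$, given generators $w_1, \dots, w_k$ and target $w$, I will construct a bidirected valence system $\cA$ with two states $s, t$ and transitions $s \autstep[w] t$ together with the required reverse $t \autstep[\bar{w}] s$, plus self-loops $t \autstep[w_i] t$ and their reverses $t \autstep[\bar{w_i}] t$ for each $i$. Since $s$ carries no self-loops, every run $s \autsteps[v] t$ will have the form $v = w \, u_0 \, \bar{w} w \, u_1 \, \bar{w} w \cdots \bar{w} w \, u_n$ for some $n \ge 0$, where each $u_j$ is a word over $\set{w_i, \bar{w_i} : 1 \le i \le k}$. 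Since $\Gamma$ is looped, $\dM\Gamma$ is a group, and every factor $\bar{w} w$ collapses to $1$; thus $[v] = [w] \cdot [u_0 u_1 \cdots u_n]$, and some run witnesses $[v] = 1$ iff $[w]^{-1} \in \langle [w_1], \dots, [w_k]\rangle$, which holds iff $[w]$ lies in the same subgroup. This construction is clearly logspace.

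For the converse $\REVREACH(\Gamma) \le_{\log} \SUBMEM(\Gamma)$, I will use a Stallings/fundamental-group construction. I regard the bidirected system $\cA$ as an undirected labelled multigraph on $Q$, where each bidirected pair of transitions forms one undirected edge. First I check whether $s$ and $t$ lie in the same connected component, which is in $\compL$ by Reingold's theorem; if not, I output a trivial negative $\SUBMEM$ instance. Otherwise I fix a spanning tree $T$ of the component of $s$, and for each state $q$ let $\pi_q \in X_\Gamma^*$ denote the label of the unique $T$-path from $s$ to $q$. For every non-tree edge arising from a transition $p \autsteps[u] q$, the fundamental loop $g_e := \pi_p \, u \, \overline{\pi_q}$ labels a closed walk at $s$. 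The reduction will output the $\SUBMEM$ instance with generators $\set{g_e : e \text{ is a non-tree edge}}$ and target $\overline{\pi_t}$. Correctness follows from the standard fundamental-group argument: any closed walk at $s$ telescopes, modulo $\equiv_\Gamma$, into a product of fundamental loops (tree-edge contributions collapse to $1$ because $u \bar{u} \equiv_\Gamma \varepsilon$), and $\set{[v] : s \autsteps[v] t} = R_{ss} \cdot [\pi_t]$, where $R_{ss}$ denotes the set of loop labels at $s$ interpreted in $\dM\Gamma$; hence some $v$ with $s \autsteps[v] t$ satisfies $[v] = 1$ iff $[\pi_t]^{-1} = [\overline{\pi_t}]$ lies in the generated subgroup.

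The hard part will be ensuring that the second reduction is genuinely in logspace, since I need not merely to decide undirected connectivity but also to produce explicit tree paths $\pi_q$ and hence a spanning tree. I will appeal to the fact that Reingold's algorithm extends to computing spanning trees and $s$-to-$q$ tree paths in $\compL$ for undirected graphs. Each $\pi_q$ has length at most the total length of all input transition labels, so the generators $g_e$ and the target $\overline{\pi_t}$ have polynomial size, yielding a bona fide logspace many-one reduction in both directions.
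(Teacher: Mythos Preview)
Your proposal is correct and follows essentially the same approach as the paper: for $\SUBMEM \to \REVREACH$ you build the same two-state gadget (the paper labels the $s\to t$ edge with $\bar w$ rather than $w$, but this is immaterial since $[w]$ lies in a subgroup iff $[\bar w]$ does), and for $\REVREACH \to \SUBMEM$ you use the same spanning-tree-plus-fundamental-cycles construction (the paper bases cycles at $t$ and writes a left coset, you base them at $s$ and write a right coset, which is again cosmetic). The one place where the paper is more explicit is the logspace spanning-tree computation: rather than appealing to an extension of Reingold's algorithm, it fixes an enumeration $e_1,\dots,e_m$ of the edges and includes $e_i$ iff its endpoints are not already connected in $\{e_1,\dots,e_{i-1}\}$, each such test being a single undirected-reachability query; you may want to spell this out similarly, but the claim you rely on is standard.
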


Reducing $\SUBMEM(\Gamma)$ to $\REVREACH(\Gamma)$ is easy:
To test whether $[w]$ is contained in $\langle [w_1], \dots, [w_k] \rangle$
we construct a bidirected valence system $\cA$ with two states $s,t$
and the transitions $s \autsteps[\bar w] t$
and $t \autsteps[w_i] t$ for all $1 \le i \le k$, and the reverse transitions.
Then $[w] \in \langle [w_1], \dots, [w_k] \rangle$ holds if and only if there exists $u \in X_\Gamma^*$
with $s \autsteps[u] t$ and $[u] = 1$.
For the converse, the following \lcnamecref{compute-coset} shows that we can compute in logspace a coset representation of
$\{ [w] \in \dM \Gamma \mid s \autsteps[w] t \}$.

\begin{restatable}{lemma}{computecoset}
	\label{compute-coset}
	Given a looped graph $\Gamma$, a bidirected valence system $\cA$ over $\Gamma$ and states $s,t$ from $\cA$,
	one can compute words $w_0, w_1, \dots, w_n \in X_\Gamma^*$ in logspace
	such that $\{ [w] \in \dM \Gamma \mid s \autsteps[w] t \}= [w_0] \langle [w_1], \dots, [w_n] \rangle$
\end{restatable}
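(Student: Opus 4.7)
The plan is to execute the classical spanning-tree (Stallings/fundamental-groupoid) construction that realises the reachability label-set as the image of loops in the underlying multigraph of $\cA$ in $\dM\Gamma$, and to verify that every step fits in logspace. I first reduce to the case where $s$ and $t$ lie in the same connected component of the underlying undirected multigraph of $\cA$; otherwise the set is empty, detectable in logspace by Reingold's theorem and handled separately by the calling reduction. When they are connected, I compute a spanning tree $T$ of that component in logspace by processing transitions in a canonical order and admitting each one iff its endpoints are not already connected in the partial tree, with each connectivity test delegated to Reingold.

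Given $T$, write $\tau(a,b) \in X_\Gamma^*$ for the label of the unique $T$-path from $a$ to $b$, where each tree edge contributes either its transition label or the formal inverse depending on the traversal direction. Each letter of $\tau(a,b)$ is computable in logspace from $(a,b,i)$ by standard ancestor-walks in $T$. I then emit
\begin{equation*}
w_0 \;:=\; \tau(s,t), \qquad w_e \;:=\; \tau(t,p)\cdot v\cdot \tau(q,t)
\end{equation*}
for each non-tree transition $e = (p \autstep[v] q)$, one representative per bidirected pair. This is a logspace transducer since $T$ has polynomial size and each $w_e$ is produced by a logspace routine that queries $T$.

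For correctness I must show $\{[w] \in \dM\Gamma : s \autsteps[w] t\} = [w_0]\langle [w_e]\rangle$. The inclusion $\supseteq$ is direct: each $w_e$ labels a loop at $t$ (two tree legs around one non-tree transition), and by bidirectedness $\bar w_e$ labels its reverse, so any product $w_0 \cdot w_{e_1}^{\pm 1} \cdots w_{e_k}^{\pm 1}$ labels a bona fide run $s \autsteps t$. For $\subseteq$, I use the identity $[w] = [\tau(s,t)]\cdot[\tau(t,s)\cdot w]$ and prove by induction on the number of non-tree transitions in a loop $t \autsteps[u] t$ that $[u] \in \langle [w_e]\rangle$. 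In the inductive step, I locate the first non-tree transition $p \autstep[v] q$ inside $u$, factor $u = u_1\cdot v\cdot u_2$ with $u_1$ a tree walk from $t$ to $p$, apply the key identity $[u_1] = [\tau(t,p)]$, insert $\tau(q,t)\cdot\tau(t,q) \equiv_\Gamma \varepsilon$ just after $v$, and peel off a factor $[w_e]$, leaving a shorter loop at $t$ on which the induction hypothesis applies.

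The main technical obstacle is the identity $[u_1] = [\tau(t,p)]$ for an arbitrary tree walk $u_1$ from $t$ to $p$. This is precisely where the hypothesis that $\Gamma$ is looped enters: it makes $\dM\Gamma$ a group, so the relations $v\bar v \equiv_\Gamma \varepsilon$ and $\bar v v \equiv_\Gamma \varepsilon$ together collapse every tree walk to its shortest representative, namely the unique tree path. The remaining logspace bookkeeping is routine: logspace transducers compose when intermediate outputs are polynomial, and tree-distance, $i$-th ancestor, and LCA are all standard logspace primitives given the spanning tree.
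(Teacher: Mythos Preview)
Your proposal is correct and follows essentially the same approach as the paper's proof: both compute a spanning tree in logspace via Reingold's algorithm, take $w_0$ as the tree-path label from $s$ to $t$, take the $w_i$ as the fundamental-cycle labels $\tau(t,p)\cdot v\cdot\tau(q,t)$ for non-tree transitions, and prove the coset equality by induction on the number of non-tree transitions in a loop at $t$, using that any tree walk has the same image in $\dM\Gamma$ as the unique simple tree path because $\Gamma$ is looped. The only cosmetic difference is that the paper phrases the inductive step as replacing the cycle by $\gamma_\tau$ followed by a shorter cycle, whereas you phrase it as inserting $\tau(q,t)\cdot\tau(t,q)$ and peeling off $[w_e]$; these are the same manipulation.
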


\Cref{compute-coset} reduces reachability to testing membership of $[\bar w_0]$ in the set $\langle [w_1], \dots, [w_n] \rangle$, which is an instance of $\SUBMEM$.
To show \cref{compute-coset}, we compute in logspace a spanning tree of the automaton (using \cite{DBLP:journals/jacm/Reingold08}). Then $w_0$ is obtained from the unique path from $s$ to $t$ in the tree. 
The words $w_1,\ldots,w_n$ are obtained as labels of \emph{fundamental cycles}~\cite{biggs1997algebraic}: These are cycles consisting of one edge outside the tree and a path inside the tree.

\begin{figure}[t]
	\subcaptionbox{\label{figure-example-zvass}}[0.59\columnwidth]{\zvass{1}{0.7}}
	\subcaptionbox{\label{graph-pfour-loops}}[0.4\columnwidth]{\loopedpathFour{1}{0.7}}
	\caption{$\Z$-VASS for \cref{example-zvass} (reverse edges are omitted) and the graph $\Pfourlooped$}
\end{figure}

\begin{example}\label{example-zvass}
\Cref{compute-coset} can be directly used to show that the problem
$\REVREACH(\cG)$ is in $\ILD$ if $\cG$ consists of looped cliques. In other
words, bidirectional reachability for $\Z$-VASS is in $\ILD$, whereas standard
reachability for $\Z$-VASS is $\NP$-complete~\cite{DBLP:conf/rp/HaaseH14}.
As an example, consider the bidirected $\Z$-VASS in \cref{figure-example-zvass}. We compute the value of an arbitrary $s$-$t$-path, e.g. $w_0 = 5+2 = 7$,
a spanning tree (the bold edges), and the fundamental cycles
with values $w_1 = 5+3-2=6$ and $w_2 = 5+2-1-2 = 4$.
Since $7 + 6x + 4y = 0$ has no integer solution,
there is no $s$-$t$-path with value 0.
\end{example}

The connection between $\REVREACH$ and $\SUBMEM$ can be used to identify a
graph for which the problem $\REACH(\Gamma)$ is undecidable but $\REVREACH(\Gamma)$ is
decidable: Let $\Pfourlooped$ be the graph in \cref{graph-pfour-loops}. As shown by~\citet{lohrey2008submonoid}, the problem $\REACH(\Pfourlooped)$ is undecidable. However, a
result by \citet{kapovich2005foldings} implies that if a graph $\Gamma$ is
looped and chordal, then $\SUBMEM(\Gamma)$ is decidable (a simpler proof was
then given by Lohrey and Steinberg~\cite{lohrey2010automata}).  Thus, since
$\Pfourlooped$ is looped and chordal, $\SUBMEM(\Pfourlooped)$, and hence
$\REVREACH(\Pfourlooped)$, is decidable.

For the proof of \Cref{undecidable-underlying-c4},
we rely on an undecidability result of \citet{mikhailova1966occurrence}:
\begin{theorem}[\cite{mikhailova1966occurrence}]\label{mikhailova}
	$\SUBMEM(\Cfourlooped)$ is undecidable.
\end{theorem}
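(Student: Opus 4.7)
The plan is to identify $\dM\Cfourlooped$ with the group $F_2\times F_2$ and then invoke the classical Mikhailova construction, which encodes the word problem of any finitely presented group into a subgroup membership problem in a direct product of two rank-two free groups.

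First I would identify $\dM\Cfourlooped \cong F_2 \times F_2$. Label the four vertices of $\Cfourlooped$ cyclically as $a_1, b_1, a_2, b_2$; the cycle edges give exactly the commutations $a_i\,I\,b_j$ for all $i,j\in\{1,2\}$, while $\{a_1,a_2\}$ and $\{b_1,b_2\}$ are non-adjacent. Since every vertex is looped, $\dM\Cfourlooped$ is a group presented only by these commutation relations, which gives $\dM\Cfourlooped = \langle a_1,a_2\rangle \times \langle b_1,b_2\rangle \cong F_2\times F_2$.

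Second, I would set up Mikhailova's encoding, first in $F_n\times F_n$. Fix a finitely presented group $G=\langle x_1,\ldots,x_n\mid r_1,\ldots,r_m\rangle$ with undecidable word problem (Novikov--Boone) and let $q\colon F_n\twoheadrightarrow G$ be the canonical projection. Let $H_0\leq F_n\times F_n$ be the subgroup generated by the diagonal pairs $(x_i,x_i)$ for $i\in[1,n]$ and the ``erasing'' pairs $(r_j,1)$ for $j\in[1,m]$. The key claim is: for $w\in F_n$, $(w,1)\in H_0$ if and only if $w=1$ in $G$. For the ``if'' direction, if $w=_G 1$ then $w$ equals some product $\prod_k u_k^{-1} r_{j_k}^{\epsilon_k} u_k$ in $F_n$, and the corresponding product $\prod_k (u_k^{-1},u_k^{-1})(r_{j_k}^{\epsilon_k},1)(u_k,u_k)\in H_0$ evaluates to exactly $(w,1)$. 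For the ``only if'' direction, consider the coordinate-wise homomorphism $(q,q)\colon F_n\times F_n \to G\times G$, which sends the generators of $H_0$ to $(q(x_i),q(x_i))$ and $(1,1)$, all lying in the diagonal $\Delta G=\{(g,g):g\in G\}$; hence $(q,q)(H_0)\subseteq \Delta G$, so $(w,1)\in H_0$ forces $q(w)=1$, i.e.\ $w=_G 1$.

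Third, I would transfer the construction to $F_2\times F_2\cong \dM\Cfourlooped$ via a classical embedding $\iota\colon F_n \hookrightarrow F_2$ (e.g.\ $x_i\mapsto a^i b a^{-i}$, with $a,b$ free generators of $F_2$). The coordinate-wise embedding $(\iota,\iota)\colon F_n\times F_n\hookrightarrow F_2\times F_2$ maps $H_0$ isomorphically onto a subgroup $H\leq \dM\Cfourlooped$ explicitly generated by the words $(\iota(x_i),\iota(x_i))$ and $(\iota(r_j),1)$. Since $(\iota(w),1)\in H$ iff $(w,1)\in H_0$ iff $w=_G 1$, any decision procedure for $\SUBMEM(\Cfourlooped)$ would decide the word problem of $G$, contradicting Novikov--Boone. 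The main obstacle is the ``only if'' direction of the Mikhailova claim; the coordinate-wise homomorphism $(q,q)$ into $\Delta G$ makes this a clean one-line argument, but that is where all the conceptual weight lies.
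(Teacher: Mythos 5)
Your proof is correct and reconstructs exactly the Mikhailova argument that the paper cites (via \cite[Chapter~IV, Lemma~4.2]{LyndonSchupp} and \cite[proof of Lemma~4.2]{LohreyZ18}): identify $\dM\Cfourlooped$ with $F_2\times F_2$, encode the word problem of a fixed finitely presented group $G$ with undecidable word problem as membership in the subgroup $H_0 \le F_n\times F_n$ generated by the diagonal pairs and the relator pairs, and transport this into $F_2\times F_2$ along an explicit embedding $F_n\hookrightarrow F_2$. One small remark: the reduction only needs the equivalence $(w,1)\in H_0 \iff w =_G 1$, which your two-sided argument (normal closure for ``if'', the image of $H_0$ under $(q,q)$ lying in the diagonal $\Delta G$ for ``only if'') establishes directly, so the stronger statement of Mikhailova's theorem --- that $H_0$ is the \emph{entire} preimage of $\Delta G$ --- is not actually required here.
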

The graph $\Cfourlooped$ is shown in \Cref{graph-cfourlooped}.  The proof of
\cref{mikhailova} is a simple (but not obvious) reduction of the word problem
of any finitely presented group (since \cite{mikhailova1966occurrence} is in
Russian, we refer to \cite[Chapter~IV, Lemma~4.2]{LyndonSchupp} or \cite[proof of Lemma~4.2]{LohreyZ18} for short expositions of this reduction).  Since there are
finitely presented groups with an undecidable word
problem~\cite[Chapter~IV, Theorem~7.2]{LyndonSchupp}, this implies undecidability of
$\SUBMEM(\Cfourlooped)$.

Via \cref{revreach-submem}, \cref{mikhailova} implies that $\REVREACH(\Cfourlooped)$ is undecidable. 
Using standard arguments one can transfer the undecidability of $\SUBMEM(\Cfourlooped)$ to $\SUBMEM(\Cfour)$,
and thus obtain \Cref{undecidable-underlying-c4}.

\label{sec:subgp}

\section{Lower Bounds}\label{sec:lower-bounds}

We now prove the lower bounds in \cref{main-result}.

\xparagraph{$\compL$-hardness}
We can reduce from the reachability on undirected graphs to the problem $\REVREACH(\Gamma)$ for any $\Gamma$
by replacing each undirected edge by bidirected $\varepsilon$-labeled transitions.
Since the former problem is $\compL$-complete under $\mathsf{AC}^0$ many-one reductions \cite{DBLP:journals/cc/AlvarezG00,DBLP:journals/jacm/Reingold08},
so is $\REVREACH(\Gamma)$.

\xparagraph{$\ILD$-hardness}
Next, assume that $\cG$ is LC-unbounded.
Observe that ILD is the subgroup membership problem over $\Z^m$, where $m$ is part of the input,
and hence log-space reducible to $\SUBMEM(\cG)$.
Since the equivalence of $\SUBMEM$ and $\REVREACH$ for looped graphs is uniform in the graph $\Gamma$ (see \Cref{sec:subgp}),
$\REVREACH(\cG)$ is $\ILD$-hard.

\xparagraph{$\compP$-hardness}
Suppose $\Gamma$ has two non-adjacent vertices $u,v$.
In $\Gamma^\circ$, $u$ and $v$ generate a free group over two generators,
for which subgroup membership is $\compP$-hard~\cite{avenhaus1984nielsen}.
By \cref{revreach-submem}, this implies $\compP$-hardness of $\REVREACH(\Gamma^\circ)$.
Using standard arguments, $\REVREACH(\Gamma^\circ)$ reduces to $\REVREACH(\Gamma)$.
Thus, $\REVREACH(\Gamma)$ is $\compP$-hard.

\xparagraph{$\EXPSPACE$-hardness}
We reduce from the word problem over commutative semigroups, which is $\EXPSPACE$-hard \cite{MAYR1982305}.
Since $\cG$ is UC-unbounded and closed under induced subgraphs, it contains an unlooped clique $\Gamma$ of size $|\Sigma|$.
We can assume that $\Sigma$ is its node set.
Let $\cA$ be the bidirected valence system over $\Gamma$ with three states $q_0, q_1, q_2$,
the transitions $q_0 \autsteps[\bar u] q_1$, $q_1 \autsteps[v] q_2$,
and the transitions $q \autsteps[\bar x y] q$ for all $(x,y) \in R$,
and their reverse transitions.
Then $u \equiv_R v$ holds if and only if $q_0 \autsteps[w] q_2$ for some $w \in X_\Gamma^*$ with $[w] = 1$.

\section{Upper Bounds I: $\compL$ and $\ILD$} \label{sec:upper-bound-ild}

In this section we will study $\REVREACH(\cG)$ for classes $\cG$ of cliques,
and prove the $\compL$ and $\ILD$ upper bounds from \Cref{main-result}.
If $\Gamma$ is an unlooped clique then $\REVREACH(\Gamma)$ is
the reachability problem over {\em bidirected vector addition systems with states}
or, equivalently, the word problem for commutative semigroups \cite{MAYR1982305}.

Fix a bidirected valence system $\cA = (Q,\to)$ over a clique $\Gamma = (V,I)$.
Let $U$ and $L$ be the sets of unlooped and looped vertices in $\Gamma$, respectively,
and let $s,t \in Q$.
We can view the unlooped vertices as $\N$-counters,
which may not fall below zero,
and the looped vertices as $\Z$-counters.
Formally, the monoid $\dM \Gamma$ is isomorphic to $\B^U \times \Z^L$
where $\B$ is the {\em bicyclic monoid}, i.e. the set $\N^2$ equipped with the associative operation
$(a^-,a^+) \oplus (b^-,b^+) = (a^-+b^--\min(a^+,b^-), a^++b^+-\min(a^+,b^-))$.
We identify each element $(0,a^+)$ in $\B$ with $a^+\in\N$ and each $(a^-,0)$ with $-a^-\in-\N$.
Let $\Phi \colon X_\Gamma^* \to \Z^L$ be the function which computes the value of the $\Z$-counters,
i.e. $\Phi(w)(v) = |w|_v - |w|_{\bar v}$ for all $w \in X_\Gamma^*$, $v \in L$.
Let $\Psi \colon X_\Gamma^* \to \B^U$ be 
the morphism defined by $\Psi(v)(v) = (0,1)$ and $\Psi(\bar v)(v) = (1,0)$ for all $v \in U$,
and $\Psi(x)(v) = (0,0)$ for all $x \in (V \cup \bar V) \setminus \{v, \bar v\}$.
Then $\dM \Gamma \to \B^U \times \Z^L$, $[w] \mapsto (\Psi(w),\Phi(w))$ for $w\in X_\Gamma^*$, is an isomorphism.

\xparagraph{$\N$-counters}
If we consider only $\N$-counters, i.e. $\cA$ is a bidirected VASS,
the reachability problem and the structure of reachability sets are well understood:

\begin{restatable}[\cite{MAYR1982305}]{lemma}{revvasspath}
	\label{rev-vass-path}
	One can decide in deterministic space $2^{O(|U|)} \cdot \log \|\cA\|$
	whether there exists a path $s \autsteps[w] t$ with $\Psi(w) = \bzero$
	and, if so, return such a path.
\end{restatable}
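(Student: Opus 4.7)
The plan is to reduce the question to reachability in a pure bidirected VASS and then invoke the classical EXPSPACE algorithm of Mayr for the word problem of commutative semigroups, while carefully isolating the dependence on $|U|$ from the dependence on $\|\cA\|$. First, I would observe that the condition $\Psi(w)=\bzero$ concerns only the letters in $U\cup\bar U$, since the looped letters contribute $(0,0)$ under $\Psi$. So I can build a bidirected VASS $\cA'$ with state set $Q$ and the same transitions as $\cA$, but with each label $w\in X_\Gamma^*$ replaced by its projection onto $U\cup\bar U$ (increments and decrements of the $|U|$ $\N$-counters). A path $s\autsteps[w]t$ in $\cA$ with $\Psi(w)=\bzero$ exists iff in $\cA'$ the configuration $(t,\bzero)$ is reachable from $(s,\bzero)$.

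Second, I would appeal to the small-witness property underlying Mayr's algorithm: if $(s,\bzero)\autsteps{}(t,\bzero)$ in a bidirected VASS with $|U|$ counters, then there is a witness run in which every intermediate counter value is bounded by $\|\cA\|^{2^{O(|U|)}}$. Equivalently, all intermediate configurations have binary descriptions of size $2^{O(|U|)}\cdot\log\|\cA\|$. This bound, which is the technical heart of Mayr's argument, is obtained by a Rackoff-style inductive decomposition whose parameters depend only on the number of counters, not on the magnitudes appearing in transition vectors.

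Once this bounded search space is identified, deciding reachability inside it can be done by Savitch's algorithm on the explicit (but exponentially large) configuration graph, storing only a constant number of configurations together with a logarithmic depth counter; each of these items has $2^{O(|U|)}\cdot\log\|\cA\|$ bits, matching the target bound. To also return a witnessing path $w$, I would instrument the Savitch procedure: at each recursive level, once a midpoint configuration is confirmed reachable from both endpoints, output the corresponding segment of transitions before descending further. Since the output tape does not count toward working space, this transforms the decision procedure into a construction within the same space bound.

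The main obstacle is the small-witness property itself. The straightforward Rackoff bound for ordinary VASS gives length bounds that are doubly exponential in $|U|$ and polynomial in $\|\cA\|$; what is needed here is the stronger statement that the \emph{counter values}, and hence the bit-complexity of a single configuration, stay within $\|\cA\|^{2^{O(|U|)}}$. In the bidirected setting this follows from the semilinearity of reachability sets of reversible VAS and the explicit bounds on generators and periods of those semilinear sets derived in Mayr's paper; the key point to verify is that the recursive argument can be carried out so that the exponential blowup lives entirely in the counter-number parameter and interacts only multiplicatively with $\log\|\cA\|$.
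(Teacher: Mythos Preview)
Your overall strategy is sound and aligns with the paper: project away the looped letters, invoke Mayr's small-witness bound for bidirected VASS / commutative semigroups to cap all intermediate counter values at $\|\cA\|^{2^{O(|U|)}}$, and then search the bounded configuration graph. The paper does exactly this, except that it first eliminates the control states by encoding them into two fresh $\N$-counters $\alpha,\beta$, thereby landing in a pure commutative-semigroup word problem; your direct ``VASS with states'' formulation is an equivalent packaging.

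Where your argument slips is in the choice of search procedure. Savitch's algorithm does \emph{not} store ``only a constant number of configurations'': its recursion depth is $\Theta(\log N)$ and at each level one must remember a candidate midpoint, so the total workspace is $\Theta((\log N)^2)$. With $\log N = 2^{O(|U|)}\cdot\log\|\cA\|$ this yields $2^{O(|U|)}\cdot(\log\|\cA\|)^2$, which is strictly worse than the stated bound and, in particular, would no longer be logarithmic space when $|U|$ is bounded---precisely the regime in which this lemma is later used to obtain the $\compL$ upper bound.

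The fix is to exploit bidirectedness: the configuration graph is \emph{undirected}, so you may replace Savitch by Reingold's deterministic $O(\log N)$-space algorithm for undirected $s$--$t$ connectivity, which also outputs a connecting path. This is what the paper does (via its Lemma on commutative-semigroup derivations), and it recovers the claimed $2^{O(|U|)}\cdot\log\|\cA\|$ bound exactly.
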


We define $\Reach(p,q) = \{ \Psi(w) \mid p \autsteps[w]_\cA q \} \cap \N^U$ for any states $p,q \in Q$.
\Cref{rev-vass-reach} lets us partition the set $U$ of $\N$-counters into (exponentially) bounded components $B$ and
{\em simultaneously} unbounded components $U \setminus B$.
In the proof, we employ a representation of $\Reach(p,q)$ as a {\em hybrid linear set}
$\bigcup_{i=1}^m \{ \bb_i + \sum_{j=1}^\ell \lambda_j \bp_j \mid \lambda_1, \dots, \lambda_\ell \in \N \}$
with $\bb_i, \bp_j \in \N^\Sigma$ as shown in~\cite{DBLP:conf/fct/KoppenhagenM97}. See \cref{appendix-upper-bound-n-counters}.
\begin{restatable}{lemma}{revvassreach}
	\label{rev-vass-reach}
	One can compute in deterministic space $2^{O(|U|)} \cdot \log \|\cA\|$ a set $B \subseteq U$ and a number $b \le 2^{O(|U|)} \cdot \|\cA\|$
	such that for all $q \in Q$ we have:
	\begin{itemize}
	\item $\bv(u) \le b$ for all $\bv \in \Reach(s,q)$ and $u \in B$,
	\item for every $c \in \N$ there exists $\bv \in \Reach(s,q)$ with $\bv(u) \ge c$ for all $u \in U \setminus B$.
	\end{itemize}
\end{restatable}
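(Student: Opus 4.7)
My plan is to apply the structural theorem for bidirected VASS reachability sets due to Koppenhagen and Mayr~\cite{DBLP:conf/fct/KoppenhagenM97}: for each target $q \in Q$, the set $\Reach(s,q)$ admits a hybrid linear representation
\[
  \Reach(s, q) = \bigcup_{i=1}^{m_q} \Bigl\{\bb_i^q + \sum_{j=1}^{\ell} \lambda_j \bp_j : \lambda_j \in \N\Bigr\},
\]
in which the periods $\bp_1, \dots, \bp_\ell \in \N^U$ can be chosen \emph{uniformly} in $q$ (arising from a Hilbert basis of the submonoid of $\N^U$ generated by cycle effects at $s$), and both bases and periods have entries of magnitude $2^{O(|U|)} \|\cA\|$. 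Uniformity of the periods across $q$ relies on bidirectedness: inserting any cycle at $s$ before a path $s \autsteps q$ yields another valid path $s \autsteps q$, so the cycle-effect submonoid at $s$ provides periods for every $q$ at once.

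Given such a representation, I would define $B = \{u \in U : \bp_j(u) = 0 \text{ for every } j\}$ and take $b = 2^{c|U|} \|\cA\|$ for a constant $c$ sufficiently large to dominate all base-entry magnitudes. Verification of the two bullets is then immediate: for $u \in B$ every $\bv \in \Reach(s,q)$ satisfies $\bv(u) = \bb_i^q(u) \le b$, since the periods contribute $0$ on $u$; for $u \in U \setminus B$, some period $\bp_{j(u)}$ has $\bp_{j(u)}(u) \ge 1$, so setting all multipliers $\lambda_j = c$ simultaneously makes $\bv(u) \ge c$ at every coordinate outside $B$ (the remaining periods, being non-negative, do not lower any entry).

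To compute $B$ within the stated space bound, I would, for each coordinate $u$, decide whether some cycle $s \autsteps[w] s$ has effect at least $\be_u$ on $u$. This reduces to \cref{rev-vass-path} applied to an auxiliary bidirected valence system obtained from $\cA$ by attaching a fresh sink $t_u$ to $s$ through a bidirected $\bar u$-edge pair and asking whether a zero-effect path from $s$ to $t_u$ exists. The number $b$ is returned as the a priori magnitude $2^{c|U|} \|\cA\|$ without explicit computation.

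The main technical obstacle is the uniformity of periods across $q$ in the first step. Koppenhagen and Mayr establish the hybrid representation for a single pair $(s,q)$; the extra argument needed here is that in the bidirected setting one can take the periods from a single Hilbert basis of the cycle-effect submonoid at $s$, independently of the target, while retaining the same $2^{O(|U|)} \|\cA\|$ magnitude bounds on the periods.
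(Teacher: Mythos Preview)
Your proposal follows essentially the same route as the paper: invoke the hybrid linear representation of Koppenhagen--Mayr, set $B$ to be the coordinates on which all periods vanish, take $b$ as the a~priori magnitude bound on the bases, and compute $B$ by one auxiliary reachability query per coordinate. Two points deserve attention.

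\medskip
\noindent\textbf{Uniformity of periods.} The paper does not obtain this via a cycle-submonoid argument. Instead it first encodes the control state into two fresh $\N$-counters $\alpha,\beta$ (with the invariant $\alpha+\beta=|Q|$) and then applies \cite[Lemma~17]{DBLP:conf/fct/KoppenhagenM97} to the \emph{single} congruence class of $\alpha^{s}\beta^{|Q|-s}$ in the resulting commutative semigroup. The invariant forces every period in $\N^{U\cup\{\alpha,\beta\}}$ to have zero $\alpha$- and $\beta$-components, so after restricting to $U$ one obtains a single period set that works for every $q$ simultaneously; $\Reach(s,q)$ is then just the sub-union over bases with the correct $\alpha,\beta$-values. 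Your alternative (periods from a Hilbert basis of $\Reach(s,s)$) can be made rigorous in the bidirected setting---one checks that for $\bv,\bv+\bp\in\Reach(s,q)$ with $\bp\ge\bzero$ the concatenation $w_2\bar{w_1}$ witnesses $\bp\in\Reach(s,s)$---but you have not spelled this out, and it does not directly give you bounded bases without re-invoking KM. The paper's state-encoding trick makes the whole thing a one-line consequence of the cited lemma.

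\medskip
\noindent\textbf{Computing $B$.} Your auxiliary construction is not quite right. Adding only a bidirected $\bar u$-edge from $s$ to a fresh sink $t_u$ and asking for a zero-effect path to $t_u$ tests whether some $\bv\in\Reach(s,s)$ equals the unit vector $\be_u$, not whether some $\bv$ has $\bv(u)\ge 1$: the single $\bar u$ you consume on the way to $t_u$ must be balanced exactly, and there is no way to discard the other coordinates. The paper's construction additionally equips the sink with bidirected self-loops $\bot\autsteps[v]\bot$ for \emph{every} $v\in U$, so that after reaching $\bot$ one can absorb all remaining counter values. Without those loops the test is too restrictive and your computation of $B$ can be wrong.
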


\xparagraph{Adding $\Z$-counters}
We will decide reachability in $\cA$ by computing a representation for
\[
	\Eff(s,t) = \{\Phi(w) \mid s \autsteps[w]_\cA t, \, \Psi(w) = \bzero\}.
\]
Since $[w] = 1$ if and only if $\Psi(w) = \bzero$ and $\Phi(w) = \bzero$
we only need to test $\bzero \in \Eff(s,t)$.
Observe that $\Eff(s,t)$ is either empty or a coset $\Eff(s,t) = \bu + \Eff(s,s)$ for any $\bu \in \Eff(s,t)$.
Using \Cref{rev-vass-path} we can test whether there is a path $s \autsteps[w] t$ with $\Psi(w) = \bzero$,
and, if so, we find $\bu := \Phi(w) \in \Eff(s,t)$.
It remains to compute a representation of the subgroup $\Eff(s,s)$.

\begin{restatable}{proposition}{computeeff}
	\label{compute-eff}
	In deterministic space $2^{O(|U|)} \cdot \log \|\cA\|$, we can compute
	$L'\supseteq L$, $|L'| \le |\Gamma|$, and $\bv_1, \dots, \bv_n \in \Z^{L'}$ 
	with $\Eff(s,s) = \{ \bv|_L \mid \bv \in \langle \bv_1, \dots, \bv_n \rangle,~\bv|_{L' \setminus L} = \bzero \}$.
\end{restatable}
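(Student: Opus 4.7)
The plan is to eliminate the $\N$-counters of $\cA$ by absorbing the bounded ones (in $B$) into the state space and reinterpreting the simultaneously unbounded ones (in $U \setminus B$) as auxiliary $\Z$-counters. The resulting system is a bidirected $\Z$-VASS over a looped clique, to which \cref{compute-coset} applies to yield the desired generators.

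Concretely, I would first invoke \cref{rev-vass-reach} to compute, in space $2^{O(|U|)}\log\|\cA\|$, the set $B$ and the bound $b$. Set $L' := L \cup (U \setminus B)$, so $|L'| \le |V| = |\Gamma|$. Assume WLOG that each transition of $\cA$ carries a single letter, and construct implicitly a bidirected valence system $\cA'$ over the looped clique on $L'$, with state set $Q' := Q \times \{0,\ldots,b\}^B$. Each $\cA$-transition $p \autstep[x] q$ induces a transition from $(p, \bx)$ to $(q, \bx + \Psi(x)|_B)$ in $\cA'$ for every $\bx \in \{0,\ldots,b\}^B$ for which $\bx + \Psi(x)|_B \in \{0,\ldots,b\}^B$, labelled so as to have $L'$-effect $(\Phi(x), \Psi(x)|_{U\setminus B}) \in \Z^{L'}$. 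Bidirectedness of $\cA'$ is inherited from $\cA$. Applying \cref{compute-coset} to $\cA'$ with source and target both equal to $(s, \bzero)$ yields $w_0, w_1, \ldots, w_n$; since $(s, \bzero)$ reaches itself via the empty path, $[w_0]$ already lies in $\langle [w_1], \ldots, [w_n] \rangle$ and we may take $w_0 = \varepsilon$. Letting $\bv_i \in \Z^{L'}$ be the abelianisation of $w_i$ (under the isomorphism $\dM \Gamma' \cong \Z^{L'}$) produces the claimed generators.

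The heart of the proof is then to verify the identity
\[
\Eff(s, s) \;=\; \{ \bv|_L \mid \bv \in \langle \bv_1, \ldots, \bv_n \rangle,\ \bv|_{L' \setminus L} = \bzero \}.
\]
The forward inclusion is easy: a loop $s \autsteps[w] s$ in $\cA$ with $\Psi(w) = \bzero$ lifts to a loop $(s, \bzero) \autsteps[w] (s, \bzero)$ in $\cA'$, since \cref{rev-vass-reach} ensures that the $B$-components stay in $\{0, \ldots, b\}$ along every $\cA$-path out of $(s, \bzero)$. The backward inclusion is the main obstacle: an $\cA'$-loop $\pi'$ at $(s, \bzero)$ with zero $L' \setminus L$-effect need not be directly executable in $\cA$, because the $U \setminus B$-counters (now $\Z$-valued) may dip negative along the way. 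To remedy this, I would first note that $\Reach(s, s) \subseteq \bzero \times \N^{U \setminus B}$, because iterating a loop with nonzero $B$-effect would eventually exceed the bound $b$ from \cref{rev-vass-reach}. Then, using the second clause of \cref{rev-vass-reach} with $q = s$, I can pick an $\cA$-loop $s \autsteps[u] s$ with $\Psi(u)|_B = \bzero$ and $\Psi(u)|_{U \setminus B}$ componentwise larger than the deepest dip of $\pi'$ on $U \setminus B$. The sandwich $u \cdot w \cdot \bar u$, where $w$ is the underlying $\cA$-transition sequence of $\pi'$ and $\bar u$ is its reverse (valid by bidirectedness of $\cA$), is then an executable $\cA$-loop at $s$ whose total $\N$-effect is $\bzero$ and whose $\Z$-effect equals $\bv|_L$.

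For the complexity bound, each $\cA'$-state takes $O(\log |Q| + |B| \log(b+1)) = O(|U|^2 + |U| \log \|\cA\|) \le 2^{O(|U|)} \log \|\cA\|$ bits, and the transition effects have polynomial bit length. Since \cref{compute-coset} runs in logspace in the size of $\cA'$, composing it with the implicit presentation of $\cA'$ keeps the overall space within $2^{O(|U|)} \log \|\cA\|$.
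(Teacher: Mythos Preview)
Your proposal is correct and follows essentially the same approach as the paper: absorb the bounded $B$-counters into the state, reinterpret the unbounded $U\setminus B$-counters as $\Z$-counters to obtain a bidirected system $\cA'$ over a looped clique, apply \cref{compute-coset}, and establish the key identity via a sandwich argument using the simultaneously-unbounded clause of \cref{rev-vass-reach}. The one minor deviation is your additional observation that $\Reach(s,s)|_B=\bzero$ (via iteration of loops); this is a correct fact, but the paper does not need it---it simply uses $\Psi_B(u)\in\N^B$, which already suffices for $\Psi_B(u)\oplus\Psi_B(\bar u)=\bzero$ in the bicyclic monoid.
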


\noindent\emph{Proof Sketch.}
We eliminate the $\N$-counters in $\cA$ by replacing the unbounded counters by $\Z$-counters and
maintaining the bounded ones in the finite state.
We obtain a system $\cA'$ over a looped clique, for which the statement follows from \Cref{compute-coset}.
Clearly, every valid $\cA$-run translates into a valid $\cA'$-run.
However, in a valid $\cA'$-run, the formerly unbounded counters can now take negative values.
We can prepend a cycle run which takes sufficiently large values in the unbounded components,
and append the reverse cycle run to cancel its effect.
This ensures that all counters remain nonnegative during the run, which can thus be translated into an $\cA$-run.
\qed

We can now prove the upper bounds for cases (\ref{main-bounded-cliques}) and (\ref{main-lc-unbounded-cliques})
in \cref{main-result}.
As explained above we can test in exponential space (logarithmic space if $\cG$ consists of cliques of bounded size)
whether $\Eff(s,t)$ is nonempty and, if so compute a vector $\bu \in \Eff(s,t)$.
It remains to test $\bzero \in \Eff(s,t)$, which is equivalent to $- \bu \in \Eff(s,s)$.
Using \Cref{compute-eff}, this can be solved in $\ILD$.
If $|\Gamma|$ is bounded then this is in $\compL$ (even $\mathsf{TC}^0$) \cite[Theorem~13]{DBLP:conf/stacs/ElberfeldJT12}.

\begin{restatable}{proposition}{ucboundedild}
	\label{uc-bounded-ild}
	If $\cG$ is a UC-bounded class of cliques,
	then the problem $\REVREACH(\cG)$ belongs to $\ILD$.
	If $\cG$ is a class of cliques of bounded size,
	then $\REVREACH(\cG)$ belongs to $\compL$.
\end{restatable}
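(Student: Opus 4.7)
The plan is to combine the ingredients developed earlier in \Cref{sec:upper-bound-ild}: \Cref{rev-vass-path} for reachability of the $\N$-counter fragment, \Cref{compute-eff} for a subgroup description of $\Eff(s,s)$, and finally the fact that $\ILD$ solvability is in $\compP$ (and even in $\compL$ / $\mathsf{TC}^0$ when the dimension is bounded) to finish the test $\mathbf{0} \in \Eff(s,t)$.

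More concretely, on input a bidirected valence system $\cA$ over a clique $\Gamma$ from $\cG$ together with states $s,t$, I would proceed as follows. First, since $[w] = 1$ in $\dM\Gamma$ is equivalent to $\Psi(w) = \bzero$ and $\Phi(w) = \bzero$, the problem reduces to deciding whether $\bzero \in \Eff(s,t)$. Using \Cref{rev-vass-path}, I can either certify that there is no run $s \autsteps[w] t$ with $\Psi(w) = \bzero$ at all (in which case $\Eff(s,t) = \emptyset$ and we reject) or produce such a run, from which I compute the witness $\bu := \Phi(w) \in \Eff(s,t) \subseteq \Z^L$. Since $\Eff(s,t) = \bu + \Eff(s,s)$, the question $\bzero \in \Eff(s,t)$ becomes $-\bu \in \Eff(s,s)$.

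To decide this, I apply \Cref{compute-eff} to obtain, in deterministic space $2^{O(|U|)} \cdot \log\|\cA\|$, an enlarged coordinate set $L' \supseteq L$ and vectors $\bv_1, \dots, \bv_n \in \Z^{L'}$ with
\[
	\Eff(s,s) = \{ \bv|_L \mid \bv \in \langle \bv_1, \dots, \bv_n\rangle,\; \bv|_{L' \setminus L} = \bzero\}.
\]
Thus $-\bu \in \Eff(s,s)$ iff there exist integers $\lambda_1, \dots, \lambda_n \in \Z$ with $\sum_{i=1}^n \lambda_i \bv_i|_L = -\bu$ and $\sum_{i=1}^n \lambda_i \bv_i|_{L' \setminus L} = \bzero$. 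This is a system of linear Diophantine equations over $\Z$ in the unknowns $\lambda_1, \dots, \lambda_n$, i.e.\ an $\ILD$ instance, which can be constructed in logspace from the output of \Cref{compute-eff,rev-vass-path}.

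When $\cG$ is UC-bounded, the factor $2^{O(|U|)}$ is a constant, so both preliminary steps run in logspace, and the whole procedure is a logspace reduction to $\ILD$, giving the first part of the proposition. When $\cG$ consists of cliques of bounded size, $|L'| \le |\Gamma|$ is also bounded by a constant, so the resulting $\ILD$ instance has a bounded number of equations and unknowns with integer entries of polynomial bit-length; feasibility of such fixed-dimension integer linear systems is in $\compL$ (indeed in $\mathsf{TC}^0$) by \cite[Theorem~13]{DBLP:conf/stacs/ElberfeldJT12}, yielding the second part. The only non-routine step is ensuring that the reductions from \Cref{rev-vass-path,compute-eff} compose cleanly in logspace when $|U|$ is bounded, which follows because the space bound $2^{O(|U|)} \cdot \log \|\cA\|$ degenerates to $O(\log \|\cA\|)$ in that regime.
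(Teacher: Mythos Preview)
Your proof is correct and follows essentially the same approach as the paper's own proof: use \Cref{rev-vass-path} to find a witness $\bu \in \Eff(s,t)$ (or certify emptiness), use \Cref{compute-eff} for a subgroup description of $\Eff(s,s)$, and reduce $-\bu \in \Eff(s,s)$ to an $\ILD$ instance, invoking \cite[Theorem~13]{DBLP:conf/stacs/ElberfeldJT12} for the bounded-size case.

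One small inaccuracy: you claim that for bounded $|\Gamma|$ the resulting $\ILD$ instance has a bounded number of \emph{unknowns}. This is not guaranteed---the number $n$ of generators $\bv_1,\dots,\bv_n$ produced by \Cref{compute-eff} depends on the size of the auxiliary system $\cA'$ and is only polynomial in $\|\cA\|$, not bounded. What \emph{is} bounded is the number of equations, namely $|L'| \le |\Gamma|$; this is what the paper relies on, and it suffices for the cited $\mathsf{TC}^0$ result (membership of a vector in a subgroup of $\Z^m$ for fixed $m$). So the conclusion stands, but the justification should refer to the bounded number of equations rather than unknowns.
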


Similarly we can prove the following result, which will be used in \cref{sec:upper-bound-p}.

\begin{restatable}{theorem}{computecosetcliques}\label{compute-coset-cliques}
	Given a clique $\Gamma \in \cG$, a bidirected valence system $\cA = (Q,\to)$ over $\Gamma$, and states $s,t \in Q$,
	one can test in exponential space (polynomial time if $\cG$ is UC-bounded)
	if $\Eff(s,t)$ is nonempty and, if so,
	compute a coset representation $\bu + \langle \bv_1, \dots, \bv_n \rangle$ for $\Eff(s,t)$. 
\end{restatable}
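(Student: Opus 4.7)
The plan is to assemble Lemma \ref{rev-vass-path}, Proposition \ref{compute-eff}, and a standard integer linear algebra step. First, I would apply Lemma \ref{rev-vass-path} to $\cA$ to test whether there is a run $s \autsteps[w] t$ with $\Psi(w) = \bzero$. If none exists, then $\Eff(s,t) = \emptyset$ and we are done; otherwise the lemma returns such a witness, and $\bu := \Phi(w)$ is an element of $\Eff(s,t)$. Since in this case $\Eff(s,t) = \bu + \Eff(s,s)$, it remains only to present the subgroup $\Eff(s,s) \le \Z^L$ by an explicit set of generators.

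To that end, I invoke Proposition \ref{compute-eff} to obtain $L' \supseteq L$ with $|L'| \le |\Gamma|$ and vectors $\bv_1, \ldots, \bv_n \in \Z^{L'}$ satisfying
\[
\Eff(s,s) \;=\; \bigl\{\,\bv|_L \;:\; \bv \in \langle \bv_1, \ldots, \bv_n\rangle,~ \bv|_{L' \setminus L} = \bzero\,\bigr\}.
\]
To convert this kernel-style description into an explicit generating family in $\Z^L$, I arrange the $\bv_i$ as columns of a matrix $M \in \Z^{L' \times n}$, let $M_0 \in \Z^{(L' \setminus L) \times n}$ be its restriction to the extra rows, and compute a basis $\by_1, \ldots, \by_m \in \Z^n$ of the integer kernel $\{\bx \in \Z^n : M_0 \bx = \bzero\}$ using Hermite normal form. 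The vectors $\bv'_j := (M \by_j)|_L \in \Z^L$ then generate $\Eff(s,s)$, so the procedure returns the coset representation $\bu + \langle \bv'_1, \ldots, \bv'_m\rangle$.

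For the complexity bounds, in the UC-bounded case $|U|$ is a constant, so both Lemma \ref{rev-vass-path} and Proposition \ref{compute-eff} run in polynomial time and produce vectors of polynomial bitsize, and the final Hermite normal form step is also polynomial time. Without a UC-bound, both procedures run in $2^{O(|U|)} \cdot \log \|\cA\|$ space, the $\bv_i$ have at most exponential bitsize, and the subsequent integer linear algebra still fits in exponential space. The one delicate point I anticipate is tracking the size bounds carefully through the kernel computation, ensuring the Hermite normal form does not exceed the claimed space bound; this is standard, since Hermite normal form is known to incur only polynomial blow-up in the bitsize of its input, so the overall procedure remains within the promised resources.
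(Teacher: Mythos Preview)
Your proposal is correct and follows essentially the same approach as the paper: apply Lemma~\ref{rev-vass-path} to detect nonemptiness and extract a coset representative $\bu$, apply Proposition~\ref{compute-eff} to obtain the kernel-style description of $\Eff(s,s)$, and then solve the resulting integer linear algebra problem to get explicit generators. The paper phrases the last step as computing a general solution of a linear Diophantine system via \cite{DBLP:journals/siamcomp/ChouC82}, whereas you spell it out as an integer-kernel computation via Hermite normal form; these are the same computation, and your complexity accounting matches the paper's.
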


\section{Upper Bounds II: $\compP$, $\EXPTIME$, $\EXPSPACE$}
\label{sec:upper-bound-p}

We now prove the upper bounds of (\ref{main-bounded-height}) and
(\ref{main-unbounded-height}) in \cref{main-result}. Let
$\SCpm_{d}$ be the class of graphs in $\SCpm$ where each unlooped clique has
size at most $d$, and let $\SCpm_{d,\ell}$ be the class of those graphs
$\Gamma$ in $\SCpm_d$ with $h(\Gamma)\le \ell$. We prove that
(i)~$\REVREACH(\SCpm)$ is in $\EXPSPACE$, (ii)~for every $d$,
$\REVREACH(\SCpm_d)$ is in $\EXPTIME$ and (iii)~for every $d,\ell\ge 0$,
the problem $\REVREACH(\SCpm_{d,\ell})$ is in $\compP$.

\xparagraph{Key ideas and outline}
The graphs in $\SCpm_{d,\ell}$ correspond to the following storage mechanisms.
The simplest case, graphs in $\SCpm_{d,0}$, consist in collections of $d$ counters with values in $\N$, or $d$-VASS. 
The storage mechanism corresponding to $\SCpm_{d,1}$ is a stack, where each entry contains $d$ such $\N$-counters.
In addition, they have $\Z$-counters in parallel to such stacks (note that using
more $\N$-counters in addition to a stack would be a PVASS, which we are
avoiding). Thus, $\SCpm_{d,1}$ corresponds to ``stacks of $d$ $\N$-counters, plus $\Z$-counters''. 
The \emph{building stacks} and \emph{adding $\Z$-counters} can now be iterated to obtain the storage mechanisms for $\SCpm_{d,\ell}$ for higher $\ell$: 
A storage mechanism with stacks, where each
entry is a configuration of an $\SCpm_{d,1}$ storage, plus additional
$\Z$-counters, is one in $\SCpm_{d,2}$, etc.

Our algorithms work as follows. We first translate a valence system over
$\SCpm_{d,\ell}$ into a type of grammar that we call $k$-grammar. This
translation is quite similar to existing approaches to show that general
reachability over $\SCpm_{1,\ell}$ is decidable~\cite{lohrey2008submonoid} and
even $\NP$-complete~\cite{HaaseZetzsche2019}. The difference to the previous
approaches is that for \emph{bidirected} valence systems, we can construct
\emph{bidirected $k$-grammars}, which are $k$-grammars that satisfy a carefully
chosen set of symmetry conditions. In this translation, we also eliminate the
$d$ $\N$-counters: Roughly speaking, \cref{compute-coset-cliques} lets us
replace them by a gadget that preserves their interaction with the
$\Z$-counters.

After the reduction to grammars, the algorithm in
\cite{HaaseZetzsche2019} (and also the $\NP$ algorithm for pushdown
$\Z$-VASS~\cite{HagueLin2011}, which correspond to $\SCpm_{0,1}$) employs
a result by \citet{DBLP:conf/cade/VermaSS05}. It says that given a context-free
grammar, one can construct in polynomial time an existential Presburger formula
for its Parikh image.
Existential Presburger arithmetic corresponds to systems of
integer linear \emph{inequalities}, for which feasibility is $\NP$-complete.  
(To be precise, \cite{HaaseZetzsche2019} uses 
Presburger arithmetic extended with Kleene stars, to deal with $\SCpm_{0,\ell}$ for
$\ell>1$.) 
We show that for \emph{bidirected} grammars, we have an analogous result that yields
systems of linear integer \emph{equations}, which can be solved in $\compP$.

This analogue of \cite{DBLP:conf/cade/VermaSS05} lets us express emptiness of
bidirected grammars using \emph{coset circuits}, which are (newly introduced)
circuits to describe cosets $C\subseteq\Z^m$. 
They can be seen as compressed
representations of integer linear equation systems. In the final step, we
observe that those coset circuits translate into exponential-sized linear
equation systems. Moreover, for every fixed $\ell$, on input from
$\SCpm_{d,\ell}$, the coset circuits have bounded depth, resulting in
polynomial-sized equation systems.

\xparagraph{Grammars}
Let us now define $k$-grammars.
They have a set $N$ of nonterminal symbols (which can be rewritten by a grammar rule) and a set $T$ of terminal symbols (which can not be rewritten).
We allow letters in $T$ to occur negatively, but the letters in $N$ can only occur non-negatively. 
Hence, we derive vectors in $\N^N+\Z^T$, or equivalently, 
vectors $\bu\in\Z^{N\cup T}$, where $\bu(a)\ge 0$ for each $a\in N$.
We say that a vector $\bv\in\N^N$ \emph{occurs} in such a $\bu$ if $\bu(a)\ge \bv(a)$ for every $a\in N$.

\subsubsection*{Formal definition}
A \emph{$k$-grammar} is a tuple $G=(N,T,P)$, where
\begin{itemize}
	\item $N$ is a finite alphabet of \emph{nonterminals}, which is a disjoint union $N=\bigcup_{i=0}^k N_i$,
	\item $T$ is a finite alphabet of \emph{terminals}, which is a disjoint union $T=\bigcup_{i=0}^k T_i$,
	\item $P$ is a finite set of \emph{productions} of one of two forms:
		\begin{itemize}
			\item $a\to b$ with $a\in N_i$, $b\in N_{j}$, $|i-j|=1$.
			\item $a\to \bu$ with $a\in N_0$ and $\bu\in \N^{N_{0}}+\Z^{T}$.
		\end{itemize}	
\end{itemize}
The letters in $T_i$ ($N_i$) are the \emph{level-$i$ (non)terminals}.
We write 
\[ N_{[i,j]}=\bigcup_{i\le r\le j} N_r, \]
and
analogously for $T_{[i,j]}$. 
Define $R\subseteq N$ as the subset
of $a\in N$ that appear on some right-hand side of the grammar. We also
write $R_i= R\cap N_i$ and use the notation $R_{[i,j]}$ as for $N$ and $T$.

\subsubsection*{Derivations}
In these grammars, derivations produce vectors in $\N^{N}+\Z^{T}$ instead of words.\footnote{
	Usually, grammars derive strings rather than vectors.
	We could also develop our theory using grammars that generate
	strings, but since we are only interested in the generated strings up to
	reordering of letters, we simplify the exposition by working directly with vectors. 
}
A \emph{configuration} is a vector $\bv\in\N^{N}+\Z^T$. For $i\in[0,k]$, we define the $i$-derivation
relation $\deriv[i]$ as follows. We begin with defining $\deriv[0]$ and then define each $\deriv[i]$ based on $\deriv[i-1]$.
For configurations $\bv,\bv'\in\N^{N}+\Z^{T}$,
we have $\bv\deriv[0]\bv'$ if there is some $a\in N_0$ and a
production $a\to\bu$ with $\bu\in\Z^{N_0\cup T}$ or $\bu\in N_1$ such that
$\bv(a)>0$ and $\bv'=\bv-a+\bu$. 

In order to define $\deriv[i]$ inductively for $i>0$, we first need to define
generated sets. Let $\derivs[i]$ denote the reflexive transitive closure of
$\deriv[i]$. 
We define the generated set $L(a)$ for each $a\in N_i$:
\[ L(a)=\{\bu\in\N^{N_{[i+1,k]}}+\Z^{T_{[i+1,k]}} \mid \text{$a\derivs[i] \bu$}\}. \]
We define $\deriv[i]$ based on $\deriv[i-1]$: We have $\bv\deriv[i]\bv'$ if
there is an $a\in N_{i}$ with $\bv(a)>0$ and a production $a\to a'$ for some
$a'\in N_{i-1}$ and a $\bu\in \Z^{N\cup T}$ with $\bu\in L(a')$ and
$\bv'=\bv-a+\bu$. If $a\to b$ is a production with $a\in N_i$ and $b\in
N_{i+1}$ and $\bv(a)>0$ and $\bv'=\bv-a+b$, then we also have
$\bv\deriv[i]\bv'$.  The \emph{emptiness problem} asks, given a $k$-grammar $G$
and a nonterminal $a$ of $G$, is $L(a)\ne\emptyset$?

\subsubsection*{Intuition}
We give some intuition on how $\REACH(\SCpm_{0,\ell})$ is translated into a
grammar. Each nonterminal is of the form $(p,s^x,q)$, where $p,q$ are states of
the valence system and $s^x$ specifies some part of the storage mechanism (see
\cref{appendix-sec-bireach-to-grammars} for details). Then $(p,s^x,q)$
represents the set of all runs of the system from $p$ to $q$ such that on part
$s^x$, the run has overall neutral effect.

Specifically, in the case of $\SCpm_{0,1}$, i.e.\ pushdown $\Z$-VASS,
reachability amounts to checking whether a context-free language contains a word
whose letter counts satisfy some linear condition.  In a $0$-grammar, we have
context-free rewriting rules ($a\to \bu$), which involve nonterminals ($N$) and
terminals ($T$). In addition, the definition of $L(a)$ requires that a derived
contains no terminal from $T_0$ anymore: This is used to implement the linear
condition on letter counts for pushdown $\Z$-VASS.

For $\SCpm_{0,\ell}$, $\ell>1$, we need to simulate stacks with
$\SCpm_{0,\ell-1}$ configurations in each stack entry. Here, we can use a
higher $k$: A derivation step at level $k$ involves an entire derivation at
level $k-1$. This corresponds to the fact that between a pair of push and pop
of a $\SCpm_{d,\ell}$ machine, there is an entire run of an $\SCpm_{d,\ell-1}$
machine.

\xparagraph{Bidirected grammars}
We are now ready to present the symmetry conditions of bidirected grammars.
For a $k$-grammar $G$, an \emph{involution} is a map
$\inv{\cdot}\colon N\to N$ such that for $a\in N_i$, we
have $\inv{a}\in N_i$ for $i\in[0,k]$ and $\inv{(\inv{a})}=a$.
Then, for $\bu\in\Z^{N\cup T}$, we define
$\inv{\bu}\in\Z^{N\cup T}$ as $\inv{\bu}(a)=\bu(\inv{a})$ for
$a\in N$ and $\inv{\bu}(a)=-\bu(a)$ for $a\in T$.
Here, $\inv{\bu}$ can be thought of as the inverse of $\bu$.

We say that a $k$-grammar $G$ is \emph{bidirected} if there is an
involution $\inv{\cdot}\colon N\to N$ such that 
\begin{enumerate}[label=(\arabic*)]
	\item\label{symmetry-production-inverse} for every
production $a\to \bu$ in $P$, we have a production $\inv{a}\to\inv{\bu}$ in $P$,

\item\label{symmetry-production-reverse} for every $a\in R_0$ and every production $a\to
	b+\bu$ with $b\in R_0$, $\bu\in\N^{R_0}+\Z^T$, we have $b\derivs[0] a+\inv{\bu}$,

\item\label{symmetry-reverse-nton} for every production $a\to b$ with $a\in N_i$, $b\in N_j$ with $|i-j|=1$ and $a\in R$, we also have $b\to a$,
\item\label{symmetry-rhs} for every $a\in R$, we have $L(a)\ne\emptyset$,
\item\label{symmetry-add-inverse-pairs}\label{rev-add-inverse-pairs} if $a\in R_0$, then $a\derivs[0] a+a+\inv{a}$.
\end{enumerate}

\subsubsection*{Intuition on bidirectedness}
We give some intuition on these symmetry conditions. 
Recall that in the
translation from automata to grammars, each nonterminal is of the form
$(p,s^x,q)$, where $p,q$ are states and $s^x$ represents part of the storage
mechanism.  The involution is given by $\inv{(p,s^x,q)}=(q,s^x,p)$.  Since the
map $\inv{\cdot}$ negates terminal letters, which represent counter values,
condition~\ref{symmetry-production-inverse} reflects the existence of paths that go in
the opposite direction with opposite effect.

Conditions \ref{symmetry-production-reverse},\ref{symmetry-reverse-nton} let us
reverse productions: Consider a production $(p,s^x,q)\to
(p,s^x,t)+(t,s^x,q)$.  It says one can get a run from $p$ to $q$ by
combining one from $p$ to $t$ with one from $t$ to $q$. Now
\ref{symmetry-production-reverse} yields a derivation $(p,s^x,t)\derivs[0]
(p,s^x,q)+\inv{(t,s^x,q)}=(p,s^x,q)+(q,s^x,t)$. This reflects that paths from
$p$ to $t$ can be obtained from ones from $p$ to $q$ and ones from $q$ to $t$.

\emph{Cross-level productions} $a\to b$ with $a\in N_i$, $b\in N_j$ with $|i-j|=1$
describe the relationship between
comparable parts $s^x$ of the storage. Here, ``comparable'' means ``a subset of the
counters.''  For example, if $r^y$ represents a subset of the counters in $s^x$,
we have productions $(p,s^x,q)\to(p,r^y,q)$, which tell us: A run $\rho$ from
$p$ to $q$ that is neutral on $r^y$ is also neutral on $s^x$ \emph{if} $\rho$'s
effect on the additional counters in $s^x$ is zero (recall that a derivation on
level $i-1$ can only be used as a step on level $i$ if its effect on the
terminal letters is zero).  Now \ref{symmetry-reverse-nton} says that this is
symmetric: The production $(p,r^y,q)\to (p,s^x,q)$ tells us that any run that
is even neutral on $s^x$ is in particular neutral on $r^y$.

The last two conditions \ref{symmetry-rhs} and
\ref{symmetry-add-inverse-pairs} stem from the fact that we sometimes
construct derivations that, as a byproduct, create vectors
$a+\inv{a}=(p,s^x,q)+(q,s^x,p)$. On the one hand, we want to argue
that such cycles can always be eliminated by further derivation. This is
guaranteed by \ref{symmetry-rhs}, which lets us derive some vector $\bu$ from $a$ and
then because of \ref{symmetry-production-reverse}, the inverse $\inv{\bu}$ from
$\inv{a}$. This results in the vector $\bu+\inv{\bu}$, but the nonterminals in
$\bu$ are on a higher level than $a$. Thus, inside of a derivation on level
$k$, we can completely get rid of it. Finally,
\ref{symmetry-add-inverse-pairs} complements this by letting us
create such cycles. This simplifies the set of derived
vectors.

\subsubsection*{Constructing bidirected grammars}
We obtain the following reduction. It is technically involved, but
since it follows similar ideas to existing approaches
(\cite{lohrey2008submonoid,HaaseZetzsche2019}), we defer details to the
appendix. 
\begin{proposition}\label{bireach-to-grammars}
	There is a polynomial-time Turing reduction from $\REVREACH(\SCpm_{d})$
	to the emptiness problem for bidirected $k$-grammars. If the
	input graphs are from $\SCpm_{d,\ell}$, then we only use $k$-grammars
	with $k\le 2\ell$.
\end{proposition}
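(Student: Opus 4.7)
The plan is to proceed by structural induction on the $\SCpm$-decomposition of the input graph $\Gamma\in\SCpm_d$. Recall that $\Gamma$ is built from unlooped cliques of size at most $d$ by alternating (i) disjoint union of two transitive forests (corresponding to stacks over inner configurations) and (ii) adding a universal looped vertex (corresponding to an outer $\Z$-counter). The grammar nonterminals will be triples $(p,s,q)$ where $p,q$ are control states of the valence system and $s$ indexes an ``active level'' of the storage, and the involution will be $\inv{(p,s,q)}=(q,s,p)$. Terminal letters encode net effects on the $\Z$-counters introduced so far, and bidirected reachability from $s$ to $t$ reduces to nonemptiness of $L((s,\top,t))$ for the top-level nonterminal.

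In the base case of a height-$0$ graph (a clique containing at most $d$ unlooped vertices and arbitrarily many looped ones), I invoke \cref{compute-coset-cliques} in polynomial time to obtain, for every state pair $(p,q)$, a coset representation $\bu_{p,q}+\langle\bv_1,\dots,\bv_n\rangle$ of $\Eff(p,q)$, which I encode by level-$0$ productions $(p,s,q)\to\bu_{p,q}$ together with $(p,s,q)\to(p,s,q)+\bv_i$ and their inverses. For the disjoint-union step $\Gamma=\Gamma_1\cup\Gamma_2$, I add context-free productions simulating push/pop: a run is split into a matched stack bracket containing an inner sub-computation plus a continuation, with the inner bracket linked to the sub-computation via a cross-level production $(p,s,q)\to a$ for an inner nonterminal $a$. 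For the universal-looped-vertex step, the new $\Z$-counter is encoded as a new terminal letter, and the outer and inner nonterminals are joined by cross-level productions whose semantics forces the inner derivation to have net-zero effect on the new terminal. Each of these two steps adds one to the grammar level, yielding $k\le 2\ell$ when $h(\Gamma)\le\ell$.

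Bidirectedness is maintained inductively. Condition~\ref{symmetry-production-inverse} is obtained by closing the production set under $\inv{\cdot}$; this is possible at the base case because the generators returned by \cref{compute-coset-cliques} already come in inverse pairs, and at the inductive steps because each newly introduced rule has a natural mirror expressing the reversed sub-computation. Condition~\ref{symmetry-reverse-nton} is ensured by symmetrically adding both directions of every cross-level production, which is justified by the fact that the valence system is bidirected. Condition~\ref{symmetry-add-inverse-pairs} is enforced by explicitly including productions $a\to a+a+\inv{a}$ at level~$0$, modelling a forward-then-backward excursion with trivial net effect on the storage. Condition~\ref{symmetry-rhs} is maintained by introducing each nonterminal only together with a witness derivation, using the nonemptiness check inside \cref{compute-coset-cliques} at the base and induction at higher levels.

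The main obstacle will be condition~\ref{symmetry-production-reverse}: for every concatenation-style rule $a\to b+\bu$ I must exhibit a derivation $b\derivs[0] a+\inv{\bu}$. For the rules introduced by the stack construction this is arranged by invoking~\ref{symmetry-production-inverse} to reverse the portion of the run captured by $\bu$ and then reattaching, but such derivations naturally introduce spurious ``cycle'' pairs $c+\inv{c}$ of higher-level nonterminals that must be cancelled. To dispose of them I derive a witness $\bw$ from $c$ via~\ref{symmetry-rhs} and then $\inv{\bw}$ from $\inv{c}$ via~\ref{symmetry-production-inverse}; the sum $\bw+\inv{\bw}$ can be absorbed inside the enclosing higher-level derivation against fresh copies introduced via~\ref{symmetry-add-inverse-pairs}. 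Polynomial time of the whole reduction follows because each inductive step introduces only polynomially many new productions per state pair, and the base-case invocations of \cref{compute-coset-cliques} run in polynomial time in the UC-bounded regime $\SCpm_d$.
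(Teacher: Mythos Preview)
Your plan captures the right ingredients (tree decomposition of $\Gamma$, triples $(p,s,q)$ as nonterminals, involution $\inv{(p,s,q)}=(q,s,p)$, \cref{compute-coset-cliques} at the leaves), but there is a genuine gap in how you ensure condition~\ref{symmetry-rhs}. You write that it is ``maintained by introducing each nonterminal only together with a witness derivation \ldots\ and induction at higher levels''. This is circular: deciding whether a higher-level nonterminal $(p,s,q)$ has $L((p,s,q))\ne\emptyset$ is exactly the bidirected reachability question you are trying to reduce. The paper resolves this with a \emph{saturation loop}: start with $R^{(0)}=\emptyset$, construct a grammar that only allows nonterminals from the current $R^{(i)}$ on right-hand sides, call the emptiness oracle for bidirected $k$-grammars to discover which further placeholders become realizable, set $R^{(i+1)}$ accordingly, and repeat until fixpoint. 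This is precisely why the statement is a \emph{Turing} reduction and not a many-one reduction, a point the paper stresses right after stating the proposition. Your one-pass inductive construction does not provide such a mechanism, so condition~\ref{symmetry-rhs} cannot be certified.

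A secondary confusion concerns condition~\ref{symmetry-production-reverse}. You describe cancelling ``spurious cycle pairs $c+\inv{c}$ of higher-level nonterminals'' to obtain $b\derivs[0]a+\inv{\bu}$. But~\ref{symmetry-production-reverse} is purely a level-$0$ condition: the productions in question have $a,b\in R_0$ and $\bu\in\N^{R_0}+\Z^T$, so no higher-level nonterminals appear and the required derivation must be exact, not ``up to $\Delta_a$''. The paper verifies~\ref{symmetry-production-reverse} directly from the bidirectedness of the auxiliary system $\hat{\cA}_s$ and the structure of $\Eff_{\hat{\cA}_s}$; the $\approx_a$ slack you invoke belongs to the separate notion of $i$-bidirectedness for $i>0$ (\cref{level-bidirectedness}), which is a \emph{consequence} of bidirectedness, not part of its definition.
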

Note that the reduction described in \cref{bireach-to-grammars} is a Turing reduction.
This is because we need to ensure \ref{symmetry-rhs}. To this end, the reduction
involves a saturation that successively enlarges the set of nonterminals that
are known to generate a non-empty set: At first, it only allows a small set of
triples $(p,s^x,q)\in N$, whose language is non-empty by construction, to
appear on right-hand sides. It then invokes the emptiness check, which yields
more triples (nonterminals) that can then appear on right-hand sides in the
next iteration, etc.

\begin{example} Let us see \cref{bireach-to-grammars}
in an example. Consider the graph $\Gamma$ in \cref{example-pzvass-graph} and the
valence system in \cref{example-pzvass-system}. Observe that $\Gamma$
corresponds to a pushdown $\Z$-VASS: The nodes $a,b$ realize a stack, and $c$
acts as a $\Z$-counter. For simplicity, we show the translation in the final
step of the saturation (described after \cref{bireach-to-grammars}), i.e. where
the set of triples $(p,s^x,q)\in N$ with  $L((p,s^x,q))\ne\emptyset$ has
stabilized. The entry $s^x$ in the nonterminals represents to a subset of the
nodes in the graph: First, $\nort{a}$ and $\nort{b}$ represent only the node
$a$ and $b$, respectively. Second, $\nort{c}$ represents the set $\{a,b\}$.
Third, $\rt{c}$ represents $\{a,b,c\}$. This decomposition into subsets is
derived from the tree structure of $\Gamma$ as a transitive forest,
see~\cref{example-pzvass-decomposition}. The decomposition also determines the
nonterminal levels: We have $N_0=\{(p,s^x,q) \mid
	s^x\in\{\nort{a},\nort{b}\}\}$, $N_1=\{(p,s^x,q) \mid s^x=\nort{c}\}$
	and $N_2=\{(p,s^x,q) \mid s^x=\rt{c}\}$.  The terminal letters
	correspond to looped nodes, and their levels also stem from the
	decomposition into a tree, i.e.\ $T_0=T_1=\emptyset$ and $T_2=\{c\}$.

\newcommand{\exampleHeight}{0.5}

\begin{figure}[t]
\subcaptionbox{The graph $\Gamma$\label{example-pzvass-graph}}{
\begin{tikzpicture}[every circle/.style={}, scale=1]
\path (0,-\exampleHeight) -- (0,\exampleHeight);
\fill (0,0) circle (2pt) node (a) {}    (1,0) circle (2pt) node (c)  {}   (2,0) circle (2pt) node (b) {};
\node at (0,-0.3) {$a$};
\node at (1,-0.3) {$c$};
\node at (2,-0.3) {$b$};
\draw (a.center) -- (c.center) -- (b.center);
\draw (c.center) ++(90:3pt) circle (3pt);
\end{tikzpicture}
}
\subcaptionbox{Valence system over $\Gamma$ (reverse edges are not shown)\label{example-pzvass-system}}{
	\begin{tikzpicture}
		\path (0,-\exampleHeight) -- (0,\exampleHeight);
		\node[state] (q0) {$q_0$};
		\node[state,right=1cm of q0] (q1) {$q_1$};
		\path[->] 
		(q0) edge node[above] {$a$} (q1)
		(q0) edge [loop left] node{$b$} (q0)
		(q1) edge [loop right] node{$c$} (q1)
		;
	\end{tikzpicture}
}
\vspace{0.3cm}

\subcaptionbox{Decomposition of $\Gamma$ into a tree.\label{example-pzvass-decomposition}}{\pzvassTree{1}}
\caption{Graph and example valence system}
\end{figure}
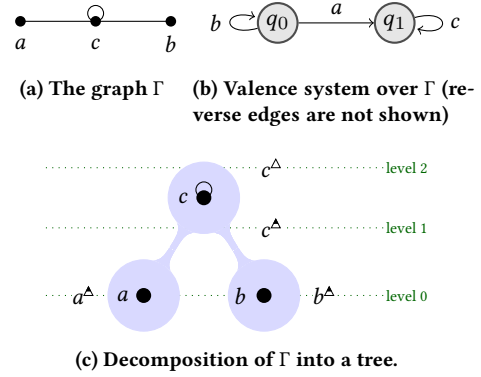

Intuitively, a nonterminal $(p,s^x,q)$ represents runs of the valence system in
which only transitions over nodes (in $\Gamma$) in $s^x$ and its ancestors in
the tree are used and where the effect on the storage is neutral w.r.t. nodes
in $s^x$.  However, the derived multisets of $(p,s^x,q)$ can also contain
nonterminals $(p',r^y,q')$ as ``placeholders'', where $r^y$ is above $s^x$ in
the tree decomposition. This allows derivations at a higher level to insert
neutral runs into each other to simulate runs that are neutral on the union of
two subtrees.

Note that $(q_0,\nort{u},q_1)$ and $(q_1,\nort{u},q_0)$ have empty languages for $u\in\{a,b\}$, because there is no run between $q_0$ and $q_1$ with a neutral effect w.r.t $a$ and $\bar{a}$. 
Therefore, we will not use these nonterminals on the right-hand side. However, all other nonterminals $(p,s^x,q)$ have non-empty languages. Thus, the set $R\subseteq N$ contains all nonterminals $(p,s^x,q)$ with $p=q$.

We begin with productions for nonterminals $(p,\nort{a},p)$:
\begin{align*}
	(p,\nort{u},p) & \to (p,\nort{u},p)+(p,\nort{u},p)  & \text{$p\in\{q_0,q_1\}$, $u\in\{a,b\}$} \\
	(p,\nort{a},p)&\to(q,\nort{a},q) & \text{$p,q\in\{q_0,q_1\}$}
\end{align*}
Note that the second production stems from the fact that we can start from $q_0$,
then move to $q_1$ with $a$, then execute a loop from $q$ to $q$, and then go back to $q_0$ with $\bar{a}$.
We also have productions for the $c$ and $-c$ loops at $q_1$:
\begin{align*}
	(q_1,\nort{u},q_1) & \to c, & (q_1,\nort{u},q_1) & \to -c & \text{for $u\in\{a,b\}$}
\end{align*}
Finally, we have cross-level productions:
\begin{align*}
	(p,\nort{c},p)&\to (p,\nort{u},p) &  (p,\nort{u},p)&\to (p,\nort{c},p) & \text{(level 0/1)} \\
	(p,\rt{c},p)&\to (p,\nort{c},p) &  (p,\nort{c},p)&\to (p,\rt{c},p) & \text{(level 1/2)} 
\end{align*}
for every $p\in\{q_0,q_1\}$ and $u\in\{a,b\}$.
\end{example}

\xparagraph{Reversing derivations on each level} We will show an analogue of
\ref{symmetry-production-reverse} at levels $i>0$,  by induction on $i$. Since
a step on level $i>0$ involves an entire derivation on level $i-1$ (which can
result in several nonterminals), the induction works with a stronger property
where the reverse derivation can not only start in one produced nonterminal (as
in \ref{symmetry-production-reverse}), but arbitrary vectors in $\N^{R_i}$.
But then, suppose we want to reverse 
\[ (p,s^x,q)\deriv[0] (p,s^x,t)+(t,s^x,q). \]
Deriving $(p,s^x,q)$ from $(p,s^x,t)+(t,s^x,q)$ would require either
$(p,s^x,t)$ or $(t,s^x,q)$ to derive a vector without nonterminals, which would
be too strong a condition.  However, we can use a slight relaxation: We will
show that derivations can be reversed \emph{up to a byproduct of cycles
$b+\inv{b}$}. For instance, in our example, the condition~\ref{symmetry-production-reverse} yields 
$(p,s^x,t)\derivs[0](p,s^x,q)+\inv{(t,s^x,q)}$ and hence 
	\[ (p,s^x,t)+(t,s^x,q)\derivs[0] (p,s^x,q)+\inv{(t,s^x,q)}+(t,s^x,q). \]
To
make ``up to cycles $b+\inv{b}$'' precise, we now introduce the
equivalence relations $\approx_a$.

First, we define the relation $\nreach$ on $N$ as follows.  For $a\in N_i$ and
$b\in N_{[i,k]}$, we have $a\nreach b$ if{}f there is a configuration
$\bu\in\N^{N_{[i,k]}} + \Z^{T_{[i,k]}}$ with $\bu(b)\ge 1$ and $a\deriv[i]
\bu$. In other words, for $a\in N_i$ and $b\in N_{[i,k]}$, we have $a\nreach b$
if $b$ can be reached from $a$ in one derivation step on level $i$.

Let $\nreachs$ denote the reflexive transitive closure of $\nreach$.
For each $a\in N_i$, we define the monoid $\Delta_a$, which is generated by all $b+\inv{b}$ with $b\in N_{[i+1,k]}$ and $a\nreachs b$. Hence, $\Delta_a\subseteq\N^{N_{[i+1,k]}}$.
For $\bu,\bu'\in\N^{N_{[i,k]}}+\Z^{T_{[i,k]}}$,
we have $\bu\approx_a\bu'$ if and only if there are $\bd,\bd'\in\Delta_a$ with $\bu+\bd=\bu'+\bd'$. 
A $k$-grammar is called \emph{$i$-bidirected} if for
every $a\in R_i$ and every derivation $a\derivs[i]\bu+\bv$ with
$\bu\in\N^{R_{i}}$, $\bu\ne\bzero$, $\bv\in\N^{R_{[i,k]}}+\Z^{T_{[i,k]}}$, then
$\bu\derivs[i] a+\bv'$ for some $\bv'$ with $\bv'\approx_a\inv{\bv}$. 
In short, up to differences in $\Delta_a$, we can
reverse derivations on level $i$ that start in nonterminals in $R$. 
Note that if $G$ is $i$-bidirected, then on $R_i$, the relation $\nreachs$
is symmetric and thus an equivalence. 
We show the following:
\begin{restatable}{lemma}{levelBidirectedness}\label{level-bidirectedness}
	If $G$ is bidirected, then it is $i$-bidirected for each $i\in[0,k]$.
\end{restatable}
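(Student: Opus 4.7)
The plan is to induct on the level $i$, with the base case $i = 0$ proven by a nested induction on the length of the derivation.

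For the base case $i = 0$, given $a \in R_0$ and a derivation $a \derivs[0] \bu + \bv$ of length $n$, I construct $\bu \derivs[0] a + \bv'$ with $\bv' \approx_a \inv{\bv}$. The case $n = 0$ forces $\bu = a$ and $\bv = \bzero$, which is immediate. For $n > 0$, let the first production be $a \to \bw$. If $\bw \in N_1$, condition \ref{symmetry-reverse-nton} directly supplies the reverse production $\bw \to a$. Otherwise $\bw \in \N^{R_0} + \Z^T$; since $\bu \ne \bzero$ must eventually appear, some $b \in R_0$ occurs in $\bw$, and condition \ref{symmetry-production-reverse} applied to the production $a \to b + (\bw - b)$ yields $b \derivs[0] a + \inv{(\bw - b)}$. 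Adding $\bw - b$ to both sides reverses the first step modulo the byproduct $(\bw - b) + \inv{(\bw - b)}$. Combining this with the nested induction hypothesis applied to the remaining subderivation (which by linearity splits across the nonterminals of $\bw$) yields the desired reversal.

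For the inductive step $i > 0$, I again induct on the number of level-$i$ steps. Every level-$i$ step is either a cross-level production (handled by condition \ref{symmetry-reverse-nton}) or a rewriting that invokes an entire level-$(i-1)$ derivation $a' \derivs[i-1] \bu'$ for some $a' \in N_{i-1}$ with a production $a \to a'$. The embedded level-$(i-1)$ derivation is reversed by the outer induction hypothesis at level $i - 1$, modulo byproducts in $\Delta_{a'}$. Since $a \nreachs a' \nreachs b$ implies $a \nreachs b$, and the relevant $b$ lie on levels $\ge i$, these byproducts are inherited as elements of $\Delta_a$.

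The main obstacle is controlling same-level byproducts: intermediate reversals at level $i$ generate pairs $b + \inv{b}$ with $b \in R_i$, whereas $\Delta_a$ is generated only by such pairs on strictly higher levels. The resolution uses condition \ref{symmetry-rhs} to pick some $\bw \in L(b)$ with $b \derivs[i] \bw$ containing no level-$i$ nonterminals, together with condition \ref{symmetry-production-inverse} applied step-by-step to derive $\inv{b} \derivs[i] \inv{\bw}$. Hence $b + \inv{b} \derivs[i] \bw + \inv{\bw}$, whose terminal contributions cancel and whose nonterminal pairs (all on strictly higher levels and $\nreachs$-reachable from $a$) lie in $\Delta_a$. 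Condition \ref{rev-add-inverse-pairs} plays the complementary role of inserting fresh $a + \inv{a}$ pairs whenever the construction needs spare copies to seed a reversal.
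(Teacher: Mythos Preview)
Your overall induction scheme is right, and your treatment of same-level byproducts in the final paragraph is correct and matches the paper. But the inductive step for $i>0$ has a genuine gap.

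The hypothesis ``$G$ is $(i-1)$-bidirected'' says: whenever $a'\in R_{i-1}$ and $a'\derivs[i-1]\bu+\bv$ with $\bu\in\N^{R_{i-1}}$ and $\bu\ne\bzero$, then $\bu\derivs[i-1]a'+\bv'$ with $\bv'\approx_{a'}\inv{\bv}$. It does \emph{not} reverse the full embedded derivation $a'\derivs[i-1]\bu'$ when $\bu'\in L(a')$, because $L(a')\subseteq\N^{N_{[i,k]}}+\Z^{T_{[i,k]}}$ contains \emph{no} level-$(i-1)$ nonterminals, so the premise $\bu\ne\bzero$ fails. Your sentence ``the embedded level-$(i-1)$ derivation is reversed by the outer induction hypothesis at level $i-1$'' therefore does not follow. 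The missing idea is this: any $b\in R_i$ appearing in the result was produced inside the embedded derivation by a cross-level production $\tilde{b}\to b$ with $\tilde{b}\in R_{i-1}$. Reordering steps, factor the embedded derivation as $a'\derivs[i-1]\tilde{b}+\bv\deriv[i-1]b+\bv$, and apply $(i-1)$-bidirectedness to the \emph{first} part, which now has the required nonzero level-$(i-1)$ component $\tilde{b}$. This yields $\tilde{b}\derivs[i-1]a'+\bv''$ with $\bv''\approx_{a'}\inv{\bv}$; then the reverse cross-level productions $b\to\tilde{b}$ and $a'\to a$ (supplied by condition~\ref{symmetry-reverse-nton}) give the level-$i$ reversal $b\derivs[i]a+\bv''$. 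The leftover discrepancy between $\approx_{a'}$ and $\approx_a$ is then handled exactly as in your last paragraph.

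Your base case is also underspecified. After the first step $a\to\bw$, you want the nested hypothesis on the remaining derivation $\bw\derivs[0]\bu+\bv$ ``by linearity''. Splitting $\bw=c_1+\cdots+c_m+\bz$ and applying the inner hypothesis to each $c_j$ fails for those $j$ with $\bu_j=\bzero$ (the hypothesis needs $\bu_j\ne\bzero$), and you must also ensure that the $b$ you picked via condition~\ref{symmetry-production-reverse} is one of the $c_j$ with $\bu_j\ne\bzero$. This is all fixable, but the paper sidesteps the bookkeeping by tracing a \emph{single} chain $a=a_0,a_1,\ldots,a_n=b$ of nonterminals (each $a_{j+1}$ created by rewriting $a_j$) down to a chosen $b$ in $\bu$, inverting step by step along the chain via condition~\ref{symmetry-production-reverse}, and then pushing the remaining side products through at once using condition~\ref{symmetry-production-inverse}. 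That chain argument is packaged as a separate lemma (reducing $i$-bidirectedness to single-step reversibility), which makes the level-$i$ induction much cleaner.
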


\xparagraph{Expressing emptiness using cosets}\label{grammars-to-cosets}
\newcommand{\cosetL}{\mathscr{L}}
\newcommand{\cosetM}{\mathscr{M}}
\newcommand{\cosetZ}{\mathscr{Z}}
\newcommand{\cosetK}{\mathscr{K}}
\newcommand{\cosetO}{\mathscr{O}}
An important ingredient in our proof is to work with integer linear
\emph{equations} instead of inequalities. While solution sets of systems of
inequalities are semilinear sets in $\Z^m$, solution sets of equation systems
are cosets of $\Z^m$. Recall that a coset in $\Z^m$ is a set of the form
$\bu+U$, where $\bu\in\Z^m$ and $U\subseteq\Z^m$ is a subgroup. In this
section, we will translate emptiness in bidirected grammars into a problem
about cosets of $\Z^{N\cup T}$. In general, we employ three operations for
constructing cosets. First, we can take $\bu+\langle S\rangle$ for
$\bu\in\Z^{N\cup T}$ and any set $S\subseteq\Z^{N\cup T}$.  Recall that
$\langle S\rangle$ is the subgroup generated by $S$.  Second, if
$C_1,C_2\subseteq\Z^{N\cup T}$ are cosets, then their point-wise sum $C_1+C_2$
is a coset, and the set $C_1\cap C_2$ is either a coset or empty. 

A central role will be played by the group $H_a$, which we define for each $a\in R_i$:
\begin{equation} H_a = \langle -b+\bu \mid \text{$b\in R_i$ and $b\deriv[i] \bu$ and $a\nreachs b$} \rangle \label{generating-set-zero} \end{equation}
Hence, $H_a$ is the group generated by all differences that are added when applying derivation steps $b\deriv[i]\bu$ for $a\nreachs b$. We observe that if $a\derivs[i]\bu$, then $-a+\bu\in H_a$. We will need a coset to express that in such a derivation, there are no level-$i$ letters (resp.\ no level-$i$ terminals) left. We will do this by intersecting with the cosets
\[ S_i= %
	\Z^{N_{[i+1,k]}\cup T_{[i+1,k]}},\hspace{0.2cm} S'_i = \Z^{N_{[i+1,k]}}+\Z^{T_{[i,k]}}.
\]
Using $H_a$ and $S_i$, we can now define the coset that will essentially
characterize $L(a)$.  For $a\in R_i$, we set
\begin{equation} L_a = (a+H_a)\cap S_i. \label{coset-equation-la}\end{equation}
One of the main results of this section will be that for $a\in R_i$, 
we can describe $L(a)$ in terms of $L_a$. Following the theme that we can do things
only ``up to differences in $\Delta_a$'' on each level, we need a group version of $\Delta_a$.
For every $a\in N_i$, we have the subgroup $D_a\subseteq\Z^{N_{i+1}}$, which we
define next.  If $a\in N_i$ with $i\in[0,k-1]$, we set $D_a=\langle b+\inv{b}
\mid b\in N_{i+1},~a\nreachs b\rangle$. For $a\in N_k$, we define
$D_a=\{\mathbf{0}\}$. In other words, $D_a$ is the group generated by $\Delta_a$, for every $a\in N$.
With this notation, one of the crucial ingredients in our proof is to show that $L_a=L(a)+D_a$, i.e. $L_a$ describes $L(a)$ up
to differences in $D_a$ (see \cref{grammars-to-groups}). By our observation about $H_a$, it is obvious that
$L(a)+D_a\subseteq L_a$. The key step is that also $L_a\subseteq
L(a)+D_a$.

While the cosets $L_a$ are coset versions of the sets $L(a)$, we will also need
coset versions of a slight variant of $L(a)$. For each $a\in R_i$, by $M(a)$, we denote the vectors that
are derivable from $a$, but may still contain level-$i$ terminals:
\[ M(a) = \{\bu\in\N^{N_{[i+1,k]}}+\Z^{T_{[i,k]}} \mid a\derivs[i] \bu \}. \]
Thus, $M(a)$ differs from $L(a)$ by collecting all derivable $\bu\in\N^{N_{[i+1,k]}}+\Z^T$, where all level-$i$ nonterminals have been eliminated, but not necessarily all level-$i$ terminals, meaning $L(a)=M(a)\cap S_i$. Just as $L_a$ is a coset analogue of $L(a)$, we have a coset analogue of $M(a)$:
\[ M_a = (a+H_a)\cap S'_i. \]
The cosets $M_a$ will be needed to express, using cosets, whether $L(a)$ is empty for nonterminals $a\in N_i$ that do not necessarily belong to $R_i$: Those are the nonterminals for which we do not know whether $L(a)$ is empty. Such sets $L(a)$ do not directly correspond to cosets. However, we will be able to use cosets to characterize when $L(a)$ is empty. Here, we use the cosets $K_a$ for $a\in N_i$, $i\in[0,k-1]$:
\[ K_a = (L_a + \langle -b+M_b\mid b\in R_{i+1},~a\nreachs b\rangle)\cap S_{i+1}. \]
We shall see that the coset $K_a$ corresponds to those vectors in $\N^{N_{[i+1,k]}}+\Z^{T_{[i+1,k]}}$ that can be derived using $a$. 
Using the cosets $K_b$ for $b\in N_{[0,k-1]}$, we will be able to characterize
emptiness of $L(a)$ for $a\in N_{[1,k]}$. In order to do the same for $a\in
N_0$, we need a final type of cosets. For each production $a\to
\bu$ with $\bu=b_1+\cdots+b_n+\bv$ in our grammar with $b_1,\ldots,b_n\in N_0$ and
$\bv\in\Z^T$, we define
\[ K_{a\to \bu} = (M_{b_1}+\cdots+M_{b_n}+\bv)\cap S_0. \]

The key ingredient in our proof is the following analogue of the
translation of \citet{DBLP:conf/cade/VermaSS05} to Presburger arithmetic. Here,
we express the set $L(a)$ of derivable vectors of a nonterminal $a$ as a coset.
\begin{theorem}\label{grammars-to-groups}
	If $G=(N,T,P)$ is bidirected, then $M(a)+D_a=M_a$ for every $a\in R$.
	In particular, $L(a)+D_a=L_a$.
\end{theorem}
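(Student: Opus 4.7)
The plan is to prove $M(a) + D_a = M_a$ for every $a \in R_i$ by two inclusions and then derive $L(a) + D_a = L_a$ by intersecting both sides with $S_i$. Fix $a \in R_i$ for the rest of the argument.

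For the easy inclusion $M(a) + D_a \subseteq M_a$, I will first argue $M(a) \subseteq a + H_a$ by induction on the length of a derivation $a \derivs[i] \bu$: along the derivation, every rewritten nonterminal $c$ is either $a$ itself or appears on a right-hand side, so $c \in R_i$, and $a \nreachs c$; hence each step $c \deriv[i] \bw$ contributes a generator $-c + \bw$ of $H_a$, and the cumulative effect $\bu - a$ lies in $H_a$. Next I will show $D_a \subseteq H_a$: for a generator $b + \inv{b}$ of $D_a$ with $b \in N_{i+1}$ and $a \nreachs b$, I will isolate the last (necessarily cross-level) step $c \to b$ in a chain witnessing $a \nreachs b$, pair the generator $-c + b \in H_a$ with its involutive counterpart $-\inv{c} + \inv{b}$ afforded by condition~\ref{symmetry-production-inverse}, and use $a \nreachs \inv{c}$ (bootstrapped at level $0$ via \ref{symmetry-add-inverse-pairs} and propagated upward by induction on the level) together with $c + \inv{c} \in H_a$ (again from \ref{symmetry-add-inverse-pairs} at level $0$ and an inductive argument at higher levels) to conclude $b + \inv{b} \in H_a$. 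Since $M(a) \subseteq S'_i$ and $D_a \subseteq \Z^{N_{i+1}} \subseteq S'_i$, the sum lies in $M_a$.

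The reverse inclusion $M_a \subseteq M(a) + D_a$ is the main obstacle. Given $\bu \in M_a$, I write $\bu - a$ as a signed sum $\sum_j \epsilon_j(-b_j + \bw_j)$ of generators of $H_a$ and aim to realise it as a forward derivation $a \derivs[i] \bu + \bd$ with $\bd \in D_a$. The plan has three phases: (i) use \cref{level-bidirectedness} to replace every negatively signed generator ($\epsilon_j = -1$) by a forward derivation in the reverse direction, at the cost of a correction in $\Delta_a$; (ii) interleave the resulting forward steps in a feasible order, inserting auxiliary inverse pairs $c + \inv{c}$ whenever the next step demands a nonterminal that is not currently available, using \ref{symmetry-rhs} and \ref{symmetry-add-inverse-pairs}; and (iii) collect the accumulated byproducts, which lie in $\Delta_a$, and observe that because the target $\bu$ lies in $S'_i$ with nonnegative entries on $N_{[i+1,k]}$, the net excess at levels above $i+1$ must telescope, leaving a residual in $D_a \subseteq \Z^{N_{i+1}}$. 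Phase~(ii) is the combinatorial crux: the $\approx_a$-slack afforded by $i$-bidirectedness is precisely what absorbs the spurious inverse pairs introduced while ordering the steps, and turning a signed abelian identity into an ordered, feasible derivation is where every bidirectedness condition genuinely comes into play.

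Finally, $L(a) + D_a = L_a$ follows immediately from the main equality by intersecting with $S_i$: since $L(a) = M(a) \cap S_i$, $L_a = M_a \cap S_i$, and $D_a \subseteq \Z^{N_{i+1}} \subseteq S_i$, any $\bw \in L_a$ decomposes as $\bw = \bu + \bd$ with $\bu \in M(a)$ and $\bd \in D_a$, and since $\bw, \bd \in S_i$, also $\bu \in M(a) \cap S_i = L(a)$.
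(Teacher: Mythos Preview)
Your easy inclusion is fine (if slightly roundabout compared with the paper's direct use of $a'\derivs[i]a'+a'+\inv{a'}$ to get $b+\inv{b}\in H_a$), and your final paragraph deriving $L(a)+D_a=L_a$ from the $M$-version is correct.

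The hard inclusion, however, has a real gap in phase~(ii). If the next step $b_j\deriv[i]\bw_j$ demands a nonterminal $b_j\in R_i$ that is not currently present and you ``insert $b_j+\inv{b_j}$'', you have created a level-$i$ leftover $\inv{b_j}$. Your phase~(iii) claims the accumulated byproducts lie in $\Delta_a$ and telescope to something in $D_a$, but $\Delta_a\subseteq\N^{N_{[i+1,k]}}$ and $D_a\subseteq\Z^{N_{i+1}}$ contain no level-$i$ letters at all, so these leftovers are simply not accounted for. Eliminating each stray $\inv{b_j}$ via condition~\ref{symmetry-rhs} produces \emph{uncontrolled} vectors in $S_i$, not elements of $D_a$; and there is no reason the various $\inv{b_j}$'s pair up among themselves. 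The multiset identity $a+\sum_j\bu_j=\sum_j b_j$ forced by $a+\bv\in S'_i$ does not by itself yield a feasible sequential ordering of the steps without such insertions.

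The paper closes exactly this gap with a structural device you do not mention: a \emph{Kirchhoff graph} for $a$. Instead of inserting level-$i$ pairs, one uses connecting derivations $c_j\derivs[i] b_j+\bg_{c_j,b_j}$, where $c_j$ is any level-$i$ nonterminal currently available and $\bg_{c_j,b_j}\in\N^{N_{[i+1,k]}}+\Z^{T_{[i+1,k]}}$ is the edge weight. The weights are chosen so that $\bg_{b,c}+\bg_{c,d}\approx_a\bg_{b,d}$, which forces the weight-sum along any cycle to be $\approx_a\bzero$. After a bookkeeping step (a permutation matching the consumed $c_j$'s against the intended $b_j$'s, and one more round of connecting derivations for the unmatched ones), the total garbage is a sum of weights along cycles; the Kirchhoff law then shows it lies in $D_a$. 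Your phases~(i) and~(iii) correspond to the paper's \cref{bidirected-group-monoid} and the final accounting, but phase~(ii) needs this Kirchhoff-graph mechanism (or an equivalent cancellation argument) rather than level-$i$ pair insertion.
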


\subsubsection*{Intuition for the proof of \cref{grammars-to-groups}}
Before we sketch the proof, let us compare it  with the construction of~\cite{DBLP:conf/cade/VermaSS05}. They show that if we
assign to each production $p$ in a context-free grammar a number $\bx(p)$
saying how often $p$ is applied, then there is a derivation consistent with
$\bx$ if and only if: 
\begin{enumerate}[label=(\roman*)]
	\item each
nonterminal is produced as many times as it is consumed (except for the
start-symbol, which is consumed once more) and 
\item the participating
nonterminals (i.e. for which $\bx(p)>0$) must be \emph{connected}: The latter
means, for each nonterminal, it must be possible to reach its consumed
nonterminal by way of productions that occur in $\bx$. 
\end{enumerate}

Essentially, we will argue that for bidirected grammars, one can drop~(ii).
This is because, independently of a particular derivation, each set $N_i$
decomposes into connected components with respect to $\nreachs$.  Then, instead
of stipulating connectedness of the set of used nonterminals, we only need all
used nonterminals to be in the same $\nreachs$-component. To construct a
derivation  inside one $\nreachs$ component, we need to show that there exist
``connecting derivations'' between each pair in $\nreachs$.  
This is made precise in the notion of Kirchhoff graphs, which we define next.

\subsubsection*{Kirchhoff graphs}
Let $a\in R_i$. A \emph{Kirchhoff graph for $a$} is a directed graph whose set
of vertices is $\{b\in R_i\mid a\nreachs b\}$, such that there exists an edge
$(b,c)$ for each $b,c\in R_i$ with $a\nreachs b$ and $a\nreachs c$, and where
an edge $(b,c)$ is weighted by an element $\bg_{b,c}\in
\N^{N_{[i+1,k]}}+\Z^{T_{[i+1,k]}}$ such that the following holds:
\begin{enumerate}[label=(\roman*)]
	\item $b\derivs[i] c+\bg_{b,c}$ for every $b,c$,
	\item $\bg_{b,b}=\bzero$ for every $b$, and 
	\item for any vertices $b,c,d$,
we have $\bg_{b,c}+\bg_{c,d}\approx_a \bg_{b,d}$.
\end{enumerate}
The term stems from the fact that these graphs satisfy (up to $\approx_a$) a
condition like Kirchhoff's law on voltage drops: The weight sum of every cycle is zero.
In our case, adding up the $\bg_{b,c}$ along a cycle will yield zero up to $\approx_a$.
This implies the following property:
\begin{restatable}{lemma}{kirchhoffProperty}\label{kirchhoff-property}
	If $e_1,\ldots,e_\ell,f_1,\ldots,f_\ell\in R_i$ satisfy $a\nreachs e_j$
	and $a\nreachs f_j$ for $j\in\{1,\ldots,\ell\}$ and $\pi$ is a
	permutation of $\{1,\ldots,\ell\}$, then in a Kirchhoff graph for $a$,
	we have
	\[ \bg_{e_1,f_1}+\cdots+\bg_{e_\ell,f_\ell}\approx_a \bg_{e_1,f_{\pi(1)}}+\cdots +\bg_{e_\ell,f_{\pi(\ell)}}. \]
\end{restatable}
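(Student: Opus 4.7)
\textbf{Proof plan for \cref{kirchhoff-property}.}
The plan is to reduce the general permutation statement to the special case of a single transposition, and prove the transposition case from axioms (i)--(iii) by a direct ``cycle cancellation'' argument in the spirit of Kirchhoff's voltage law. Throughout, I will use the basic fact that $\approx_a$ is a congruence on $\N^{N_{[i+1,k]}}+\Z^{T_{[i+1,k]}}$ with respect to addition: if $\bu_1\approx_a\bu_1'$ and $\bu_2\approx_a\bu_2'$, witnessed by $\bd_j,\bd_j'\in\Delta_a$ with $\bu_j+\bd_j=\bu_j'+\bd_j'$, then $\bd_1+\bd_2,\bd_1'+\bd_2'\in\Delta_a$ witness $\bu_1+\bu_2\approx_a\bu_1'+\bu_2'$. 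This lets me manipulate sums of $\bg_{\cdot,\cdot}$ modulo $\approx_a$ freely.

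The first step is the transposition case. Fix indices $j,k\in\{1,\ldots,\ell\}$ and let $\pi$ swap $j$ and $k$; the goal is
\[
\bg_{e_j,f_j}+\bg_{e_k,f_k}\;\approx_a\;\bg_{e_j,f_k}+\bg_{e_k,f_j}.
\]
Applying axiom (iii) twice gives $\bg_{e_j,f_k}\approx_a \bg_{e_j,f_j}+\bg_{f_j,f_k}$ and $\bg_{e_k,f_j}\approx_a \bg_{e_k,f_k}+\bg_{f_k,f_j}$. Summing these two $\approx_a$-relations yields
\[
\bg_{e_j,f_k}+\bg_{e_k,f_j}\;\approx_a\;\bg_{e_j,f_j}+\bg_{e_k,f_k}+\bigl(\bg_{f_j,f_k}+\bg_{f_k,f_j}\bigr).
\]
A further application of (iii) with $b=d=f_j$ and $c=f_k$, together with axiom (ii), yields
$\bg_{f_j,f_k}+\bg_{f_k,f_j}\approx_a \bg_{f_j,f_j}=\bzero$, so the bracketed term vanishes modulo $\approx_a$. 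This gives the transposition case. All uses of (iii) are legitimate because the vertices $e_j,e_k,f_j,f_k$ are all $\nreachs$-reachable from $a$ by hypothesis.

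Second, I would induct on the length of a decomposition of $\pi\in\mathrm{Sym}(\ell)$ into transpositions. Any permutation is a product $\pi=\tau_r\cdots\tau_1$ of adjacent transpositions. Applying the transposition case repeatedly (to the $f$-indices) and using that $\approx_a$ is an addition congruence, we obtain
\[
\bg_{e_1,f_1}+\cdots+\bg_{e_\ell,f_\ell}\;\approx_a\;\bg_{e_1,f_{\pi(1)}}+\cdots+\bg_{e_\ell,f_{\pi(\ell)}},
\]
as required. Note that this induction is valid because at every intermediate step all the targets involved lie in the set $\{f_1,\ldots,f_\ell\}$, hence remain $\nreachs$-reachable from $a$, so all invocations of (iii) stay within the Kirchhoff graph.

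I do not expect a real obstacle here: the content is purely combinatorial once the congruence property of $\approx_a$ is in place, and the transposition step is just the standard proof that a cocycle vanishes on any closed walk. The only thing worth double-checking is that one never leaves the vertex set of the Kirchhoff graph when invoking (iii), which is immediate because $\nreachs$-reachability from $a$ is preserved by the reachability already assumed for the $e_j$ and $f_j$.
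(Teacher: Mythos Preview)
Your proposal is correct and follows essentially the same approach as the paper: prove the transposition case by a cycle-cancellation argument from axioms (ii)--(iii), then extend to arbitrary permutations via decomposition into transpositions. The only cosmetic difference is that the paper routes through the $e$'s (writing $\bg_{e,f}\approx_a\bg_{e,e'}+\bg_{e',f}$) whereas you route through the $f$'s, which is a symmetric and equally valid choice.
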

\begin{proof}
	Let us first observe that for any $e\in\{e_1,\ldots,e_\ell\}$ and $f\in\{f_1,\ldots,f_\ell\}$, the relation
\begin{align*}
	\bg_{e,f} + \bg_{e',f'} & \approx_a (\bg_{e,e'}+\bg_{e',f})+(\bg_{e',e}+\bg_{e,f'}) \\
	&\approx_a (\bg_{e,e'}+\bg_{e',e}) + (\bg_{e',f}+\bg_{e,f'}) \\
	&\approx_a \bg_{e',f}+\bg_{e,f'}
\end{align*}
	follows from the definition of Kirchhoff graphs. This is the case of 
	\begin{equation} \bg_{e_1,f_1}+\cdots+\bg_{e_\ell,f_\ell}\approx_a \bg_{e_1,f_{\pi(1)}}+\cdots +\bg_{e_\ell,f_{\pi(\ell)}} \label{kirchhoff-perm}\end{equation}
	where $\pi$ is a transposition, i.e.\ a permutation that swaps two points and lets all others fixed. Since every permutation of $\{1,\ldots,\ell\}$ can be written as a composition of several transpositions, \cref{kirchhoff-perm} follows in full generality.
\end{proof}

\vspace{-0.1cm}
\begin{restatable}{lemma}{kirchhoffGraph}\label{kirchhoff-graph}
	If $G$ is $i$-bidirected, then for each $a\in R_i$, there exists a Kirchhoff graph for $a$.
\end{restatable}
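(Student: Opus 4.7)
The plan is to build a Kirchhoff graph for $a$ with vertex set $V := \{b \in R_i \mid a \nreachs b\}$ by fixing a ``clean'' forward witness $a \derivs[i] b + \bp_b$ for each $b \in V$ (meaning $\bp_b$ carries no level-$i$ content), reversing it via $i$-bidirectedness to obtain $b \derivs[i] a + \bu_b$ with $\bu_b \approx_a \inv{\bp_b}$, and then defining $\bg_{b,c} := \bp_c + \bu_b$ for $b \ne c$ and $\bg_{b,b} := \bzero$ from the empty derivation $b \derivs[i] b$. Since $G$ is $i$-bidirected, the relation $\nreachs$ is symmetric on $R_i$ (as noted just before the statement), so $V$ is the entire $\nreachs$-equivalence class of $a$; in particular $b \nreachs a$ for every $b \in V$, so the forward witnesses $\bp_b$ exist. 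The reverse witness comes from the $i$-bidirectedness definition applied with $\bu := b$ and $\bv := \bp_b$.

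\emph{Chaining and axioms.} For distinct $b, c \in V$, concatenating $b \derivs[i] a + \bu_b$ with $a \derivs[i] c + \bp_c$ applied to the $a$-occurrence yields $b \derivs[i] c + (\bp_c + \bu_b)$, which gives axiom~(i); axiom~(ii) holds by design. For axiom~(iii),
\[
\bg_{b,c} + \bg_{c,d} - \bg_{b,d} = (\bp_c + \bu_b) + (\bp_d + \bu_c) - (\bp_d + \bu_b) = \bp_c + \bu_c \approx_a \bp_c + \inv{\bp_c},
\]
using $\bu_c \approx_a \inv{\bp_c}$. Now $\bp_c + \inv{\bp_c} = \sum_e \bp_c(e)\,(e + \inv{e})$ is a nonnegative integer combination of generators of $\Delta_a$: every nonterminal $e$ occurring in $\bp_c$ satisfies $a \nreachs e$ via the derivation $a \derivs[i] c + \bp_c$, and by the cleanliness of $\bp_c$ has level $\ge i+1$. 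Hence the combination lies in $\Delta_a$, giving $\bg_{b,c} + \bg_{c,d} \approx_a \bg_{b,d}$.

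\emph{Main obstacle.} The crux is actually producing forward witnesses $\bp_b \in \N^{N_{[i+1,k]}} + \Z^{T_{[i+1,k]}}$, since the raw output of $a \derivs[i] b + \bp_b$ need not avoid level-$i$ content and $\approx_a$ is unforgiving at level $i$ ($\Delta_a \subseteq \N^{N_{[i+1,k]}}$). Any level-$i$ nonterminal $e$ appearing in a witness is produced by the derivation, hence lies on some right-hand side and is reachable from $a$, so $e \in R_i \cap V$; by~\ref{symmetry-rhs}, $L(e) \ne \emptyset$, so one may extend the derivation by $e \derivs[i] \be_e$ with $\be_e \in \N^{N_{[i+1,k]}} + \Z^{T_{[i+1,k]}}$, purging the level-$i$ nonterminal. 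Picking one representative $\be_e$ per $e \in V$ keeps the substitution uniform across all weights so that axiom~(iii) survives the replacement. Handling residual level-$i$ terminals is more delicate: their net coefficient in $\bp_c + \bu_b$ is $\bp_c(t) - \bp_b(t)$ (since $\bu_b$ must agree with $\inv{\bp_b}$ on level-$i$ components), so one must arrange all $\bp_b$ to share the same level-$i$ terminal profile. I expect this to follow by an iterative adjustment exploiting the bidirected symmetry, and this uniformization is the main technical hurdle.
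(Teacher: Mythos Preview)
Your hub-based construction (route every $\bg_{b,c}$ through the fixed vertex $a$) is a genuine alternative to the paper's path-based construction, which instead fixes a linear order $b_1,\ldots,b_n$ on the $\nreachs$-class, chooses consecutive weights $\bg_{j,j+1}$ directly from derivations $b_j\derivs[i] b_{j+1}+\bg_{j,j+1}$ and their $i$-bidirected reversals $\bg_{j+1,j}\approx_a\inv{\bg_{j,j+1}}$, and then defines $\bg_{r,s}$ for general $r,s$ by summing along the chain. Your telescoping identity makes the verification of axiom~(iii) a one-liner, whereas the paper reduces to the case $|s-t|=1$; conversely, the paper never needs to check that the intermediate configuration $a+\bu_b$ actually contains a copy of $a$ to restart from (which in your setup follows only once $\bp_b$, and hence $\bu_b$, carries no level-$i$ content). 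Both routes are comparably short once the ``cleanness'' issue is settled.

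That issue, however, is a real gap and your proposal does not close it. You cannot meet the requirement $\bg_{b,c}\in\N^{N_{[i+1,k]}}+\Z^{T_{[i+1,k]}}$ just by hoping to uniformize the level-$i$ terminal profiles of the $\bp_b$ via some unspecified ``iterative adjustment''; you need an explicit construction. Here is one that fits your framework: starting from any $a\derivs[i] b+\bg$ with $\bg$ already purged of level-$i$ nonterminals (your step), use the level-$i$ version of~\ref{symmetry-add-inverse-pairs} to derive $a\derivs[i] a+a+\inv{a}$, then apply $a\derivs[i] b+\bg$ to one copy of $a$ and the involuted derivation $\inv{a}\derivs[i]\inv{b}+\inv{\bg}$ (available by~\ref{symmetry-production-inverse}) to $\inv{a}$. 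This yields $a\derivs[i] a+b+\inv{b}+(\bg+\inv{\bg})$, and $\bg+\inv{\bg}$ has zero terminal content. Now derive the remaining $a$ to some $\bw_a\in L(a)$ and $\inv{b}$ to $\inv{\bw_b}$ for some $\bw_b\in L(b)$ (both clean, using~\ref{symmetry-rhs}); this gives a genuinely clean $\bp_b=\bw_a+\inv{\bw_b}+\bg+\inv{\bg}$. With clean $\bp_b$'s, your $\bu_b$ is automatically clean too (it agrees with $\inv{\bp_b}$ outside $\Delta_a$), and the rest of your argument goes through verbatim. It is worth noting that the paper's own proof is equally silent on how it forces its weights into $\N^{N_{[i+1,k]}}+\Z^{T_{[i+1,k]}}$, so you have in fact put your finger on a point the paper glosses over.
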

\begin{proof}
	Write $\{b\in R_i\mid a\nreachs b\}=\{b_1,\ldots,b_n\}$. To simplify
	notation, we write $\bg_{r,s}$ instead of $\bg_{b_r,b_s}$. We
	have to pick $\bg_{j,j}=\bzero$. Since $G$ is $i$-bidirected, we
	know that $\nreachs$ is symmetric. In particular, for any
	$j\in[1,n-1]$, there exists a $\bg_{j,j+1}$ such that $b_j
	\derivs[i] b_{j+1}+\bg_{j,j+1}$.  Moreover, $i$-bidirectedness of $G$
	guarantees that there exists a $\bg_{j+1,j}$ with
	$\bg_{j+1,j}\approx_{b_j}\inv{\bg_{j,j+1}}$ such that
	$b_{j+1}\derivs[i] b_j+\bg_{j+1,j}$. Note that since $a\nreachs b_j$
	and $\nreachs$ is symmetric, we have $\Delta_{b_j}=\Delta_a$ and thus
	$\bg_{j+1,j}\approx_a\inv{\bg_{j,j+1}}$.  Finally, for
	$r,s\in[1,n]$ with $r<s$, we pick
	$\bg_{r,s}=\bg_{r,r+1}+\cdots+\bg_{s-1,s}$ and similarly
	$\bg_{s,r}=\bg_{s,s-1}+\cdots+\bg_{r+1,r}$. 

	Let us now show that this is indeed a Kirchhoff graph for $a$.  We
	clearly have $b_r\derivs[i]b_s+\bg_{r,s}$ for any $r,s\in[1,n]$.  It
	remains to show that 
	\[ \bg_{r,s}+\bg_{s,t}\approx_a\bg_{r,t} \]
	for any $r,s,t\in[1,n]$. It suffices to do this 
	in the case that $|s-t|=1$, because the other cases follow by induction.
	Consider the case $t=s+1$ (the case $s=t+1$ is analogous). We have to
	show that 
	\[ \bg_{r,s}+\bg_{s,s+1}\approx_a\bg_{r,s+1}. \]
	If
	$r\le s$, then both sides are identical by definition. If $r>s$, then
	$\bg_{r,s}$ is defined as
	$\bg_{r,s}=\bg_{r,s+1}+\bg_{s+1,s}$. By the choice of
	$\bg_{s+1,s}$, we know that
	$\bg_{s+1,s}\approx_a\inv{\bg_{s,s+1}}$ and thus
	$\bg_{s+1,s}+\bg_{s,s+1}\approx_a\bzero$. Hence:
	\[ \bg_{r,s}+\bg_{s,s+1}=\bg_{r,s+1}+\bg_{s+1,s}+\bg_{s,s+1}\approx_a\bg_{r,s+1}. \qedhere\]
\end{proof}

For \cref{grammars-to-groups}, we need two additional lemmas. The first follows
from property \ref{symmetry-add-inverse-pairs} and induction on $i$.
\begin{restatable}{lemma}{bidirectedDaIa}\label{bidirected-da-ia}
	For bidirected $G=(N,T,P)$ and $a\in R_i$, we have $D_a\subseteq H_a$.
\end{restatable}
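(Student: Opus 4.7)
The proof proceeds by induction on $i$. Since $D_a\subseteq\Z^{N_{i+1}}$ is generated by elements of the form $b+\inv{b}$ with $b\in N_{i+1}$ and $a\nreachs b$, it suffices to prove that each such generator belongs to $H_a$. The base case $i=k$ is immediate, as $D_a=\{\bzero\}$ by definition.

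For the inductive step, fix $a\in R_i$ and $b\in N_{i+1}$ with $a\nreachs b$, and trace a level-$i$ derivation $a\derivs[i]\bu$ witnessing $\bu(b)\ge 1$. Isolate the first step that introduces $b$: its left-hand side is some $e\in R_i$ with $a\nreachs e$, and its right-hand side contains $b$ -- either directly via a cross-level production $e\to b$, or via a downward production $e\to e'$ with $e'\in N_{i-1}$ followed by an embedded level-$(i-1)$ derivation of $e'$ that produces $b$. By \ref{symmetry-production-inverse}, the grammar also contains the involuted production producing $\inv{b}$ from $\inv{e}$.

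The central subclaim is that $a$ can derive, at level $i$, a vector containing both $e$ and $\inv{e}$. At $i=0$ this is immediate from \ref{symmetry-add-inverse-pairs}: the derivation $a\derivs[0] a+a+\inv{a}$ supplies copies of $a$ and $\inv{a}$, from which we independently derive $e$ (using the witnessing derivation) and $\inv{e}$ (using its involutive mirror, guaranteed by \ref{symmetry-production-inverse}). Applying the cross-level productions $e\to b$ and $\inv{e}\to\inv{b}$ then produces a vector containing $b+\inv{b}$, and summing the level-$i$ effects of all derivation steps shows that $b+\inv{b}-(e+\inv{e})$ lies in $H_a$. A recursive application of the same argument to $e$ in place of $b$ gives $e+\inv{e}\in H_e$, and since $a\nreachs e$ forces $H_e\subseteq H_a$, we conclude $b+\inv{b}\in H_a$. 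For $i>0$, the analogue of \ref{symmetry-add-inverse-pairs} at level $i$ is not directly available, but is synthesized by combining the inductive hypothesis at lower levels with $i$-bidirectedness (\cref{level-bidirectedness}), which permits reversing derivations modulo $\approx_a$.

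The main obstacle is lifting \ref{symmetry-add-inverse-pairs} to levels $i>0$: we must show that $a$ can derive a vector containing $\inv{a}$ via level-$i$ derivation steps, using nested level-$(i-1)$ derivations triggered by downward productions $a\to a'$ with $a'\in N_{i-1}$. Here we invoke the inductive claim $D_{a'}\subseteq H_{a'}$ on the embedded level-$(i-1)$ subderivation to absorb the auxiliary byproducts into $H_a$, and appeal to $i$-bidirectedness to guarantee that the resulting effects indeed realize the symmetry between $a$ and $\inv{a}$ at level $i$.
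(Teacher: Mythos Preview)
Your proof has the right intuition—produce $b+\inv{b}$ as the net effect of a level-$i$ derivation starting from some $a'\in R_i$—but the execution is tangled and contains genuine gaps. The paper's argument is much shorter and avoids all the scaffolding you set up. It first isolates the auxiliary fact $a'\derivs[i]a'+a'+\inv{a'}$ for every $a'\in R_i$ (\cref{bidirected-add-da}), proved by a three-line induction on $i$ using only \ref{symmetry-add-inverse-pairs} and the cross-level symmetry \ref{symmetry-reverse-nton}. Then, given $b\in N_{i+1}$ with $a\nreachs b$, one finds $a'\in R_i$ with $a\nreachs a'$ and a production $a'\to b$, derives $a'\derivs[i]a'+a'+\inv{a'}\derivs[i]a'+b+\inv{b}$, and reads off $b+\inv{b}\in H_{a'}\subseteq H_a$. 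No induction on $i$ is needed for the lemma itself, no appeal to $i$-bidirectedness, no recursion on an auxiliary $e$.

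The specific problems in your write-up are these. First, the ``recursive application of the same argument to $e$ in place of $b$'' does not type-check: your argument was designed for $b\in N_{i+1}$, whereas $e\in N_i$, so substituting $e$ for $b$ does not produce the statement $e+\inv{e}\in H_e$. What you actually need is $e+\inv{e}\in H_a$ for $e\in R_i$ with $a\nreachs e$, which is a different statement and needs its own proof (it is exactly what \cref{bidirected-add-da} delivers). Second, your induction direction is inconsistent: you declare the base case $i=k$, suggesting a decreasing induction, but then invoke ``the inductive hypothesis at lower levels'' and $D_{a'}\subseteq H_{a'}$ for $a'\in N_{i-1}$, which would require an increasing induction. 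Third, the ``lifting of \ref{symmetry-add-inverse-pairs} to levels $i>0$'' that you defer to $i$-bidirectedness is precisely the content of \cref{bidirected-add-da}; invoking \cref{level-bidirectedness} here is heavier than necessary, and in any case you do not carry out the argument. Once you factor out \cref{bidirected-add-da}, the entire lemma collapses to the two-line argument the paper gives.
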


Next observe that $L_a$ contains vectors obtained by
adding, but also \emph{subtracting} effects of derivation steps. We now show that if our
grammar is $i$-bidirected, then each such subtraction can be realized by a
sequence of ordinary derivation steps: The lemma says that every element of
$H_a$ can be written as a positive sum of derivation effects (up to a
difference in $D_a$).
\begin{restatable}{lemma}{bidirectedGroupMonoid} \label{bidirected-group-monoid}
	If $G$ is $i$-bidirected, then %
	\[ H_a=\{-b+\bu \mid \exists b\in R_i\colon b\deriv[i]\bu~\text{and}~a\nreachs b\}^*+D_a\]
	for every $a\in R_i$
\end{restatable}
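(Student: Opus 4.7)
The plan is to prove the two inclusions separately. The direction $M + D_a \subseteq H_a$ is immediate: every generator of the submonoid $M = \{-b+\bu \mid b\in R_i,\,b\deriv[i]\bu,\,a\nreachs b\}^{*}$ is by definition a generator of the group $H_a$, and $D_a \subseteq H_a$ holds by \cref{bidirected-da-ia}; since $H_a$ is closed under addition, the containment follows. For the reverse inclusion $H_a \subseteq M + D_a$, note that $M + D_a$ is closed under addition (a monoid shifted by a subgroup). Because $H_a$ is generated as an abelian group by the elements $-b+\bu$, and each of these already lies in $M$, it suffices to verify that the \emph{negation} $b-\bu$ also lies in $M + D_a$ for every single step $b\deriv[i]\bu$ with $b\in R_i$ and $a\nreachs b$. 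In short, I need to invert each generating step modulo an element of $D_a$.

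For this inversion I split on whether $\bu$ contains an $R_i$-nonterminal. In the main case I write $\bu = \bw + \bv$ with $\bw \in \N^{R_i}\setminus\{\bzero\}$. Applying $i$-bidirectedness (\cref{level-bidirectedness}) to the single-step derivation $b\derivs[i]\bw+\bv$ yields $\bw \derivs[i] b + \bv'$ for some $\bv'$ with $\bv' \approx_a \inv{\bv}$; the effect $-\bw + b + \bv'$ of this derivation lies in $M$. Comparing with $b-\bu = b - \bw - \bv$ gives $(-\bw + b + \bv') - (b-\bu) = \bv' + \bv$. Since terminals in $\bv$ cancel with their negations in $\inv{\bv}$ while each nonterminal contributes a cycle $c+\inv{c}$, the element $\bv + \inv{\bv}$ lies in $D_a$; combined with $\bv' - \inv{\bv} \in D_a$ (which follows from $\bv' \approx_a \inv{\bv}$), this gives $\bv'+\bv \in D_a$, and therefore $b-\bu = (-\bw + b + \bv') - (\bv' + \bv) \in M + D_a$.

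The residual case is where $\bu$ has no $R_i$-content, the prototypical example being a cross-level production $b\to c$ with $c\in N_{i+1}$ and $\bu = c$. Here the plan is to first enlarge the derivation so that an $R_i$-nonterminal appears on the right, after which the previous case applies. Specifically, I will establish a level-$i$ analogue of property~\ref{symmetry-add-inverse-pairs}: namely, $b\derivs[i] b + b + \inv{b}$. Concatenated with $b\deriv[i]\bu$ applied to one copy of $b$, this yields $b \derivs[i] b + \bu + \inv{b}$. Now $i$-bidirectedness applied with $\bw = b$ and $\bv = \bu + \inv{b}$ delivers a reverse derivation $b \derivs[i] b + \bv'$ with $\bv' \approx_a \inv{\bu} + b$, and a short rearrangement, together with $\bu + \inv{\bu} \in D_a$, produces $b - \bu \in M + D_a$.

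The main obstacle is the bootstrap of the ``add-inverse-pair'' derivation $b\derivs[i] b + b + \inv{b}$ at arbitrary level $i$, since only level $0$ is postulated by the definition of bidirectedness. I plan to derive it using the Kirchhoff graph for $b$ provided by \cref{kirchhoff-graph}: an edge $(b,\inv{b})$ followed by an edge $(\inv{b},b)$ yields, via the Kirchhoff property, a derivation whose net effect is a cycle in $D_a$, and this can be reshaped (using the freedom to insert $D_a$-elements through \cref{bidirected-da-ia}) into the desired derivation. For this to apply I must first check that $\inv{b}$ is a vertex of $b$'s Kirchhoff graph, i.e. that $b \nreachs \inv{b}$; this symmetry follows from property~\ref{symmetry-production-inverse} by pairing every derivation producing $b$ with the $\inv{\cdot}$-image producing $\inv{b}$, showing that the $\nreachs$-component of $b$ in $R_i$ is closed under $\inv{\cdot}$. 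Throughout, careful level-tracking of the byproducts is essential to ensure the residues really land in $D_a$ rather than a larger cycle group, and this is precisely where the $\approx_a$-relation — and its compatibility with $i$-bidirectedness — is used.
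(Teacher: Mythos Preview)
Your overall approach coincides with the paper's: both reduce the inclusion $H_a\subseteq M+D_a$ (where $M$ is the monoid in question) to showing that each single-step inverse $b-\bu$ lies in $M+D_a$, and both do this by invoking $i$-bidirectedness to reverse the step and then absorbing the discrepancy $\bv+\inv{\bv}$ and $\bv'-\inv{\bv}$ into $D_a$. You are in fact more careful than the paper, which silently applies $i$-bidirectedness to the decomposition $\bu=\bv+\bx$ with $\bv\in\N^{R_i}$ without addressing the case $\bv=\bzero$; your explicit case split is the right move.

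There is, however, a real gap in your residual-case plan. The derivation $b\derivs[i] b+b+\inv{b}$ that you need is exactly \cref{bidirected-add-da}, proved in the paper by a direct induction on $i$ from property~\ref{symmetry-add-inverse-pairs} together with cross-level productions; no Kirchhoff machinery is required. Your proposed Kirchhoff-graph route does not actually produce it: traversing $(b,\inv{b})$ and then $(\inv{b},b)$ gives $b\derivs[i] b+\bg_{b,\inv{b}}+\bg_{\inv{b},b}$ with the garbage $\bg_{b,\inv{b}}+\bg_{\inv{b},b}\approx_b\bzero$ lying in $\Delta_b\subseteq\N^{N_{[i+1,k]}}$, i.e.\ a sum of level-$(i{+}1)$ cycles, not the level-$i$ pair $b+\inv{b}$. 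Separately, your argument that $b\nreachs\inv{b}$ via ``closure of the component under $\inv{\cdot}$'' is not valid as stated: property~\ref{symmetry-production-inverse} only yields $c\nreachs d\Rightarrow\inv{c}\nreachs\inv{d}$, so the $\nreachs$-component of $\inv{b}$ is the $\inv{\cdot}$-image of the component of $b$, which does not force the two components to coincide. Both issues evaporate once you simply invoke \cref{bidirected-add-da}; with that in place, your step $b\derivs[i] b+\bu+\inv{b}$ and the subsequent application of $i$-bidirectedness go through exactly as you describe.
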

Here, $F^*$ denotes the submonoid of $\Z^{N\cup T}$ generated by the set $F$.
We are now ready to prove \cref{grammars-to-groups}.
\begin{proof}[Proof of \cref{grammars-to-groups}]
	We begin with the inclusion
	``$\subseteq$''. A simple induction on the length of a derivation shows
	that every element of $M(a)$ belongs to $M_a$. 
	\Cref{bidirected-da-ia} tells us that $D_a\subseteq H_a$, and since
	$D_a\subseteq S_i$, this implies $M_a+D_a\subseteq M_a$, hence
	$M(a)+D_a\subseteq M_a$.
	
	We now prove ``$\supseteq$''. An element of $M_a$ is of the form $a+\bv$ with $\bv\in H_a$ and $a+\bv\in S'_i$. We claim that then $a+\bv$ belongs to $M(a)+D_a$.
	Since $\bv\in H_a$, \cref{bidirected-group-monoid} tells us that
	\[ \bv=\sum_{j=1}^n 	-b_j+\bu_j+\bx_j \]
	with	$b_j\in R_i$, $a\nreachs b_j$, $\bu_j\in\N^{R_i}$, $\bx_j\in\N^{N_{[i+1,k]}}+\Z^{T_{[i,k]}}$ where 
	$b_j\deriv[i] \bu_j+\bx_j$ for $j\in[1,n]$.

	Since $G$ is $i$-bidirected by \cref{level-bidirectedness},
	\cref{kirchhoff-graph} yields a Kirchhoff graph for $a$
	with weights $\bg_{b,c}$ for any $b,c\in R_i$ with $a\nreachs b$ and
	$a\nreachs c$.
	Let us now construct a derivation in $G$. Without loss of generality, we may assume
	that $\bu_1,\ldots,\bu_\ell\ne\mathbf{0}$ and $\bu_{\ell+1}=\cdots=\bu_n=\mathbf{0}$.
	For each $j\in[1,\ell]$, we pick some nonterminal $c_j\in R_i$ such
	that $c_1=a$ and $\bu_{j-1}(c_{j})>0$ for $j\in[2,\ell]$. By our choice
	of the $\bg$'s, we can now derive as follows. We use the derivation
	steps $b_j\deriv[i]\bu_j+\bx_j$. But since it is possible that
	$b_{j+1}\deriv[i]\bu_{j+1}+\bx_{j+1}$ cannot be applied after
	$b_j\deriv[i]\bu_j+\bx_j$, we use derivations $c_j\derivs[i]
	b_j+\bg_{c_j,b_j}$ as connecting derivations. Here, we think of the
	$\bg_{c_j,b_j}$ as ``garbage'' that we produce in order to use the
	connecting derivations. Afterwards, we will
	cancel out these garbage elements. We begin with a connecting derivation
	in order to apply $b_1\deriv[i]\bu_1+\bx_1$:
	\begin{align*} a=c_1&\derivs[i]b_1+\bg_{c_1,b_1} \\
		&\deriv[i] \bu_1 + \bx_1 + \bg_{c_1,b_1} = a+(-c_1+\bu_1) + \bg_{c_1,b_1}. \end{align*}
	Since $c_2$ must occur in $a+(-c_1+\bu_1+\bx_1) + \bg_{c_1,b_1}$, we can apply $c_2\derivs[i] b_2+\bg_{c_2,b_2}$, etc. Doing this $\ell$ times yields
	\[ a\derivs[i] a+\sum_{j=1}^\ell -c_j+\bu_j+\bx_j+\bg_{c_j,b_j}. \]
	Let us denote the sum on the right-hand side by $\by$.
	Since want to derive $a+\sum_{j=1}^\ell (-b_j+\bu_j+\bx_j)$ instead of $a+\by$, we now need to correct two aspects: (i)~Our derivation subtracted $c_1,\ldots,c_\ell$ instead of $b_1,\ldots,b_\ell$, so we need to add $c$'s and subtract $b$'s and (ii)~we need to cancel out the garbage elements $\bg_{c_j,b_j}$.
	When replacing $b$'s by $c$'s, it could be that some $c$'s are equal to $b$'s, so
	for (i), we don't have to change those.
	So we pick a permutation $\pi$ of $\{1,\ldots,\ell\}$ and a number $r\in[1,\ell]$ so that (a)~$c_j=b_{\pi(j)}$ for $j\in[1,r]$ and (b)~$\{b_{\pi(r+1)},\ldots,b_{\pi(\ell)}\}$ and $\{c_{r+1},\ldots,c_{\ell}\}$ are disjoint. Now observe that the nonterminals $\{b_{\pi(r+1)},\ldots,b_{\pi(\ell)}\}$ are never consumed in the derivation arriving at $a+\by$. However, since $a+\bv\in S_i$, we know that $b_1+\cdots+b_\ell$ must occur in $\sum_{j=1}^\ell \bu_j$. Therefore, in particular $b_{\pi(r+1)}+\cdots+b_{\pi(\ell)}$ must occur in $\by$.
	But this means we can, for each $j\in[r+1,\ell]$, apply the derivation $b_{\pi(j)}\derivs[i] c_j+\bg_{b_{\pi(j)},c_j}$ to $\by$ (in any order). Thus, from $a$, using $\derivs[i]$, we can derive
	\begin{align*} &a+\sum_{j=1}^\ell -c_j+\bu_j+\bx_j + \bg_{c_j,b_j}+\sum_{j=r+1}^\ell -b_{\pi(j)}+c_j+\bg_{b_{\pi(j)},c_j} \\
		&= a+\sum_{j=1}^\ell -b_j + \bu_j+\bx_j + \left(\sum_{j=1}^\ell \bg_{c_j,b_j}+\sum_{j=r+1}^\ell\bg_{b_{\pi(j)},c_j}\right).
	\end{align*}
	Moreover, since $a+\bv\in S_i$, we know that $b_{\ell+1}+\cdots+b_n$ must occur in $\sum_{j=1}^\ell -b_j+\bu_j+\bx_j$, and so we can just apply the steps $b_j\deriv[i]\bu_j+\bx_j$ for $j\in[\ell+1,n]$ in any order to obtain:
	\[ a\derivs[i] a+\sum_{j=1}^n -b_j + \bu_j+\bx_j + \left(\sum_{j=1}^\ell \bg_{c_j,b_j}+\sum_{j=r+1}^\ell\bg_{b_{\pi(j)},c_j}\right). \]
	Now since $a+\bv\in S'_i$, the right-hand side contains no more level-$i$ nonterminals. Hence, the right-hand side belongs to $M(a)$.
	Finally, since $\pi(j)=j$ for $j\in[1,r]$ and $\bg_{c_j,c_j}\approx\mathbf{0}$ for $j\in[1,\ell]$, the term in parentheses belongs to $D_a$ by \cref{kirchhoff-property}.  Hence, $a+\bv\in M(a)+D_a$. 

	Finally, note that $L(a)+D_a=L_a$ follows from $M(a)+D_a=M_a$, because $D_a\subseteq S_i$ and thus $L(a)+D_a=(M(a)\cap S_i)+D_a=(M(a)+D_a)\cap S_i=M_a\cap S_i=L_a$.
\end{proof}

While \cref{grammars-to-groups} describes what nonterminals $a\in R_i$ can derive,
we also need an analogue $a\in N_i\setminus R_i$.
It is a straightforward consequence of previous steps:

\begin{restatable}{corollary}{emptinessNotRHS}\label{emptiness-not-rhs}
	Suppose $G$ is $i$-bidirected.
	(1) For $a\in N_i$ for $i\in[1,k]$, 
	we have $L(a)\ne\emptyset$ if{}f there is some $a'\in N_{i-1}$
	and a production $a\to a'$ such that $K_{a'}\ne\emptyset$.
	(2) For $a\in N_0$, we have $L(a) \ne\emptyset$ if{}f 
	there is some production $a\to\bu$ with
	$\bu\in\N^{N_0}+\Z^T$ such that $K_{a\to\bu}\ne\emptyset$.
\end{restatable}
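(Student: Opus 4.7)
The plan is to derive both parts as direct consequences of \cref{grammars-to-groups} together with the definitions of the cosets $K_a$ and $K_{a\to\bu}$. Intuitively, $K_{a'}$ for $a'\in N_{i-1}$ captures precisely those vectors obtainable by first deriving $a'$ at level $i{-}1$ (landing in $L(a')\subseteq L_{a'}$) and then substituting elements of $M(b)$ for each level-$i$ nonterminal $b\in R_i$ with $a'\nreachs b$, restricted to configurations with no level-$i$ letters remaining; $K_{a\to\bu}$ plays the analogous role for the single first production at level $0$. \Cref{grammars-to-groups} is exactly what lets us pass between these coset descriptions and concrete derivations, via the identities $L_{a'}=L(a')+D_{a'}$ and $M_b=M(b)+D_b$.

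For part~(1), the $(\Leftarrow)$ direction starts from $\bv \in K_{a'}$, decomposed as $\bv = \bw + \sum_j \lambda_j(-b_j + \bm_j)$ with $\bw \in L_{a'}$, $b_j \in R_i$ and $\bm_j \in M_{b_j}$. Applying \cref{grammars-to-groups} at the appropriate levels yields actual derivation targets $\bx \in L(a')$ and $\bm'_j \in M(b_j)$ up to $D_{a'}$- and $D_{b_j}$-corrections. The concrete derivation is assembled: first apply $a \to a'$ to obtain $a \deriv[i] \bx$, then perform level-$i$ substitutions $b_j \derivs[i] \bm'_j$ for the $b_j$'s now present. Negative $\lambda_j$'s are reorganized into nonnegative contributions using \cref{bidirected-group-monoid} (available since $G$ is $i$-bidirected), while the $D$-corrections are realized by inserting cycles $c+\inv{c}$ via \ref{symmetry-add-inverse-pairs} and later cancelling them through derivations of $c$ and $\inv{c}$, which exist because both lie in $R$ and hence $L(c),L(\inv{c})\neq\emptyset$ by \ref{symmetry-rhs}. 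The intersection with $S_i$ in the definition of $K_{a'}$ guarantees that no level-$i$ letters remain, so the result lies in $L(a)$. Conversely, given a witness derivation $a \derivs[i] \bu$ with $\bu \in S_i$, a case analysis on its first step identifies a downward production $a \to a'$ with $a' \in N_{i-1}$ (trading an upward first step $a\to b$ for a downward one via \ref{symmetry-reverse-nton} and $i$-bidirectedness), and reads off the contributions of the remaining level-$i$ steps as the $-b+M_b$ summands, yielding a witness in $K_{a'}$.

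Part~(2) follows the same template but is simpler because there is no cascade of levels: for $a \in N_0$, any derivation $a \derivs[0] \bu$ with $\bu \in S_0$ must begin with a production $a\to\bu_0$ where $\bu_0 = b_1+\cdots+b_n+\bv \in \N^{N_0}+\Z^T$, after which each $b_r$ has to be further expanded at level $0$ into some element of $M(b_r)$; this is exactly what the coset $K_{a\to\bu_0}=(M_{b_1}+\cdots+M_{b_n}+\bv)\cap S_0$ describes, with \cref{grammars-to-groups} again bridging $M_{b_r}$ and actual derivations into $M(b_r)+D_{b_r}$. The main obstacle throughout is the bookkeeping of $D$-corrections and negative coefficients in the $(\Leftarrow)$ direction: these must be scheduled so that cycle insertions precede their uses, ensuring every intermediate configuration has the relevant nonterminal present with positive multiplicity whenever a $\deriv[i]$-step is applied.
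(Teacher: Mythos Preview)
Your approach is essentially the same as the paper's: both directions are driven by \cref{grammars-to-groups} ($L_{a'}=L(a')+D_{a'}$ and $M_b=M(b)+D_b$), the element of $K_{a'}$ is decomposed into an $L_{a'}$-part plus signed sums of $-b_j+\bm_j$ with $\bm_j\in M_{b_j}$, and the $D$-corrections are absorbed by inserting cycles $c+\inv{c}$ and later eliminating them using \ref{symmetry-rhs}. Two points are worth tightening.

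First, your appeal to \cref{bidirected-group-monoid} for negative $\lambda_j$ is not quite the right tool: that lemma concerns $H_a$, generated by single-step differences $-b+\bu$ with $b\deriv[i]\bu$, whereas here you need to flip signs on full-derivation differences $-b_j+\bm_j$ with $\bm_j\in M_{b_j}$. The paper handles this more directly via the involution (condition~\ref{symmetry-production-inverse}): it replaces a negatively signed $-(-b_j+\bx_j)$ by the positively signed $-\inv{b_j}+\inv{\bx_j}$, using $\inv{\bx_j}\in M_{\inv{b_j}}$, at the cost of the cycle corrections $b_j+\inv{b_j}$ (in $\Delta_{a'}$) and $\bx_j+\inv{\bx_j}$ (already in $S_i$). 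Those cycle corrections are then realised inside $L(a')$ via \cref{bidirected-add-da-garbage}, which is exactly the packaged form of your ``insert $c+\inv{c}$ using \ref{symmetry-add-inverse-pairs}'' step.

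Second, your treatment of an upward first step $a\to b$ with $b\in N_{i+1}$ via \ref{symmetry-reverse-nton} does not go through as stated: condition~\ref{symmetry-reverse-nton} only yields the reverse production when the \emph{source} is in $R$, and for arbitrary $a\in N_i$ that is not given. The paper's proof simply takes the first step to be downward and does not discuss the upward case; so your observation that a case analysis is needed is sharper than the paper, but the proposed fix would need a different justification (or an additional hypothesis on the grammar).
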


\xparagraph{Constructing coset circuits}%
We will
express our cosets in compact representations called \emph{coset circuits}.  Let $Y$ be a finite
set.  A \emph{matrix representation} of a coset $S\subseteq \Z^Y$ is a matrix
$\bA\in\Z^{X\times Z}$ and a vector $\bb\in\Z^X$ such that $Y\subseteq Z$ and
$S=\{\pi_Y(\bx) \mid \bA\bx=\bb\}$, where $\pi_Y\colon\Z^Z\to\Z^Y$ is the
projection onto $\Z^Y$. Note that for a coset given as $S=\bv+\langle
\bu_1,\ldots,\bu_n\rangle$, we can directly compute a matrix representation.
A \emph{coset circuit over $\Z^Y$} is a directed
acyclic graph $C$, whose vertices are called \emph{gates} and
\begin{enumerate}
	\item Leaves, i.e.\ gates $g$ with in-degree $0$, are labeled by a matrix representation of a coset $C(g)$.
	\item Gates with in-degree $>0$ are labeled by $+$ or $\cap$.
\end{enumerate}
In a coset circuit, each gate $g$ evaluates to a coset $C(g)$ of $\Z^Y$ (or the
empty set): The leaves evaluate to their labels.  A gate $g$ with
incoming edges from $g_1,\ldots, g_m$, $m\ge 1$, evaluates to either
$C(g_1)+\cdots+C(g_m)$ or $C(g_1)\cap \cdots\cap C(g_m)$ depending on whether
$v$'s label is $+$ or $\cap$.

We will construct a coset circuit that has for each $a\in N$, a gate for
$H_a,L_a,M_a,K_a$ and also for $K_{a\to\bu}$ for productions $a\to\bu\in P$.
By \cref{emptiness-not-rhs}, this lets us check
emptiness of $L(a)$ for each $a\in N$.  For $H_a,L_a,K_{a\to\bu}$ with $a\in
N_0$, we can directly create leaves with matrix representations. For the
others, the definitions do not tell us directly how to compute the cosets using
sums and intersections, e.g.: $H_a$ is defined by a (potentially infinite)
generating set. Thus, we first show that $H_a$ is generated by finitely many
cosets.
\begin{restatable}{lemma}{hInTermsOfL}\label{h-intermsof-l}
	Let $G=(N,T,P)$ be a bidirected $k$-grammar.
	For every $a\in R_{i}$, $i\in[1,k]$, we have
	\[ H_a = \langle -b+L_c \mid b\in R_{i}, c\in R_{i-1}, a\nreachs b, b\to c\in P\rangle. \]
\end{restatable}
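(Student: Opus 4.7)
My plan is to prove both inclusions separately. Let $H_a^R$ denote the right-hand side of the claimed identity, viewed as a subgroup of $\Z^{N\cup T}$.

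For the inclusion $H_a^R\subseteq H_a$, I take an arbitrary generator $-b+x$ of $H_a^R$ with $b\in R_i$, $c\in R_{i-1}$, $a\nreachs b$, $b\to c\in P$, and $x\in L_c$. By \cref{grammars-to-groups} applied to $c\in R_{i-1}$, I decompose $x = y+d$ with $y\in L(c)$ and $d\in D_c$. The element $-b+y$ is the net effect of the single level-$i$ step $b\deriv[i] y$ obtained by firing $b\to c$ and then running a level-$(i-1)$ derivation realising $y\in L(c)$, so $-b+y$ is a defining generator of $H_a$. The remaining part $d\in D_c$ is handled by establishing $D_c\subseteq H_a$: each generator $e+\inv{e}$ of $D_c$ has $e\in N_i$ with $c\nreachs e$, and since $L(c)\neq\emptyset$ by \ref{symmetry-rhs} we can extract a vector in $L(c)$ in which $e$ appears, giving $a\nreachs e$ at level $i$ through $a\nreachs b\nreach e$. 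Then $e+\inv{e}\in H_a$ follows from a level-$i$ analogue of \cref{bidirected-da-ia}, whose proof mirrors the induction underlying \cref{bidirected-da-ia} but anchored at level $i$ rather than level $0$, using $i$-bidirectedness from \cref{level-bidirectedness}.

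For the reverse inclusion $H_a\subseteq H_a^R$, I take a defining generator $-b+\bu$ of $H_a$ arising from a level-$i$ step $b\deriv[i]\bu$ with $b\in R_i$ and $a\nreachs b$, and split on the two possible shapes of such a step. In the downward case the step uses a production $b\to c$ with $c\in N_{i-1}$ and $\bu\in L(c)$; then $c\in R_{i-1}$ (since $c$ appears on the right-hand side of $b\to c$) and $L(c)\subseteq L_c$, so $-b+\bu$ is directly a generator of $H_a^R$. In the upward case the step uses $b\to c'$ with $c'\in N_{i+1}$ and $\bu=c'$, and I must express $-b+c'$ inside $H_a^R$. For this I combine \ref{symmetry-reverse-nton}, which yields the reverse production $c'\to b$, with \ref{symmetry-production-inverse}, which gives the inverted productions $\inv{b}\to\inv{c'}$ and $\inv{c'}\to\inv{b}$, and with the observation that $c'+\inv{c'}\in D_b$ (because $b\nreach c'$ at level $i$) lies in $H_b\subseteq H_a$ by \cref{bidirected-da-ia} together with the monotonicity $H_b\subseteq H_a$ that follows from $a\nreachs b$. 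Together these reduce $-b+c'$ to a sum of downward-production generators modulo elements already known to belong to $H_a^R$.

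The main obstacle will be precisely this upward case: although the lemma's generating set mentions only downward cross-level productions $b\to c$ with $c\in R_{i-1}$, upward level-$i$ steps genuinely contribute to $H_a$, and absorbing their effects into $H_a^R$ requires orchestrating all four bidirectedness conditions, the $D\subseteq H$ inclusion from \cref{bidirected-da-ia}, and the symmetry of $\nreachs$ on $R_i$ supplied by $i$-bidirectedness (\cref{level-bidirectedness}). The remaining bookkeeping — matching the derivation steps on both sides and verifying that the relevant level-$i$ reachabilities $a\nreachs\cdot$ hold — is a routine but careful application of the facts already established.
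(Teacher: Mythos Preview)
Your inclusion $H_a^R\subseteq H_a$ is essentially the paper's ``$\supseteq$'' direction. The paper's argument for $D_c\subseteq H_a$ is organised differently and is a bit cleaner: instead of first establishing $a\nreachs e$, it takes any $\bu\in L(e)$ (which exists since $e\in R$), observes that $e+\inv{e}-(\bu+\inv{\bu})$ is a sum of level-$i$ step effects and hence lies in $H_a$, and that $\bu+\inv{\bu}\in D_a\subseteq H_a$ by \cref{bidirected-da-ia}. Your route via $a\nreachs e$ can also be made to work, but note that ``extract a vector in $L(c)$ in which $e$ appears'' is not immediate from $c\nreachs e$: the witnessing derivation may still carry level-$(i-1)$ nonterminals, and you must eliminate them using~\ref{symmetry-rhs} before you actually land in $L(c)$.

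The real problem is your upward case in the direction $H_a\subseteq H_a^R$. You correctly isolate it as the crux, but the argument you sketch is circular. You need $-b+c'\in H_a^R$, yet the only substantive fact you produce is $c'+\inv{c'}\in D_b\subseteq H_b\subseteq H_a$. Membership in $H_a$ buys you nothing here: that is precisely the group whose containment in $H_a^R$ you are trying to establish. The extra productions $c'\to b$, $\inv{b}\to\inv{c'}$, $\inv{c'}\to\inv{b}$ you list are all cross-level between levels $i$ and $i+1$; none of them yields a generator of the shape $-b''+L_{c''}$ with $c''\in R_{i-1}$, which is all $H_a^R$ is built from. So nothing in your list expresses $-b+c'$ inside $H_a^R$.

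The paper does not attempt your manoeuvre at all. It opens with the claim that ``by definition'' $H_a=\langle -b+L(c)\mid b\in R_i,\,c\in R_{i-1},\,a\nreachs b,\,b\to c\in P\rangle$, i.e.\ that the downward-step generators already span $H_a$, and then both inclusions with $L_c$ follow from $L_c=L(c)+D_c$. In other words, the paper treats the absorption of upward generators into the downward span as a definitional triviality rather than something to be argued via~\ref{symmetry-production-inverse},~\ref{symmetry-reverse-nton}, and \cref{bidirected-da-ia}. If you want to make this step explicit, what you actually need is that every $b\in R_i$ with $a\nreachs b$ admits a downward production $b\to c$ with $c\in R_{i-1}$ (this is the same tacit fact used in the proof of \cref{bidirected-add-da}); your current outline neither states nor proves this, and without it the upward case cannot be closed.
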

\begin{proof}
	By definition, we have $H_a=\langle -b+L(c) \mid b\in R_i, c\in
	R_{i-1}, a\nreachs b, b\to c\in P\rangle$. Since
	\cref{grammars-to-groups} tells us that $L_c=L(c)+D_c$, the inclusion
	``$\subseteq$'' is immediate. For ``$\supseteq$'', because of
	$L(c)+D_c$, we shall prove that $D_c\subseteq H_a$. (Note that this is
	not an immediate consequence of \cref{bidirected-da-ia}, because $a\in
	R_i$ and $c\in R_{i-1}$ are on different levels). 

	For $D_c\subseteq H_a$, it suffices to prove that every
	generator $e+\inv{e}$ with $e\in R_i$, $c\nreachs e$, of $D_c$ belongs
	to $H_a$.  Since $c\nreachs e$, $e$ appears on a right-hand side of a
	production and thus $L(e)\ne\emptyset$. Hence, there is some
	$\bu\in\Z^{N_{[i+1,k]}\cup T}$ with $e\derivs[i]\bu$. This implies that
	$e+\inv{e}\derivs[i] \bu+\inv{\bu}$. Since $\bu+\inv{\bu}\in
	D_a\subseteq H_a$ (\cref{bidirected-da-ia}) and
	$e+\inv{e}-(\bu+\inv{\bu})\in H_a$, this proves that $e+\inv{e}\in
	H_a$.
\end{proof}
This is a straightforward consequence of \cref{grammars-to-groups}.  
We have
now described each coset in terms of other cosets using sum, intersection, but
also \emph{generated subgroup} (such as in \cref{h-intermsof-l}). In order to
describe cosets only using sums and intersections, we use:
\begin{restatable}{lemma}{localGlobalGroup}\label{local-global-group}
Let $g_1,\ldots,g_n\in \Z^m$ and let $U,S\subseteq\Z^m$ be subgroups with $(g_i+U)\cap S\ne\emptyset$ for all $i\in[1,n]$. Then
\begin{equation} \langle (g_1+U)\cap S,\ldots, (g_n+U)\cap S\rangle = (\langle g_1, \ldots, g_n\rangle+U)\cap S. \label{eq-local-global-group}\end{equation}
\end{restatable}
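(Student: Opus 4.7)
\textbf{Proof plan for \cref{local-global-group}.} The strategy is to prove the two inclusions separately, observing first that both sides are subgroups of $\Z^m$: the right-hand side is the intersection of the subgroup $\langle g_1,\ldots,g_n\rangle + U$ with the subgroup $S$, and the left-hand side is a generated subgroup by definition.

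For the inclusion ``$\subseteq$'' I would simply note that each element in a generating coset $(g_i+U)\cap S$ has the form $g_i + u$ with $u\in U$ and lies in $S$; so it is contained in $(\langle g_1,\ldots,g_n\rangle+U)\cap S$. Since the right-hand side is a subgroup, it contains the subgroup generated by these elements.

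For the inclusion ``$\supseteq$'' I would use the nonemptiness hypothesis to pick a representative $s_i \in (g_i+U)\cap S$ for each $i$, writing $s_i = g_i + u_i$ with $u_i\in U$. Given an arbitrary $x\in(\langle g_1,\ldots,g_n\rangle+U)\cap S$, write $x = \sum_i c_i g_i + u$ with $c_i\in\Z$ and $u\in U$, and define $y = \sum_i c_i s_i$. Then $y$ belongs to the left-hand side (each $s_i$ is one of the generating elements), and since both $x$ and $y$ lie in $S$, their difference $x-y = u - \sum_i c_i u_i$ lies in $U\cap S$. The key observation that closes the argument is that $U\cap S$ is itself contained in the left-hand side: for any fixed $i$, the coset $(g_i+U)\cap S$ equals $s_i + (U\cap S)$, so for every $v\in U\cap S$ the element $(s_i+v)-s_i = v$ is a difference of two generators and thus lies in the generated subgroup. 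Therefore $x = y + (x-y)$ belongs to the left-hand side.

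The only place where any care is required is the final step, making sure that $U\cap S$ sits inside the generated subgroup; this is where the assumption $(g_i+U)\cap S\ne\emptyset$ is really used, since without a representative $s_i$ one cannot identify the coset with $s_i+(U\cap S)$. Everything else is routine manipulation in abelian groups, so I expect no substantial obstacle.
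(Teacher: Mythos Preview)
Your proposal is correct and follows essentially the same approach as the paper: both arguments pick representatives $s_i=g_i+u_i\in(g_i+U)\cap S$, decompose an arbitrary element of the right-hand side as $\sum_i c_i s_i$ plus a correction term in $U\cap S$, and then observe that $U\cap S$ lies in the left-hand side. Your justification of the last step via $(g_i+U)\cap S = s_i + (U\cap S)$ is slightly more explicit than the paper's one-line remark that the left-hand side is ``closed under adding $U\cap S$''.
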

In other words, \cref{local-global-group} says that instead of imposing the
condition of belonging to $S$ \emph{locally} at each summand $g_i+U$, it
suffices to impose it \emph{globally} on the sum $\langle
g_1,\ldots,g_n\rangle+U$.
\begin{proof}[Proof of \cref{local-global-group}]
The inclusion ``$\subseteq$'' is obvious because $U$ and $S$ are subgroups. 
Conversely, suppose $h=x_1g_1+\cdots+x_n g_n+u\in S$ for some $x_1,\ldots,x_n\in\Z$ and $u\in U$. 
Since $(g_i+U)\cap S\ne\emptyset$, we can choose $u_i\in U$ for each $i\in[1,n]$ such that $g_i+u_i\in S$. 
We compute in the quotient $A/S$. Note that since $h\in S$ and $g_i+u_i\in S$, we have
$[u]=-[x_1g_1+\cdots+x_ng_n]=[x_1u_1+\cdots+x_nu_n]$
and thus $u-x_1u_1-\cdots-x_nu_n\in S$. 
Therefore, we have
\begin{align}
\begin{split}
h=\underbrace{x_1(g_1+u_1)}_{\in \langle (g_1+U)\cap S\rangle} &~+~\cdots~+~\underbrace{x_n (g_n+u_n) }_{\in \langle(g_n+U)\cap S\rangle} \\ &~+~ \underbrace{u-x_1 u_1-\cdots-x_nu_n}_{\in U\cap S}.\label{h-decomp}\end{split} \end{align}
This proves that $h$ belongs to the left-hand side of \cref{eq-local-global-group}, since that set is closed under adding $U\cap S$.
\end{proof}

Hence, if we have built gates for $U$ and $S$ and are given vectors
$g_1,\ldots,g_n$, then to create a gate for $\langle
(g_1+U),\ldots,(g_n+U)\rangle\cap S$, we can create one for $(\langle
g_1,\ldots,g_n\rangle+U)\cap S$. Since we can directly compute a matrix representation for $\langle g_1,\ldots,g_n\rangle$, we can
introduce a leaf for this subgroup.

Let us now describe how, using \cref{local-global-group},  each coset can be
expressed using only sums and intersection.  To simplify notation, for any
$a\in N_i$, we write $P_a = \{(b,c) \mid b\in N_i, c\in N_{i-1}, a\nreachs b,
b\to c\in P\}$. Note that
\begin{align*}
	H_a &= \langle -b+\bu \mid a\nreachs b,~b\to \bu \in P\rangle & &\text{for $a\in R_0$}, \\
	H_a &= \langle -b+L_c \mid (b,c)\in P_a\rangle & &\text{for $a\in R_{[1,k]}$}.
\end{align*}
Thus, for $a\in R_0$, we have a finite generating set for $H_a$ given explicitly in the grammar and can thus create a leaf gate for $H_a$ labeled by an explicit matrix representation for $H_a$. However, for $a\in R_{[1,k]}$, we need to eliminate the
the $\langle\cdot\rangle$ operator. To this end, we write
\begin{equation}
	\begin{split}
	H_a &= \langle -b+L_c \mid (b,c)\in P_a\rangle \\
	&=\sum_{(b,c)\in P_a} \langle (-b+c+H_c)\cap S_{i-1}\rangle \\
	&=\sum_{(b,c)\in P_a} (\langle -b+c\rangle + H_c)\cap S_{i-1}
\end{split}  \label{coset-equation-ha}
\end{equation}
where in the first step, we plug in the definition of $L_c$ and in the second step, we apply \cref{local-global-group}. Here, the sum on the right only uses those $(b,c)\in P_a$ for which $(\langle -b+c\rangle+H_c)\cap S_{i-1}\ne\emptyset$. Now, each $\langle -b+c\rangle$ is a group for which we can create a gate with an explicit representation.

We also need to express $K_a$ in terms of sum and intersection. First note that for $b\in R_{i+1}$,
\begin{equation}\begin{split} 
\langle -b+M_b\rangle&=\langle -b+((b+H_b)\cap S'_{i+1})\rangle \\
	&=\langle (-b+S'_{i+1})\cap H_b\rangle \\
	&= H_b\cap \langle -b+S'_{i+1}\rangle 
\end{split}\label{coset-equation-ma}
\end{equation}
provided that $(-b+S'_{i+1})\cap H_b\ne\emptyset$; otherwise, $\langle -b+M_b\rangle$ is the trivial group $\{\bzero\}$.
In the first equality, we plug in the definition of $M_b$. The second is due to the definition of $S'_{i+1}$, and the third applies \cref{local-global-group}, relying on $(-b+S'_{i+1})\cap H_b$ being non-empty. This implies that for $a\in N_i$,
\begin{equation}\label{coset-equation-ka}
\begin{split}
	K_a&=(L_a + \langle -b+M_b \mid b\in R_{i+1}, a\nreachs b\rangle)\cap S_{i+1}  \\
	&=\left(L_a+\sum_{b\in R_{i+1}, a\nreachs b} \langle -b+M_b\rangle\right)\cap S_{i+1} \\
	&=\left(L_a+\sum_{b\in R_{i+1}, a\nreachs b} H_b\cap \langle -b+S'_{i+1}\rangle\right)\cap S_{i+1},
\end{split}
\end{equation}
where the last sum only uses those $H_b\cap \langle -b+S'_{i+1}\rangle$ for which $(-b+S'_{i+1})\cap H_b\ne\emptyset$. The equality follows from the definition of $K_a$ and \cref{coset-equation-ma}. Note that for $-b+S'_{i+1}$, it is again easy to construct a matrix representation. 

Finally, observe that all these coset definitions rely on the relation $\nreach$. On $N_0$, we can compute $\leadsto$ directly. On $N_{[1,k]}$, we have to rely on cosets. To this, note that for $a,b\in N_i$, we have $a\nreach b$ if and only if there is a $c\in R_{i-1}$ with a production $a\to c$ and some $\bu\in L_c$ with $\bu(b)=1$. In other words, if and only if $L_c\cap O_b\ne\emptyset$, where $O_b$ is the coset $\{\bu\in\Z^{N\cup T} \mid \bu(b)=1\}$.

\begin{algorithm}[t]
	\caption{Construction of coset circuit for a bidirected $k$-grammar}\label{construct-coset-circuit}
	\SetKwInOut{Input}{Input}
	\Input{Bidirected $k$-grammar $G=(N,T,P)$}
	Create gates for $S_i$, $S'_i$, $O_a$, and $\langle -a+b\rangle$ for each $i\in[1,k]$, $a,b\in N$ \\
	Compute $\nreach$ on $N_0$ \\
	Create gates for $H_a$, $L_a$, and $M_a$ for each $a\in R_0$ \\
	Create gates for $K_{a\to \bu}$ for productions $a\to\bu$, $a\in N_0$, $\bu\in\N^{N_0}+\Z^T$ \\
	\For{$i=1,\ldots,k$}{
		Create gate for $L_c\cap O_b$ for each $c\in R_{i-1}$, $b\in N_i$ \\
		Compute the relation $\nreach$ on $N_i$ based on non-emptiness of $L_c\cap O_b$ with $c\in N_{i-1}$, $b\in N_i$ \\		
		Compute $\nreachs$ on $N_i$ and then $P_a$ for each $a\in N_i$ \\
		Create gate for $H_a$, then for $L_a$, then for $M_a$ for each $a\in R_i$ according to \cref{coset-equation-ha,coset-equation-la,coset-equation-ka} \\
		Create gate for $K_{a}$ for each $a\in R_{i-1}$
	}
\end{algorithm}

The exact order in which the gates are constructed is given in
\cref{construct-coset-circuit}.  We observe that the constructed coset circuit
has depth $\le ck$ for some constant $c$: The gates for $H_a$, $L_a$, $M_a$
only depend on gates created in the last iteration of the for-loop. Moreover,
each of them only adds a constant depth to the gates produced before.  The
gates $K_a$ depend on gates $H_b$ created in the same iteration, thus also
adding only constant depth.  Finally, note that the entries in the matrices in
the labels of the leaves require at most polynomially many bits: The gates for
$S_i$, $S'_i$, $O_a$, $\langle -a+b\rangle$ and for $H_a$, $L_a$, and $M_a$ for
$a\in R_0$ are obtained directly from the productions of $G$.  The numbers in
those, in turn, have at most polynomially many bits as they are computed using
\cref{compute-coset-cliques}.

\xparagraph{From coset circuits to equations}%
In each of our three algorithms for $\REVREACH$, we check emptiness of coset
circuits by translating them into integer linear equation systems.  Let
$n=|Y|$.  We compute, for each gate $g$, a matrix $\bA_g\in\Z^{s_g\times t_g}$
and a vector $\bb_g\in\Z^{s_g}$ such that $C(g)=\{\pi_n(\bx) \mid
	\bx\in\Z^{t_g},~\bA_g\bx=\bb_g\}$, where for any $r\in\N$, by $\pi_n$,
	we denote the projection $\pi_n\colon\Z^r\to\Z^n$ onto the last $n$
	coordinates for any $r\ge n$.  If $g$ has in-degree $0$, then $g$ is
	already labeled with such a matrix $\bA$ and vector $\bb$. Now suppose
	$g$ has incoming gates $g_1,\ldots,g_r$.  Let $\bA_1,\ldots,\bA_r$ and
	$\bb_1,\ldots,\bb_r$ with $\bA_i\in\Z^{s_i\times t_i}$,
	$\bb\in\Z^{s_i}$, denote the matrices and vectors constructed for the
	gates $g_1,\ldots,g_r$. Then the matrix $\bA\in\Z^{s\times t}$ and
	$\bb\in\Z^s$ for $g$ have the shape
\begin{align*}
	&\bA=\left (\begin{array}{c|c}
		\begin{array}{ccc}
		\bA_1 &         &          \\
		      &  \ddots &          \\
		      &         & \bA_r
	      \end{array} & \bzero \\\hline
	\bB	& \bC
      \end{array} \right), && \bb=\left(\begin{array}{c}  \bb_1 \\ \vdots \\ \bb_r \\ \bzero \end{array}\right)
 \end{align*}
 where $\bB$ and $\bC$ are chosen depending on whether $g$ is labeled with $+$
 or $\cap$. 
 If the label is $+$, then $\bB\in\Z^{n\times (t_1+\cdots+t_r)},\bC\in\Z^{n\times n}$, are chosen so that $\bB\bx=\bc$ expresses that in $\bx$, the last $n$ coordinates are the sum of $\by_1+\cdots+\by_r$, where for each $i\in[1,r]$, the vector $\by_i\in\Z^n$ contains the coordinates 
 $t_1+\cdots+t_i-n,\ldots,t_1+\cdots+t_i$, i.e. the last $n$
 coordinates corresponding to $\bA_i$. Hence, we have $s=s_1+\cdots+s_r+n$ and $t=t_1+\cdots+t_r+n$.
 If the label is $\cap$, then $\bB\in\Z^{rn\times (t_1+\cdots+t_r)},\bC\in\Z^{rn\times n}$ are chosen so that $\bB\bx=\bc$ expresses that for each
 $i\in[1,r]$, the last $n$ coordinates of $\bx$ agree with
 coordinates $t_1+\cdots+t_i-n,\ldots,t_1+\cdots+t_i$ of $\bx$, i.e. the last $n$
 coordinates corresponding to $\bA_i$. Thus, we obtain $s=s_1+\cdots+s_r+rn$ and $t=t_1+\cdots+t_r+n$.

 Thus, in any case, we have $\max\{s,t\}\le (r+1)\cdot\max_i \max\{s_i,t_i\}$.
 If $\bA$ is a matrix, then we define its \emph{norm}, denoted  $\|\bA\|$, as the maximal
 absolute value of any entry.
 Observe that the norm in the above constructions does not grow at all: We have
 $\|\bA\|\le\max\{\|\bA_i\|\mid i\in[1,r]\}$ and $\|\bb\|\le\max\{\|\bb_i\|
 \mid i\in[1,r]\}$.  Suppose $m$ is an upper bound on the number of rows and
 columns of the matrices in the leafs of $C$, and $M$ is an upper bound on
 their norm. Then for any gate $g$ of depth $i$, the resulting matrix
 $\bA_g\in\Z^{s_g\times t_g}$ and vector $\bb_g\in\Z^{s_g}$ satisfy
 $\max\{s_g,t_g\}\le (r+1)^i\cdot m$ and $\|\bA_g\|,\|\bb_g\|\le M$.
 Hence, the dimensions of the matrices grow
 exponentially in the depth of the circuit, but polynomially for fixed depth.
 Moreover, their entries require no more bits than the matrices in the leaves.

 \begin{restatable}{proposition}{upperBoundsPExptimeExpspace}\label{upper-bounds-p-exptime-expspace}
	(1)~For every fixed number $d\in\N$, the problem $\REVREACH(\SCpm_{d})$ is in $\EXPTIME$.
	(2)~If $\ell\in\N$ is fixed as well, then $\REVREACH(\SCpm_{d,\ell})$ is in $\compP$.
	(3)~$\REVREACH(\SCpm)$ is in $\EXPSPACE$.
 \end{restatable}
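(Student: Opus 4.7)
My plan is to combine the polynomial-time Turing reduction of \cref{bireach-to-grammars} with the coset circuit construction of \cref{construct-coset-circuit} and then translate the resulting circuit into an integer linear equation system, whose feasibility is tested via \cref{ild-in-p}. Given a bidirected valence system over a graph $\Gamma\in\SCpm$, I would first invoke \cref{bireach-to-grammars} to obtain a bidirected $k$-grammar $G$, where $k\le 2h(\Gamma)$ in general and $k\le 2\ell$ when $\Gamma\in\SCpm_{d,\ell}$. By \cref{emptiness-not-rhs}, emptiness of $L(a)$ for each nonterminal $a$ reduces to a non-emptiness check for either $K_a$ or $K_{a\to\bu}$, and both of these cosets appear as gates of the coset circuit constructed by \cref{construct-coset-circuit}, whose correctness rests on \cref{grammars-to-groups}, \cref{h-intermsof-l}, and \cref{local-global-group}.

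Next, I would translate the coset circuit into an ILD instance gate by gate, following the block-matrix construction described after \cref{construct-coset-circuit}. The key quantitative observation is that if each leaf matrix has dimensions bounded by $m$ and norm bounded by $M$, then the matrix produced at any gate of depth $i$ has dimensions at most $(r+1)^i\cdot m$ and norm at most $M$, where $r$ is the maximum fan-in of the circuit. Thus the bit length of the entries never grows during the translation, and only the dimensions blow up by a factor $(r+1)^i$ relative to the leaves. Feasibility of the resulting system is then decided by \cref{ild-in-p}.

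The three complexity bounds fall out by tracking $m$, $M$, and the depth $k$. For case (2) with $d$ and $\ell$ both fixed, the second clause of \cref{compute-coset-cliques} gives leaf matrices of polynomial dimensions and polynomially many bits in polynomial time, and the circuit depth is the constant $O(\ell)$; hence the final ILD system is of polynomial size and is solved in $\compP$. For case (1) with only $d$ fixed, the depth $k$ becomes polynomial in the input but the leaves remain polynomial, so the ILD system has exponential dimensions with polynomial-bit entries, which \cref{ild-in-p} solves in exponential time. For case (3), the first clause of \cref{compute-coset-cliques} produces leaf matrices in exponential space, with at most exponentially many rows, columns, and exponentially many bits per entry; a polynomial-depth circuit on top still yields a system of at most exponential size, whose feasibility test lies in $\EXPTIME\subseteq\EXPSPACE$.

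The main obstacle I expect is the careful space bookkeeping in case (3): I must verify that every intermediate object, from the leaves produced by the exponential-space procedure of \cref{compute-coset-cliques}, through the circuit-to-equations translation, down to the final feasibility test, can be streamed within exponential space rather than only exponential time. In particular, I must check that the polynomial-time Turing reduction of \cref{bireach-to-grammars} composes correctly with the exponential-space oracle calls for emptiness, so that the overall algorithm never materialises more than exponentially many bits at any point.
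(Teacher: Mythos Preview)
Your plan matches the paper's proof almost exactly: reduce to bidirected $k$-grammars, build the coset circuit of \cref{construct-coset-circuit}, translate gates into block matrices whose dimensions blow up by $(r+1)^{\text{depth}}$ while norms stay bounded, and feed the result to \cref{ild-in-p}. Cases~(1) and~(2) are handled identically to the paper.

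For case~(3) there is one wrinkle worth flagging. \Cref{bireach-to-grammars} is stated only for $\SCpm_d$ with $d$ fixed; when $d$ is part of the input the grammar construction itself already needs exponential space, because the leaf data come from the exponential-space clause of \cref{compute-coset-cliques}. The paper therefore invokes a separate variant of the reduction (an exponential-space Turing reduction to grammar emptiness, in which $k$ and the number of productions remain polynomial but the numbers in productions carry exponentially many bits). You essentially describe this variant in your last paragraph, but you should not call it ``the polynomial-time Turing reduction of \cref{bireach-to-grammars}'': the reduction is exponential-space in case~(3), and the exponential-space cost sits in producing the grammar, not only in the oracle calls. Once that is straightened out, your bookkeeping---polynomial-depth circuit on leaves of polynomial dimension and exponential-bit entries, hence final matrices of exponential dimension and exponential-bit entries, solvable in $\EXPTIME\subseteq\EXPSPACE$---is exactly what the paper does.
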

 \begin{proof}[Proof sketch]
	 We begin with~(1) and~(2). According to \cref{bireach-to-grammars}, we need 
	 to show that emptiness for bidirected $k$-grammars is in
	 $\EXPTIME$, and in $\compP$ for fixed $k$.

	 Given a bidirected $k$-grammar, we construct a coset circuit $C$ as
	 described in \cref{construct-coset-circuit} by alternating emptiness
	 checks for gates and building new gates.  To check gates for
	 emptiness, recall that the circuit has depth $\le ck$ for some
	 constant $c\in\N$. Thus, it remains to decide emptiness of a gate $g$
	 in exponential time, resp.\ in polynomial time for bounded depth
	 circuits.  For each gate $g$ in this coset circuit, we compute a
	 matrix representation for the coset $C(g)$. As argued above, the
	 resulting matrix $\bA\in\Z^{s\times t}$ and vector $\bb\in\Z^s$ will
	 satisfy $s,t\le(r+1)^{ck}$, where $r$ is the largest in-degree of a
	 gate in $C$. The norm of $\bA$ and $\bb$ is at most the norm
	 of the matrices in the leaves of $C$, which means the entries of $\bA$
	 and $\bb$ only require polynomially many bits.  Therefore, the size of
	 $\bA$ and $\bb$ is polynomial in $(r+1)^{ck}$.  Hence, by
	 \cref{ild-in-p}, we can decide emptiness of $C(g)$ in time polynomial
	 in $(r+1)^{ck}$.
	 Hence, \cref{ild-in-p} yields the desired algorithms.

	 For (3), we proceed slightly differently. Instead of
	 \cref{bireach-to-grammars}, we use a variant with an exponential space
	 reduction.  The resulting grammars still use polynomially many
	 productions, but the occurring numbers can be doubly exponential.  Our
	 matrix encoding thus yields matrices with exponential dimension and
	 their entries have exponentially many bits.  This still allows us to
	 check for emptiness of coset circuits in exponential time, resulting
	 in an $\EXPSPACE$ algorithm overall.
 \end{proof}

\label{beforebibliography}\addtostream{pages}{\getpagerefnumber{beforebibliography}}
\closeoutputstream{pages}
\printbibliography

\pagebreak
\appendix

\section{Additional material for Section~\ref{sec:subgp}}

\subsection{$\REVREACH$ and subgroup membership}

\computecoset*

\begin{proof}
Consider a bidirected valence system $\cA$ over $\Gamma$ with state set $Q$ and two states $s,t \in Q$.
Let $(Q,E)$ be the undirected graph where $E = \{ \{p,q\} \mid p \autsteps[x] q \}$.
First we compute a spanning forest $F \subseteq E$ in logspace:
Let $e_1, \dots, e_m$ be an enumeration of the edges in $E$.
Then we include an edge $e_i = (p,x,q)$ in $F$ if and only if there exists no path between $p$ and $q$
in the subgraph $(Q,\{e_1, \dots, e_{i-1}\})$.
This can be tested in deterministic logspace \cite{DBLP:journals/jacm/Reingold08}.
If $s$ and $t$ are disconnected in $F$, we can return any no-instance of $\SUBMEM(\Gamma)$.
Otherwise we restrict $\cA$ and $F$ to the connected component of $s$.
Then we compute a set of transitions $T$ which contains
for every edge $\{p,q\} \in F$ a transition $p \autsteps[x] q$ and its reverse transition $q \autsteps[\bar x] p$.
Observe that the undirected version of $T$ is a tree.
Between any two states $p,q$ there is a unique simple path using only transitions from $T$.
Any other path from $p$ to $q$ which uses only transitions from $T$ has the same effect
since it is obtained from the simple path from $p$ to $q$ by repeatedly inserting loops $r \autsteps[x] r' \autsteps[\bar x] r$.
Let $w_0$ be the label of the unique simple path from $s$ to $t$ using transitions from $T$.

Let $C = \{ [w] \in \dM \Gamma \mid s \autsteps[w] t \}$ and $S = \{ [w] \in \dM \Gamma \mid t \autsteps[w] t \}$.
Clearly $S$ is a subgroup of $\dM \Gamma$ and $C$ is a left coset of $S$ in $\dM \Gamma$,
namely $C = [w_0]S$. It remains to compute a generating set for $S$.
For any transition $\tau = (p,x,q)$ of $\cA$ with $\tau \notin T$ consider the unique cycle $\gamma_\tau \colon t \autsteps[u] p \autsteps[x] q \autsteps[v] t$
where the paths $t \autsteps[u] p$ and $q \autsteps[v] t$ are simple and only use transitions from $T$.
We claim that the effects of the cycles $\gamma_\tau$ for all $\tau \in E \setminus T$ generates $S$.
Consider any cycle $\gamma \colon q_0 \autsteps[x_1] q_1 \autsteps[x_2] \dots \autsteps[x_n] q_n$ in $E$ where $q_0 = q_n = t$.
We show that $x_1 \dots x_n$ can be written as the product of effects of cycles $\gamma_\tau$ and their inverses
by induction on the number of transitions in $\gamma$ not contained in $T$.
If the cycle only uses transitions from $T$ then we must have $x_1 \dots x_n = 1$.
Otherwise, suppose that $i$ is minimal with $\tau = (q_{i-1},x_i,q_i) \notin T$.
Let $\gamma_\tau \colon t \autsteps[u] q_{i-1} \autsteps[x_i] q_i \autsteps[v] t$, which satisfies $[u] = [x_1 \dots x_{i-1}]$.
We can replace $\gamma$ by the cycle
\[
	t \autsteps[u] q_{i-1} \autsteps[x_i] q_i \autsteps[v] t \autsteps[\bar v] q_i \autsteps[x_{i+1}] \dots \autsteps[x_n] q_n
\]
without changing its effect.
Since $t \autsteps[\bar v] q_i \autsteps[x_{i+1}] \dots \autsteps[x_n] q_n$ uses strictly fewer transitions not contained in $T$ than $\gamma$
we can write $[\bar v x_{i+1} \dots x_n]$ as a product of effects of cycles $\gamma_\tau$ and their inverses.
This concludes the proof.
\end{proof}

\revreachsubmem*

\begin{proof}
Let $\Gamma$ be a looped graph.
Reducing $\SUBMEM(\Gamma)$ to the problem $\REVREACH(\Gamma)$ is easy:
To test $[w] \in \langle [w_1], \dots, [w_k] \rangle$
we construct a bidirected valence system $\cA$ with two states $s,t$
and the transitions $s \autsteps[\bar w] t$
and $t \autsteps[w_i] t$ for all $1 \le i \le k$, and the reverse transitions.
Then $[w] \in \langle [w_1], \dots, [w_k] \rangle$ holds if and only if there exists $u \in X_\Gamma^*$
with $s \autsteps[u] t$ and $[u] = 1$.

Conversely, we can compute in logspace a coset representation $\{ [w] \in \dM \Gamma \mid s \autsteps[w] t \}= [w_0] \langle [w_1], \dots, [w_n] \rangle$
by \Cref{compute-coset}.
Then it remains to test whether $[\bar w_0] \in \langle [w_1], \dots, [w_n] \rangle$.
\end{proof}

\subsection{Undecidability}

In the following we will give the complete proof of \Cref{undecidable-underlying-c4}.
The observation behind our proof is that using two non-adjacent vertices we can simulate a free group of rank two,
i.e. $\dM \Gamma_2$ where $\Gamma_2$ consists of two non-adjacent looped vertices $a$ and $b$.
The $\compP$-hardness in \cref{main-result} also follows from this observation.
Suppose that $u,v$ are non-adjacent vertices in $\Gamma$.
If both $u$ and $v$ are looped, then they already generate a free group of rank two.
Now suppose that $u$ is unlooped.
We first simulate a pushdown over four letters using the words $W = \{ uv^i \mid i \in [1,4] \}$,
i.e. for all $x \in (W \cup \bar W)^*$ we have $x \equiv_\Gamma \varepsilon$ if and only if $x \leftrightarrow_R^* \varepsilon$
where $R = \{(w_i \bar w_i, \varepsilon) \mid i \in [1,k] \}$, see \Cref{sim-pushdown}.
Then we can transform a bidirected valence system $\cA$ over $\Gamma_2$
into a bidirected valence system $\cB$ over $\Gamma$.

If $\Gamma^-$ contains $\Cfour$ as an induced subgraph then we have two pairs $u_1,v_1$ and $u_2,v_2$ of non-adjacent vertices
where vertices from different pairs commute.
Applying the above reduction to every pair, we can reduce $\REVREACH(\Cfourlooped)$ to $\REVREACH(\Gamma)$,
and obtain undecidability of $\REVREACH(\Gamma)$.

\begin{lemma}
\label{sim-pushdown}
Let $\Gamma$ be a graph with two non-adjacent vertices $u$ and $v$ where $u$ is unlooped.
For any $k \in \N$ there exists $W = \{w_1, \dots, w_k\} \subseteq \{u,v\}^*$
such that for all $x \in (W \cup \bar W)^*$
we have $x \equiv_\Gamma \varepsilon$ if and only if $x \leftrightarrow_R^* \varepsilon$
where $R = \{(w_i \bar w_i, \varepsilon) \mid i \in [1,k] \}$.
\end{lemma}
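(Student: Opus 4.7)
The plan is to choose $w_i = uv^i$ for each $i\in[1,k]$; these lie in $\{u,v\}^\ast$. Throughout, let $\tilde x \in X_\Gamma^\ast$ denote the word obtained by expanding $x = z_1\cdots z_n$ with $z_j \in W\cup\bar W$ into its constituent letters. The direction $x\leftrightarrow_R^\ast\varepsilon \Rightarrow \tilde x \equiv_\Gamma \varepsilon$ is immediate since $w_i\bar w_i = u v^i\bar v^i\bar u \equiv_\Gamma u\bar u \equiv_\Gamma \varepsilon$, so every $R$-step preserves the $\equiv_\Gamma$-class.

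For the converse I would first reduce to the case in which $\Gamma$ has vertex set exactly $\{u,v\}$. Sending every vertex outside $\{u,v\}$ to the identity defines a monoid morphism $\dM\Gamma \to \dM\Gamma'$, where $\Gamma'$ is the induced subgraph on $\{u,v\}$: the defining relations of $\dM\Gamma$ project correctly because $\Gamma'$ is induced. Thus $\tilde x\equiv_\Gamma\varepsilon$ implies $\tilde x\equiv_{\Gamma'}\varepsilon$. In $\dM\Gamma'$ there are no non-trivial commutations, since $u,v$ are non-adjacent and $u$ is unlooped, so $\equiv_{\Gamma'}$ is generated by the string rewrites $u\bar u\to\varepsilon$, $v\bar v\to\varepsilon$, and, if $v$ is looped, $\bar v v \to \varepsilon$. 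This rewriting system is terminating and locally confluent (the only critical pairs $v\bar v v$ and $\bar v v\bar v$ both converge), hence confluent by Newman's lemma, so $\tilde x\equiv_{\Gamma'}\varepsilon$ is equivalent to a reduction $\tilde x\to^\ast\varepsilon$ using only contractions.

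I would then establish $x\leftrightarrow_R^\ast\varepsilon$ by induction on $n$. For $n\ge 1$, consider the monoid homomorphism $\dM\Gamma'\to\B$ sending $v,\bar v \mapsto 1$ and $u,\bar u \mapsto u,\bar u$ (well-defined since the relations project to identities in $\B$); it sends $\tilde x$ to $1$, which forces the \emph{profile} of $x$ (the word obtained by replacing each $w_i$ by $u$ and each $\bar w_i$ by $\bar u$) to be a Dyck word over $\{u,\bar u\}$. Let $(z_i,z_{i+1})$ be an innermost Dyck-matched pair; by minimality the two $z$-letters must be adjacent in $x$, so $z_i = uv^a$ and $z_{i+1} = \bar v^b\bar u$ for some $a,b\in[1,k]$. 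The key claim is $a=b$: because $u$ and $\bar u$ do not commute with any other letter in $\dM\Gamma'$, the matching of $u/\bar u$ occurrences is forced to be the unique Dyck matching in any reduction, so the $u$ of $z_i$ must cancel with the $\bar u$ of $z_{i+1}$; this forces the intervening factor $v^a\bar v^b$ to reduce to $\varepsilon$, which happens iff $a=b$. Applying the $R$-rule $w_a\bar w_a\to\varepsilon$ at positions $i,i+1$ yields $x' = z_1\cdots z_{i-1}z_{i+2}\cdots z_n$ with $\widetilde{x'}\equiv_\Gamma\varepsilon$, and the induction hypothesis gives $x\leftrightarrow_R^\ast\varepsilon$.

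The main technical obstacle I anticipate is the rigorous justification that $a=b$ at an innermost Dyck pair. It rests on combining two facts: the absence of any commutations involving $u,\bar u$ in $\dM\Gamma'$ pins the $u$-letters in place, making the Dyck matching uniquely realized in every reduction sequence; and no letter can enter or leave the $v$-block between an innermost matched $u,\bar u$ pair, so that block must reduce to $\varepsilon$ on its own, isolating the local identity $v^a\bar v^b\equiv_{\Gamma'}\varepsilon$ and forcing $a=b$.
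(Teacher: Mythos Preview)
Your proof is correct and uses the same encoding $w_i=uv^i$, but the argument for the harder direction differs from the paper's. The paper also proceeds by induction, but instead of the global Dyck/non-crossing matching analysis it locates \emph{any} adjacent pair $w_m\bar w_n$ in $x$ (by observing that $v\bar v$ must occur as a factor of $\tilde x$, since every $u$ is followed by $v$ and every $\bar v$ by $\bar v$ or $\bar u$) and works with a strengthened induction hypothesis: it proves simultaneously that $x\equiv_R\varepsilon$ \emph{and} that every factor $w_i\bar w_j$ of $x$ satisfies $i=j$. To get $m=n$ it cancels one $v\bar v$, obtaining $s\,w_{m-1}\bar w_{n-1}\,t\in(W\cup\bar W)^*$ when $m,n\ge 2$, and applies the strengthened hypothesis to this shorter word; the mixed cases $m=1<n$ or $n=1<m$ are excluded by a preliminary claim that no word $\equiv_\Gamma\varepsilon$ contains a factor $uv^i\bar u$ or $u\bar v^i\bar u$ with $i\ge1$. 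Your route trades the strengthened hypothesis and the forbidden-factor claim for the non-crossing matching argument; that argument is conceptually clean but deserves a bit more than your final paragraph gives it: the rigorous version is that in a length-decreasing rewriting system the induced pairing of cancelled positions is necessarily non-crossing, so restricted to $u/\bar u$ it must be the unique Dyck matching, and non-crossing again forces the $v$-block between the innermost matched $u,\bar u$ to cancel internally, giving $v^a\bar v^b\equiv_{\Gamma'}\varepsilon$ and hence $a=b$. Both proofs are short; the paper's is more self-contained, yours more structural.
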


\begin{proof}

First we claim that, if $x \equiv_\Gamma \varepsilon$ where $x \in X_\Gamma^*$ then $x$ does not contain any factor
of the form $u v^i \bar u$ or $u \bar v^i \bar u$ with $i \ge 1$.
This is true because such a factor in $x$ cannot be eliminated by deletions of $u \bar u$, $v \bar v$ or $\bar v v$.

Define $w_i = u v^i$ and let $W = \{ w_i \mid i \in [1,k] \}$.
We clearly have $\equiv_R \, \subseteq \, \equiv_\Gamma$.
For the converse we prove the following stronger statement:
For all $x \in (W \cup \bar W)^*$ with $x \equiv_\Gamma \varepsilon$ we have
$x \equiv_R \varepsilon$ and any occurrence of $w_i \bar w_j$ in $x$ satisfies $i = j$,
We proceed by induction on $|x|$ where the case $|x| = 0$ is clear.
If $|x| > 0$ then $x$ must contain either $u \bar u$, $v \bar v$, or, if $v$ is looped, $\bar v v$.
Since any occurrence of $u$ in a word from $W$ is followed by $v$,
and any occurrence of $\bar v$ in a word from $W$ is followed by $\bar v$ or $\bar u$,
it must be the case that $v \bar v$ occurs in $x$.
Again by the definition of $W$, the word $x$ is of the form $x = s uv^m \bar v^n \bar u t$
where $m,n \in [1,k]$ and $s,t \in (W \cup \bar W)^*$.
We have $x \equiv_\Gamma s uv^{m-1} \bar v^{n-1} \bar u t$.
By the remark from the beginning we must have $m=n=1$ or $m,n \ge 2$.
If $m = n = 1$ then $x \equiv_\Gamma st$.
If $m,n \ge 2$ then $x \equiv_\Gamma s w_{m-1} \bar w_{n-1} t \in (W \cup \bar W)^*$
and, by induction hypothesis, we obtain $m-1 = n-1$, which also implies $x \equiv_\Gamma st$.
In both cases we obtain by the induction hypothesis that $x \equiv_R st \equiv_R \varepsilon$ and that any occurrence of $w_i \bar w_j$ in $st$ satisfies $i=j$.
Hence also occurrence of $w_i \bar w_j$ in $x = s w_n \bar w_n t$ satisfies $i=j$.
This concludes the proof.
\end{proof}

\begin{lemma}
\label{sim-f2}
Let $\Gamma$ be a graph with two non-adjacent vertices $u$ and $v$.
There exists a finite set $W \subseteq \{u,v\}^*$ and a morphism
$\varphi \colon (W \cup \bar W)^* \to X_\Gamma^*$ such that
for all $y \in X_\Gamma^*$ we have
$y \equiv_{\Gamma^\circ} \varepsilon$ if and only if there exists
$x \in (W \cup \bar W)^*$ such that $x \equiv_\Gamma \varepsilon$
and $\varphi(x) = y$.
Furthermore, $\varphi(w) \in X_\Gamma$ for all $w \in W$ and $\varphi(\bar w) = \overline{\varphi(w)}$ for all $w \in W$
(it {\em respects inverses}).
\end{lemma}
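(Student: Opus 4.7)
My plan is to invoke \Cref{sim-pushdown} as the backbone and then design $\varphi$ so that every letter of $X_\Gamma$ has two preimages under $\varphi$, giving enough flexibility to lift every $\equiv_{\Gamma^\circ}$-reduction to a $\equiv_\Gamma$-reduction. Concretely, I would apply \Cref{sim-pushdown} with $k = |X_\Gamma|$ to obtain a family $W = \{w_x : x \in X_\Gamma\} \subseteq \{u,v\}^*$ such that $\equiv_\Gamma$ on $(W \cup \bar W)^*$ is the congruence generated solely by the rewrites $w_x \bar w_x \to \varepsilon$. Setting $\varphi(w_x) = x$ then forces $\varphi(\bar w_x) = \bar x$, and, crucially, every letter $x \in X_\Gamma$ now has two distinct preimages in $W \cup \bar W$, namely $w_x$ and $\bar w_{\bar x}$.

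For the $(\Leftarrow)$ direction, \Cref{sim-pushdown} tells us that $x \equiv_\Gamma \varepsilon$ holds only when $x$ itself rewrites to $\varepsilon$ via a sequence of $w_i \bar w_i \to \varepsilon$ cancellations; applying $\varphi$ to such a derivation yields a reduction of $\varphi(x)$ to $\varepsilon$ using only $a \bar a \to \varepsilon$ rewrites, which are valid in $\equiv_{\Gamma^\circ}$ because every vertex is looped there.

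For the harder $(\Rightarrow)$ direction, I would proceed by induction on the length of a reduction $y = y_0 \to_{\Gamma^\circ} \cdots \to_{\Gamma^\circ} y_n = \varepsilon$, maintaining at each step a preimage $x_i$ with $\varphi(x_i) = y_i$ and $x_i \equiv_\Gamma \varepsilon$. The cancellation cases are handled cleanly: reversing $a \bar a \to \varepsilon$ is realised by inserting $w_a \bar w_a$ into $x_{i+1}$, while reversing $\bar a a \to \varepsilon$, which is problematic when $a$ is unlooped in $\Gamma$, is realised by inserting $w_{\bar a} \bar w_{\bar a}$, thus exploiting the second preimage of $\bar a$. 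Both insertions are $\equiv_\Gamma$-trivial by construction of $W$.

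The hard part will be the commutation step $xy \leftrightarrow yx$ for $a I b$: a naive letter swap in $x_{i+1}$ preserves $\varphi(x_{i+1})$ but generally destroys $x_{i+1} \equiv_\Gamma \varepsilon$, because arbitrary words in $\{u,v\}^*$ do not commute in $\dM\Gamma$. To overcome this, my plan is to exploit the freedom in preimage choice: rather than swap directly, rebuild the local window of $x_{i+1}$ using the alternative preimages (e.g.\ $\bar w_{\bar x}$ in place of $w_x$) so that the reorganised concatenation in $\{u,v,\bar u,\bar v\}^*$ still admits a nested sequence of $w \bar w$-cancellations. The central technical lemma required is that such a compatible choice of preimages exists for every commutation demanded by the $\equiv_{\Gamma^\circ}$-derivation of $y$; I expect this to follow from a normal-form argument on $(W \cup \bar W)^*/\equiv_\Gamma$ together with the free bicyclic structure guaranteed by \Cref{sim-pushdown}.
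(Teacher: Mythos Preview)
Your construction and both directions up through the cancellation cases are exactly what the paper does: apply \Cref{sim-pushdown} with $k=4$, set $\varphi(w_a)=a$ for $a\in\{u,v,\bar u,\bar v\}$, and lift a reduction $y\to_{\Gamma^\circ}^*\varepsilon$ step by step by inserting the appropriate $w_a\bar w_a$.

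The ``hard part'' you identify, however, does not exist. The hypothesis is that $u$ and $v$ are \emph{non-adjacent}, so in $\Gamma^\circ$ the only independence relations are the self-loops $uIu$ and $vIv$; there is no edge $aIb$ with $a\neq b$ and hence no commutation rule between distinct generators. Consequently $\dM\Gamma^\circ$ is simply the free group on $\{u,v\}$, and every word $y$ with $y\equiv_{\Gamma^\circ}\varepsilon$ admits a derivation $y\to_{\Gamma^\circ}^*\varepsilon$ consisting purely of cancellations $a\bar a\to\varepsilon$ and $\bar a a\to\varepsilon$. (The self-loop commutations $u\bar u\leftrightarrow\bar u u$ are subsumed by the cancellations and can be avoided.) Your induction therefore closes immediately after the four cancellation cases you already handled; the entire last paragraph of your plan, with its normal-form lemma and rebuilding via alternative preimages, is unnecessary. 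Drop it and the proof is complete and identical to the paper's.
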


\begin{proof}

If $\Gamma = \Gamma^\circ$, i.e. both $u$ and $v$ are looped, then we set $W = V$ and $\varphi$ to be the identity.
Now assume that $u$ is unlooped and let $W = \{w_1, w_2, w_3, w_4\}$ be a set from \Cref{sim-pushdown},
satisfying $x \equiv_\Gamma \varepsilon$ if and only if $x \leftrightarrow_R^* \varepsilon$
where $R = \{(w_i \bar w_i, \varepsilon) \mid i \in [1,k] \}$.
Let $\varphi \colon (W \cup \bar W)^* \to X_\Gamma^*$ be the morphism defined by
\begin{align*}
	w_1 \mapsto u \qquad w_2 \mapsto v \qquad w_3 \mapsto \bar u \qquad w_4 \mapsto \bar v \\
	\bar w_1 \mapsto \bar u \qquad \bar w_2 \mapsto \bar v \qquad \bar w_3 \mapsto u \qquad \bar w_4 \mapsto v
\end{align*}
We claim that for all $y \in X_\Gamma^*$ we have
$y \equiv_{\Gamma^\circ} \varepsilon$ if and only if there exists
$x \in (W \cup \bar W)^*$ such that $x \leftrightarrow_R^* \varepsilon$
and $\varphi(x) = y$.
For the ``if''-direction observe that any derivation $x \leftrightarrow_R x_1 \leftrightarrow_R \dots \leftrightarrow_R x_n = \varepsilon$
can be translated into a derivation
$\varphi(x) \leftrightarrow_{\Gamma^{\circ}} \varphi(x_1) \leftrightarrow_{\Gamma^{\circ}} \dots \leftrightarrow_{\Gamma^{\circ}} \varphi(x_n) = \varphi(\varepsilon) = \varepsilon$
since $\varphi(w_i) \varphi(\bar w_i) \equiv_\Gamma \varepsilon$.

For the ``only if''-direction consider a derivation $y \to_{\Gamma^\circ} y_1 \to_{\Gamma^\circ} \dots \to_{\Gamma^\circ} y_n = \varepsilon$.
We prove by induction over $n$ that there exists $x \in (W \cup \bar W)^*$ with $x \to_R^* \varepsilon$ and $\varphi(x) = y$.
By induction hypothesis there exists $x_1 \in (W \cup \bar W)^*$ with $x_1 \to_R^* \varepsilon$ and $\varphi(x_1) = y_1$.
There exist $s,t \in (W \cup \bar W)^*$ such that $x_1 = s t$, $y_1 = \varphi(s) \varphi(t)$
and $y_1 = \varphi(s) \, z \, \varphi(t)$ where $z \in \{u \bar u, v \bar v, \bar u u, \bar v v\}$.
We need to choose $x \in (W \cup \bar W)^*$ such that $x \to_R x_1$ and $\varphi(x) = y$:
\begin{itemize}
\item If $z = u \bar u$ then set $x = s w_1 \bar w_1 t$.
\item If $z = v \bar v$ then set $x = s w_2 \bar w_2 t$.
\item If $z = \bar u u$ then set $x = s w_3 \bar w_3 t$.
\item If $z = \bar v v$ then set $x = s w_4 \bar w_4 t$.
\end{itemize}
This concludes the proof.
\end{proof}

\undecidableunderlyingcfour*

\begin{proof}

Let us assume that $\Gamma^- \cong \Cfour$.
We want to show that $\REVREACH(\Gamma)$ is undecidable
by a reduction from $\REVREACH(\Cfourlooped)$, which is undecidable by \Cref{mikhailova} and \cref{revreach-submem}.

Let the vertex set of $\Gamma^-$ be $V = \{u_1,v_1,u_2,v_2\}$ with edges $\{u_1,u_2\}$, $\{u_1,v_2\}$,
$\{v_1,u_2\}$, $\{v_1,v_2\}$.
Let $\Gamma_1$ and $\Gamma_2$ be the subgraphs of $\Gamma$ induced by $V_1 = \{u_1,v_1\}$ and $V_2 = \{u_2,v_2\}$, respectively.
By \Cref{sim-f2} there exist finite sets $W_1 \subseteq V_1^*$, $W_2 \subseteq V_2^*$ and morphisms
$\varphi_1 \colon (W_1 \cup \bar W_1)^* \to (V_1 \cup \bar V_1)^*$,
$\varphi_2 \colon (W_2 \cup \bar W_2)^* \to (V_2 \cup \bar V_2)^*$ such that
for all $i \in \{1,2\}$ and $y \in (V_i \cup \bar V_i)^*$ we have
$y \equiv_{\Gamma_i^\circ} \varepsilon$ if and only if there exists
$x \in (W_i \cup \bar W_i)^*$ such that $x \equiv_{\Gamma_i} \varepsilon$
and $\varphi_i(x) = y$.
Let $W = W_1 \cup W_2$ and let $\varphi \colon (W \cup \bar W)^* \to (V \cup \bar V)^*$
be the unique morphism that extends $\varphi_1$ and $\varphi_2$.
For all $y \in (V \cup \bar V)^*$ it satisfies $y \equiv_{\Gamma^\circ} \varepsilon$
if and only if there exists $x \in (W \cup \bar W)^*$
with $x \equiv_\Gamma \varepsilon$ and $\varphi(x) = y$.
Furthermore, the morphism satisfies $\varphi(\bar w) = \overline{\varphi(w)}$ for all $w \in W$ by \Cref{sim-f2}.

Given a valence system $\cA$ over $\Cfour^\circ \cong \Gamma^\circ$, and states $s,t$
we want to test whether there exists a run $s \autsteps[y]_\cA t$ with $y \equiv_{\Gamma^\circ} \varepsilon$.
We can assume that every transition in $\cA$ is labeled by a single symbol from $V \cup \bar V$, by splitting transitions.
Construct the valence system $\cB$ over $\Gamma$ obtained by replacing every transition $p \autsteps[v] q$
by transitions $p \autsteps[w] q$ for all $w \in W$ with $\varphi(w) = v$.
Observe that $\cB$ is bidirected because the morphism respects inverses.
Moreover, there exists a run $s \autsteps[x]_\cB t$ with $x \equiv_\Gamma \varepsilon$ if and only if 
there is a run $s \autsteps[y]_\cA t$ with $y \equiv_{\Gamma^\circ} \varepsilon$.
This concludes the proof.
\end{proof}

\section{Additional material for Section~\ref{sec:lower-bounds}}

\label{sec:lower-bounds-app}

\begin{lemma}\label{l-hardness}
$\REVREACH(\Gamma)$ is $\compL$-hard under $\mathsf{AC}^0$ many-one reductions for every $\Gamma$.
\end{lemma}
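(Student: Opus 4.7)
The plan is to reduce from undirected $s$-$t$-reachability (USTCON), which is $\compL$-complete under $\mathsf{AC}^0$ many-one reductions by combining Reingold's theorem \cite{DBLP:journals/jacm/Reingold08} with the hardness result of \cite{DBLP:journals/cc/AlvarezG00}. An instance consists of an undirected graph $G=(V,E)$ together with two distinguished vertices $s,t\in V$, and the question is whether there is a path from $s$ to $t$ in $G$.

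Given such an instance, I would construct a valence system $\cA$ over $\Gamma$ whose state set is $V$, and whose transition relation contains, for every edge $\{u,v\}\in E$, the two transitions $u\autsteps[\varepsilon] v$ and $v\autsteps[\varepsilon] u$. Since $\overline{\varepsilon}=\varepsilon$, each transition is its own reverse, so $\cA$ is bidirected; moreover the construction is valid independently of the particular graph $\Gamma$ because no letters of $X_\Gamma$ are used on any transition. The reduction is clearly computable in $\mathsf{AC}^0$, since each edge of $G$ is mapped to two transitions in a purely local, uniform way.

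For correctness, note that if there is an undirected path $s=v_0,v_1,\dots,v_n=t$ in $G$, then the corresponding transitions chain up to a run $s\autsteps[\varepsilon]v_1\autsteps[\varepsilon]\cdots\autsteps[\varepsilon]t$ whose accumulated label is $\varepsilon$, and $[\varepsilon]=1$ in $\dM\Gamma$. Conversely, any run $s\autsteps[w]t$ in $\cA$ (regardless of whether $[w]=1$) projects onto a sequence of edges traversed in $G$, witnessing an undirected path from $s$ to $t$; and in fact every run has $w=\varepsilon$ by construction. Hence $s$ reaches $t$ in $G$ if and only if the constructed $\REVREACH(\Gamma)$ instance is positive.

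There is no substantive obstacle here: the construction is entirely syntactic, and the only subtlety is to observe that the bidirectedness condition is satisfied automatically by the involution $\overline{\varepsilon}=\varepsilon$, and that the reduction does not depend on $\Gamma$, giving $\compL$-hardness uniformly for every graph $\Gamma$.
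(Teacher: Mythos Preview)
Your proof is correct and takes essentially the same approach as the paper: both reduce from undirected $s$-$t$-reachability by replacing each undirected edge with a pair of $\varepsilon$-labeled transitions, yielding a bidirected valence system over $\Gamma$. Your write-up is simply more detailed about bidirectedness and correctness than the paper's one-line argument.
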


\begin{proof}
We need to reduce from the reachability on undirected graphs to $\REVREACH(\Gamma)$
since the former problem is $\compL$-complete under $\mathsf{AC}^0$ many-one reductions \cite{DBLP:journals/cc/AlvarezG00,DBLP:journals/jacm/Reingold08}.
To do so, we simply replace every undirected edge between nodes $p$ and $q$
by transitions $p \xrightarrow{\varepsilon} q$ and $q \xrightarrow{\varepsilon} p$.
\end{proof}

\begin{lemma}\label{ild-hardness}
If $\cG$ is closed under subgraphs and LC-unbounded then $\REVREACH(\cG)$ is $\ILD$-hard under logspace many-one reductions
\end{lemma}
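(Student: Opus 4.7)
The plan is to reduce ILD to $\REVREACH(\cG)$ in logspace by going through the subgroup membership problem. First, observe that an ILD instance $\bA\bx = \bb$ with $\bA \in \Z^{m \times n}$ and $\bb \in \Z^m$ is asking exactly whether $\bb$ lies in the subgroup of $\Z^m$ generated by the columns $\ba_1, \dots, \ba_n$ of $\bA$. Thus ILD is nothing but the subgroup membership problem in $\Z^m$ (with $m$ part of the input).

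Next, I would use the hypothesis that $\cG$ is LC-unbounded and closed under subgraphs to obtain, for every $m$, a graph $\Gamma_m \in \cG$ that is a looped clique on $m$ vertices: LC-unboundedness yields some graph in $\cG$ containing a looped clique of size $\ge m$, and closure under subgraphs then yields $\Gamma_m$ itself. Since every vertex of $\Gamma_m$ is looped and every pair is adjacent, the defining relations of $\dM \Gamma_m$ include $v\bar v \equiv \varepsilon$ for each vertex $v$ and commutation of all generators, so $\dM \Gamma_m$ is canonically isomorphic to $\Z^m$ (the isomorphism sends the generator $v$ to the corresponding unit vector). Given an ILD instance $(\bA, \bb)$, I compute in logspace the graph $\Gamma_m$ and the words $w_i \in X_{\Gamma_m}^*$ whose images in $\Z^m$ are $\ba_i$, obtained by concatenating $|\ba_i(j)|$ copies of $v_j$ or $\bar v_j$ for each coordinate $j$; similarly I produce a word $w$ realizing $\bb$. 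Since each integer in the instance is encoded in binary but its absolute value may only appear in unary in the output words, I need to confirm this blow-up is logspace: writing down $|\ba_i(j)|$ copies of a letter is doable in logspace with a counter, since we only need to count up to $|\ba_i(j)|$, whose bit-length is part of the input. This yields a logspace many-one reduction from ILD to $\SUBMEM(\cG)$.

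Finally, I combine this with the easy direction of \Cref{revreach-submem}, namely the reduction of $\SUBMEM(\Gamma)$ to $\REVREACH(\Gamma)$ described right after the statement of \Cref{revreach-submem}: given $\Gamma$ and words $w_1, \dots, w_n, w$, one constructs a two-state bidirected valence system with transitions $s \autsteps[\bar w] t$ and $t \autsteps[w_i] t$ (and their reverses), so that $[w] \in \langle [w_1], \dots, [w_n]\rangle$ iff there is a run $s \autsteps[u] t$ with $[u] = 1$. This construction is clearly logspace and uniform in $\Gamma$. Composing both reductions yields a logspace many-one reduction from ILD to $\REVREACH(\cG)$, establishing $\ILD$-hardness.

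The only potentially subtle step is confirming the reduction is indeed logspace rather than polynomial time: the concern is that translating integer entries of $\bA, \bb$ into words over $X_{\Gamma_m}$ could produce exponentially long output if the integers are encoded in binary. This is not an obstacle, however, because a logspace transducer may produce output of polynomial length in the input, and each $|\ba_i(j)|$ or $|\bb(j)|$ is bounded by $2^{O(|\text{input}|)}$ with bit-length linear in the input, so writing down the corresponding blocks of letters uses only a logarithmic-sized counter. Thus the reduction is honestly logspace computable, and the lemma follows.
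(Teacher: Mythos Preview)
Your approach is essentially the same as the paper's: both recognize ILD as subgroup membership in $\Z^m$, use LC-unboundedness together with closure under induced subgraphs to obtain a looped $m$-clique $\Gamma_m\in\cG$ with $\dM\Gamma_m\cong\Z^m$, and then pass to $\REVREACH$ via the easy direction of \Cref{revreach-submem} (the paper's appendix even writes out the resulting two-state valence system explicitly).

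However, your final paragraph on the logspace bound is where the argument breaks. You correctly identify the danger: if the entries of $\bA$ and $\bb$ are given in binary, then writing the word $v_1^{k_1}\cdots v_m^{k_m}$ realising a column $\ba_i$ has length $\sum_j|\ba_i(j)|$, which can be $2^{\Theta(|\text{input}|)}$. You then dismiss this by saying a logspace transducer ``may produce output of polynomial length'' and that the counter is ``logarithmic-sized''. But these two statements contradict the conclusion you draw: a logspace transducer has only polynomially many configurations, so its output length is \emph{at most} polynomial in the input; it cannot emit an exponential-length string. And a counter that counts up to $|\ba_i(j)|\le 2^{O(N)}$ needs $O(N)$ bits of workspace, which is linear, not logarithmic, in the input size $N$. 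So as written, the reduction is not logspace.

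The standard fix is to first logspace-reduce binary ILD to an equivalent system whose coefficients are bounded by a constant (say in $\{-1,0,1\}$): for each variable $x_j$ introduce auxiliary variables $y_{j,0},y_{j,1},\ldots$ together with equations $y_{j,0}=x_j$ and $y_{j,\ell+1}-y_{j,\ell}-y_{j,\ell}=0$ encoding doublings, and rewrite each large coefficient via its binary expansion as a $\{0,\pm 1\}$-combination of the $y_{j,\ell}$. This preserves solvability, increases the number of variables and equations only polynomially, and yields a system where each column has $\ell_1$-norm polynomial in the input. From there your word construction is genuinely logspace. The paper's proof is also silent on this point, so the gap is not unique to your write-up, but your explicit justification is incorrect and should be replaced by this preprocessing step.
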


\begin{proof}
Given a system of linear Diophantine equations $\bA \bx = \bb$ where $\bA \in \Z^{m \times n}$ has columns $\ba_1, \dots, \ba_n \in \Z^m$.
Since $\cG$ is LC-unbounded and closed under induced subgraphs it contains a looped clique $\Gamma_m$ with $m$ nodes $v_1, \dots, v_m$.
Let $\varphi \colon \Z^m \to X_{\Gamma_m}^*$ be the function mapping $(k_1, \dots, k_m)$
to $v_1^{k_1} \dots v_m^{k_m}$.
Then we construct the valence system $\cA$ over $\Gamma_m$ with two states $p$ and $q$
and the transitions $p \xrightarrow{\varphi(\ba_i)} p$ for all $1 \le i \le n$ and $p \xrightarrow{\overline{\varphi(\bb)}} q$,
and their reverse transitions.
Then $\bA \bx = \bb$ has a solution $\bx \in \Z^n$ if and only if there exists $w \in \Gamma_m^*$ with $[w] = 1$ and $p \autsteps[w] q$.
\end{proof}

\begin{lemma}\label{p-hardness}
If $\Gamma$ has two non-adjacent vertices then $\REVREACH(\Gamma)$ is $\compP$-hard.
\end{lemma}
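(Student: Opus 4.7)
The plan is to reduce from the subgroup membership problem in the free group of rank two, which is $\compP$-hard by \citet{avenhaus1984nielsen}, and to transport this lower bound to $\REVREACH(\Gamma)$ in two stages: first to $\REVREACH(\Gamma^\circ)$ via \cref{revreach-submem}, and then from $\REVREACH(\Gamma^\circ)$ to $\REVREACH(\Gamma)$ via a syntactic simulation.

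For the first stage, I would observe that the two non-adjacent vertices $u,v$ of $\Gamma$ remain non-adjacent in the fully-looped graph $\Gamma^\circ$. In the graph group $\dM \Gamma^\circ$, any anti-clique generates a free subgroup on those vertices, so $\langle [u],[v]\rangle\leq \dM \Gamma^\circ$ is a copy of $F_2$. Hence any instance $(w_1,\ldots,w_k,w)$ of subgroup membership in $F_2$, where each $w_i,w\in \{u,v\}^*$, becomes an equivalent instance of $\SUBMEM(\Gamma^\circ)$ on the very same input. This gives $\compP$-hardness of $\SUBMEM(\Gamma^\circ)$, which \cref{revreach-submem} lifts to $\compP$-hardness of $\REVREACH(\Gamma^\circ)$.

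For the second stage I would log-space reduce $\REVREACH(\Gamma^\circ)$ to $\REVREACH(\Gamma)$. If every vertex of $\Gamma$ is already looped, there is nothing to do. Otherwise I would invoke \cref{sim-f2}, which supplies a finite alphabet $W\subseteq\{u,v\}^*$ and an inverse-respecting morphism $\varphi\colon (W\cup\bar W)^* \to X_\Gamma^*$ such that for every $y\in X_\Gamma^*$, we have $y\equiv_{\Gamma^\circ}\varepsilon$ iff there exists $x\in (W\cup\bar W)^*$ with $\varphi(x)=y$ and $x\equiv_\Gamma\varepsilon$. Given a bidirected valence system $\cA$ over $\Gamma^\circ$ with source $s$ and target $t$, I would first split its transitions so that each carries a single letter from $X_\Gamma$, and then build a valence system $\cB$ over $\Gamma$ by replacing every transition $p\autsteps[v] q$ with the transitions $p\autsteps[w] q$ for all $w\in W\cup \bar W$ satisfying $\varphi(w)=v$. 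Because $\varphi$ respects inverses, $\cB$ is bidirected. The equivalence from \cref{sim-f2} then gives: there is a trivial-effect run $s\autsteps[y] t$ in $\cA$ iff there is a trivial-effect run $s\autsteps[x] t$ in $\cB$.

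I do not expect a serious obstacle in this proof, since the heavy lifting is encapsulated in \cref{sim-pushdown,sim-f2} and \cref{revreach-submem}. The only mildly delicate point is that when $\Gamma$ has an unlooped vertex, the simulation morphism must genuinely reflect $\equiv_\Gamma$ back to $\equiv_{\Gamma^\circ}$ on the coded alphabet; this is exactly what the pushdown-style coding in \cref{sim-pushdown} achieves, so once that lemma is in hand the rest of the argument is routine.
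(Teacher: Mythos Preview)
Your proposal is correct and follows essentially the same two-stage route as the paper: $\compP$-hardness of $\SUBMEM$ for the free group of rank two is transported to $\REVREACH(\Gamma^\circ)$ via \cref{revreach-submem}, and then to $\REVREACH(\Gamma)$ via the coding from \cref{sim-f2}. The one cosmetic difference is that the paper first restricts $\Gamma$ to the two-vertex induced subgraph on $\{u,v\}$ (which is without loss of generality, since $\REVREACH$ for an induced subgraph trivially reduces to $\REVREACH(\Gamma)$); doing this upfront makes your second stage a genuine reduction from all of $\REVREACH(\Gamma^\circ)$ rather than only from the particular instances produced in stage one, and also ensures that \cref{sim-f2} applies directly.
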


\begin{proof}
We can assume that $\Gamma$ only consists of two non-adjacent vertices $u$ and $v$.
Observe that $\dM \Gamma^\circ$ is a free group over $u$ and $v$.
Subgroup membership in the free group over two generators is $\compP$-hard by \cite{avenhaus1984nielsen},
and hence also $\REVREACH(\Gamma^\circ)$ by \cref{revreach-submem}.
We can reduce from $\REVREACH(\Gamma^\circ)$ to $\REVREACH(\Gamma)$ using \Cref{sim-f2},
similarly to the proof \cref{undecidable-underlying-c4}:
There exists a finite set $W \subseteq \{u,v\}^*$ and a morphism $\varphi \colon (W \cup \bar W)^* \to X_\Gamma^*$
such that for all $y \in X_\Gamma^*$ we have $y \equiv_{\Gamma^\circ}$ if and only if
there exists $x \in (W \cup \bar W)^*$ with $x \equiv_\Gamma \varepsilon$ and $\varphi(x) = y$.
Hence reachability in a bidirected valence system over $\Gamma^\circ$ can be logspace reduced to 
reachability in a bidirected valence system over $\Gamma$ by replacing each transition $p \autsteps[v] q$, $v \in X_\Gamma$
by transitions $p \autsteps[w] q$ for all $w \in W$ with $\varphi(w) = v$.
\end{proof}

\begin{lemma}\label{expspace-hardness}
If $\cG$ is closed under induced subgraphs and UC-unbounded then
$\REVREACH(\cG)$ is $\EXPSPACE$-hard under logspace many-one reductions.
\end{lemma}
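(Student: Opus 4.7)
\emph{Plan.} The plan is to reduce from the word problem for commutative semigroups, which is $\EXPSPACE$-hard by Mayr and Meyer. An input consists of a commutative semigroup presentation $(\Sigma\mid R)$ and two words $u,v\in\Sigma^*$, and we wish to decide whether $u\equiv_R v$. Since $\cG$ is UC-unbounded and closed under induced subgraphs, we can compute in logspace an unlooped clique $\Gamma\in\cG$ whose vertex set is $\Sigma$. Under this identification, $\dM\Gamma\cong\B^\Sigma$, so the vertices of $\Gamma$ act as $\N$-counters, one per letter of $\Sigma$.

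Next, I would build a bidirected valence system $\cA$ over $\Gamma$ with three states $q_0,q_1,q_2$ and the following transitions, each together with its reverse: a ``loading'' transition $q_0\autsteps[u]q_1$ that raises the counters by the multiset of $u$, an ``unloading'' transition $q_1\autsteps[\bar v]q_2$ that lowers them by the multiset of $v$, and for every pair $(x,y)\in R$ a self-loop $q_1\autsteps[\bar x y]q_1$ simulating the rewrite $x\mapsto y$. The reverse loops implement the rules backwards, which is legitimate since $\equiv_R$ is symmetric. The whole construction is clearly logspace-computable.

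For correctness, I would prove that there is a run $q_0\autsteps[w]q_2$ in $\cA$ with $[w]=1$ in $\dM\Gamma$ if and only if $u\equiv_R v$. The convenient viewpoint is to interpret each transition as a partial operation on the $\Sigma$-indexed multiset of counter values, and to use that, since $\Gamma$ is an unlooped clique, $[w]=1$ in $\B^\Sigma$ holds iff for every letter $x\in\Sigma$ the projection of $w$ onto $\{x,\bar x\}$ is a Dyck word; equivalently, the multiset trajectory induced by the run starts and ends at $\bzero$ and stays pointwise nonnegative. The ``if'' direction is then immediate: a rewriting $u=u_0\to u_1\to\cdots\to u_k=v$ lifts to the canonical run that loads $u$, fires one forward or reverse loop per rewrite step, and unloads $v$.

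The main obstacle is the ``only if'' direction, where an arbitrary bidirected run may oscillate between $q_0$ and $q_1$ (contributing ``excursions'' $\bar u u$) or between $q_1$ and $q_2$ (contributing $\bar v v$). The key observation is that such excursions leave the multiset unchanged while only imposing temporarily stronger nonnegativity constraints, and can therefore be removed from the run without destroying validity. Extracting the remaining sub-sequence of loop transitions from such a normalised run yields a rewriting $u\to^* v$ in the commutative semigroup using the rules of $R$ (possibly backward), which witnesses $u\equiv_R v$.
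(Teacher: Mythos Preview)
Your proposal is correct and takes essentially the same approach as the paper: both reduce from the $\EXPSPACE$-hard word problem for commutative semigroups via a three-state bidirected valence system over an unlooped clique on $\Sigma$, with a load edge, an unload edge, and rewrite self-loops at the middle state. Your correctness sketch (normalising runs by deleting the down-up excursions $\bar u u$ and $\bar v v$, then reading off a commutative rewriting sequence from the surviving loops) is sound and in fact more detailed than what the paper provides; your orientation $q_0\autsteps[u]q_1$, $q_1\autsteps[\bar v]q_2$ is the one that actually works in $\B^\Sigma$.
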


\begin{proof}
We reduce from the word problem over commutative semigroups, known to be $\EXPSPACE$-hard \cite{MAYR1982305}.
Since $\cG$ is UC-unbounded and closed under induced subgraphs, it contains an unlooped clique $\Gamma$ of size $|\Sigma|$.
We can assume that $\Sigma$ is its node set.
Let $\cA$ be the bidirected valence system over $\Gamma$ with three states $q_0, q_1, q_2$,
the transitions $q_0 \autsteps[\bar u] q_1$, $q_1 \autsteps[v] q_2$,
and the transitions $q \autsteps[\bar x y] q$ for all $(x,y) \in R$,
and their reverse transitions.
Then $u \equiv_R v$ holds if and only if $q_0 \autsteps[w] q_2$ for some $w \in X_\Gamma^*$ with $[w] = 1$.
\end{proof}

\section{Additional material for Section~\ref{sec:upper-bound-ild}}

We need a few known results on commutative semigroups.
In the following we denote by $\Psi \colon \Sigma^* \to \N^\Sigma$ the Parikh function
where $\Psi(w)(x)$ is the number of occurrences of $x$ in $w$.

\begin{lemma}
	\label{rev-vass}
	Let $(\Sigma\mid R)$ be a commutative semigroup presentation.
	If $u \equiv_R v$ then there exists a derivation $u = u_0 \to_R u_1 \to_R \dots \to_R u_n = v$
	such that
	\[
		|u_i| \le \max\{|u|,|v|\}+\|R\|^{2^{|\Sigma|}} + \|R\|
	\]
	for all $i \in [0,n]$
	where $\|R\| =  |R| \cdot \max_{(x,y) \in R} |xy|$.
	In particular, we can test whether $u \equiv_R v$ and, if so, compute such a witnessing derivation in deterministic space $2^{O(|\Sigma|)} \log (\|R\|+|u|+|v|)$.
\end{lemma}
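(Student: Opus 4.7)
My plan is to view the statement as a reachability problem in a bidirected VASS and invoke the classical Rackoff/Mayr--Meyer analysis for such systems. Concretely, from the presentation $(\Sigma \mid R)$ I would build a one-state bidirected VASS $\cV$ over the counters $\Sigma$ that has, for every $(x,y)\in R$, a transition with effect $\Psi(y)-\Psi(x)$ together with its inverse (legitimate because $\equiv_R$ is symmetric). Under the Parikh map, single steps $s\to_R t$ correspond to legal transitions of $\cV$ (with the side condition that all counters stay nonnegative), and $u\equiv_R v$ iff $\Psi(u)$ reaches $\Psi(v)$ in $\cV$. Thus an upper bound on the maximum size of intermediate configurations along a witnessing run in $\cV$ translates directly into the stated bound on $|u_i|$, after absorbing the $|\Sigma|$-factor between word length and vector norm into the doubly exponential summand (using $|\Sigma|\le\|R\|$), and accounting for the fact that a single $\to_R$ step can add at most $\max_{(x,y)\in R}|xy|\le\|R\|$ letters.

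The technical heart of the argument is the bidirected-VASS bound: if $\bu\to_\cV^*\bv$ in a $d$-dimensional bidirected VASS of norm $B$, then there is a witnessing run in which every visited vector has norm at most $\max(\|\bu\|,\|\bv\|)+B^{2^{O(d)}}$. I would obtain this by a Rackoff-style induction on $d$: whenever the run would require a counter to drop too low, one uses the bidirectedness to prepend a short cycle that pumps all counters up by a doubly-exponentially bounded amount, performs the original segment in a ``$\Z$-counter regime'' where nonnegativity is automatic, and then appends the reverse cycle to cancel the pump. This pump-and-unpump trick is exactly what distinguishes the bidirected setting from general VASS and is what prevents the bound from blowing up to Ackermann. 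The induction steps in $d$ produce the $2^{O(d)}$ in the exponent.

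Once the norm bound is in hand, the algorithmic claim follows by a routine on-the-fly search: nondeterministically guess the sequence $u_0,u_1,\ldots,u_n$, keeping only the current word, the previous word, and a step counter bounded by a value doubly exponential in $|\Sigma|$. Each word is stored as a vector in $\N^\Sigma$ with entries of magnitude at most $\max\{|u|,|v|\}+\|R\|^{2^{|\Sigma|}}+\|R\|$, so storing a configuration takes $|\Sigma|\cdot\log\bigl(\|R\|^{2^{|\Sigma|}}+|u|+|v|\bigr)=2^{O(|\Sigma|)}\cdot\log(\|R\|+|u|+|v|)$ bits, and the step counter fits in the same budget. Savitch's theorem turns the nondeterministic procedure into a deterministic one within the same asymptotic space.

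The main obstacle is the inductive norm bound for bidirected VASS itself: the proof has to interleave a careful Rackoff-style analysis with the subgroup structure of cycle effects (so that ``pump up then pump down'' really recovers the original behaviour), and the bookkeeping that makes the exponent $2^{O(|\Sigma|)}$ rather than something larger is the delicate part. Everything else---the translation from semigroup presentations to $\cV$, the conversion between vector norms and word lengths, and the space-bounded search---is mechanical once this combinatorial lemma is available.
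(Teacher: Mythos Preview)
Your high-level strategy is sound and mirrors the paper's: reduce to reachability in a bidirected VASS (equivalently, an undirected graph on bounded configurations) and exploit a doubly-exponential norm bound on intermediate configurations. The paper, however, does not re-derive the bound; it simply invokes the Mayr--Meyer result \cite{MAYR1982305} (their Proposition and Lemma~3) for the length bound on the $u_i$. Your proposed Rackoff-style pump-and-unpump induction would essentially be a reproof of that theorem; it is plausible, but you should be aware that this is the nontrivial content of the Mayr--Meyer paper, not a routine adaptation of Rackoff.

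There is one concrete gap in your algorithmic part. You write that ``Savitch's theorem turns the nondeterministic procedure into a deterministic one within the same asymptotic space.'' That is not correct: Savitch costs a quadratic blowup, so from nondeterministic space $2^{O(|\Sigma|)}\log N$ you would only get deterministic space $2^{O(|\Sigma|)}\log^2 N$, which does not match the stated bound. The paper avoids this by observing that the configuration graph is \emph{undirected} (since $\equiv_R$ is symmetric) and applying Reingold's theorem that undirected $s$--$t$ connectivity is in deterministic logspace; Reingold's algorithm also outputs a path, which gives the witnessing derivation. You already noted that the VASS is bidirected, so you have everything you need to invoke Reingold instead of Savitch---just swap that step in and the argument goes through.
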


\begin{proof}
The bound on the lengths $|u_i|$ follows from Pro\-position and Lemma~3 in \cite{MAYR1982305}.
To compute the path we use Reingold's logspace algorithm for undirected connectivity,
which also computes a path between two given vertices \cite{DBLP:journals/jacm/Reingold08}.
Let $G$ be the graph obtained from restricting $(\Sigma^*,\to_R)$ to words
of length $\le \max\{|u|,|v|\}+\|R\|^{2^{|\Sigma|}} + \|R\|$
and identifying words having the same Parikh image.
Then $u \equiv_R v$ if and only if they are connected by a path in $G$.
\end{proof}

The following statement from {\cite[Lemma~17]{DBLP:conf/fct/KoppenhagenM97}}
describes the congruence classes of a commutative semigroup presentation as {\em hybrid linear sets}.

\begin{lemma}
	\label{hybrid}
	Let $(\Sigma\mid R)$ be a commutative semigroup presentation.
	Then for any $u \in \Sigma^*$
	there exist base vectors $\bb_1, \dots, \bb_m \in \N^\Sigma$
	and period vectors $\bp_1, \dots, \bp_\ell \in \N^\Sigma$ such that
	\[
		\Psi([u]_{\equiv_R}) = \bigcup_{i=1}^m \{ \bb_i + \sum_{j=1}^\ell \lambda_j \bp_j \mid \lambda_1, \dots, \lambda_\ell \in \N \}
	\]
	and $\|\bb_1\|, \dots, \|\bb_\ell\| \le (|u|+\|R\|) \cdot 2^{O(|\Sigma|)}$.
\end{lemma}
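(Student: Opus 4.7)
The plan is to view $\Psi([u]_{\equiv_R})$ as the reachability set of a reversible (bidirected) VASS and then invoke structural results for such sets. Concretely, for each rule $(x,y) \in R$, introduce two transitions with effects $\pm(\Psi(y) - \Psi(x))$, yielding a bidirected VASS $\cB$ over alphabet $\Sigma$ whose reachability set from initial marking $\Psi(u)$ is exactly $\Psi([u]_{\equiv_R})$. This reduces the lemma to a structural statement about reachability sets of reversible VASS.

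First, I would establish the easy containment $\Psi([u]_{\equiv_R}) \subseteq (\Psi(u) + G) \cap \N^\Sigma$, where $G \subseteq \Z^\Sigma$ is the subgroup generated by the transition effects. Next, I would prove an interior reachability statement: there is a threshold $\tau = (|u| + \|R\|) \cdot 2^{O(|\Sigma|)}$ such that every $\bv \in (\Psi(u) + G) \cap \N^\Sigma$ with all coordinates at least $\tau$ is reachable from $\Psi(u)$. This is a pumping / reordering argument exploiting reversibility: any integer combination of generators realizing $\bv - \Psi(u)$ can be unfolded into a transition sequence, and intermediate markings are kept non-negative by inserting round-trip cycles $t \bar t$ and carefully reordering firings. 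From this step one reads off the period set $\bp_1, \dots, \bp_\ell \in \N^\Sigma$ as a fixed generating set of $G \cap \N^\Sigma$ (the non-negative cone of $G$), shared across all base vectors.

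Finally, to extract the bases $\bb_i$ and cover configurations that fail the interior condition, I would partition reachable vectors by their support $T \subseteq \Sigma$, giving $2^{|\Sigma|}$ cases, and within each $\N^T$ extract a Dickson-minimal set of reachable vectors; these become the $\bb_i$. The main obstacle is the norm bound on the $\bb_i$: the mere existence of a hybrid linear representation is essentially immediate from reversibility together with Dickson's lemma, but obtaining bases of norm at most $(|u|+\|R\|) \cdot 2^{O(|\Sigma|)}$ requires the short-witness analysis of Koppenhagen and Mayr, which adapts Rackoff's covering-tree construction to reversible Petri nets in order to replace the non-primitive-recursive bounds of general VASS reachability with singly-exponential ones. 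This step is where the group structure of $G$ is genuinely used, since in the reversible setting short derivations can be reassembled from independent loops witnessed by a $\Z$-basis of $G$ together with a bounded-length ``connecting'' run.
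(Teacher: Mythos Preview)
The paper does not give its own proof of this lemma: it is quoted verbatim as \cite[Lemma~17]{DBLP:conf/fct/KoppenhagenM97} and used as a black box. Your sketch is a faithful outline of the Koppenhagen--Mayr argument itself (reversible VASS, interior reachability via the group $G$, Dickson-minimal bases, Rackoff-style bounds), which you explicitly acknowledge invoking for the norm bound; so there is nothing to compare against in the paper, and your proposal is appropriate as an exposition of the cited result rather than an alternative to anything the authors do.
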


\subsection{$\N$-counters}\label{appendix-upper-bound-n-counters}

Fix a clique $\Gamma = (V,I)$ where $U$ and $L$ are the sets of unlooped and looped vertices in $\Gamma$, respectively.
Furthermore, we are given a bidirected valence system $\cA = (Q,\to)$ over $\Gamma$,
and two states $s,t \in Q$.
For $Y \subseteq V$ let $\pi_{Y} \colon X_\Gamma^* \to (Y \cup \bar Y)^*$ be the projection to the alphabet $Y \cup \bar Y$.

To transfer the results from \Cref{rev-vass,hybrid} to the valence system $\cA$,
we translate $\cA$ into an equivalent commutative semigroup presentation.
To this end, we use the standard method of eliminating the states of $\cA$ by introducing two additional counters.
Without loss of generality let the state set of $\cA$ be $Q = \{1, \dots, k\}$.
Furthermore, we can ensure that each transition $p \autsteps[w] q$ is of the form $w = \overline{w^-} w^+$
for some $w^-, w^+ \in U^*$, by splitting transitions and adding intermediate states.
Let $U' = U \cup \{\alpha,\beta\}$ where $\alpha,\beta$ are fresh symbols.
Define the commutative semigroup presentation $(U',R)$
where $R$ contains all pairs
\[
	(\alpha^p \beta^{k-p} \pi_U(w^-), \alpha^q \beta^{k-q} \pi_U(w^+))
\]
for all transitions $p \xrightarrow{\overline{w^-} w^+} q$,
and all pairs $(xy,yx)$ for $x,y \in U'$.
Any path $p \autsteps[w] q$ with $\Psi(w) = \bzero$
can be translated into a derivation in $R$ from $\alpha^p \beta^{k-p}$ to $\alpha^q \beta^{k-q}$,
and vice versa.

\revvasspath*

\begin{proof}
	By \Cref{rev-vass} one can determine in deterministic space $2^{O(|\Sigma|)} \log (\|R\|+|u|+|v|)$
	if there is a derivation in $R$ from $\alpha^p \beta^{k-p}$ to $\alpha^q \beta^{k-q}$ and, if so, compute such a derivation,
	which can then be translated into a path $s \autsteps[w] t$ in $\cA$ with $\Psi(w) = \bzero$.
\end{proof}

\revvassreach*

\begin{proof}
	By \Cref{hybrid}
	there exist vectors $\bb_1', \dots, \bb_m' \in \N^{U'}$
	and vectors $\bp_1', \dots, \bp_\ell' \in \N^{U'}$ such that
	\[
		\Psi([\alpha^s \beta^{k-s}]_{\equiv_R}) = \bigcup_{i=1}^m \{ \bb_i' + \sum_{j=1}^\ell \lambda_j \bp_j' \mid \lambda_1, \dots, \lambda_\ell \in \N \}
	\]
	and $\|\bb_1'\|, \dots, \|\bb_\ell'\| \le \|\cA\| \cdot 2^{O(|U|)}$.
	By setting $J_q = \{ i \in [1,m] \mid \bb_i'(x) = q, \, \bb_i'(y) = k-q \}$ we obtain
	\[
		\Reach(s,q) = \bigcup_{i \in J_q} \{ \bb_i + \sum_{j=1}^\ell \lambda_j \bp_j \mid \lambda_1, \dots, \lambda_\ell \in \N \}
	\]
	where $\bb_i$ and $\bp_j$ are the restrictions of $\bb_i'$ and $\bp_j'$ to $U$, respectively.
	
	We set $B = \{ u \in U \mid \bp_j(u) = 0 \text{ for all } j \in [1,\ell] \}$ and $b = \|\cA\| \cdot 2^{O(|U|)}$,
	which satisfy the properties claimed by the lemma.
	Note that since we only use the existence of $\bb_1',\ldots,\bb_m'$ and $\bp_1',\ldots,\bp_\ell'$ (and the complexity of computing them is not clear), it remains to show how to compute $B$.
	Observe that $u \notin B$ if and only if there exists $\bv \in \Reach(s,s)$
	with $\bv(u) > 0$.
	This can be decided as follows:
	Let $\cA_u$ be obtained from $\cA$ by adding
	a bidirected transition $s \autsteps[\bar u] \bot$ to a new state $\bot$
	with bidirected loops $\bot \autsteps[v] \bot$ for $v \in U$.
	Then there exists $\bv \in \Reach(s,s)$ with $\bv(u) > 0$ if and only if
	there exists $w \in X_\Gamma^*$ with $s \autsteps[w] \bot$ in $\cA_u$ and $\Psi(w) = \bzero$.
	The latter can be decided in deterministic space $2^{O(|U|)} \cdot \log \|\cA\|$ by \Cref{rev-vass-path}.
\end{proof}

\subsection{Adding $\Z$-counters}

In the following we use the functions $\Phi_Y \colon X_\Gamma^* \to \Z^Y$
defined by $\Phi_Y(w) = \Phi(w)|_Y$ for $Y \subseteq V$,
and similarly $\Psi_Y \colon X_\Gamma^* \to \B^Y$.

\computeeff*

\begin{proof}
Let $B \subseteq U$ be the set and the number $b = \|\cA\| \cdot 2^{O(|U|)}$ from \Cref{rev-vass-reach}.
The idea is to maintain the $B$-counters in the state
and the $(U \setminus B)$-counters using $\Z$-counters.
Let $\Gamma'$ be the looped clique with vertex set $L' = (U \setminus B) \cup L$.
Let $\cA'$ be the valence system over $\Gamma'$ with the state set $Q \times [0,b]^B$.
In the following, we will often use the projection $\pi_{L'}\colon (V\cup\bar{V})^*\to (L'\cup \bar{L'})^*$; to ease notation, we abbreviate it as $\pi$.
For every transition $p \autsteps[w] q$ in $\cA$ and vectors $\ba,\bb \in [0,b]^B$
we add a transition 
\[ 
(p,\ba) \xrightarrow{\pi(w)} (q,\bb) \]
to $\cA'$ if $\ba \oplus \Psi_B(w) = \bb$. Clearly $\cA'$ is bidirected.
In the remainder of the proof, we will show that
\begin{align}
\begin{split}
	\label{eq-subgp-eff}
	\Eff(s,s) = \{ \Phi_L(w') \mid (s,\bzero) \xrightarrow{w'}_{\cA'} (s,\bzero),~ \Phi_{U \setminus B}(w') = \bzero \}.
\end{split}
\end{align}
Once this is established, we are essentially done: This is because \Cref{compute-coset} allows us to compute vectors $\bv_1, \dots, \bv_n \in \Z^{L'}$
such that $\{ \Phi_{L'}(w') \mid (s,\bzero) \autsteps[w'] (s,\bzero) \} = \langle \bv_1, \dots, \bv_n \rangle$.
With these, we have
\[ \Eff(s,s)=\{ \bv|_L \mid \bv \in \langle \bv_1, \dots, \bv_n \rangle, \, \bv|_{L' \setminus L} = \bzero \}, \] 
which is the desired representation of $\Eff(s,s)$.

It remains to prove \eqref{eq-subgp-eff}.
If
\[
	s = s_0 \xrightarrow{w_1} s_1 \xrightarrow{w_2} \dots \xrightarrow{w_n} s_n = s
\]
is a run in $\cA$ with $\Psi_U(w_1 \dots w_n) = \bzero$
then
\begin{equation}
	\label{eq:bounded-run}
	(s_0,\bb_0) \xrightarrow{\pi(w_1)} (s_1,\bb_1) \xrightarrow{\pi(w_2)} \dots \xrightarrow{\pi(w_n)} (s_n,\bb_n)
\end{equation}
is a run in $\cA'$
where $\bb_i = \Psi_B(w_1 \dots w_i)$ for all $i \in [0,n]$ and $\bb_n = \bzero$.
Observe that $\Psi_U(w_1 \dots w_i) \in \Reach(s,s_i)$ and hence $\bb_i \in [0,b]^B$ by \Cref{rev-vass-reach}.
Furthermore the run \eqref{eq:bounded-run} satisfies
\begin{align*}
\Phi_L(\pi(w_1) \dots \pi(w_n)) &= \Phi_L(w_1 \dots w_n), \\
\Phi_{U \setminus B}(\pi(w_1) \dots \pi(w_n)) &= \Phi_{U \setminus B}(w_1 \dots w_n) = \bzero.
\end{align*}

Conversely, suppose there is a run
\[
	(s_0,\bb_0) \xrightarrow{\pi(w_1)} (s_1,\bb_1) \xrightarrow{\pi(w_2)} \dots \xrightarrow{\pi(w_n)} (s_n,\bb_n)
\]
in $\cA'$
with $s_0 = s$ and $s_n = s$
for some run $s_0 \autsteps[w_1] \dots \autsteps[w_n] s_n$ satisfying $\bb_i = \bigoplus_{j=1}^i \Psi_B(w_j) = \Psi_B(w_1 \dots w_i)$ and $\bb_n = \bzero$.
Furthermore we have $\Phi_{U \setminus B}(w_1 \dots w_n) = \Phi_{U \setminus B}(\pi(w_1) \dots \pi(w_n)) = \bzero$.
Let $\bu_i = \Psi_{U \setminus B}(w_1 \dots w_i)$ for all $i \in [0,n]$.
Let $c \in \N$ such that $c \oplus \bu_i(u) \in \N$ for all $i \in [0,n]$ and $u \in U \setminus B$.
By assumption there exists a run $s \autsteps[w] s$ in $\cA$ such that $\Psi_U(w) \in \N^U$
and $\Psi_U(w)(u) \ge c$ for all $u \in U \setminus B$.
Then
\[
	s \xrightarrow{w} s \xrightarrow{w_1 \dots w_n} s \xrightarrow{\bar w} s
\]
is a run in $\cA$
with 
\[
	\Phi_L(w w_1 \dots w_n \bar w) = \Phi_L(w_1 \dots w_n) = \Phi_L(\pi(w_1) \dots \pi(w_n)).
\]
To prove $\Psi_U(w w_1 \dots w_n \bar w) = \bzero$ we consider $B$ and $U \setminus B$ separately.
We have
\begin{align*}
	\Psi_B(w w_1 \dots w_n \bar w) &= \Psi_B(w) \oplus \Psi_B(w_1 \dots w_n) \oplus \Psi_B(\bar w) \\
	&= \Psi_B(w) \oplus \Psi_B(\bar w) = \bzero.
\end{align*}
where the second equality follows from $\bb_n = \bzero$ and the third equality follows from $\Psi_B(w) \in \N^B$.
To show $\Psi_{U \setminus B}(w w_1 \dots w_n \bar w) = \bzero$,
we only need to prove that $\Psi_{U \setminus B}(w w_1 \dots w_i) \in \N^{U \setminus B}$ for all $i \in [0,n]$
since we know that $\Phi_{U \setminus B}(w_1 \dots w_n) = \bzero$.
This holds because
\begin{align*}
	\Psi_{U \setminus B}(w w_1 \dots w_i)(u) = \, & \Psi_{U \setminus B}(w)(u) \, \oplus \\
	& \Psi_{U \setminus B}(w_1 \dots w_i)(u),
\end{align*}
$\Psi_{U \setminus B}(w)(u) \ge c$ and $c \oplus \Psi_{U \setminus B}(w_1 \dots w_i)(u) \in \N$ for all $u \in U \setminus B$.
This concludes the proof.
\end{proof}

\ucboundedild*

\begin{proof}
Given a clique $\Gamma \in \cG$ and a bidirected valence system $\cA = (Q,\to)$ over $\Gamma$,
and states $s,t \in Q$.
Using \Cref{rev-vass-path} we can test whether there is a path $s \autsteps[w] t$ with $\Psi(w) = \bzero$,
and, if so, we find $\bu := \Phi(w) \in \Eff(s,t)$.
This works in exponential (logarithmic) space if $\cG$ consists of cliques of unbounded (bounded) size.
By \Cref{compute-eff}
we can compute a representation $\Eff(s,s) = \{ \bv|_L \mid \bv \in \langle \bv_1, \dots, \bv_n \rangle, \, \bv|_{L' \setminus L} = \bzero \}$.
for some $\bv_1, \dots, \bv_n \in \Z^{L'}$ where $L \subseteq L'$ and $|L'| \le |\Gamma|$.
We need to test whether $\bzero \in \Eff(s,t)$, which is equivalent to $- \bu \in \Eff(s,s)$.
This holds if and only if there exists $(x_1, \dots, x_n) \in \Z^n$ such that
$\sum_{i=1}^n x_i \bv_i(u) = -\bu(u)$ for all $u \in L$
and $\sum_{i=1}^n x_i \bv_i(u) = 0$ for all $u \in L' \setminus L$,
which is a system of $|\Gamma|$ equations.
This can be solved in $\ILD$.
If $|\Gamma|$ is bounded then this is in $\compL$ (even $\mathsf{TC}^0$) \cite[Theorem~13]{DBLP:conf/stacs/ElberfeldJT12}.
\end{proof}

For the proof of \cref{compute-coset-cliques}, we need a slightly stronger
version of \cref{ild-in-p}. Specifically, the algorithm in
\cite{DBLP:journals/siamcomp/ChouC82} is not only able to decide whether an
equation $\bA\bx=\bb$, $\bA\in\Z^{m\times n}$, $\bb\in\Z^m$, has an integral solution $\bx\in\Z^n$. It can even, in
polynomial time, compute a set $\{\bu_1,\ldots,\bu_k\}$ of integral vectors
such that $\langle \bu_1,\ldots,\bu_k\rangle$ is exactly the set of $\bx\in\Z^n$ with
$\bA\bx=\bzero$. 
Such a set $\{\bu_1,\ldots,\bu_k\}$ is called a \emph{general solution} of the equation $\bA\bx=\bzero$.
\begin{theorem}[\cite{DBLP:journals/siamcomp/ChouC82}]\label{compute-general-solution}
	Given a matrix $\bA\in\Z^{m\times n}$, one can compute in polynomial
	time a general solution for the equation $\bA\bx=\bzero$.
\end{theorem}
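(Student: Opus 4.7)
The plan is to reduce the computation of a general solution to the computation of a column Hermite Normal Form (HNF) of $\bA$ together with a witnessing unimodular transformation. Concretely, I would compute a unimodular matrix $\bU\in\Z^{n\times n}$ (i.e., $\det \bU=\pm 1$, so $\bU^{-1}\in\Z^{n\times n}$) and a matrix $\bH\in\Z^{m\times n}$ in column HNF such that $\bA\bU=\bH$. Write $r$ for the rank of $\bA$; then $\bH$ has exactly $r$ nonzero columns, namely the leftmost ones, and I claim that the last $n-r$ columns $\bu_{r+1},\ldots,\bu_n$ of $\bU$ form a $\Z$-basis of $\{\bx\in\Z^n \mid \bA\bx=\bzero\}$. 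They lie in this set since $\bA\bu_i=\bzero$ for $i>r$; conversely, every integer solution $\bx$ can be written uniquely as $\bx=\bU\by$ with $\by\in\Z^n$ (because $\bU^{-1}\in\Z^{n\times n}$), and then $\bzero=\bA\bx=\bH\by$ forces the first $r$ coordinates of $\by$ to vanish, so $\bx$ is an integer combination of $\bu_{r+1},\ldots,\bu_n$. Outputting these columns thus yields a general solution.

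The whole task therefore reduces to computing $\bU$ and $\bH$ in polynomial time with polynomially bounded bit length. The hard part will be precisely this bit-length control: a naive sequence of column operations that clears entries below the pivots can produce intermediate integers of exponential bit size, so the choice of HNF algorithm is cheap while controlling the growth of numerators is delicate. I would address this exactly as Chou and Collins do: first compute the rank $r$ and a nonzero $r\times r$ minor $d$ of $\bA$ by rational Gaussian elimination; by Hadamard's inequality, $\log|d|$ is polynomial in the input size. Then perform all HNF column operations modulo a suitable multiple of $d$. This is sound because the entries of a suitably reduced $\bU$ and of $\bH$ can be bounded in terms of $d$, so the modular computation carries enough information to lift back to the exact integer matrices. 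Combined with the correctness argument from the first paragraph, this gives the claimed polynomial-time algorithm for computing a general solution to $\bA\bx=\bzero$, and it is in particular the same algorithmic ingredient that underlies \cref{ild-in-p}.
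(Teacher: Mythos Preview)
Your sketch is correct: the column Hermite Normal Form argument with a unimodular transformation matrix $\bU$ does give a $\Z$-basis of the kernel via the last $n-r$ columns of $\bU$, and the modular reduction strategy for controlling intermediate coefficient growth is a standard way to make this run in polynomial time. However, note that the paper does not prove this statement at all: it simply cites it as \cite[Theorems~1 and~13]{DBLP:journals/siamcomp/ChouC82} and uses it as a black box. So you have provided strictly more than the paper does here, and there is no ``paper's own proof'' to compare against beyond the bare citation.
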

See \cite[Theorems 1 and 13]{DBLP:journals/siamcomp/ChouC82}.

\computecosetcliques*

\begin{proof}
The proof is analogous to the proof of~\Cref{uc-bounded-ild}.
To compute a generating set for $\Eff(s,s)$ we compute the vectors $\bv_1, \dots, \bv_n \in \Z^{L'}$
with $\Eff(s,s) = \{ \bv|_L \mid \bv \in \langle \bv_1, \dots, \bv_n \rangle, \, \bv|_{L' \setminus L} = \bzero \}$,
which is a projection of the solution set of a linear Diophantine system.
Therefore, by \cref{compute-general-solution}, we can compute in polynomial time $\bu_1, \dots, \bu_m \in \Z^L$ with
$\Eff(s,s) = \langle \bu_1, \dots, \bu_m \rangle$.
Using \Cref{rev-vass-path} we can test whether $\Eff(s,t) \neq \emptyset$, and,
if so, find $\bu \in \Eff(s,t)$ in exponential space (log-space if $\cG$ is UC-bounded).
This gives us the representation $\bu + \langle \bu_1, \dots, \bu_n \rangle$ for $\Eff(s,t)$.
\end{proof}

\section{Additional material for Section~\ref{sec:upper-bound-p}}
\subsection{The reduction to grammar emptiness}\label{appendix-sec-bireach-to-grammars}
In this section, we prove \cref{bireach-to-grammars}.

The idea behind all these grammar translations is to simulate the runs of
valence systems over the disjoint union of graphs: Such a run is always obtained
by starting from a run with neutral effect over one component, then inserting a run with neutral effect over a
different component, then again inserting a run from some component, etc.
Here, a key trick is to use a nonterminal symbol $a_{p,q}$ for each pair of
states that represents a run from $p$ to $q$. Then, inserting a run from $p$ to
$q$ for $a_{p,q}$ yields a new run. For general reachability, this works even if
we introduce nonterminals $a_{p,q}$ for which there does not exist a run:
Such an $a_{p,q}$ will never be replaced and causes no issues.

However, if we want to guarantee that our grammar is bidirected (we will define
this later precisely), we always have to make sure that every derivation can be
reverted. In particular, every nonterminal that can be produced should be able
to derive something. This property will be captured in our notion of
``realizable placeholder runs'' which we define next.

\subsubsection*{Decomposition into tree} First, we want to make the tree structure in the input graphs $\Gamma\in\SCpm_d$ explicit.
We may assume that our input graph has an unlooped vertex that is adjacent to all other vertices: Otherwise, we can just add such a vertex.
Our graph $\Gamma=(V,I)$ in $\SCpm_d$ has a tree structure. We decompose its vertices into a tree $t$ accordingly:
\begin{enumerate}
	\item Consider the set $U\subseteq V$ of vertices in $\Gamma$ that are adjacent to all other vertices.
	\item If $U$ is a strict subset of $V$, then $\Gamma\setminus U$ has at least two connected components.
		Then we construct the tree for each connected component. The tree for $\Gamma$ is obtained by taking these trees and adding a parent node containing $U$.
	\item Otherwise, 
		$\Gamma$ is a clique with $\le d$ unlooped vertices. Then $t$ contains one node with all of $\Gamma$.
\end{enumerate}

\subsubsection*{Placeholder runs}
Given a valence system $\cA$ over $\Gamma$, a \emph{placeholder} is a triple
$(p,\rt{s},q)$ or $(p,\nort{s},q)$, where where $p$ and $q$ are states in $\cA$
and $s$ is a subtree of $t$.  Intuitively, a placeholder $(p,\rt{s},q)$
represents a run from $p$ to $q$ that is neutral and only uses operations
belonging to $s$. A placeholder $(p,\nort{s},q)$ also represents such a run,
but without the restriction that it has to cancel in the vertices belonging to
the root of $s$.  Hence $\rt{s}$ can be thought of as representing the whole
subtree $s$, whereas $\nort{s}$ represents the set of subtrees directly under
$s$.  The set of placeholders is denoted by $\rho(\cA)$.
    
We say that $r^x$ is \emph{above} $s^y$ if either (i)~$r$ strictly contains $s$ as a subtree or if (ii)~$r^x=\rt{r}$ and $s^y=\nort{r}$.
For each $s^x\in\{\nort{s},\rt{s}\}$, we define two sets of vertices:
\begin{enumerate}
	\item $V_s\subseteq V$ consists of all vertices belonging to the subtree $s$ or to an ancestor of the root of $s$.
	\item $\vertdown_{\rt{s}}\subseteq V$ consists of all vertices in $\Gamma$
		that belong to the subtree $s$.
	\item $\vertdown_{\nort{s}}\subseteq V$ consists of all vertices in $\Gamma$ that belong to the subtree of $s$, but not to the root of $s$.
	\item $\vertup_{s^x}=V_s\setminus \vertdown_{s^x}$ for $s^x=\nort{s}$ or $s^x=\rt{s}$.
\end{enumerate}
Given a subtree $s$ of $t$, the automaton
$\cA_{s}$ is obtained from $\cA$ by deleting all edges labeled with
$v\in\Gamma$ that are incomparable to $s$. 
For each $s^x\in\{\rt{s},\nort{s}\}$, we will use the projection maps $\projup_{s^x}\colon X_\Gamma^*\to\Z^{\vertup_{s^x}}$ and
$\projdown_{s^x}\colon X_\Gamma^*\to\M\Gamma_{\vertdown_{s^x}}$, which project a
string over $X_\Gamma$ to the symbols belonging to the nodes in $\vertup_{s^x}$
and $\vertdown_{s^x}$, respectively, and return their image in
$\Z^{\vertup_{s^x}}$ or $\M\Gamma_{\vertdown_{s^x}}$, respectively.
Here $\Gamma_U$ is the subgraph of $\Gamma$ induced by a vertex set $U$.

A \emph{run in $(p,s^x,q)$} is a sequence of transitions
\[(q_0,w_1,q_1)\cdots (q_{m-1},w_m,q_m) \] in $\cA_s$ such that $q_0=p$, $q_m=q$, and
$\projdown_{s^x}(w_1\cdots w_m)=1$.  Let $R\subseteq\rho(\cA)$ be a subset. An
\emph{$R$-placeholder run in $(p,s^x,q)$} is a sequence 
\begin{equation} \sigma_0(p_1,s_1^{x_1},q_1)\sigma_1\cdots (p_m,s_m^{x_m},q_m)\sigma_m \label{form-placeholder-run}\end{equation}
where (i)~$p=q_0$ and $q=p_{m+1}$, (ii)~for $i\in[1,m]$, $s_i^{x_i}$ is above $s^x$ and $(p_i,s_i^{x_i},q_i)\in R$, (iii)~$\sigma_i\colon q_i\autsteps[w_i]p_{i+1}$ is a run in $\cA_s$ for each $i\in[0,m]$
and (iv)~$\projdown_{s^x} (w_1\cdots w_m)=1$.
An \emph{almost $R$-placeholder run in $(p,s^x,q)$} is an $R$-placeholder run
where we also allow $s_i^{x_i}=s^x$ and only impose that $\projdown_{\nort{s}} (w_1\cdots w_m) = 1$.

The \emph{effect} of the $R$-placeholder run in \cref{form-placeholder-run} is
\begin{align*}
(p_1,s_1^{x_1},q_1)+\cdots&+(p_m,s_m^{x_m},q_m) \\
&+\projup_{s^x}(w_1\cdots w_m)\in
\N^{\rho(\cA)}+\Z^{V}.
\end{align*}
The set of effects of $R$-placeholder runs in
$(p,s^x,q)$ is denoted $E^R_{(p,s^x,q)}\subseteq\N^{\rho(\cA)}+\Z^V$.
The effect of an almost $R$-placeholder runs is defined as
\begin{align*}
(p_1,s_1^{x_1},q_1)+\cdots&+(p_m,s_m^{x_m},q_m) \\
&+ \projup_{\nort{s}}(w_1\cdots w_m)\in \N^{\rho(\cA)}+\Z^{V}.
\end{align*}
By $E'^R_{(p,s^x,q)}$, we denote the set of effects of almost $R$-placeholder runs in $(p,s^x,q)$.

We now inductively define which placeholders and which placeholder runs are realizable:
\begin{itemize}
\item An $R$-placeholder run in $\tau$ is {\em realizable} if all $\tau' \in R$ are realizable.
\item A placeholder $\tau$ is {\em realizable} if there exists a realizable placeholder run in $\tau$.
\end{itemize}
In particular, all ordinary runs in $\tau$ are realizable (set $R = \emptyset$).
The set of realizable placeholder runs in $\tau \in \rho(\cA)$ is denoted by $U_\tau$.
Every realizable placeholder run is of the form \cref{form-placeholder-run} (even those that are ordinary runs, which just have $m=0$).
Hence, they have a well-defined effect.
The set of effects of realizable placeholder runs in
$(p,s^x,q)$ is denoted $E_{(p,s^x,q)}\subseteq\N^{\rho(\cA)}+\Z^V$.

The notion of a realizable placeholder run achieves the idea that we have mentioned
above: A realizable placeholder run can only use realizable placeholders, for
which we have already established the existence of a run.  Clearly, a
placeholder run in $(p,\rt{t},q)$ is just a neutral run in
$(p,\rt{t},q)$.  Thus:
\begin{lemma}\label{revreach-placeholder-runs}
	There exists a neutral run from $p$ to $q$ in $\cA$ if and only if $E_{(p,\rt{t},q)}\ne\emptyset$.
\end{lemma}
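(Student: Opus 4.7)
The plan is to show that the definition of realizable placeholder runs degenerates at the top of the tree $t$, so that placeholder runs in $(p,\rt{t},q)$ coincide with ordinary neutral runs from $p$ to $q$. The argument is essentially a direct unpacking of definitions and contains no real obstacle; the only thing to verify carefully is that no placeholders can appear at the top level.

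First I would observe that the tree $t$ is the full decomposition of $\Gamma$, and analyze the ``above'' relation at $s^x=\rt{t}$. By definition, a node $r^y$ lies above $\rt{t}$ only if either $r$ strictly contains $t$ as a subtree, which is impossible, or $r^y=\rt{r}$ with $\rt{t}=\nort{r}$, which is also impossible since the marking disagrees. Hence no placeholder is above $\rt{t}$, so every $R$-placeholder run in $(p,\rt{t},q)$ must have the form \eqref{form-placeholder-run} with $m=0$, meaning it is just a single run $\sigma_0\colon p\autsteps[w]q$ in $\cA_t=\cA$. Since $R$ is forced to be $\emptyset$ (no placeholder symbols appear), every such run is vacuously realizable, so the set of realizable placeholder runs in $(p,\rt{t},q)$ is exactly the set of runs $p\autsteps[w]q$ in $\cA$ satisfying $\projdown_{\rt{t}}(w)=1$.

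Next I would translate the side condition. Because $t$ is the entire tree, $\vertdown_{\rt{t}}=V$ and $\vertup_{\rt{t}}=\emptyset$. Thus $\projdown_{\rt{t}}(w)=1$ says precisely $[w]=1$ in $\dM\Gamma$, i.e.\ that $p\autsteps[w]q$ is a neutral run. Moreover, the effect of such a placeholder run is the empty sum of placeholders plus $\projup_{\rt{t}}(w)\in\Z^{\emptyset}$, i.e.\ the unique element of $\Z^{\emptyset}$. Consequently $E_{(p,\rt{t},q)}$ is either empty or the singleton $\{\bzero\}$, depending on whether a neutral run from $p$ to $q$ exists.

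Combining the two observations gives the equivalence: $E_{(p,\rt{t},q)}\ne\emptyset$ iff there is a realizable placeholder run in $(p,\rt{t},q)$ iff there is a run $p\autsteps[w]q$ in $\cA$ with $[w]=1$, i.e.\ a neutral run from $p$ to $q$, as required.
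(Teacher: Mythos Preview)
Your proposal is correct and is exactly the unpacking of what the paper leaves as ``clearly'' in the sentence preceding the lemma: since nothing is above $\rt{t}$, placeholder runs in $(p,\rt{t},q)$ have $m=0$ and reduce to ordinary runs in $\cA_t=\cA$ with $\projdown_{\rt{t}}(w)=1$, i.e.\ $[w]=1$. One minor imprecision: $R$ is not \emph{forced} to be $\emptyset$; rather, because no placeholders appear, the run is an $R$-placeholder run for every $R$, in particular for $R=\emptyset$, whence realizability is vacuous---but this does not affect the argument.
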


Our goal is to describe the sets $E_\tau$ using grammars. For this, it will be useful
to have a characterization of $E_\tau$ that describes how to ``build up'' elements of $E_\tau$ successively. This is the purpose of the sets $W_\tau$, which we define next.
For this, we need to define the subsets $Y_{r^x},Z_{r^x}$ \\ $\subseteq\N^{\rho(\cA)}+\Z^V$:
\begin{multline*}
	Z_{r^x} = \{\bx\in \N^{\rho(\cA)}+\Z^V \mid \text{$\bx(v)=0$ for every $v\in \vertdown_{r^x}$} \\
	~\text{and $\bx(p,s^y,q)=0$ for every $p,q\in Q$} \\
	~\text{and every $s^y$ below $r^x$ or equal to $r^x$}\} 
\end{multline*}
and $Y_{r^x}$ is the intersection of all $Z_{s^y}$ where $s^y$ is below $r^x$.
To simplify notation, if $\tau=(p,r^x,q)$, we also write $Z_\tau$ for $Z_{r^x}$ (analogously for $Y_\tau$).
Let us inductively define the tuple $(W_\tau)_{\tau\in \rho(\cA)}$, where for $W_{(p,s^x,q)}\subseteq \N^{\rho(\cA)} + \Z^{V}$. It is the smallest tuple such that
\begin{enumerate}
	\item\label{w-1} If $s$ is a leaf and $R\subseteq \{\tau\in\rho(\cA)\mid W_\tau\cap Z_\tau\ne\emptyset\}$
		then $E'^R_{(p,\nort{s},q)}\subseteq W_{(p,\nort{s},q)}$.
	\item\label{w-2} If $r^x$ is above $s^y$ and $W_{(p,s^y,q)}\cap Z_{s^y}\ne\emptyset$, then $W_{(p,s^y,q)}\cap Y_{r^x}\subseteq W_{(p,r^x,q)}$.
	\item\label{w-3} If $r^x$ is above $s^y$ and $W_{(p,r^x,q)}\cap Z_{r^x}\ne\emptyset$, then $(p,r^x,q)\in W_{(p,s^y,q)}$.
	\item\label{w-4} If $(p',r^x,q')+\bu\in W_{(p,r^x,q)}$ and $\bv\in W_{(p',r^x,q')}$, then $\bv+\bu\in W_{(p,r^x,q)}$.
\end{enumerate}

Once the notion of realizable placeholder runs and the sets $W_\tau$ is
established, it only requires standard arguments that $E_\tau$ and $W_\tau\cap
Z_\tau$ agree.
\begin{proposition}\label{equivalence-placeholder-runs}
	For every $\tau\in\rho(\cA)$, we have $W_\tau\cap Z_\tau=E_\tau$.
\end{proposition}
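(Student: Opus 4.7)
The plan is to prove the equality by establishing both inclusions $E_\tau \subseteq W_\tau \cap Z_\tau$ and $W_\tau \cap Z_\tau \subseteq E_\tau$ via intertwined inductions. For $E_\tau \subseteq Z_\tau$ the argument is immediate from the neutrality conditions: a realizable placeholder run in $\tau = (p, s^x, q)$ has $\projdown_{s^x}(w_1\cdots w_m)=1$ and uses only placeholders strictly above $s^x$, so its effect has zero coordinates on $\vertdown_{s^x}$ and on placeholders at or below $s^x$. For $E_\tau \subseteq W_\tau$, I would induct on the nesting depth of realizability (the depth needed to certify the realizability of every placeholder used). In the base case, an ordinary run with no placeholders is first captured at a leaf by rule~(\ref{w-1}) with $R=\emptyset$ and then lifted up the tree through successive applications of rule~(\ref{w-2}); the coordinate-zero conditions needed to apply rule~(\ref{w-2}) follow because a run at a lower subtree only produces effects on the ancestors. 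The inductive step handles each nested realizable placeholder by the IH, then uses rule~(\ref{w-3}) to put the placeholder symbol into the lower $W$-set and rule~(\ref{w-4}) to substitute its realization, accumulating the desired effect.

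For the converse direction $W_\tau \cap Z_\tau \subseteq E_\tau$, I would induct on how an element $\bw$ was added to $W_\tau$ via the four generating rules, running the induction jointly over all $\tau$. Rule~(\ref{w-1}): $\bw$ is the effect of an almost $R$-placeholder run; the intersection with $Z_\tau$ kills the $\nort{s}$-placeholder coordinates, and since placeholder coordinates live in $\N$ (and cannot cancel), the run actually uses no $\nort{s}$-placeholders and hence is an ordinary $R$-placeholder run. The set $R$ is by construction contained in the set of placeholders $\tau'$ with $W_{\tau'} \cap Z_{\tau'} \ne \emptyset$, which by IH are realizable. Rule~(\ref{w-3}) is handled by observing that the hypothesis $W_{(p,r^x,q)} \cap Z_{r^x} \ne \emptyset$ together with the IH makes $(p,r^x,q)$ realizable, so the single-placeholder run forming $(p,r^x,q)$ inside $(p,s^y,q)$ is realizable. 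Rule~(\ref{w-4}) uses substitution: the IH gives a realizable run witnessing $(p',r^x,q') + \bu$ as well as a realizing run for $\bv \in W_{(p',r^x,q')}$, and replacing the placeholder symbol in the former by the latter yields a realizable run for $\bv + \bu$ in $(p,r^x,q)$.

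The most delicate step will be rule~(\ref{w-2}): given $\bw \in W_{(p,s^y,q)} \cap Y_{r^x} \cap Z_{r^x}$ with $s^y$ below $r^x$, I must show $\bw \in E_{(p,r^x,q)}$. By IH applied inside $Z_{s^y}$ (which is implied by $Y_{r^x}$ since $s^y$ is itself below $r^x$), there is a realizable placeholder run $\rho$ in $(p,s^y,q)$ with effect $\bw$. To reinterpret $\rho$ as a realizable placeholder run in $(p,r^x,q)$, one needs two facts: every placeholder appearing in $\rho$ must be above $r^x$, not merely above $s^y$, and the underlying word must be $\projdown_{r^x}$-neutral, not merely $\projdown_{s^y}$-neutral. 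Both follow from careful coordinate tracking. Any placeholder between $s^y$ and $r^x$ would give a strictly positive coordinate at a placeholder at or below $r^x$, which is forbidden by $\bw \in Z_{r^x}$; and any non-zero effect on vertices in $\vertdown_{r^x} \setminus \vertdown_{s^y}$ would manifest as a non-zero coordinate on those vertices in $\bw$, forbidden by $\bw \in Y_{r^x}$ together with $\bw \in Z_{r^x}$ (the union of $\vertdown_{s'^{y'}}$ over $s'^{y'}$ strictly below $r^x$, together with $\vertdown_{r^x}$ itself, covers $\vertdown_{r^x}$). Once both conditions are verified, $\rho$ is reinterpreted as a realizable placeholder run in $(p,r^x,q)$, and its effect (computed with the larger projection $\projup_{r^x}$ in place of $\projup_{s^y}$) coincides with $\bw$ because the discrepancy lies exactly in the coordinates that $Y_{r^x}$ forced to be zero.
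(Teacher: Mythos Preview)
Your approach has a genuine gap in the converse direction, specifically at rule~(\ref{w-4}). You write that ``the IH gives a realizable run witnessing $(p',r^x,q') + \bu$''. But a realizable placeholder run in $(p,r^x,q)$ may only use placeholders \emph{strictly above} $r^x$, so its effect can never have a positive coordinate at $(p',r^x,q')$. Equivalently, $(p',r^x,q') + \bu \notin Z_{r^x}$, so the element $(p',r^x,q')+\bu$ does not lie in $E_{(p,r^x,q)}$ and your induction hypothesis, which is stated for $W_\tau\cap Z_\tau\subseteq E_\tau$, simply does not apply to it. The same issue arises for the premise $\bv\in W_{(p',r^x,q')}$: nothing guarantees $\bv\in Z_{r^x}$, only that the sum $\bv+\bu$ lies in $Z_{r^x}$.

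This is precisely why the paper proves the stronger statement $W_\tau = E'_\tau$, where $E'_\tau$ is the set of effects of realizable \emph{almost} placeholder runs (which may use placeholders at the same level $r^x$ and only require $\projdown_{\nort{s}}$-neutrality), and then obtains $W_\tau\cap Z_\tau = E'_\tau\cap Z_\tau = E_\tau$ at the end. With $E'_\tau$ in place, rule~(\ref{w-4}) goes through cleanly: the IH now yields an almost placeholder run in $(p,r^x,q)$ containing an occurrence of the same-level placeholder $(p',r^x,q')$, and one substitutes for it the almost placeholder run witnessing $\bv$.

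There is a second, related gap in your forward direction. You claim an ordinary run in $(p,s^x,q)$ ``is first captured at a leaf by rule~(\ref{w-1}) and then lifted up via rule~(\ref{w-2})''. But if $s^x=\nort{s}$ with $s$ a non-leaf, an ordinary run may alternate between transitions of several immediate subtrees $t_1,\ldots,t_k$ of $s$; such a run is not a run in any single lower subtree, so it cannot be lifted from a leaf by rule~(\ref{w-2}) alone. The paper handles this via a free-product decomposition: since $\projdown_{\nort{s}}(w_1\cdots w_m)=1$ in the free product of the $\dM\Gamma_{t_i}$, some maximal block over a single $t_{i_\ell}$ already cancels; one extracts that block as a shorter placeholder run at a lower level, replaces it by a same-level placeholder, and recurses using rule~(\ref{w-4}). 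Again, this step requires working with almost placeholder runs.
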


\begin{proof}

We say that an almost $R$-placeholder run in $\tau$ is {\em realizable}
if all $\tau' \in R$ are realizable.
Let $E'_\tau$ be the set of all effects of an almost placeholder run in $\tau$.
Observe that $E'_\tau \cap Z_\tau = E_\tau$.
Hence it remains to show that $W_\tau = E'_\tau$.
Since the sets $W_\tau$ and $E'_\tau$ are defined inductively
it is to natural to prove both inclusions of $W_\tau = E'_\tau$
by inductions on the number of rules needed to witness that an element belongs to the sets.

First we prove $W_\tau \subseteq E'_\tau$ for all $\tau\in\rho(\cA)$.
Let $W^{(0)}_\tau \subseteq W^{(1)}_\tau \subseteq \dots$ be the smallest sets satisfying:
\begin{enumerate}
	\item If $s$ is a leaf and $R\subseteq \{\tau\in\rho(\cA)\mid W_\tau^{(k)} \cap Z_\tau\ne\emptyset\}$
		then $E'^R_{(p,\nort{s},q)}\subseteq W_{(p,\nort{s},q)}^{(k+1)}$. \label{w1}
	\item If $r^x$ is above $s^y$ and $W_{(p,s^y,q)}^{(k)}\cap Z_{s^y}\ne\emptyset$, then $W_{(p,s^y,q)}^{(k)}\cap Y_{r^x}\subseteq W_{(p,r^x,q)}^{(k+1)}$. \label{w2}
	\item If $r^x$ is above $s^y$ and $W_{(p,r^x,q)}^{(k)}\cap Z_{r^x}\ne\emptyset$, then $(p,r^x,q)\in W_{(p,s^y,q)}^{(k+1)}$. \label{w3}
	\item If $(p',r^x,q')+\mu\in W_{(p,r^x,q)}^{(k)}$ and $\nu\in W_{(p',r^x,q')}^{(k)}$, then $\nu+\mu\in W_{(p,r^x,q)}^{(k+1)}$. \label{w4}
\end{enumerate}
We prove $W^{(k)}_\tau \subseteq E'_\tau$ for all $\tau \in \rho(\cA)$ by induction on $k$.

\begin{enumerate}
\item Let $R\subseteq \{\tau\in\rho(\cA)\mid W_\tau^{(k)}\cap Z_\tau\ne\emptyset\}$.
Since $E'^R_{(p,\nort{s},q)} \subseteq W^{(k+1)}_{(p,\nort{s},q)}$
we need to show the inclusion $E'^R_{(p,\nort{s},q)} \subseteq E'_{(p,\nort{s},q)}$.
By induction hypothesis we know that $W_\tau^{(k)}\cap Z_\tau \subseteq E'_\tau \cap Z_\tau = E_\tau$ for all $\tau \in \rho(\cA)$.
Therefore all $\tau \in R$ are realizable, which implies $E'^R_{(p,\nort{s},q)} \subseteq E'_{(p,\nort{s},q)}$.

\item Suppose that $r^x$ is above $s^y$ and $W_{(p,s^y,q)}^{(k)}\cap Z_{s^y}\ne\emptyset$.
Take $\kappa \in W_{(p,s^y,q)}^{(k)}\cap Y_{r^x}$.
The goal is to show that $\kappa \in E'_{(p,r^x,q)}$.
By induction hypothesis $\kappa \in E'_{(p,s^y,q)}$, i.e.
$\kappa$ is the effect of a realizable almost placeholder run
\begin{align*}
	\sigma \colon (q_0 \autsteps[w_1] p_1) &(p_1,s_1^{x_1},q_1) (q_1 \autsteps[w_2] p_2) \cdots \\ &(p_m,s_m^{x_m},q_m) (q_m \autsteps[w_m] p_{m+1})
\end{align*}
in $(p,s^y,q)$.
In particular, all $s_i^{x_i}$ are above or equal to $s^y$.
Indeed, all such $s_i^{x_i}$ are above $r^x$ or equal to $r^x$ since $\kappa \in Y_{r^x}$.
Furthermore we claim that $\projdown_{\nort{r}} (w_1\cdots w_m)=1$:
First all runs $q_i \autsteps[w_i] p_{i+1}$ are in $\cA_s$.
We know $\projdown_{\nort{s}} (w_1\cdots w_m)=1$
and also $\projup_{\nort{s}}(w_1\cdots w_m)(v) = 0$ for all $v \in \vertdown_{\nort{r}}$ because $\kappa \in Y_{r^x}$.
Hence $\sigma$ is a realizable almost placeholder run in $(p,r^x,q)$ and therefore $\kappa \in E'_{(p,r^x,q)}$.

\item Suppose that $r^x$ is above $s^y$ and $W_{(p,r^x,q)}^{(k)}\cap Z_{r^x}\ne\emptyset$.
The goal is to show that $(p,r^x,q)\in E'_{(p,s^y,q)}$.
By induction hypothesis we know that there exists $\kappa \in E'_{(p,r^x,q)} \cap Z_{r^x}$,
which is the effect of a realizable almost placeholder run $\sigma$ in $(p,r^x,q)$.
In fact $\sigma$ is a realizable placeholder run since $\kappa \in Z_{r^x}$.
Therefore $(p,r^x,q)$ is realizable, and thus $(p,r^x,q)$ is a realizable almost placeholder run in $(p,s^y,q)$, with effect $(p,r^x,q)\in E_{(p,s^y,q)}'$.

\item Suppose that $(p',r^x,q')+\mu\in W_{(p,r^x,q)}^{(k)}$ and $\nu\in W_{(p',r^x,q')}^{(k)}$.
The goal is to prove $\nu+\mu\in E'_{(p,r^x,q)}$.
By induction hypothesis $(p',r^x,q')+\mu\in E'_{(p,r^x,q)}$ and $\nu\in E'_{(p',r^x,q')}$,
i.e. $(p',r^x,q')+\mu$ is the effect of a realizable almost placeholder run $\sigma_1$ in $(p,r^x,q)$
and $\nu$ is the effect of a realizable almost placeholder run $\sigma_2$ in $(p',r^x,q')$.
Let $\sigma$ be obtained from $\sigma_1$ by replacing any occurrence of the placeholder $(p',r^x,q')$
by $\sigma_2$.
Observe that $\sigma$ is a realizable almost placeholder run in $(p,r^x,q)$ with effect $\nu + \mu$,
and hence $\nu+\mu\in E'_{(p,r^x,q)}$.
\end{enumerate}

For the converse direction, we show for all $\tau\in\rho(\cA)$ and all realizable almost placeholder runs $\sigma$
that its effect belongs to $W_\tau$,
by lexicographic induction, where we first order by the {\em weight} of $\sigma$ and then by the ``below'' order
on the node descriptions $s^x$ in $\tau$.
For every realizable almost placeholder run $\sigma$ in $\tau$ we inductively define a {\em weight}:
If $\sigma = \sigma_0(p_1,s_1^{x_1},q_1)\sigma_1\cdots (p_m,s_m^{x_m},q_m)\sigma_m$
then its weight is $|\sigma_0 \dots \sigma_m|+\sum_{i=1}^m \omega_i$
where $|\sigma_0 \dots \sigma_m|$ is the total number of edges used in the runs $\sigma_i$,
and $\omega_i$ is the minimal weight of a realizable placeholder run in $(p_i,s_i^{x_i},q_i)$.

For the induction base we consider a realizable almost placeholder run consisting of a single transition $(p,w,q)$ in $(p,\nort{s},q)$
for some leaf $s$.
Its effect is contained in $E'^\emptyset_{(p,\nort{s},q)} \subseteq W_{(p,\nort{s},q)}$ by \eqref{w1}.

For the induction step consider a realizable almost placeholder run
\[
	\sigma = \sigma_0 (p_1,s_1^{x_1},q_1) \sigma_1 \cdots (p_m,s_m^{x_m},q_m) \sigma_m
\]
in $(p,s^x,q)$,
i.e. (i)~$p=q_0$ and $q=p_{m+1}$, (ii)~for $i\in[1,m]$, $s_i^{x_i}$ is above or equal to $s^x$ and $(p_i,s_i^{x_i},q_i)$ is realizable,
(iii)~$\sigma_i\colon q_i\autsteps[w_i]p_{i+1}$ is a run in $\cA_s$ for each $i\in[0,m]$
and (iv)~$\projdown_{\nort{s}} (w_1\cdots w_m) = 1$.
\begin{enumerate}
\item If $s^x = \rt{s}$ then $\sigma$ is a realizable placeholder run in $(p,\nort{s},q)$.
By induction hypothesis the effect $\kappa$ of $\sigma$ is contained in $W_{(p,\nort{s},q)}$.
From $\kappa \in Z_{\nort{s}} = Y_{\rt{s}}$ and \eqref{w2} we obtain $\kappa \in W_{(p,\rt{s},q)}$.

\item Now assume that $s^x = \nort{s}$ and let $t_1, \dots, t_k$ be the immediate subtrees of $s$.
Let $u$ be the projection of $w_1 \dots w_m$ to the alphabet $\vertdown_{\nort{s}} \cup \bar \vertdown_{\nort{s}}$,
which satisfies $u \equiv_\Gamma \varepsilon$.
We can uniquely factorize $u = u_1 \dots u_n$ where $u_i \in (\vertdown_{\rt{t}_{i_j}} \cup \bar \vertdown_{\rt{t}_{i_j}})^+$ for some $i_1, \dots, i_n \in [1,k]$
with $i_j \neq i_{j+1}$ for all $j \in [1,n-1]$.
\begin{enumerate}
\item If $n = 1$ then $u \in (\vertdown_{t_{i_1}} \cup \bar \vertdown_{t_{i_1}})^*$
and hence the runs $\sigma_0, \dots, \sigma_m$ are runs in $\cA_{t_{i_1}}$.
Therefore $\sigma$ is a realizable placeholder run in $(p,\rt{t}_{i_1},q)$.
By induction hypothesis the effect $\kappa$ of $\sigma$ is contained in $W_{(p,\rt{t}_{i_1},q)}$.
From $\kappa \in Z_{\rt{t}_{i_1}} \subseteq Y_{\nort{s}}$ and \eqref{w2} we obtain $\kappa \in W_{(p,\nort{s},q)}$.
\item If $n > 2$ then there exists $\ell \in [1,n]$ such that $u_\ell \equiv_\Gamma \varepsilon$ by the properties of a free product of monoids.
There exists a placeholder run $\sigma'$ contained in $\sigma$ of the form
\[
	\sigma' = \sigma_i' (p_i,s_i^{x_i},q_i) \sigma_{i+1} \cdots \sigma_{j-1} ( p_j,s_j^{x_j},q_j) \sigma_j',
\]
where the prefix $\sigma_i' \colon q_{i-1}' \autsteps[w_i'] p_{i}$ is a suffix of $\sigma_i$,
and the suffix $\sigma_j' \colon q_j \autsteps[w_j'] p_{i+1}'$ is a prefix of $\sigma_j$,
and $u_\ell$ is the projection of $w_i' w_{i+1} \dots w_{j-1} w_j'$ to $\vertdown_{\nort{s}} \cup \bar \vertdown_{\nort{s}}$.
Since $\sigma$ is a realizable almost placeholder run in $(p,\nort{s},q)$,
$\sigma'$ is a realizable almost placeholder run in $(q_{i-1}',\nort{s},p_{i+1}')$.
Observe that $(q_{i-1}',\nort{s},p_{i+1}')$ is realizable
since we can replace in $\sigma'$ all placeholders of the form $(p_h,\nort{s},q_h)$ by realizable placeholder runs in $(p_h,\nort{s},q_h)$
to get a realizable placeholder run in $(q_{i-1}',\nort{s},p_{i+1}')$.
Let $\nu$ be the effect of $\sigma'$ and $\mu$ be the effect of $\sigma$ without $\sigma'$.
We can then replace $\sigma'$ in $\sigma$ by the realizable placeholder $(q_{i-1}',\nort{s},p_{i+1}')$
to obtain an almost placeholder run $\sigma''$ in $(p,\nort{s},q)$ with effect $(q_{i-1}',\nort{s},p_{i+1}') + \mu$.
Observe that $\sigma'$ and $\sigma''$ have smaller weight than $\sigma$.
By induction hypothesis we know that $\nu \in W_{(q_{i-1}',\nort{s},p_{i+1}')}$
and $(q_{i-1}',\nort{s},p_{i+1}') + \mu \in W_{(p,\nort{s},q)}$.
By \eqref{w4} the effect $\nu + \mu$ of $\sigma$ belongs to $W_{(p,\nort{s},q)}$, as desired.
\end{enumerate}
\end{enumerate}
\end{proof}

\subsubsection*{A saturation procedure}
Consider the definition of $W_\tau$. In order to argue that an element belongs
to $W_\tau$, we need to apply two kinds of steps alternatingly: (i)~produce new
elements in $W_\tau$ using rules (\ref{w-1})--(\ref{w-4}) and (ii)~observe that
$W_\tau\cap Z_\tau$ has become non-empty so as to enable more rules among
(\ref{w-1})--(\ref{w-4}). Here, our key idea is to use grammars to decide if
applying rules (\ref{w-1})--(\ref{w-4}) proves a set $W_\tau\cap Z_\tau$ non-empty.
Hence, we assume that for a certain set $R\subseteq\rho(\cA)$ we have already established
that $W_\tau\cap Z_\tau\ne\emptyset$ for every $\tau\in R$. Then, we construct grammars
and decide their emptiness to check if this leads to more $\tau$ such that $W_\tau\cap Z_\tau$ is non-empty. 

Let us make this formal. Let $R\subseteq\rho(\cA)$ be a subset. 
We inductively define the tuple $(W^R_\tau)_{\tau\in \rho(\cA)}$, where $W^R_{\tau}\subseteq \N^{\rho(\cA)} + \Z^{V}$ for every $\tau\in\rho(\cA)$. It is the smallest tuple such that
\begin{enumerate}
	\item If $s$ is a leaf, then $E'^R_{(p,\nort{s},q)}\subseteq W^R_{(p,\nort{s},q)}$.
	\item If $r^x$ is above $s^y$ and $(p,s^y,q)\in R$, then $W^R_{(p,s^y,q)}\cap Y_{r^x}\subseteq W^R_{(p,r^x,q)}$.
	\item If $r^x$ is above $s^y$ and $(p,r^x,q)\in R$, then $(p,r^x,q)\in W^R_{(p,s^y,q)}$.
	\item If $(p',r^x,q')+\bu\in W^R_{(p,r^x,q)}$ and $\bv\in W^R_{(p',r^x,q')}$, then $\bv+\bu\in W^R_{(p,r^x,q)}$.
\end{enumerate}

We now perform the procedure outlined above: We start with $R^{(0)}=\emptyset$
and then set $R^{(i+1)}=\{\tau\in\rho(\cA) \mid W_\tau^{R^{(i)}}\cap
Z_\tau\ne\emptyset\}$.  Then we clearly have $R^{(0)}\subseteq
R^{(1)}\subseteq\cdots$ and thus there is some $n$ with $R^{(n+1)}=R^{(n)}$.
The following is immediate from the definition of $W_\tau$.
\begin{proposition}\label{saturation-placeholder-runs}
	For every $\tau\in\rho(\cA)$, we have $W^{R^{(n)}}_\tau=W_\tau$.
\end{proposition}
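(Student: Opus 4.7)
The plan is to prove both inclusions $W_\tau^{R^{(n)}} \subseteq W_\tau$ and $W_\tau \subseteq W_\tau^{R^{(n)}}$ via a standard Knaster--Tarski-style argument, exploiting that $R^{(n)}$ is a fixed point of the map $R \mapsto \{\tau \mid W_\tau^{R} \cap Z_\tau \ne \emptyset\}$.

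For the easier direction $W_\tau^{R^{(n)}} \subseteq W_\tau$, I would first observe monotonicity: if $R \subseteq R'$, then $W_\tau^R \subseteq W_\tau^{R'}$ for every $\tau$, by an induction on the derivation of an element of $W_\tau^R$ (noting in particular that $E'^{R}_\tau \subseteq E'^{R'}_\tau$ since every $R$-placeholder run is an $R'$-placeholder run). Set $R^\ast := \{\tau \in \rho(\cA) \mid W_\tau \cap Z_\tau \ne \emptyset\}$. I would then show that whenever $R \subseteq R^\ast$, we have $W_\tau^R \subseteq W_\tau$ for every $\tau$: each of the four clauses defining $W_\tau^R$ is subsumed by the corresponding clause defining $W_\tau$ when $R \subseteq R^\ast$. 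By induction on $i$, I then conclude $R^{(i)} \subseteq R^\ast$: the base case $R^{(0)} = \emptyset$ is trivial, and for the step, $R^{(i+1)} = \{\tau \mid W_\tau^{R^{(i)}} \cap Z_\tau \ne \emptyset\} \subseteq \{\tau \mid W_\tau \cap Z_\tau \ne \emptyset\} = R^\ast$ by the induction hypothesis and the previous observation. In particular $R^{(n)} \subseteq R^\ast$, hence $W_\tau^{R^{(n)}} \subseteq W_\tau$.

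For the converse $W_\tau \subseteq W_\tau^{R^{(n)}}$, I would argue by minimality of $W_\tau$: it suffices to show that the tuple $(W_\tau^{R^{(n)}})_{\tau \in \rho(\cA)}$ satisfies all four defining clauses of $W_\tau$. The key ingredient here is the fixed-point equality $R^{(n)} = R^{(n+1)} = \{\tau \mid W_\tau^{R^{(n)}} \cap Z_\tau \ne \emptyset\}$, which translates the self-referential ``realizability'' side condition in the definition of $W_\tau$ into the static condition ``$\tau \in R^{(n)}$'' appearing in the definition of $W_\tau^{R^{(n)}}$. Concretely, for clause~(\ref{w-1}), any $R \subseteq \{\tau \mid W_\tau^{R^{(n)}} \cap Z_\tau \ne \emptyset\} = R^{(n)}$ yields $E'^{R}_{(p,\nort{s},q)} \subseteq E'^{R^{(n)}}_{(p,\nort{s},q)} \subseteq W^{R^{(n)}}_{(p,\nort{s},q)}$ by monotonicity of $E'^{R}$ in $R$ together with clause~(1) of the definition of $W_\tau^{R^{(n)}}$. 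For clauses~(\ref{w-2}) and~(\ref{w-3}), the hypothesis $W_{(p,s^y,q)}^{R^{(n)}} \cap Z_{s^y} \ne \emptyset$ (resp.\ analogous for $r^x$) rewrites as $(p,s^y,q) \in R^{(n)}$ (resp.\ $(p,r^x,q) \in R^{(n)}$), so the corresponding clauses of $W_\tau^{R^{(n)}}$ apply directly. Clause~(\ref{w-4}) is syntactically identical in both definitions. Thus $(W_\tau^{R^{(n)}})_\tau$ satisfies the defining rules of $W_\tau$, and minimality of $W_\tau$ gives $W_\tau \subseteq W_\tau^{R^{(n)}}$.

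The main subtlety (rather than any real obstacle) is to isolate exactly why the fixed-point condition $R^{(n+1)} = R^{(n)}$ is needed: without it, the rewrite $\{\tau \mid W_\tau^{R^{(n)}} \cap Z_\tau \ne \emptyset\} = R^{(n)}$ that lets us translate the self-referential side condition into ``$\tau \in R^{(n)}$'' in clauses (\ref{w-1})--(\ref{w-3}) would fail. Since the $R^{(i)}$ form an increasing chain inside the finite set $\rho(\cA)$, the fixed point $n$ exists and is reached within $|\rho(\cA)|$ steps, so the procedure is well defined.
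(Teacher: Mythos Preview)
Your proposal is correct and is precisely the natural unpacking of what the paper leaves implicit: the paper states only that the proposition ``is immediate from the definition of $W_\tau$'' and gives no further argument. Your Knaster--Tarski-style proof---monotonicity in $R$, the induction $R^{(i)}\subseteq R^\ast$ for the first inclusion, and verifying that $(W_\tau^{R^{(n)}})_\tau$ is a pre-fixed point of the $W_\tau$-operator using the fixed-point equality $R^{(n+1)}=R^{(n)}$ for the second---is exactly the standard and intended way to discharge such a claim, so there is no meaningful difference in approach.
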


Thus, in order to decide $\REVREACH$ in $\compP$, it suffices to decide,
given a set $R^{(i)}$ as above, whether $W^R_\tau\cap Z_\tau$ is empty for each
$\tau\in\rho(\cA)$. To do this, we will construct certain grammars for which we
will show that emptiness can be decided in polynomial time. Here, we will use
the fact that each $R^{(i)}$ is obtained from the process above: We will assume
that $R\subseteq\rho(\cA)$ is \emph{admissible}, meaning that there exists some
$i$ with $R=R^{(i)}$.

\subsubsection*{The grammar construction} Let us now show how to construct a
grammar $G=(N,T,P)$ such that $N=\rho(\cA)$ and $L(\tau)=W_\tau^R$ for every
$\tau\in\rho(\cA)$. In our definition of grammars, we require that every
production either has the form (1)~$a\to \bu$ with $a\in N_0$ and $\bu\in
\N^{N_0}+\Z^T$ or (2)~$a\to b$ with $a\in N_i$ and $b\in N_j$ with $|i-j|=1$,
i.e. $a$ and $b$ are nonterminals on neighboring levels. For the construction,
it will be convenient to more generally allow productions $a\to b$
where $a\in N_i$ and $b\in N_j$ for $i\ne j$. If we have such a grammar,
one can introduce intermediate nonterminals to achieve the more restrictive format.
Moreover, this introduction of intermediate nonterminals preserves bidirectedness.

If we allow these more general productions, we need to specify how they
induce derivation steps. The relation $\deriv[0]$ is defined as
before. However, once $\deriv[i']$ is defined for all $i'<i$, we have
$\bv\deriv[i] \bv'$ iff there is an $a\in N_i$ with $\bv(a)>0$ and a production
$a\to a'$ for some $a'\in N_{i'}$ for some $i'<i$, and a $\bu\in L(a')\cap
\N^{N_{[i,k]}}+\Z^{T_{[i,k]}}$ such that $\bv'=\bv-a+\bu$. Here, we had to put
the intersection with $\N^{N_{[i,k]}}+\Z^{T_{[i,k]}}$, because if $a'\in
N_{i'}$ with $i'<i-1$, then $L(a')$ is not necessarily included in
$\N^{N_{[i,k]}}+\Z^{T_{[i,k]}}$.

Recall that $\Gamma=(V,I)$. The set $T$ will consist of the looped vertices of $\Gamma$, but also 
some auxiliary letters defined as follows. We pick an arbitrary linear order $\ll$ on $Q$. Then we have the letters
$\Theta=\{z_{p,\nort{s},q} \mid p\ll q, \text{$s$ is a leaf of $t$}\}$.
In other words, for any two states in $Q$ and each leaf $s$ of $t$, we create one letter in $\Theta$.
We set $T=\{v\in V \mid vIv\}\cup \Theta$.

\begin{figure}
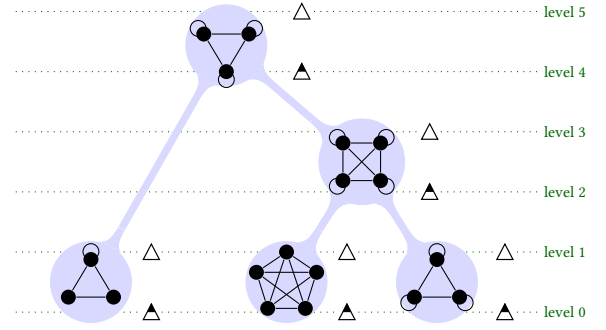

\treeLevels{1}
\caption{Example for choosing the level of $\rt{s}$ and $\nort{s}$ for each subtree $s$ of $t$. The division of the looped nodes into $\bigcup_{i\in[1,k]} T_i$ is obtained by placing each looped node into $T_i$, where $i$ is the level of the $\rt{s}$ directly above it.}
\label{tree-levels}
\end{figure}

First, we divide the sets $N=\rho(\cA)$ and $T$ into levels. The idea
is that types $(p,\nort{s},q)$ with leaves $s$ have level $0$ and the higher a
subtree $s$ is in $t$, the higher its level. Moreover, a type $(p,\rt{s},q)$
should have strictly higher level than $(p,\nort{s},q)$. Since
$\Gamma\in\SCpm_{d,\ell}$, we know that the height of $t$ is
$h(\Gamma)\le\ell$.  
We choose a map $\iota\colon \{\rt{s},\nort{s} \mid \text{$s$ is a subtree of $t$}\}\to [0,k]$ such that
(i)~for every leaf $s$ of $t$, we have $\iota(\nort{s})=0$, (ii)~if $r^x$ is above $s^y$, then $\iota(r^x)>\iota(s^y)$.
This can clearly be defined with some $k\le 2\ell$. Here, we need $2\ell$, because $\rt{s}$ and $\nort{s}$ must be on different levels. See \cref{tree-levels} for an example of how to choose the levels for each $s^x$.
This yields the partition $N=\bigcup_{i=0}^k N_i$ with $N_i=\{(p,s^x,q)\in\rho(\cA) \mid \iota(s^x)=i\}$.  Moreover, we set $T=\bigcup_{i=0}^k T_i$ where for $i\in[1,k]$, we have $v\in T_i$ if and only if
$\iota(\rt{s})=i$, where $s$ is the subtree whose root node contains $v$. Moreover, $T_0=\Theta$.

We begin by describing the productions $a\to\bu$ for $a\in N_0$ and
$\bu\in\N^{N_0}+\Z^T$. For this, we pick some leaf $s$ of $t$. Our goal is to guarantee 
$L(p,\nort{s},q)\ne\emptyset$ if and only if $W^R_{(p,\nort{s},q)}\cap Z_{\nort{s}}\ne \emptyset$ for any $p,q\in Q$.  To achieve this, we use
\cref{compute-coset-cliques}. We construct a valence system $\hat{\cA}_{s}$
over a graph $\tilde{\Gamma}_{s}$ as follows. The graph
$\tilde{\Gamma}_s=(\tilde{V}_s,\tilde{I}_s)$ has vertices $\tilde{V}_s=U_s\uplus L_s\uplus R_s$, where $U_s$ ($L_s$) is the set of unlooped (looped) vertices in $V$ that belong to $s$ and $R_s=\{(p',\nort{s},q')\in R\mid p',q'\in Q\}$. Moreover,
$\tilde{\Gamma}_s$ is a clique, and a vertex is looped if and only if it is
in $L_s\uplus R_s$. In other words, $\tilde{\Gamma}_s$ is constructed by taking the vertices
belonging to $s$ (which already form a clique) and adding for each
$(p',\nort{s},q')\in R_s$ another looped vertex. Then $\hat{\cA}_s$ is obtained from $\cA_s$
by adding, for any $(p',\nort{s},q')\in R_s$, an edge from $p'$ to $q'$ labeled by $(p',\nort{s},q')$,
and an edge from $q'$ to $p'$ labeled $-(p',\nort{s},q')$.
Now \cref{compute-coset-cliques}
allows us to compute $\bv,\bu_1,\ldots,\bu_n\in\Z^{L_s\cup R_s}$ such that
$\bv+\langle\bu_1,\ldots,\bu_n\rangle=\Eff_{\hat{\cA}_s}(p,q)$.
We now want to turn this coset representation into productions for the grammar.
For this, we need some notation. 
Then, given
a vector $\bu\in\Z^{U_s\cup R_s}$, define
$\alpha(\bu)\in\N^{R_s}+\Z^{U}$ by $\alpha(\bu)(p',\nort{s},q')=x+y$, where
\[ x=\begin{cases} \bu(p',\nort{s},q') & \text{if $\bu(p',\nort{s},q')\ge 0$} \\ 0 & \text{otherwise}\end{cases}\]
and
\[
y=\begin{cases} -\bu(q',\nort{s},p') & \text{if $\bu(q',\nort{s},p')<0$} \\ 0 & \text{otherwise}\end{cases}\]
Note that then $\alpha(\bu)\in\N^{R_s}+\Z^{U_s}$ and moreover, for any $p',q'\in Q$, we have $\alpha(\bu)(p',\nort{s},q')-\alpha(\bu)(q',\nort{s},p')=\bu(p',\nort{s},q')-\bu(q',\nort{s},p')$. 

Now we include the productions
\begin{align}
\begin{split}
	(p,\nort{s},q) &\to \alpha(\bv), \\ (p,\nort{s},q)&\to (p,\nort{s},q)+\alpha(\bu_j), \\  (p,\nort{s},q)&\to (p,\nort{s},q)+\alpha(-\bu_j), \label{productions-level-0a}
\end{split}
\end{align}
and
\begin{align}
\begin{split}
(q,\nort{s},p) &\to \alpha(-\bv), \\ (q,\nort{s},p)&\to (q,\nort{s},p)+\alpha(\bu_j), \\ (q,\nort{s},p)&\to (q,\nort{s},p)+\alpha(-\bu_j), \label{productions-level-0b}
\end{split}
\end{align}
for every $j\in[1,n]$.  Moreover, we include the production
\begin{equation} (p,\nort{s},q)\to (p,\nort{s},q)+(p,\nort{s},q)+(q,\nort{s},p) \label{productions-add}\end{equation}
for every $(p,\nort{s},q)\in R$.
Finally, we need productions that allow us to eliminate a pair $(p,\nort{s},q)$ and $(q,\nort{s},p)$.
If $p\ll q$, we add the two productions
\begin{align} (p,\nort{s},q)\to z_{p,\nort{s},q},&& (q,\nort{s},p)\to -z_{p,\nort{s},q}. \label{productions-delete}\end{align}
In addition to these level-$0$ productions, we also add the following for every $r^x$ and $s^y$:
\begin{enumerate}[label=(C\arabic*)]
	\item\label{productions-cross-a} If $r^x$ is above $s^y$ and $(p,s^y,q)\in R$, then we have a production $(p,r^x,q)\to (p,s^y,q)$.
	\item\label{productions-cross-b} If $r^x$ is above $s^y$ and $(p,r^x,q)\in R$, then we have a production $(p,s^y,q)\to (p,r^x,q)$.
\end{enumerate}

We prove the following:
\begin{proposition}\label{correctness-grammar}
	Given a graph $\Gamma$ in $\SCpm_{d,\ell}$ and a valence system $\cA$
	over $\Gamma$, and an admissible $R\subseteq\rho(\cA)$, we can
	construct in polynomial time a bidirected $k$-grammar $G=(N,T,P)$ with $k\le 2\ell$ such
	that $N=\rho(\cA)$ and $L(\tau)\ne\emptyset$ if and only if $W^R_\tau\cap Z_\tau\ne\emptyset$ for each $\tau\in\rho(\cA)$ with $\tau\in N_i$.
\end{proposition}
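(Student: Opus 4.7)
The plan is to verify three things about the construction: (i) it runs in polynomial time, (ii) the resulting grammar $G$ is bidirected, and (iii) it captures $W_\tau^R$ correctly in the sense that $L(\tau)\ne\emptyset$ iff $W_\tau^R\cap Z_\tau\ne\emptyset$. The bound $k\le 2\ell$ has been established during the definition of $\iota$.

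Polynomial-time construction is immediate once we unpack the data: $|\rho(\cA)|$ is polynomial in $|Q|$ and $|\Gamma|$; for each leaf $s$ and each pair $p,q\in Q$ we build the auxiliary clique $\tilde\Gamma_s$ and system $\hat\cA_s$ (polynomial in size since $d$ is bounded), and invoke \cref{compute-coset-cliques} to obtain $\bv$ and the generators $\bu_1,\dots,\bu_n$ in polynomial time. All remaining productions (the $+(q,\nort{s},p)$-loops, the deletion productions, and the cross-level productions $(C1)/(C2)$) are enumerable in polynomial time.

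For bidirectedness, I would fix the involution $\inv{(p,s^x,q)} = (q,s^x,p)$ (and $\inv{z_{p,\nort{s},q}} = -z_{p,\nort{s},q}$ on terminals, as forced). A first observation, which I would prove by induction on $i$ using the definition of $R^{(i)}$, is that every admissible $R$ is closed under this involution; this uses that $Z_{s^y}$ is itself involution-invariant. With this in hand I go through the five conditions.
\begin{itemize}
\item \ref{symmetry-production-inverse}: In \cref{productions-level-0a,productions-level-0b} the productions come in $\bv/{-\bv}$ and $\bu_j/{-\bu_j}$ pairs, and one checks that $\inv{\alpha(\bx)} = \alpha(-\bx)$ directly from the definition of $\alpha$; \cref{productions-add} is symmetric (using involution-closure of $R$); \cref{productions-delete} is inverse-paired by construction; and cross-level productions \ref{productions-cross-a}/\ref{productions-cross-b} are symmetric under swapping $p,q$.
\item \ref{symmetry-production-reverse}: For a level-$0$ production of the form $a\to a + \alpha(\pm\bu_j)$ we use the partner production $a\to a + \alpha(\mp\bu_j)$, which yields $a \derivs[0] a + \inv{\alpha(\pm\bu_j)}$ in one step. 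The only other level-$0$ productions with a nonterminal on the right-hand side are \cref{productions-add}, for which reversibility is handled via the generator productions \cref{productions-level-0b} combined with \cref{productions-delete}.
\item \ref{symmetry-reverse-nton}: This is guaranteed by the pairing of \ref{productions-cross-a} and \ref{productions-cross-b}, again using involution-closure of $R$.
\item \ref{symmetry-rhs}: This is exactly the content of admissibility of $R$ combined with the semantic correctness below; I prove it by induction on the level as part of (iii).
\item \ref{symmetry-add-inverse-pairs}: For each $a=(p,\nort{s},q)\in R$ this is precisely \cref{productions-add}.
\end{itemize}

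The main obstacle is the semantic correctness $L(\tau)\ne\emptyset \iff W_\tau^R \cap Z_\tau \ne \emptyset$, which I would prove by induction on $\iota(s^x)$. The level-$0$ base case is the delicate one: one must show that for a leaf $s$, the set of $\bu \in W^R_{(p,\nort{s},q)}\cap Z_{\nort{s}}$ corresponds exactly to derivations in $G$ starting from $(p,\nort{s},q)$ that eliminate all level-$0$ terminals $z_{\cdot}$. The forward direction uses that every effect of an almost $R$-placeholder run in $(p,\nort{s},q)$ (with respect to a fixed choice of which placeholders appear) is a $\Z$-linear combination $\bv + \sum x_j \bu_j$ lying in $\Eff_{\hat\cA_s}(p,q)$, and the productions \cref{productions-level-0a,productions-level-0b} precisely realize this span; pairs $(p',\nort{s},q')+(q',\nort{s},p')$ arising from negative coefficients are exactly what \cref{productions-add} and \cref{productions-delete} let us generate and then eliminate. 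The reverse direction follows from the coset characterization provided by \cref{compute-coset-cliques}. For the inductive step, the cross-level productions \ref{productions-cross-a}/\ref{productions-cross-b} translate rules~(2)/(3) in the definition of $W_\tau^R$, rule~(4) is realized by iterating placeholder substitutions within a derivation at the current level, and rule~(1) is handled by the base case; the extra freedom of producing $a+\inv a$ via \ref{symmetry-add-inverse-pairs} does not create new elements of $L(\tau)$ because such pairs can always be eliminated using \ref{symmetry-production-reverse} and \ref{symmetry-rhs}. Finally, since each right-hand-side nonterminal $(p',s^y,q')$ in any production is by construction in $R$, admissibility gives $W^R_{(p',s^y,q')}\cap Z_{(p',s^y,q')}\ne\emptyset$, which by the inductive correctness statement yields $L((p',s^y,q'))\ne\emptyset$, establishing \ref{symmetry-rhs}.
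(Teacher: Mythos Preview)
Your overall structure is right and matches the paper: polynomial-time construction is routine, the semantic correctness is shown in two directions (the paper splits this into Propositions proving $W^R_\tau\cap Z_\tau\subseteq L(\tau)$ and the converse up to an element of $\Delta_i$), and bidirectedness is verified condition by condition. Involution-closure of admissible $R$ is indeed needed and your inductive argument for it is fine.

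There is, however, a genuine gap in your treatment of condition~\ref{symmetry-production-reverse}. You write that ``the only other level-$0$ productions with a nonterminal on the right-hand side are \cref{productions-add}'', but this is false: the productions $(p,\nort{s},q)\to\alpha(\bv)$ and $(p,\nort{s},q)\to(p,\nort{s},q)+\alpha(\pm\bu_j)$ from \cref{productions-level-0a,productions-level-0b} can have \emph{arbitrary} nonterminals $b=(p_1,\nort{s},q_1)\in R_s$ on the right-hand side, coming from the $\N^{R_s}$-part of $\alpha(\cdot)$. Condition~\ref{symmetry-production-reverse} requires, for every such $b$, a derivation $b\derivs[0](p,\nort{s},q)+\inv{\bu}$. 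Your ``partner production'' argument only handles the easy case $b=(p,\nort{s},q)$, and there is no partner production available when $b\ne a$.

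The paper fills this gap with a separate lemma: given such a production, one unpacks $\alpha(\bv)$ as the effect of a run in $\hat\cA_s$ from $p$ to $q$ that passes through the placeholder $(p_1,\nort{s},q_1)$; using bidirectedness of $\hat\cA_s$ one reverses the prefix up to $p_1$ and the suffix from $q_1$, obtaining a run from $p_1$ to $q_1$ that visits $(p,\nort{s},q)$ and whose effect is exactly $(p,\nort{s},q)+\inv{(\bx+\by)}$. This reversed run lies in $\Eff_{\hat\cA_s}(p_1,q_1)$, so by the coset representation and the productions for $(p_1,\nort{s},q_1)$ one gets $b\derivs[0] a+\inv{\bu}$. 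This argument is not a formality---it genuinely uses that $\hat\cA_s$ was built from a bidirected system---and your sketch does not supply it.
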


Before we verify \cref{correctness-grammar}, let us see how it implies \cref{bireach-to-grammars}.
\begin{proof}[Proof of \cref{bireach-to-grammars}]
	By \cref{revreach-placeholder-runs}, we have to decide
	whether $E_{(p,\rt{s},q)}\ne\emptyset$. By
	\cref{equivalence-placeholder-runs}, this is equivalent to
	$W_{(p,\rt{t},q)}\cap Z_{\rt{t}}\ne\emptyset$. In order to decide the
	latter, we apply the saturation procedure described after
	\cref{saturation-placeholder-runs}. We start with $R^{(0)}$ and then
	compute the set $R^{(i+1)}$ from $R^{(i)}$. Now
	\cref{correctness-grammar} allows us to construct for $R^{(i)}$ in
	polynomial time a bidirected $k$-grammar $G$ with $k\le 2\ell$ such
	that $\tau\in R^{(i+1)}$ if and only if $L(\tau)\ne\emptyset$.  Thus,
	by invoking the emptiness problem for $G$, we can compute $R^{(i+1)}$
	from $R^{(i)}$.  In the end, we have reachability if and only if
	$(p,\rt{t},q)\in R^{(n)}$.

	We invoke emptiness of bidirected grammars at most polynomially many
	times, because the saturation must finish in polynomially many steps:
	Each $R^{(i)}$ is a subset of $\rho(\cA)$, which has polynomially many
	elements, we must have $R^{(n+1)}=R^{(n)}$ for some polynomially
	bounded $n$.  

	Finally, since $k\le 2\ell$ in \cref{correctness-grammar}, we indeed
	invoke emptiness of bidirected grammars only for $k\le 2\ell$ if the
	input graphs are drawn from $\SCpm_{d,\ell}$.
\end{proof}

Let us observe that the same proof also yields an exponential space Turing
reduction if the input graphs are drawn from all of $\SCpm$. Here, it is
important to notice that although this is an exponential space reduction, it
yields $k$-grammars where $k$ and the number of productions is still
polynomial. However, the numbers occurring in the productions can be doubly
exponential (and thus use exponentially many bits).
\begin{proposition}\label{bireach-to-grammars-uc-unbounded}
	There is an exponential space Turing reduction from $\REVREACH(\SCpm)$
	to emptiness of bidirected $k$-grammars. Here, the resulting
	$k$-grammars will have $k$ and the number of productions bounded
	polynomially in the input.  Moreover, the numbers occurring in
	productions require at most exponentially many bits.
\end{proposition}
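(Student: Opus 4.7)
The plan is to rerun the proof of \cref{bireach-to-grammars} almost verbatim, modifying only the single step that invokes \cref{compute-coset-cliques}. The overall architecture---tree decomposition of $\Gamma$, placeholder runs, the saturation $R^{(0)} \subseteq R^{(1)} \subseteq \cdots$, and the grammar construction of \cref{correctness-grammar}---transfers directly. Since any $\Gamma \in \SCpm$ is still a transitive forest, its height is bounded by the number of vertices, so the level parameter $k \le 2h(\Gamma)$ remains polynomial in the input. The set $\rho(\cA)$ of placeholders also has polynomial cardinality, so the saturation terminates after polynomially many rounds.

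The single change is in the construction of the level-$0$ productions \eqref{productions-level-0a}--\eqref{productions-delete}. For each leaf $s$ of the decomposition tree and each pair of states $p,q$, we must compute a coset representation $\bv + \langle \bu_1,\ldots,\bu_n\rangle = \Eff_{\hat{\cA}_s}(p,q)$ of the bidirected clique system $\hat{\cA}_s$ over $\tilde{\Gamma}_s$. In the UC-bounded setting of \cref{bireach-to-grammars}, \cref{compute-coset-cliques} returns this in polynomial time with polynomial-size data. For $\cG = \SCpm$, the cliques $\tilde{\Gamma}_s$ may be arbitrarily large, so we invoke instead the exponential-space branch of \cref{compute-coset-cliques}. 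Inspecting the proof of that statement, the representation it returns still has only polynomially many period vectors $\bu_j$ (their count is bounded by $|L_s| + |R_s|$, which is polynomial), while the entries of $\bv$ and the $\bu_j$ may have up to exponentially many bits (doubly exponential magnitude), reflecting the exponential-space bound.

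Consequently, the resulting grammar has polynomially many productions (the cross-level productions \ref{productions-cross-a} and \ref{productions-cross-b} are also polynomial in number), while each production has exponential bit size. Bidirectedness is preserved: the involution $(p,s^x,q) \mapsto (q,s^x,p)$ still works, and properties \ref{symmetry-production-inverse}--\ref{symmetry-add-inverse-pairs} follow exactly as in the proof of \cref{correctness-grammar}, since the productions \eqref{productions-level-0a}--\eqref{productions-delete} come in matching pairs by construction, and the coset representation is invariant under negation. The saturation loop calls the grammar-emptiness oracle polynomially many times, each call on a grammar that can be written in exponential space.

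The main thing to verify carefully is the bit-complexity bookkeeping. Specifically, one must ensure that the exponential-space invocation of \cref{compute-coset-cliques} genuinely produces numbers of at most exponential bit length; this follows because the algorithm manipulates objects that fit into exponential space, so no single integer can exceed $2^{2^{O(n)}}$ in magnitude. A minor additional check is that the involution and the productions \eqref{productions-add} only add a polynomial number of extra productions and do not inflate bit sizes. With these in place, the reduction runs within exponential space, uses a grammar-emptiness oracle polynomially often, and outputs bidirected $k$-grammars meeting all three quantitative requirements of the proposition.
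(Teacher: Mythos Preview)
The proposal is correct and follows essentially the same approach as the paper, which also proves this proposition by rerunning the construction behind \cref{bireach-to-grammars} and replacing only the polynomial-time invocation of \cref{compute-coset-cliques} by its exponential-space branch. Your added bookkeeping (polynomial bound on $k$ via $h(\Gamma)\le |V|$, polynomial number of period vectors via the dimension bound on subgroups of $\Z^{L_s\cup R_s}$, and the doubly-exponential magnitude bound on entries) makes explicit what the paper leaves as a brief remark.
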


The rest of this section is devoted to proving \cref{correctness-grammar}.

\subsubsection*{Languages generated by the constructed grammar}
We prove that $L(\tau)\ne\emptyset$ if and only $W^R_\tau\ne\emptyset$.
We prove that $L(\tau)\ne\emptyset$ if and only if $W^R_\tau\cap
Z_\tau\ne\emptyset$ by showing that (i)~$W^R_\tau\cap Z_\tau\subseteq L(\tau)$ for every $\tau\in\rho(\cA)$ and
that (ii)~for every $\bu\in L(\tau)$ with $\tau\in N_i$, there exists a
$\bd\in\Delta_i$ such that $\bu+\bd\in W^R_\tau\cap Z_\tau$. Here, $\Delta_i$ is defined as follows. For $i\in[0,k-1]$, we define $\Delta_i\subseteq\N^{\rho(\cA)}$ as the submonoid generated by all $a+\inv{a}$ with $a\in R_{[i+1,k]}$. Moreover, $\Delta_k$ consists just of $\bzero$. 
Thus, we establish the following two facts.

\begin{proposition}\label{grammar-equivalence-a}
	For every $\tau\in\rho(\cA)$, we have $W^R_\tau\cap Z_\tau\subseteq L(\tau)$.
\end{proposition}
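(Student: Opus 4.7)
We prove the inclusion by structural induction on the inductive definition of $W^R_\tau$, strengthening the statement to: for every $\tau = (p,s^x,q) \in N_i$ and every $\bu \in W^R_\tau$, there is a grammar derivation $\tau \derivs[i] \bu$. The proposition then follows: if additionally $\bu \in Z_\tau$, then the choice of $\iota$ together with the fact that $W^R_\tau$-elements only carry placeholders at or above $s^x$ in $t$ forces $\bu \in \N^{N_{[i+1,k]}} + \Z^{T_{[i+1,k]}}$, hence $\bu \in L(\tau)$.

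The straightforward inductive steps are rules~(2), (3), and (4). Rule~(4) uses the closure of $\derivs[i]$ under substitution: given $(p,r^x,q) \derivs[i] (p',r^x,q') + \bu$ and $(p',r^x,q') \derivs[i] \bv$ by the IH, substituting the second into the first yields $(p,r^x,q) \derivs[i] \bv + \bu$. For rule~(3), with $(p,r^x,q) \in R$, the cross-level production $(p,s^y,q) \to (p,r^x,q)$ from \ref{productions-cross-b}, refined through adjacent-level intermediates, supplies the step $(p,s^y,q) \derivs[\iota(s^y)] (p,r^x,q)$; the extended semantics permitting down-steps that span multiple levels via $L$-derivations lets the chain be realized within a single $\derivs[\iota(s^y)]$ execution. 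For rule~(2), the cross-level production $(p,r^x,q) \to (p,s^y,q)$ from \ref{productions-cross-a} provides a down-step; combined with the IH derivation from $(p,s^y,q)$, the hypothesis $\bu \in Y_{r^x}$ certifies that $\bu$ satisfies the level constraint required for insertion into a $\deriv[\iota(r^x)]$ step via $L((p,s^y,q))$.

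The substantive case is rule~(1), where $\tau = (p,\nort{s},q) \in N_0$ with leaf $s$ and $\bu \in E'^R_{(p,\nort{s},q)}$ is the effect of an almost $R$-placeholder run $\sigma_0 (p_1,s_1^{x_1},q_1) \sigma_1 \cdots (p_m,s_m^{x_m},q_m) \sigma_m$. We split the placeholders occurring in $\bu$ into those at $\nort{s}$ (lying in $R_s$) and those strictly above $\nort{s}$; the strictly-above contributions are introduced by cross-level productions as in rule~(3). For the remaining clique contribution comprising the $R_s$-placeholders together with $\projup_{\nort{s}}(w_1 \cdots w_m) \in \Z^{V_s}$, we view it as the effect of a path in the bidirected valence system $\hat{\cA}_s$ over the clique $\tilde{\Gamma}_s$. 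By \cref{compute-coset-cliques} this effect lies in $\Eff_{\hat{\cA}_s}(p,q) = \bv + \langle \bu_1, \ldots, \bu_n \rangle$, and the level-$0$ productions \cref{productions-level-0a,productions-level-0b} let us add $\alpha(\bv)$ and integer multiples of the $\alpha(\pm \bu_j)$. The map $\alpha$ encodes each signed $R_s$-component as a placeholder paired with its formal reverse; these paired placeholders are spawned via \cref{productions-add} and ultimately eliminated via \cref{productions-delete} using the auxiliary terminals $\pm z_{p',\nort{s},q'}$, yielding $\bu$ with no residual $R_s$-placeholders or $T_0$-terminals.

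The main obstacle is the bookkeeping in the base case: one must schedule the applications of \cref{productions-level-0a,productions-level-0b,productions-add,productions-delete} so that every signed $R_s$-occurrence arising from a coset generator is properly paired with its reverse and cancelled, producing exactly $\bu$. A secondary technical point is verifying in rules~(2) and~(3) that the inductive structure of $W^R_\tau$ keeps all placeholders at or above the relevant subtree $s^y$ in $t$, so that $Y_{r^x}$ translates to the grammar-level constraint imposed by the respective $L$-set and the cross-level derivation steps are admissible.
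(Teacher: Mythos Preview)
Your proposal is correct and takes essentially the same route as the paper: strengthen to showing $\tau\derivs[i]\bu$ for every $\bu\in W^R_\tau$, induct on rule applications, handle rules~(2)--(4) via the cross-level productions and substitution, and handle the leaf case via the coset representation $\Eff_{\hat\cA_s}(p,q)=\bv+\langle\bu_1,\dots,\bu_n\rangle$ together with the level-$0$ productions. Two small points the paper makes explicit that you gloss over: first, in the base case one needs admissibility of $R$ to conclude $(p_j,\nort{s},q_j)\in R$ from $(p_j,s_j^{x_j},q_j)\in R$, so that the $R_s$-placeholder exists before the cross-level production is applied; second, \cref{productions-add} is not needed for this direction, because after deriving $\by=\alpha(\bv)+\sum_j x_j\alpha(\bu_j)+x'_j\alpha(-\bu_j)$ the difference $\by-\bx$ is already a nonnegative sum of pairs $c+\inv c$, which are eliminated purely via \cref{productions-delete} (this is the paper's \cref{eliminate-inverse-pair}).
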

\begin{proposition}\label{grammar-equivalence-b}
	For every $\tau\in\rho(\cA)$ with $\tau\in N_i$ and every $\bu\in L(\tau)$, there exists a $\bd\in\Delta_i$ such that $\bu+\bd\in W^R_\tau\cap Z_\tau$.
\end{proposition}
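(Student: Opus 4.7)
The plan is to prove \cref{grammar-equivalence-b} by outer induction on the level $i$ and, within each level, by induction on the length of the derivation $\tau \derivs[i] \bu$. Since intermediate configurations in a level-$i$ derivation may carry several level-$i$ nonterminals simultaneously, it is convenient to strengthen the statement to any configuration $\bv \derivs[i] \bu$ with $\bv \in \N^{R_i} + \N^{N_{[i+1,k]}} + \Z^{T_{[i,k]}}$: we would show that there exists $\bd \in \Delta_i$ such that $\bu + \bd$ agrees with a sum of elements of $W^R_a$ over the nonterminals $a$ appearing in $\bv$.

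For the base case $i = 0$ and $\tau = (p, \nort{s}, q)$ with $s$ a leaf of $t$, I would proceed by case analysis on the productions used. The productions \eqref{productions-level-0a} and \eqref{productions-level-0b} trace elements of the coset representation of $\Eff_{\hat{\cA}_s}(p, q)$ computed via \cref{compute-coset-cliques}; by construction of $\hat{\cA}_s$, every such element is the effect of a realizable placeholder run in $(p, \nort{s}, q)$ using placeholders in $R_s$, so it lies in $W^R_\tau$ by rule~(\ref{w-1}) of the $W^R$-definition. The productions \eqref{productions-add} and \eqref{productions-delete} together introduce matched pairs $(p, \nort{s}, q) + (q, \nort{s}, p)$ that are ultimately eliminated by the corresponding $\pm z_{p, \nort{s}, q}$ terminals, contributing zero to the final $\bu$. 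The cross-level productions \ref{productions-cross-b} realize rule~(\ref{w-3}) of the $W^R$-definition directly. Combining these through rule~(\ref{w-4}) yields $\bu \in W^R_\tau \cap Z_\tau$, i.e., $\bd = \bzero$ works.

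For the inductive step at level $i > 0$ with $\tau = (p, r^x, q)$, each $\deriv[i]$ step consumes a level-$i$ nonterminal $a$ via a cross-level production $a \to a'$ with $a' \in N_{j}$, $j < i$, and inserts some $\bv_a \in L(a')$. The inductive hypothesis at level $j$ yields a correction $\bd_a \in \Delta_j$ with $\bv_a + \bd_a \in W^R_{a'} \cap Z_{a'}$; rule~(\ref{w-2}) of $W^R$ then promotes this to $W^R_a$, using $a \in R$ which holds since $a$ appeared in the derivation and $R$ is admissible. Concatenating via rule~(\ref{w-4}) and using rule~(\ref{w-3}) for the initial step from $\tau$, the total effect assembles into $\bu + \bd \in W^R_\tau \cap Z_\tau$ for some correction $\bd$.

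The main obstacle is translating the corrections from $\Delta_j$ (which may contain pairs $b + \inv{b}$ at levels in $[j+1, i]$) into a correction in $\Delta_i$ (which excludes pairs up to level $i$). The resolution hinges on the fact that $\bu$ has no nonterminals at levels $\le i$: any pair $b + \inv{b}$ with $b \in R_{[j+1,i]}$ appearing in the accumulated $\Delta_j$-correction must be consumed by subsequent $\deriv[i']$ steps (at the appropriate level $i'$) that rewrite both $b$ and $\inv{b}$. This is possible thanks to the symmetric design of the grammar, in particular the fact that productions \eqref{productions-level-0b} are the componentwise inverses of those in \eqref{productions-level-0a} and productions \eqref{productions-add} and \eqref{productions-delete} supply the necessary pairing and cancellation. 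After these cancellations, only pairs at levels strictly above $i$ remain, which constitute the desired $\bd \in \Delta_i$.
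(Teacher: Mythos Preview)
The genuine gap is in your base case. You claim $\bd = \bzero$ works at level~$0$, arguing that productions \eqref{productions-add} and \eqref{productions-delete} introduce and then cancel pairs, ``contributing zero to the final $\bu$''. That is true of $\bu$, but it does not place $\bu$ in $W^R_\tau$. If the derivation $\tau \derivs[0] \bu$ uses \eqref{productions-delete}, then (since $\bu \in L(\tau)$ carries no $z$-terminals) those applications come in matched pairs eliminating some $b_j + \inv{b_j}$ with $b_j \in R_0$. The configuration $\bu' = \bu + \sum_j (b_j + \inv{b_j})$ reached before them \emph{is} the effect of an almost $R$-placeholder run, hence $\bu' \in W^R_\tau$ by rule~(\ref{w-1}); but rule~(\ref{w-4}) does not let you \emph{delete} $b_j$ and $\inv{b_j}$ --- it only lets you \emph{substitute} each by some $\nu_j \in W^R_{b_j}$ (respectively $\inv{\nu_j} \in W^R_{\inv{b_j}}$), which exist because $b_j \in R$ and $R$ is admissible. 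One obtains $\bu + \bd \in W^R_\tau$ with $\bd = \sum_j (\nu_j + \inv{\nu_j}) \in \Delta_0$, generally nonzero. This is exactly what the paper does in \cref{grammar-equivalence-base-case}.

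The same misconception reappears in your inductive step. The low-level pairs in the $\Delta_j$-correction are not ``consumed by subsequent $\deriv[i']$ steps'': the grammar derivation producing $\bu$ is already fixed, and those pairs live in the correction vector, not in~$\bu$. They have to be removed by the same substitution mechanism, now applied inside $W^R_\tau$. The paper arranges this via an inner induction on the length of the level-$i$ derivation: for a single $\deriv[i]$ step one invokes the outer hypothesis at level $i-1$ (getting $\bd' \in \Delta_{i-1}$), promotes to $W^R_\tau$ via rule~(\ref{w-2}), and then substitutes away the remaining level-$i$ pairs via rule~(\ref{w-4}); the multi-step case peels off the last step and combines via rule~(\ref{w-4}).
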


Observe that \cref{grammar-equivalence-a,grammar-equivalence-b} together imply that $L(\tau)\ne\emptyset$ if and only if $W^R_\tau\ne\emptyset$.

We begin with \cref{grammar-equivalence-a}. We show \cref{grammar-equivalence-a} by proving that for every $\bu\in W^R_\tau$
with $\bu\in N_i$, we have $\tau\derivs[i]\bu$ in the grammar. This implies that $W^R_\tau\cap Z_\tau\subseteq L(\tau)$.
We first prove two auxiliary lemmas.
\begin{lemma}\label{eliminate-inverse-pair}
	For every $a\in R_0$, we have $a+\inv{a}\derivs[0]\bzero$.
\end{lemma}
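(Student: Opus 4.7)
The plan is to exploit the \emph{cancellation productions} \eqref{productions-delete} that were added to the grammar precisely to enable pairwise elimination of $a$ and $\inv{a}$. Every $a \in R_0$ has the form $a = (p,\nort{s},q)$ where $s$ is a leaf of the decomposition tree $t$, so $\inv{a} = (q,\nort{s},p)$. I would split the proof into two cases according to whether the involution is free on $a$ or fixes it.

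The main case is $p \neq q$. Without loss of generality assume $p \ll q$. The productions \eqref{productions-delete} give $(p,\nort{s},q) \to z_{p,\nort{s},q}$ and $(q,\nort{s},p) \to -z_{p,\nort{s},q}$ for the terminal letter $z_{p,\nort{s},q} \in \Theta \subseteq T$. Applying them in sequence yields
\[
a + \inv{a} \;\deriv[0]\; z_{p,\nort{s},q} + (q,\nort{s},p) \;\deriv[0]\; z_{p,\nort{s},q} - z_{p,\nort{s},q} = \bzero,
\]
which is the entire argument for this case.

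The degenerate case is $p = q$, where $a = \inv{a}$ and $a + \inv{a} = 2a$. Here I would use that both $a \to \alpha(\bv)$ (from \eqref{productions-level-0a}) and $a \to \alpha(-\bv)$ (from \eqref{productions-level-0b}, consistent with \ref{symmetry-production-inverse} since $\inv{a}=a$) are in the grammar. Applying them to the two copies of $a$ gives $2a \derivs[0] \alpha(\bv) + \alpha(-\bv)$. A direct computation from the definition of $\alpha$ shows that on each terminal coordinate this sum is $0$ (since $\alpha$ is additive on $L_s$ and $\bv+(-\bv)=\bzero$ there), while on each nonterminal coordinate $(p',\nort{s},q')$ it equals $|\bv(p',\nort{s},q')| + |\bv(q',\nort{s},p')|$. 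This quantity is symmetric under swapping $(p',q')$, so the remaining residue decomposes as a nonnegative sum of inverse pairs $(p',\nort{s},q')+(q',\nort{s},p')$ with $(p',\nort{s},q') \in R_s \subseteq R_0$. Each such pair with $p'\neq q'$ is eliminated by the main case above.

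The main obstacle is the potentially recursive subcase $p' = q'$ in Case~2, where the reduction of $2(p,\nort{s},p)$ could reintroduce $2(p',\nort{s},p')$ for other states. I would resolve this by observing that, since $\bzero \in \Eff_{\hat{\cA}_s}(p,p)$ (the empty run), one can take $\bv = \bzero$ as the coset representative produced by \cref{compute-coset-cliques} in the diagonal case, making $\alpha(\bv) = \alpha(-\bv) = \bzero$ and collapsing Case~2 trivially. Equivalently, the grammar construction may be augmented with the explicit productions $(p,\nort{s},p) \to \bzero$, whose correctness is immediate. Either way, no self-loop residue remains, and the recursion collapses after a single application.
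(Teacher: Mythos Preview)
Your main case ($p\neq q$) is exactly the paper's proof: apply the two productions from \eqref{productions-delete} to cancel via the auxiliary terminal $z_{p,\nort{s},q}$. The paper does not treat $p=q$ separately; it writes ``suppose $p\ll q$ (the other case is symmetric)'' and stops. This is because $\ll$ is declared to be a \emph{linear order} on $Q$, i.e.\ a reflexive total order, so $p\ll p$ holds and \eqref{productions-delete} already yields both $(p,\nort{s},p)\to z_{p,\nort{s},p}$ and $(p,\nort{s},p)\to -z_{p,\nort{s},p}$. Applying these to the two copies of $a$ gives $\bzero$ exactly as in your Case~1, so no separate diagonal case is needed.

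Your Case~2, by contrast, does not actually prove the lemma for the grammar as constructed. Both of your proposed resolutions---choosing $\bv=\bzero$ as the representative returned by \cref{compute-coset-cliques}, or adding fresh productions $(p,\nort{s},p)\to\bzero$---change the construction rather than argue about the given grammar. Without such a change, your reduction $2a\derivs[0]\alpha(\bv)+\alpha(-\bv)$ can leave behind residues $2|\bv(p',\nort{s},p')|\cdot(p',\nort{s},p')$ on diagonal nonterminals, and recursing on those need not terminate (the coset representative $\bv$ is whatever the algorithm outputs, and there is no bound preventing the counts from growing). So if $\ll$ were strict, this would be a genuine gap; under the intended reflexive reading the issue simply does not arise.
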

\begin{proof}
	Let $a=(p,\nort{s},q)$ for some leaf $s$ of $t$ and states $p,q\in Q$ and suppose $p\ll q$ (the other case is symmetric).
	We use each of the productions in \cref{productions-delete} once to obtain
	\begin{align*}
	a+\inv{a}=(p,\nort{s},q)+(q,\nort{s},p) &\deriv[0] z_{p,\nort{s},q}+(q,\nort{s},p) \\
	&\deriv[0] z_{p,\nort{s},q}-z_{p,\nort{s},q}=\bzero.
	\end{align*}
\end{proof}

\begin{lemma}\label{derive-effects}
For any $p,q\in Q$ and every leaf $s$  with $a=(p,\nort{s},q)\in R$ and
$\bx\in\Eff_{\hat{\cA}_s}(p,q)$, we have $(p,\nort{s},q)\derivs[0] \bx$.
\end{lemma}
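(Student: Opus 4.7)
The plan is to translate the algebraic coset representation of $\bx$ into an explicit level-$0$ derivation using the productions from \cref{productions-level-0a,productions-level-0b,productions-add,productions-delete}. Since $\bv + \langle \bu_1, \dots, \bu_n \rangle = \Eff_{\hat\cA_s}(p,q)$, I write $\bx = \bv + \sum_{j=1}^n k_j \bu_j$ with $k_j \in \Z$ and split each coefficient $k_j = k_j^+ - k_j^-$ into its positive and negative parts. The first step is to apply the loop productions of \cref{productions-level-0a} to the starting nonterminal $(p,\nort{s},q)$: fire $(p,\nort{s},q) \to (p,\nort{s},q) + \alpha(\bu_j)$ exactly $k_j^+$ times and $(p,\nort{s},q) \to (p,\nort{s},q) + \alpha(-\bu_j)$ exactly $k_j^-$ times, then conclude with the exit production $(p,\nort{s},q) \to \alpha(\bv)$. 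This yields
\[
(p,\nort{s},q) \derivs[0] \by \;:=\; \alpha(\bv) + \sum_{j=1}^n \bigl( k_j^+\, \alpha(\bu_j) + k_j^-\, \alpha(-\bu_j) \bigr).
\]

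The next step is to compare $\by$ with $\bx$. By the identity $\alpha(\bu)(p',\nort{s},q') - \alpha(\bu)(q',\nort{s},p') = \bu(p',\nort{s},q') - \bu(q',\nort{s},p')$ noted just after the definition of $\alpha$, together with $\alpha$ acting as the identity on $L_s$-coordinates, the vectors $\by$ and $\bx$ coincide on every $L_s$-entry and on every ``net difference'' across the unordered $R_s$-pairs $\{a,\inv a\}$. Consequently $\by - \bx$ lies in the subgroup generated by such pairs, so we can write $\by - \bx = \sum_{a} c_a\,(a + \inv a)$ for integers $c_a$. To reconcile this imbalance, I would eliminate the surplus pairs where $c_a > 0$ by applying \cref{eliminate-inverse-pair} ($a + \inv a \derivs[0] \bzero$) exactly $c_a$ times (permissible because admissibility of $R$ ensures that every such $a$ lies in $R$), and insert the missing pairs where $c_a < 0$ \emph{before} the exit step, using the production $(p',\nort{s},q') \to (p',\nort{s},q') + (p',\nort{s},q') + (q',\nort{s},p')$ from \cref{productions-add}.

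The hard part will be scheduling these insertions and eliminations so that every production finds its left-hand side present in the current configuration: the add-pair production requires a copy of $(p',\nort{s},q')$ to already be in the vector, which is not automatic for $(p',\nort{s},q') \neq (p,\nort{s},q)$. This is handled by front-loading applications of \cref{productions-add} on the starting nonterminal $(p,\nort{s},q)$ itself (which is in $R$ by hypothesis) to build up a pool of $(p,\nort{s},q)$-tokens, and then interleaving the loop productions with further pair insertions to transport mass onto whichever auxiliary $a \in R_s$ is required; the loop productions already introduce exactly those $a$ on which the $\alpha(\bv)$ and $\alpha(\pm\bu_j)$ are supported, which are precisely the pairs with $c_a \neq 0$. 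Running \cref{eliminate-inverse-pair} at the end clears any residual excess, giving the desired derivation $(p,\nort{s},q) \derivs[0] \bx$.
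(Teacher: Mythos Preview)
Your derivation of $\by$ matches the paper's, and you correctly identify that $\by-\bx$ lies in the span of the inverse pairs $c+\inv{c}$. What you miss is that this difference is always a \emph{nonnegative} combination, so the case $c_a<0$ never occurs and the entire ``hard part'' you describe is unnecessary. This nonnegativity follows directly from the definition of $\alpha$: for any $\bw\in\Z^{L_s\cup R_s}$ one checks coordinate-pair by coordinate-pair that $\alpha(\bw)-\bw$ is a sum of vectors $c+\inv{c}$ with nonnegative coefficients (on $L_s$ the difference is zero; on each unordered pair $\{c,\inv{c}\}\subseteq R_s$, the definition of $\alpha$ adds exactly $\max\{0,-\bw(c)\}+\max\{0,-\bw(\inv{c})\}$ to both coordinates). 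Since $\by-\bx=(\alpha(\bv)-\bv)+\sum_j k_j^+(\alpha(\bu_j)-\bu_j)+\sum_j k_j^-(\alpha(-\bu_j)-(-\bu_j))$ with $k_j^+,k_j^-\ge 0$, the whole difference is a nonnegative sum of inverse pairs.

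Once you have this, the proof finishes in one line: each pair $c+\inv{c}$ occurring in $\by-\bx$ lies in $R_0$ (because $R_s\subseteq R\cap N_0$), so \cref{eliminate-inverse-pair} gives $c+\inv{c}\derivs[0]\bzero$, and since these pairs are actually present in $\by$ (as $\bx$ is nonnegative on $R_s$), you obtain $\by\derivs[0]\bx$. This is exactly what the paper does. Your proposed scheduling argument for the phantom case $c_a<0$ is vague as written---it is not clear how ``interleaving loop productions'' produces a copy of an arbitrary $a\in R_s$ on which to fire \cref{productions-add}---but fortunately it is never needed.
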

\begin{proof}
Let $s$ be a leaf of $t$ and $\bx\in\Eff_{\hat{\cA}_s}(p,q)$. In the
construction of the grammar, we compute $\bv,\bu_1,\ldots\bu_n\in\Z^{R_s\cup
L_s}$ so that $\Eff_{\hat{\cA}_s}(p,q)=\bv+\langle\bu_1,\ldots,\bu_n\rangle$.
Therefore, we can write
$\bx=\bv+x_1\cdot\bu_1+\cdots+x_n\cdot\bu_n+x'_1\cdot(-\bu_1)+\cdots+x'_n\cdot(-\bu_n)$
for some $x_1,\ldots,x_n,x'_1,\ldots,x'_n\in\N$. Note that we can derive
\[ (p,\nort{s},q)\derivs[0] \alpha(\bv)+\sum_{j=1}^n x_j\cdot\alpha(\bu_j)+x'_j\cdot\alpha(-\bu_j) \]
by using each production $(p,\nort{s},q)\to (p,\nort{s},q)+\alpha(\bu_j)$ exactly $x_j$-times, then $(p,\nort{s},q)\to(p,\nort{s},q)+\alpha(-\bu_j)$ exactly $x'_j$-times and finally the production $(p,\nort{s},q)\to\alpha(\bv)$ once. Let us denote the derived vector by $\by$.

Now observe that the vector $\by-\bx$ can
be written as a sum of vectors $c+\inv{c}$ with $c\in R_0$.
By \cref{eliminate-inverse-pair}, we can derive $c+\inv{c}\derivs[0]\bzero$ for each such $c$.
This implies that $\by\derivs[0]\bx$.
\end{proof}

The first step is to prove this for $\tau=(p,\nort{s},q)$, where $s$ is a leaf.
In this case, we have to show the following.
\begin{lemma}\label{derive-leaf-placeholder-runs}
	For every leaf $s$ of $t$ and every $p,q\in Q$ and every $\bx\in E'^R_{(p,\nort{s},q)}$, we have $\tau\derivs[0]\bu$. 
\end{lemma}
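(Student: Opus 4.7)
Let $\sigma = \sigma_0 (p_1, s_1^{x_1}, q_1) \sigma_1 \cdots (p_m, s_m^{x_m}, q_m) \sigma_m$ be an almost $R$-placeholder run in $(p, \nort{s}, q)$ witnessing $\bx \in E'^R_{(p, \nort{s}, q)}$. The plan is to reduce the general case—where the placeholders $(p_i, s_i^{x_i}, q_i)$ may sit at level $\nort{s}$ or at any level strictly above it—to the case where every placeholder lies at level $\nort{s}$ and hence in $R_s$, and then invoke \cref{derive-effects}.

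The first step is a descent property that falls out of the admissibility of $R$: if $(p', r^x, q') \in R$ and $s^y$ lies below $r^x$, then $(p', s^y, q') \in R$ as well. Writing $R = R^{(j)}$, pick the smallest $i$ with $(p', r^x, q') \in R^{(i)}$. Then $W^{R^{(i-1)}}_{(p', r^x, q')} \cap Z_{r^x} \neq \emptyset$, so rule \ref{w-3} places $(p', r^x, q')$ in $W^{R^{(i-1)}}_{(p', s^y, q')}$. Since the singleton vector supported on a placeholder strictly above $s^y$ belongs to $Z_{s^y}$, we conclude $(p', s^y, q') \in R^{(i)} \subseteq R$.

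Applying descent to each placeholder $(p_i, s_i^{x_i}, q_i) \in R$ in $\sigma$ with $s_i^{x_i}$ strictly above $\nort{s}$ gives $(p_i, \nort{s}, q_i) \in R_s$. Let $\tilde\bx = \sum_i (p_i, \nort{s}, q_i) + \projup_{\nort{s}}(w_1 \cdots w_m)$ be the vector obtained from $\bx$ by replacing every higher-level placeholder by its $\nort{s}$-descent. Then $\tilde\bx$ is the effect of the run $\hat\sigma$ obtained from $\sigma$ by the same replacement; the placeholders of $\hat\sigma$ all lie in $R_s$ and are realized as added edges of $\hat\cA_s$. Viewing $\hat\sigma$ as a path from $p$ to $q$ in $\hat\cA_s$, its effect lies in the coset $\Eff_{\hat\cA_s}(p, q) = \bv + \langle \bu_1, \ldots, \bu_n \rangle$ fixed in the grammar construction, so \cref{derive-effects} delivers $(p, \nort{s}, q) \derivs[0] \tilde\bx$. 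From here, the cross-level productions $(p_i, \nort{s}, q_i) \to (p_i, s_i^{x_i}, q_i)$ of rule \ref{productions-cross-b}—available because $(p_i, s_i^{x_i}, q_i) \in R$—are applied once per affected $i$ to derive $\bx$.

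The main obstacle is the descent property itself: without it, the cross-level productions would exist but the intermediate $(p_i, \nort{s}, q_i)$ could not be introduced by the grammar at level $0$, since the only level-$0$ productions that put new $\nort{s}$-level nonterminals on their right-hand sides do so from $R_s$. Establishing descent from admissibility via rule \ref{w-3} is the crux that bridges the run-theoretic side (higher-level placeholders appearing in $\sigma$) and the grammar side (ability to produce their $\nort{s}$-level descents so that \ref{productions-cross-b} becomes applicable).
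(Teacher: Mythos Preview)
Your overall approach coincides with the paper's: lower every placeholder $(p_j,s_j^{x_j},q_j)$ in the almost $R$-placeholder run to $(p_j,\nort{s},q_j)$, view the resulting sequence as a path in $\hat\cA_s$ whose effect lies in $\Eff_{\hat\cA_s}(p,q)$, invoke \cref{derive-effects} to derive the lowered vector, and then climb back up using the cross-level productions~\ref{productions-cross-b}. The paper does exactly this, and in fact simply \emph{asserts} the descent property (``since $R$ is admissible, this implies $(p_j,\nort{s},q_j)\in R$'') without argument, so you are being more careful by attempting to justify it.

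However, your justification of descent is broken. You pick the minimal $i$ with $(p',r^x,q')\in R^{(i)}$, infer $W^{R^{(i-1)}}_{(p',r^x,q')}\cap Z_{r^x}\neq\emptyset$, and then invoke ``rule~\ref{w-3}'' to place $(p',r^x,q')$ in $W^{R^{(i-1)}}_{(p',s^y,q')}$. But \ref{w-3} is the rule in the definition of the \emph{unparametrized} $W$; the corresponding rule~(3) in the definition of $W^{R'}$ reads ``if $(p,r^x,q)\in R'$ then $(p,r^x,q)\in W^{R'}_{(p,s^y,q)}$''. Applied with $R'=R^{(i-1)}$, this would require $(p',r^x,q')\in R^{(i-1)}$, which is false by minimality of $i$. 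If instead you use $R'=R^{(i)}$ (where the hypothesis does hold), you only obtain $(p',s^y,q')\in R^{(i+1)}$, and nothing rules out $i=j$; so you cannot conclude $(p',s^y,q')\in R^{(j)}=R$. A correct proof of descent proceeds by induction on $i$, showing that each $R^{(i)}$ is downward closed; the inductive step requires analyzing how an element of $W^{R^{(i-1)}}_{(p',r^x,q')}\cap Z_{r^x}$ was built (using that $R^{(i-1)}$ is already downward closed), which is more work than a single rule application.
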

\begin{proof}
	Let 
	\begin{align*} (q_0,w_1,p_1)(p_1,s_1^{x_1},q_1)&(q_1,w_2,p_2)\cdots \\ &(p_m,s_m^{x_m},q_m)(q_m,w_m,p_{m+1}) \end{align*}
	be an almost $R$-placeholder run in $(p,\nort{s},q)$. We want to show that its effect $\bx=(p_1,s_1^{x_1},q_1)+\cdots+(p_m,s_m^{x_m},q_m)+\projup_{\nort{s}}(w_1\cdots w_m)\in\N^{\rho(\cA)}+\Z^T$ satisfies $\tau\derivs[0]\bx$. 

	Since this is an almost $R$-placeholder run, we have $(p_j,s_j^{x_j},q_j)\in R$ for $j\in[1,m]$ and since $R$ is admissible, this implies that $(p_j,\nort{s},q_j)\in R$ for each $j\in[1,m]$. Therefore, the sequence
	\begin{align*} (q_0,w_1,p_1)(p_1,\nort{s},q_1)&(q_1,w_2,p_2)\cdots \\ &(p_m,\nort{s},q_m)(q_m,w_m,p_{m+1}) \end{align*}
	is also an almost $R$-placeholder run. By definition of $\hat{\cA}_s$, this implies that $\by:=(p_1,\nort{s},q_1)+\cdots+(p_m,\nort{s},q_m)+\projup_{\nort{s}}(w_1\cdots w_m)$ belongs to $\Eff_{\hat{\cA}_s}(p,q)$. By \cref{derive-effects}, we have $(p,\nort{s},q)\derivs[0]\by$.

	Since we have a production $(p_j,\nort{s},q_j)\to (p_j,s_j^{x_j},q_j)$ for each $j\in[1,m]$, we can derive $\by\derivs[0]\bx$. Hence we have $(p,\nort{s},q)\derivs[0]\bx$. 
\end{proof}

We are now ready to prove \cref{grammar-equivalence-a}.
\begin{proof}[Proof of \cref{grammar-equivalence-a}]
	As mentioned above, we prove that for every $\bu\in W^R_\tau$ with $\tau\in N_i$, we have $\tau\derivs[i]\bu$.

	We proceed by induction on the number of rule applications used to
	conclude that $\bu\in W^R_\tau$. Put differently, we proceed by induction on $n$ and we assume that $\tau\derivs[i]\bu$ if $\bu$'s membership in $W^R_\tau$ can be derived in $<n$
	steps. And we prove that then every $\bu'$, whose membership can be
	derived in $n$ steps, also satisfies $\tau\derivs[i]\bu'$.

	The induction base is done in \cref{derive-leaf-placeholder-runs}. So suppose $\bu$ belongs to $W^R_\tau$ and can be derived in $n$ steps. We consider each rule in the definition of $W^R_\tau$:
	\begin{enumerate}
		\item Suppose $r^x$ is above $s^y$ and $(p,s^y,q)\in R$ and $\bu\in
			W^R_{(p,s^y,q)}\cap Y_{r^x}$. Then since $\bu\in
			W^R_{(p,s^y,q)}$ and can be derived in $<n$ steps, we know
			by induction that $(p,s^y,q)\derivs[i']\bu$, where $i'$ is the level of $s^y$. 
			
			There exists a production $(p,r^x,q)\to (p,s^y,q)$
			since $(p,s^y,q)\in R$,. Moreover, the fact that
			$\bu\in Y_{r^x}$ allows us to conclude
			$(p,r^x,q)\derivs[i] \bu$.
			
		\item Suppose $r^x$ is above $s^y$ and $(p,r^x,q)\in R$. Then there is a production $(p,s^y,q)\to (p,r^x,q)$ and hence we have $(p,s^y,q)\deriv[i] (p,r^x,q)$, where $i$ the level of $s^y$.

		\item Suppose that $(p',r^x,q')+\bu\in W^R_{(p,r^x,q)}$ and $\bv\in W^R_{(p',r^x,q')}$, where the memberships $(p',r^x,q')+\bu\in W^R_{(p,r^x,q)}$ and $\bv\in W^R_{(p',r^x,q')}$ can be derived in $<n$ steps. We have to show that
			$(p,r^x,q)\derivs[i]\bv+\bu$, where $i$ the level of $r^x$.

			By induction, we know $(p,r^x,q)\derivs[i](p',r^x,q')+\bu$ and that $(p',r^x,q')\derivs[i]\bv$.
			By applying these derivations successively, we obtain $(p,r^x,q)\derivs[i]\bv+\bu$.
	\end{enumerate}
	This establishes that indeed for every $\bu\in W^R_\tau$ with $\bu\in N_i$, we have $\tau\derivs[i]\bu$.
	This implies the inclusion $W^R_\tau\cap Z_{\tau}\subseteq L(\tau)$: If $\bu\in W^R_\tau$, then we know $\tau\derivs[i]\bu$. Since furthermore $\bu\in Z_\tau$, we even have $\bu\in L(\tau)$.
\end{proof}

Our next step is to prove \cref{grammar-equivalence-b}.  We begin with the case that $\tau=(p,\nort{s},q)$ for some leaf $s$ of $t$.
\begin{lemma}\label{grammar-equivalence-base-case}
	Let $s$ be a leaf of $t$ and $p,q\in Q$. For every $\bu\in L(p,\nort{s},q)$, there exists a $\bd\in\Delta_0$ such that $\bu+\bd\in W^R_{(p,\nort{s},q)}\cap Z_{\nort{s}}$.
\end{lemma}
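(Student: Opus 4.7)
The plan is to build, from any derivation $(p,\nort{s},q)\derivs[0]\bu$ with $\bu\in L(p,\nort{s},q)$, an almost $R$-placeholder run $\sigma$ in $(p,\nort{s},q)$ whose effect equals $\bu+\bd$ for some $\bd\in\Delta_0$. Once this is in place, the first rule in the definition of $W^R_{(p,\nort{s},q)}$ (applied with $s$ a leaf) gives $\mathrm{eff}(\sigma)\in E'^R_{(p,\nort{s},q)}\subseteq W^R_{(p,\nort{s},q)}$, and since $\bu\in L(p,\nort{s},q)\subseteq Z_{\nort{s}}$ while $\Delta_0\subseteq \N^{R_{[1,k]}}\subseteq Z_{\nort{s}}$ (as $\nort{s}$ has nothing strictly below it and $\vertdown_{\nort{s}}=\emptyset$ for a leaf), we conclude $\bu+\bd\in W^R_{(p,\nort{s},q)}\cap Z_{\nort{s}}$.

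I would proceed by strong induction on the length of the derivation, viewing it as a derivation tree and case-analysing on the production applied at the root. For an \emph{Eff-based production} $(p,\nort{s},q)\to\alpha(\bw)$ (or a variant with an additional $\alpha(\pm\bu_j)$ loop), the vector $\bw\in\Eff_{\hat{\cA}_s}(p,q)$ corresponds to a walk $\pi$ in $\hat{\cA}_s$ from $p$ to $q$ whose virtual-edge traversals may be interpreted, one-for-one, as slots filled by placeholders from $R_s$. Each level-$0$ nonterminal in $\alpha(\bw)$ has an independent sub-derivation; by the inductive hypothesis, each such sub-derivation produces an almost $R$-placeholder run up to an element of $\Delta_0$. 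Inserting those sub-runs at the slots of $\pi$ yields $\sigma$, whose effect equals $\bu$ plus the sum of the $\Delta_0$ offsets.

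For the \emph{add-pair production} $(p,\nort{s},q)\to (p,\nort{s},q)+(p,\nort{s},q)+(q,\nort{s},p)$ (available only when $(p,\nort{s},q)\in R$), the three resulting nonterminals have independent sub-derivations with totals $\bu_1,\bu_2,\bu_3$ summing to $\bu$. By induction there are almost $R$-placeholder runs $\sigma_1,\sigma_2$ in $(p,\nort{s},q)$ and $\sigma_3$ in $(q,\nort{s},p)$ with effects $\bu_i+\bd_i$, $\bd_i\in\Delta_0$. Concatenation $\sigma_1\cdot\sigma_3\cdot\sigma_2$ traces $p\to q\to p\to q$, so it is an almost $R$-placeholder run in $(p,\nort{s},q)$ with effect $\bu+\bd_1+\bd_2+\bd_3\in\bu+\Delta_0$. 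For a \emph{promotion} $(p,\nort{s},q)\to(p,r^x,q)$ with $(p,r^x,q)\in R$, the higher-level nonterminal cannot be rewritten by $\derivs[0]$, so $\bu=(p,r^x,q)$; the third rule in the definition of $W^R$ then directly gives $(p,r^x,q)\in W^R_{(p,\nort{s},q)}$ and we take $\bd=\bzero$. Finally, if the first production is a $z$-production $(p,\nort{s},q)\to z_{p,\nort{s},q}$, the current vector becomes the single terminal $z_{p,\nort{s},q}$ and no further rewriting is possible, contradicting $\bu\in L(p,\nort{s},q)$; this case therefore does not arise.

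The main obstacle is the Eff-based case. Here one must argue that a walk in $\hat{\cA}_s$ realising $\bw$ can actually be converted to an almost $R$-placeholder run whose placeholder multiset matches $\alpha(\bw)$ exactly. This requires exploiting the bidirectedness of $\cA$ to reverse sub-runs (turning a sub-derivation of $(q',\nort{s},p')$ into one of $(p',\nort{s},q')$ with an inverted effect), which is what allows the sign-flipping performed by $\alpha$ to be realised combinatorially. Verifying this, together with the bookkeeping that ensures the accumulated $\Delta_0$ offsets from all sub-derivations still land in $\Delta_0$, is the technical heart of the argument; the other three cases are comparatively straightforward once this template is in place.
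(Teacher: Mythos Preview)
Your induction has a genuine gap: the inductive hypothesis applies only to sub-derivations whose result lies in $L(\cdot)$, i.e.\ has no level-$0$ terminals and no level-$0$ nonterminals, but sub-derivations need not have this property. Concretely, in the add-pair case you split $\bu=\bu_1+\bu_2+\bu_3$, but any one of $\bu_1,\bu_2,\bu_3$ may contain a nonzero $z_{p',\nort{s},q'}$-component that only cancels against a $-z_{p',\nort{s},q'}$ produced in a \emph{different} sub-derivation. The same happens in the Eff-based case: the level-$0$ nonterminals in $\alpha(\bw)$ may each derive vectors with dangling $z$-terminals that only cancel globally. So neither $\bu_i\in L(\cdot)$ nor the induction hypothesis is available, and your recursive construction of an almost $R$-placeholder run stalls. (Your dismissal of the $z$-production case only rules it out as the \emph{first} step; the whole difficulty is that $z$-productions occur deeper in the tree.)

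The paper sidesteps this by a global, non-inductive decomposition. It first observes that any $\derivs[0]$-derivation using only the productions in \cref{productions-level-0a,productions-level-0b,productions-add} and \ref{productions-cross-b} lands in $E'^R_{(p,\nort{s},q)}$ (this is a direct unwinding, no induction on sub-derivations). It then notes that a full derivation to $\bu\in L(p,\nort{s},q)$ differs from such a ``pure'' derivation only by applications of the delete productions \cref{productions-delete}, and since $\bu$ has no level-$0$ terminals these applications must come in cancelling $z/{-z}$ pairs. Hence $\bu'=\bu+\bd'$ with $\bu'\in E'^R_{(p,\nort{s},q)}$ and $\bd'$ a sum of pairs $b+\inv{b}$, $b\in R_0$. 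Finally, admissibility of $R$ lets one replace each such level-$0$ pair inside the almost $R$-placeholder run by a realizable $R$-placeholder run and its reverse, turning $\bd'$ into an element $\bd\in\Delta_0$ (generated by $a+\inv{a}$ with $a\in R_{[1,k]}$). The key move you are missing is exactly this global pairing of delete productions; without it, no local induction on the derivation tree can work.
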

\begin{proof}
	We first observe that if $(p,\nort{s},q)\derivs[0]\bu'$ using only productions \cref{productions-level-0a,productions-level-0b,productions-add,productions-cross-a}, then $\bu'\in E'^R_{(p,\nort{s},q)}$.
	This follows by induction from
	the construction of the productions
	\cref{productions-level-0a,productions-level-0b,productions-add}. For
	this, also note that each element $\alpha(\bu_j)$ and $\alpha(-\bu_j)$
	in \cref{productions-level-0a,productions-level-0b} belongs to
	$\Eff_{\hat{\cA}_s}(p,p)$.

	Now let $\bu\in L(p,\nort{s},q)$. Then $\bu$ is obtained using both
	(i)~productions
	\cref{productions-level-0a,productions-level-0b,productions-add,productions-cross-a} and
	(ii)~the productions in \cref{productions-delete}. This means, there is
	a $\bu'$ obtained from $(p,\nort{s},q)$ using \cref{productions-level-0a,productions-level-0b,productions-add,productions-cross-a} with $(p,\nort{s},q)\derivs[0]\bu'$, such that $\bu$
	is obtained from $\bu'$ by applying productions in
	\cref{productions-delete}. Since $\bu\in L(p,\nort{s},q)$ contains no
	level-$0$ terminal symbols, the productions in
	\cref{productions-delete} must have been applied pairwise. This means,
	we have $\bu'=\bu+\bd'$ for some
	$\bd=(b_1+\inv{b_1})+\cdots+(b_r+\inv{b_r})$ with $b_1,\ldots,b_r\in
	R_0$. Since $R$ is admissible, this implies that for each $j\in[1,r]$,
	we have $W^R_{b_j}\cap Z_{b_j}\ne\emptyset$ and also $W^R_{\inv{b_j}}\cap Z_{\inv{b_j}}\ne\emptyset$.
	Therefore, we can replace in $\bd'$ each $b_j$ by the effect $\bu_j$ of an $R$-placeholder run in $b_j$
	and we can replace $\inv{b_j}$ by $\inv{\bu_j}$: Note that $\inv{\bu_j}$ is the effect of an $R$-placeholder run in $\inv{b_j}$. Let $\bd$ be the resulting vector in $\N^{\rho(\cA)}$.
	Then clearly $\bd\in\Delta_0$. Moreover, we have $\bu+\bd\in W^R_{(p,\nort{s},q)}\cap Z_{\nort{s}}$.
\end{proof}

We are now ready to prove \cref{grammar-equivalence-b}.
\begin{proof}[Proof of \cref{grammar-equivalence-b}]
	We prove the statement by induction on $i$: \cref{grammar-equivalence-base-case} is the case $i=0$.
	Now suppose the statement holds for all $i'<i$. To prove the statement for $i$, we apply another induction by the number of derivation steps to derive an element of $L(\tau)$. More precisely, we show that for any $n\ge 1$, if $\tau\derivs[i]\bu$ in at most $n$ steps, then there exists a $\bd\in\Delta_i$ such that $\bu+\bd\in W^R_{\tau}$. Suppose this holds for all step counts $<n$.

	Let $\tau\derivs[i]\bu$ in $n$ steps with $\tau\in N_{i}$. We distinguish two cases.
	\begin{enumerate}
		\item Suppose $n=1$. Then there is a $\tau'\in N_{i'}$ for some $i'<i$ 
			such that $\tau'\derivs[i']\bu$ and $\bu\in
			L(\tau')\cap S_{i-1}$. Since $\bu\in L(\tau')$, our
			induction hypothesis on $i'$ yields some
			$\bd'\in\Delta_{i'}$ such that $\bu+\bd'\in
			W^R_{\tau'}\cap Z_{\tau'}$. 
			
			Since $\bu\in S_{i-1}$, we
			also have $\bu'\in Y_\tau$. 
			
			Therefore, it follows from
			the definition of $W^R_\tau$ that there is a
			$\bd\in\Delta_i$ with $\bu+\bd\in W^R_\tau$.
			
		\item Suppose $n>1$ and write $\tau=(p,r^x,q)$. Then there is a
			$\bu'$ such that $(p,r^x,q)\derivs[i]\bu'$ in $<n$
			steps and $\bu'=(p',r^x,q')+\bv$ and some $\bx$ with
			$(p',r^x,q')\deriv[i]\bx$ so that $\bu=\bx+\bv$.
			By our induction hypothesis, we know that there is a
			$\bd'\in\Delta_i$ such that $\bu'+\bd'\in
			W^R_{(p,r^x,q)}$. Moreover, there is a
			$\tilde{\bd}\in\Delta_i$ with
			$\bx+\tilde{\bd}\in W^R_{(p',r^x,q')}$. According
			to the last rule in the definition of $W^R_\tau$, this
			implies that $\bv+\bd+\bx+\tilde{\bd}\in
			W^R_{(p,r^x,q)}$. In particular, we have
			$\bu+\bd+\tilde{\bd}=\bv+\bx+\bd+\tilde{\bd}\in
			W^R_\tau$. Since $\bd+\tilde{\bd}\in\Delta_i$,
			this proves our claim.
	\end{enumerate}
\end{proof}

\subsubsection*{Bidirectedness of the constructed grammar}
Now we prove that the grammar is indeed bidirected.
The conditions
\cref{symmetry-production-inverse,symmetry-reverse-nton,symmetry-add-inverse-pairs}
are obvious from the construction.  Moreover, the condition
\cref{symmetry-rhs} follows from $L(\tau)\ne\emptyset$ if and only if $W^R_\tau\cap Z_\tau\ne\emptyset$ for every $\tau\in\rho(\cA)$.
It remains to show the following.
\begin{lemma}\label{details-symmetry-production-reverse}
The constructed grammar satisfies \cref{symmetry-production-reverse} of the bidirectedness conditions.
\end{lemma}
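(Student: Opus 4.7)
My plan is to verify \cref{symmetry-production-reverse} by case analysis on the level-0 productions of the grammar. The productions from \cref{productions-delete} have no nonterminal on the right-hand side and so trivially satisfy the condition. Productions of the form $a \to a + \alpha(\pm\bu_j)$ from \cref{productions-level-0a,productions-level-0b} have only $a$ as a nonterminal on the RHS; using the identity $\alpha(-\bu) = \inv{\alpha(\bu)}$ (an immediate consequence of the definition of $\alpha$), the two productions $a \to a + \alpha(\bu_j)$ and $a \to a + \alpha(-\bu_j)$ are each other's reverse derivation. For $a \to a + a + \inv{a}$ from \cref{productions-add}, with $b = a$ the same production is self-reverse, and with $b = \inv{a}$ the symmetric production $\inv{a} \to \inv{a} + \inv{a} + a$ does the job; the latter is present because the input $R$ is closed under involution---a property that propagates along the saturation $R^{(0)} \subseteq R^{(1)} \subseteq \cdots$ from the symmetry of the defining rules of $W^R_\tau$.

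The main case is the production $a \to \alpha(\bv)$ from \cref{productions-level-0a}, with $a = (p,\nort{s},q)$ and $\bv \in \Eff_{\hat{\cA}_s}(p,q)$. Let $b = (p',\nort{s},q') \in R_s$ be any nonterminal of $\alpha(\bv)$ and $\bu := \alpha(\bv) - b$. My plan is to construct $\bw \in \Eff_{\hat{\cA}_s}(p',q')$ with $\alpha(\bw) = a + \inv{\bu}$; then \cref{derive-effects}, applied to $b \in R$, yields the required derivation $b \derivs[0] \alpha(\bw) = a + \inv{\bu}$. A preliminary observation is that $a \in R$ as well: the production is only added when $\Eff_{\hat{\cA}_s}(p,q) \neq \emptyset$, which requires $p$ and $q$ to be connected in $\hat{\cA}_s$; since every placeholder edge of $\hat{\cA}_s$ joins states already connected in $\cA_s$ (by realizability of the corresponding placeholder), $p$ and $q$ are connected in $\cA_s$, which for a leaf $s$ is exactly the defining condition of $(p,\nort{s},q) \in R$. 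Thus the placeholder edge labelled $a$ is in $\hat{\cA}_s$, and $a \in \Eff_{\hat{\cA}_s}(p,q)$.

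Working in the abelian effect group $\Z^{L_s \cup R_s}$, the sets $\Eff_{\hat{\cA}_s}(p,q)$ and $\Eff_{\hat{\cA}_s}(p',q')$ are cosets of a common cycle subgroup $H$; from $a, \bv \in \Eff_{\hat{\cA}_s}(p,q)$ and $b \in \Eff_{\hat{\cA}_s}(p',q')$ I obtain $\bw_0 := a - \bv + b \in \Eff_{\hat{\cA}_s}(p',q')$. The main obstacle I anticipate is that $\alpha(\bw_0)$ need not coincide with $a + \inv{\bu}$: the map $\alpha$ is not a group homomorphism and the $a$- and $\inv{a}$-coordinates may mismatch. I plan to correct this by adding a suitable multiple of the cycle $a + \inv{a} \in H$: a case analysis on the signs of $\bv(a)$ and $\bv(\inv{a})$ produces an integer $n$ such that $\bw := \bw_0 + n(a + \inv{a})$ satisfies $\alpha(\bw) = a + \inv{\bu}$ coordinatewise, using the identity $\alpha(-\bv) = \inv{\alpha(\bv)}$. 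The symmetric productions $\inv{a} \to \alpha(-\bv)$ from \cref{productions-level-0b} are handled by exchanging the roles of $a$ and $\inv{a}$.
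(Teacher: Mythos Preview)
Your overall strategy---reduce to \cref{derive-effects} by exhibiting an element of $\Eff_{\hat{\cA}_s}(p',q')$ equal to $a + \inv{\bu}$---matches the paper's. But there are two concrete gaps.

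First, the productions $a \to a + \alpha(\pm\bu_j)$ do \emph{not} ``have only $a$ as a nonterminal on the right-hand side'': the vector $\alpha(\bu_j)$ lies in $\N^{R_s} + \Z^{L_s}$ and may contain placeholders $b \in R_s$ with $b \ne a$. For such $b$ you still owe the derivation $b \derivs[0] a + \inv{(a + \alpha(\bu_j) - b)}$, and your identity $\alpha(-\bu) = \inv{\alpha(\bu)}$ (which is correct) dispatches only the case $b = a$. The paper handles this by noting $\bu_j \in \Eff_{\hat{\cA}_s}(p,p)$ and reusing the main-case argument.

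Second, in the main case the correction $\bw := \bw_0 + n(a + \inv{a})$ is insufficient: it only alters the $a$- and $\inv{a}$-coordinates of $\bw$, whereas $\alpha(\bw_0)$ can already differ from $a + \inv{\bu}$ elsewhere. Concretely, if $b \notin \{a,\inv{a}\}$, $\bv(b) \le 0$ and $\bv(\inv{b}) \le -1$ (which is compatible with $b$ occurring in $\alpha(\bv)$), then $\alpha(\bw_0)(b) = 1 - \bv(b)$ whereas $(a + \inv{\bu})(b) = \alpha(\bv)(\inv{b}) = -\bv(b)$, a discrepancy no multiple of $a + \inv{a}$ can remove. The easy repair is to drop $\bw_0$ and take $\bw := a + \inv{\bu}$ directly: writing $\alpha(\bv) = b + \bx + \by$ with $\bx \in \N^{R_s}$ and $\by \in \Z^{L_s}$, one has $\bw = a + \inv{\bx} - \by \in \N^{R_s} + \Z^{L_s}$ and $\bw - b = (a - \alpha(\bv)) + (\bx + \inv{\bx})$, which lies in the cycle group because $a,\alpha(\bv) \in \Eff_{\hat{\cA}_s}(p,q)$ and each $c + \inv{c}$ is a cycle. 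Hence $\bw \in \Eff_{\hat{\cA}_s}(p',q')$ and \cref{derive-effects} gives $b \derivs[0] \bw$. This $\bw$ is exactly the effect of the explicit reversed run the paper writes down, so once repaired your coset argument and the paper's construction coincide.
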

\begin{proof}
The property \cref{symmetry-production-reverse} is clear immediately except for
the productions in
\cref{productions-level-0a,productions-level-0b}. Let us first prove the property for
a production $(p,\nort{s},q)\to \bu$ with $\bu=\alpha(\bv)$ on the left of \cref{productions-level-0a}. Write $\bu=b+\bx+\by$ for $b\in R_0$, $\bx\in\N^{R_0}$, and $\by\in\Z^T$. 

The vector $\bu$ is chosen so that $\bu=\alpha(\bv)$ with $\bv\in \Eff_{\hat{\cA}_s}(p,q)$. By construction of $\hat{\cA}_s$, this implies the existence of a run
\begin{align}
\begin{split} (q_0,w_1,p_1)(p_1,\nort{s}_1,q_1)&(q_1,w_2,p_2)\cdots \\ &(p_m,\nort{s}_m,q_m)(q_m,w_m,p_{m+1})
\end{split}
\end{align}
with $p=q_0$, $q=p_{m+1}$ and $(p_j,\rt{s},q_j)\in R_s$ for $j\in[1,m]$.
Furthermore, there exists a run $q_j\autsteps[w_j]p_{j+1}$ in $\cA_s$ and $[\pi_{\nort{s}}(w_1\cdots w_m)]=1$,
and $(p_1,\nort{s},q_1)+\cdots+(p_m,\nort{s},q_m)=\bx$ and $\by=[\hat{\pi}_{\nort{s}}(w_1\cdots w_m)]$. Let $a=(p,\nort{s},q)$. Moreover, we assume $b=(p_1,\nort{s},q_1)$: If $b=(p_j,\nort{s},q_j)$, the proof is analogous.

Observe that $(p_1,\nort{s},q_1)\in R_s$ by construction of $\hat{\cA}_s$. Moreover, consider the sequence
\begin{align}
\begin{split}(p_1,\bar{w}_1,q_0)(p,\nort{s},q)&(p_{m+1},\bar{w}_m,q_m)(q_m,\nort{s},p_m)\cdots\\&(q_2,\nort{s},p_2)(p_2,\bar{w}_2,q_1).
\end{split}\end{align}
Since $\hat{\cA}_s$ is also bidirected, this corresponds to a run in $\hat{\cA}_s$ from $p_1$ to $q_1$.
This run of $\hat{\cA}_s$ has effect $\bu'=\inv{(\bu-(p_1,\nort{s},q_1))}+(p,\nort{s},q)$
and thus $\bu'\in \Eff_{\hat{\cA}_s}(p_1,q_1)$. According to \cref{derive-effects}, we thus have \[ b=(p_1,\nort{s},q_1)\derivs[0] \bu'=\inv{(\bx+\by)}+a. \]
This concludes the proof for the production $(p,\nort{s},q)\to\alpha(\bv)$.

For productions $(p,\nort{s},q)\to\alpha(\bu_j)$, we can note that if $\bv+\langle \bu_1,\ldots,\bu_n\rangle=\Eff_{\hat{\cA}_s}(p,q)$, then each $\bu_j$ belongs to $\Eff_{\hat{\cA}_s}(p,p)$ and then argue as above.
\end{proof}
\subsection{Properties of bidirected grammars}
In this subsection, we prove \cref{level-bidirectedness}.
We begin with some simple observations.
\begin{lemma}\label{bidirected-add-da}
	If $G=(N,T,P)$ is bidirected, and $a \in R_i$, then
	$a\derivs[i] a+a+\inv{a}$.
\end{lemma}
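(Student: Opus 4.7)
The plan is to argue by induction on the level $i$. The base case $i=0$ is exactly condition~\ref{symmetry-add-inverse-pairs}, which directly gives $a\derivs[0] a+a+\inv{a}$.

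For the inductive step, fix $a\in R_i$ with $i\ge 1$ and assume the statement at all smaller levels. Since $a\in R$, condition~\ref{symmetry-rhs} yields $L(a)\ne\emptyset$, so at least one production has $a$ on its left-hand side, and because $i\ge 1$ this production is cross-level. The core of the argument is to work with such a production of the form $a\to c$ with $c\in N_{i-1}$ and show that a single $\deriv[i]$-step using it already suffices to reach $a+a+\inv{a}$.

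Given such a production $a\to c$, we have $c\in R_{i-1}$ (it appears on this right-hand side), so the induction hypothesis gives $c\derivs[i-1] c+c+\inv{c}$. Condition~\ref{symmetry-reverse-nton}, applied to $a\to c$ together with $a\in R$, provides the reverse production $c\to a$; condition~\ref{symmetry-production-inverse} then yields $\inv{c}\to\inv{a}$. Chaining three cross-level $\deriv[i-1]$-steps (using $c\to a$ twice and $\inv{c}\to\inv{a}$ once) extends the derivation at level $i-1$ to $c\derivs[i-1] a+a+\inv{a}$. Since $a+a+\inv{a}\in\N^{N_{[i,k]}}+\Z^{T_{[i,k]}}$, it lies in $L(c)$, and applying $a\to c$ as a single $\deriv[i]$-step replaces $a$ by $a+a+\inv{a}$, giving $a\deriv[i] a+a+\inv{a}$ as required.

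The main obstacle I anticipate is guaranteeing the existence of a ``downward'' production $a\to c$ with $c\in N_{i-1}$ for every $a\in R_i$, $i\ge 1$: if every production from $a$ pointed upward to some $c'\in N_{i+1}$, then the step $a\deriv[i] c'$ would terminate the level-$i$ derivation after a single step with just one symbol, leaving $a+a+\inv{a}$ unreachable. Ruling out this degenerate case will rely on combining condition~\ref{symmetry-rhs} (forcing $L(a)\ne\emptyset$ to be witnessed by a derivation that must ultimately cascade down to level~$0$) with condition~\ref{symmetry-reverse-nton} (propagating reverse cross-level productions); isolating this structural fact is the delicate part on which the entire induction rests.
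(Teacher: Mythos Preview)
Your inductive argument matches the paper's proof exactly: base case via condition~\ref{symmetry-add-inverse-pairs}, and inductive step by taking a downward production $a\to c$ with $c\in N_{i-1}$, applying the hypothesis to $c\in R_{i-1}$, and converting $c+c+\inv{c}$ into $a+a+\inv{a}$ using \ref{symmetry-reverse-nton} and \ref{symmetry-production-inverse}.

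The obstacle you flag is real, and the paper glosses over it as well (it simply asserts ``we know that there is some $a'\in R_{i-1}$\ldots'' without justification). However, your proposed resolution does not work: condition~\ref{symmetry-rhs} can be satisfied by a purely upward production $a\to c$ with $c\in N_{i+1}$, since then $c\in L(a)$ directly, with no ``cascade down to level~$0$'' at all. Combining this with~\ref{symmetry-reverse-nton} only yields $c\to a$, which is still not a downward production out of $a$. Concretely, one can build a bidirected $2$-grammar satisfying all of \ref{symmetry-production-inverse}--\ref{symmetry-add-inverse-pairs} in which some $a\in R_1$ has $a\to d$ (with $d\in N_2$) as its sole production; then $a\derivs[1]\bu$ forces $\bu\in\{a,d\}$, so $a\derivs[1] a+a+\inv{a}$ fails. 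Thus the lemma is not a consequence of the abstract axioms alone. It \emph{does} hold for the grammars actually produced by the reduction in \cref{bireach-to-grammars}: there, every $(p,r^x,q)$ in $R$ at level $\ge 1$ enters the admissible set only because some $(p,s^y,q)$ with $s^y$ below $r^x$ is already realizable, which instantiates the downward production~(C1). So the ``delicate part'' you isolate is a property of the construction, not of the bidirectedness axioms.
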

\begin{proof}
	We proceed by induction on $i$. For $i=0$, this is one of the
	conditions of bidirectedness. For $i>0$ and $a\in R_i$, we know that
	there is some $a'\in R_{i-1}$ such that there
	are productions $a\to a'$ and $a'\to a$. By
	induction, we have $a'\derivs[i-1] a'+a'+\inv{a'}$. Moreover, since
	there is a production $\inv{a'}\to\inv{a}$ and $a\in R_i$ guarantees a
	production $a'\to a$, we obtain $a\derivs[i] a+a+\inv{a}$.
\end{proof}

\begin{lemma}\label{derive-equivalent-sum}
	If $G$ is $i$-bidirected and $a\in R_i$ with $a\derivs[i]\bu'$ for some $\bu,\bu'\in\N^{R_{[i,k]}}+\Z^{T_{[i,k]}}$ with $\bu'\approx_a\bu$, then there is a $\bd\in\Delta_a$ with $a\derivs[i]\bu+\bd$.
\end{lemma}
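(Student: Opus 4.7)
My plan is to reduce the statement to an augmentation claim about derivations, then exploit bidirectedness to realize the augmentation along the $\nreachs$-chain witnessing $a\nreachs b$.

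First, I unpack $\bu'\approx_a\bu$: by definition there exist $\bd_1,\bd_2\in\Delta_a$ with $\bu'+\bd_1=\bu+\bd_2$. Hence if I can show that $a\derivs[i]\bu'+\bd_1+\bd''$ for some $\bd''\in\Delta_a$, then $a\derivs[i]\bu+(\bd_2+\bd'')$ and I may take $\bd=\bd_2+\bd''\in\Delta_a$. Thus the lemma reduces to the following augmentation claim: for any $\bw\in\N^{R_{[i,k]}}+\Z^{T_{[i,k]}}$ with $a\derivs[i]\bw$ and any $\be\in\Delta_a$, there is some $\bd''\in\Delta_a$ with $a\derivs[i]\bw+\be+\bd''$. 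Since $\Delta_a$ is generated by elements $b+\inv{b}$ with $b\in N_{[i+1,k]}$ and $a\nreachs b$, a straightforward induction on the number of generators in $\be$ reduces this further to the single-generator case: given $a\derivs[i]\bw$ and such a generator $g=b+\inv{b}$, produce $\bd''\in\Delta_a$ with $a\derivs[i]\bw+g+\bd''$.

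Second, I set up the machinery to add a single $b+\inv{b}$. By \cref{bidirected-add-da}, $a\derivs[i]a+a+\inv{a}$, and applying the assumed derivation $a\derivs[i]\bw$ to one copy of $a$ gives $a\derivs[i]\bw+a+\inv{a}$. It therefore suffices to show $a+\inv{a}\derivs[i]b+\inv{b}+\bd''$ for some $\bd''\in\Delta_a$. I do this by walking along the chain $a=c_0\nreach c_1\nreach\cdots\nreach c_n=b$ witnessing $a\nreachs b$, using the fact (easy from the definition of $\nreach$) that levels are non-decreasing along the chain and every $c_\ell$ lies in $R$. At a step $c_\ell\deriv[j]\bv$ with $\bv(c_{\ell+1})\ge 1$ at some level $j\ge i$, the involution property \ref{symmetry-production-inverse} (propagated to derivations by the same induction on the chain) yields a parallel step $\inv{c_\ell}\deriv[j]\inv{\bv}$, so we can mirror each step on the $\inv{\cdot}$ side.

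Third, I control the junk that accumulates. Every time a level step produces a vector $\bv+\inv{\bv}$, decomposing over involution pairs yields $\sum_c(\bv(c)+\bv(\inv c))(c+\inv{c})$; pairs with $c\in N_{[i+1,k]}$ lie in $\Delta_a$ (since $a\nreachs c$ by transitivity along the chain), while pairs with $c\in N_i$ must be cleaned up. For those residual level-$i$ pairs $c+\inv{c}$, property \ref{symmetry-rhs} gives some $\bv_c\in L(c)\subseteq\N^{N_{[i+1,k]}}+\Z^{T_{[i+1,k]}}$, and by involution $\inv{\bv_c}\in L(\inv{c})$, so we derive $c+\inv{c}\derivs[i]\bv_c+\inv{\bv_c}$, which is now entirely in $\Delta_a$. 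The entire accumulated junk therefore lies in $\Delta_a$, and iterating along the chain produces $b+\inv{b}$ together with a total junk term $\bd''\in\Delta_a$.

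The main obstacle is the bookkeeping when the chain $a\nreachs b$ genuinely crosses levels: a step $c_\ell\deriv[j]\bv$ at level $j>i$ cannot be embedded verbatim into a level-$i$ derivation. I expect to resolve this by an outer induction on the level of $b$ (equivalently, on the chain height): once the derivation has brought us to a state containing $c_\ell+\inv{c_\ell}$ with $c_\ell\in N_j$, the inductive hypothesis applied at level $j$ (using $j$-bidirectedness, which holds by \cref{level-bidirectedness}) lets us promote $c_\ell+\inv{c_\ell}$ to $c_{\ell+1}+\inv{c_{\ell+1}}$ modulo junk in $\Delta_{c_\ell}\subseteq\Delta_a$. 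Making this nested induction precise, and verifying that the junk collected at each level always stays in $\Delta_a$ (rather than some larger monoid), is the technical heart of the argument.
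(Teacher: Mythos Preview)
Your reduction in the first two paragraphs is sound and matches the paper's strategy: unpack $\approx_a$, reduce to augmenting a derivation by a single generator $b+\inv b$, and use \cref{bidirected-add-da} to spawn a fresh $a+\inv a$. The third paragraph's cleanup of residual level-$i$ pairs via \ref{symmetry-rhs} is also correct.

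The genuine gap is in your treatment of the cross-level case. You propose to invoke $j$-bidirectedness for $j>i$ via \cref{level-bidirectedness}, but this is \emph{circular}: the proof of \cref{level-bidirectedness} in the paper uses \cref{derive-equivalent-sum} (at level $i-1$) as a subroutine. Even setting circularity aside, your ``promotion'' of $c_\ell+\inv{c_\ell}$ to $c_{\ell+1}+\inv{c_{\ell+1}}$ at level $j>i$ would be a $\derivs[j]$-derivation, which simply cannot occur inside a $\derivs[i]$-derivation: level-$i$ steps only rewrite level-$i$ nonterminals, so once a symbol sits at level $j>i$ it is frozen. The paper sidesteps the whole chain-walking programme by using the \emph{hypothesis} of the lemma directly: since $G$ is $i$-bidirected, for $b\in R_i$ with $a\nreachs b$ one has $a\derivs[i] b+\bv$ and, by $i$-bidirectedness, $b\derivs[i] a+\bv'$ with $\bv'\approx_a\inv\bv$; combined with $b\derivs[i]b+b+\inv b$ this yields $a\derivs[i]a+b+\inv b+(\bv+\bv')$, and $\bv+\bv'$ is (after cleaning up its level-$i$ pairs as in your third paragraph) an element of $\Delta_a$. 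You never exploit $i$-bidirectedness itself---only the weaker bidirectedness axioms---and that is precisely what forces you into the cross-level morass. For the remaining step (producing generators of $\Delta_a$, which live at level $i{+}1$), note that such a generator $e+\inv e$ is reached from some $c\in R_i$ by a single level-$i$ step $c\deriv[i]\bw$ with $\bw(e)\ge 1$, so $c+\inv c\derivs[i]\bw+\inv\bw\ni e+\inv e$ plus junk in $\Delta_a$; no higher-level machinery is needed.
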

\begin{proof}
	First, we claim that for any $a,b\in R_i$ with $a\nreachs b$, there
	exists a $\bd\in\Delta_a$ such that $a\derivs[i]
	a+b+\inv{b}+\bd$. If $a\nreachs b$, there is a
	$\bv\in\N^{R_{[i,k]}}+\Z^{T_{[i,k]}}$ with $a\derivs[i]b+\bv$ such that
	$a\deriv[i]b+\bv$. Since $G$ is $i$-bidirected, we have $b\derivs[i]
	a+\bv'$ for some $\bv'\approx_a\inv{\bv}$.  By
	\cref{bidirected-add-da}, we have $b\derivs[i]b+b+\inv{b}$ and thus
	\[ a\derivs[i]b+\bv\derivs[i]b+b+\inv{b}+\bv\derivs[i]a+\bv+\bv'+b+\inv{b}. \]
	Since $\bv'\approx_a\inv{\bv}$, we have $\bv+\bv'\in\Delta_a$. This proves our claim.

	Now consider $a\derivs[i]\bu'$. Since $\bu'\approx_a\bu$, we know that
	there exist $\bd,\bd'\in\Delta_a$ with $\bu+\bd=\bu'+\bd'$.
	By our claim, we know that there is a $\bd''\in\Delta_a$ such that
	$a\derivs[i]a+\bd'+\bd''$.  Thus, we have
	$a\derivs[i]a+\bd'+\bd''\derivs[i]\bu'+\bd'+\bd''=\bu+\bd''$.
\end{proof}

The following lemma tells us that $i$-bidirectedness can be checked as a property
of each derivation step.
\begin{restatable}{lemma}{stepBidirectedness}\label{step-bidirectedness}
	If $G$ is bidirected, then the following are equivalent:
	\begin{enumerate}
	\item $G$ is $i$-bidirected.
	\item for
	every $a\in R_i$ and every step $a\deriv[i]b+\bv$ with $b\in R_i$ and
	$\bv\in\N^{R_i}+\Z^T$, there is a $\bv'\in\N^{R_i}+\Z^T$ with
	$b\derivs[i]a+\bv'$ and $\bv'\approx_a \inv{\bv}$.
	\end{enumerate}
\end{restatable}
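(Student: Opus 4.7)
The proof will establish the two directions of the biconditional independently.

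For the direction $(1) \Rightarrow (2)$, I would specialize $i$-bidirectedness to a length-one derivation. A single step $a \deriv[i] b + \bv$ with $b \in R_i$ and $\bv \in \N^{R_i} + \Z^T$ is, in particular, a derivation $a \derivs[i] b + \bv$ of length one with $\bu := b \in \N^{R_i} \setminus \{\bzero\}$ and $\bv \in \N^{R_{[i,k]}} + \Z^{T_{[i,k]}}$. Applying $(1)$ yields $b \derivs[i] a + \bv''$ with $\bv'' \approx_a \inv{\bv}$. Since $\inv{\bv}$ already lies in $\N^{R_i} + \Z^T$, the discrepancy between $\bv''$ and $\inv{\bv}$ sits in the group generated by $\Delta_a \subseteq \N^{R_{[i+1,k]}}$. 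I would then further extend the derivation from $a + \bv''$ to absorb the higher-level residue, using \cref{bidirected-add-da} and the compatible inverse derivations afforded by~\ref{symmetry-production-inverse}, producing a witness $\bv' \in \N^{R_i} + \Z^T$ with $b \derivs[i] a + \bv'$ and $\bv' \approx_a \inv{\bv}$.

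For the direction $(2) \Rightarrow (1)$, I would induct on the length $n$ of the derivation $a \derivs[i] \bu + \bv$ (with $a \in R_i$, $\bu \in \N^{R_i} \setminus \{\bzero\}$, $\bv \in \N^{R_{[i,k]}} + \Z^{T_{[i,k]}}$). The base case $n = 0$ forces $\bu = a$ and $\bv = \bzero$, and the empty derivation witnesses the claim with $\bv' = \bzero$. For $n \geq 1$, decompose the derivation as $a \deriv[i] \bw$ followed by an $(n-1)$-step derivation $\bw \derivs[i] \bu + \bv$. The first step must use a production $a \to a'$ with $a' \in N_{i-1}$ and $\bw \in L(a')$; the alternative $a \to b$ with $b \in N_{i+1}$ would give $\bw = b$ and force $\bu = \bzero$. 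Because level-$i$ steps only rewrite level-$i$ nonterminals, the higher-level portion $\bw_{>i}$ of $\bw$ passes verbatim into $\bv$. Writing $\bw_{=i} = b_1 + \cdots + b_r \in \N^{R_i}$, the remaining derivation factors per occurrence into sub-derivations $b_s \derivs[i] \bu_s + \bv_s$ of length at most $n-1$, with $\bu = \sum_s \bu_s$ and $\bv - \bw_{>i} = \sum_s \bv_s$. Since $a \nreachs b_s$ holds for each $s$, the groups $\Delta_{b_s}$ are contained in $\Delta_a$, so $\approx_{b_s}$ refines $\approx_a$.

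For indices $s$ with $\bu_s \neq \bzero$, the inductive hypothesis gives $\bu_s \derivs[i] b_s + \bv'_s$ with $\bv'_s \approx_a \inv{\bv_s}$. For indices $s$ with $\bu_s = \bzero$, I would instead use \cref{bidirected-add-da} to spawn $b_s + \inv{b_s}$ pairs and exploit condition~\ref{symmetry-production-inverse} to derive $\inv{b_s} \derivs[i] \inv{\bv_s}$, which contributes only to the admissible $\Delta_a$-slack. Composing all sub-derivations yields a derivation $\bu \derivs[i] \bw_{=i} + \by$ for some $\by \approx_a \inv{\bv - \bw_{>i}}$. It then remains to reverse the first step $a \deriv[i] \bw$: if $\bw \in \N^{R_i} + \Z^T$ (a flat step), condition $(2)$ applies directly; otherwise I would combine the inverse production $\inv{a} \to \inv{a'}$ from~\ref{symmetry-production-inverse} with the cross-level reverse $a' \to a$ from~\ref{symmetry-reverse-nton} and the sub-derivations already constructed for the components of $\bw_{>i}$ to obtain the missing reversal from $\bw$ back to $a$.

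The main obstacle will be the inductive step, in particular reversing the first step when it is non-flat and handling the indices $s$ with $\bu_s = \bzero$ consistently. Both difficulties are controlled by the fact that every inverse pair $c + \inv{c}$ invoked along the way satisfies $a \nreachs c$, so that all accumulated discrepancies remain inside $\Delta_a$, keeping the final witness $\bv'$ within $\approx_a$-equivalence of $\inv{\bv}$ as required.
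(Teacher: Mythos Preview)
Your implication $(2)\Rightarrow(1)$ has a genuine gap. After the per-occurrence split $\bw_{=i}=b_1+\cdots+b_r$ with sub-derivations $b_s\derivs[i]\bu_s+\bv_s$, you are inevitably left with level-$i$ nonterminal pairs that your argument does not dispose of. Concretely: applying~(2) to the first step $a\deriv[i]b_1+(\bw-b_1)$ yields $b_1\derivs[i]a+\bz$ with $\bz\approx_a\inv{b_2}+\cdots+\inv{b_r}+\inv{\bw_{>i}}$, and the inductive hypothesis supplies $\bu_s\derivs[i]b_s+\bv'_s$ for the nonzero $\bu_s$. Composing, the residue contains $b_s+\inv{b_s}$ for each $s\ge 2$. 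These pairs are \emph{not} in $\Delta_a$, since by definition $\Delta_a\subseteq\N^{N_{[i+1,k]}}$ while $b_s\in N_i$; so your closing claim that ``all accumulated discrepancies remain inside $\Delta_a$'' is false. The $\bu_s=\bzero$ case is worse: \cref{bidirected-add-da} only gives $c\derivs[i]c+c+\inv{c}$ for the \emph{same} $c$, not $c\derivs[i]c+b_s+\inv{b_s}$ for a different $b_s$ (the lemma that does cross-spawning already presupposes $i$-bidirectedness). And your assertion that $\inv{b_s}\derivs[i]\inv{\bv_s}$ ``contributes only to the admissible $\Delta_a$-slack'' is wrong: $\inv{\bv_s}$ is an arbitrary vector, not a sum of inverse pairs.

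The paper avoids the per-occurrence split entirely. It fixes one $b$ in $\bu$, writes $\bu=b+\bu'$, and follows the chain $a=a_0,a_1,\ldots,a_n=b$ of rewritten nonterminals leading to this occurrence, so that $a_j\deriv[i]a_{j+1}+\bx_{j+1}$ and $\bx_1+\cdots+\bx_n\derivs[i]\bu'+\bv$. Each chain-step is reversed via~(2) and \cref{derive-equivalent-sum}, giving $b\derivs[i]a+\inv{\bx_1}+\cdots+\inv{\bx_n}+\bd$ with $\bd\in\Delta_a$; condition~\ref{symmetry-production-inverse} then turns the forward tail into $\inv{\bx_1}+\cdots+\inv{\bx_n}\derivs[i]\inv{\bu'}+\inv{\bv}$, whence $\bu=b+\bu'\derivs[i]a+\bu'+\inv{\bu'}+\inv{\bv}+\bd$. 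The step you are missing is precisely how the level-$i$ block $\bu'+\inv{\bu'}$ is eliminated: by~\ref{symmetry-rhs} each $c$ in $\bu'$ has $L(c)\ne\emptyset$, so $c\derivs[i]\be_c$ with $\be_c\in\N^{N_{[i+1,k]}}+\Z^{T_{[i+1,k]}}$, and by~\ref{symmetry-production-inverse} also $\inv{c}\derivs[i]\inv{\be_c}$; only now does the resulting $\be_c+\inv{\be_c}$ genuinely lie in $\Delta_a$. Your sketch never invokes~\ref{symmetry-rhs}, and without it the level-$i$ residue cannot be absorbed into $\approx_a$.
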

\begin{proof}
	Clearly, if $G$ is $i$-bidirected, then the second condition is satisfied, so let us show the converse. 
	First, observe that if the second condition holds, then $\nreachs$ is
	symmetric as a relation on $R_i$: Indeed, if $a\nreach b$, then the condition implies that
	$b\nreachs a$. 

	Now suppose $a\derivs[i]\bu+\bv$ with $\bu\in\N^{R_i}$
	and $\bv\in\N^{R_{[i,k]}}+\Z^{T_{[i,k]}}$. We pick one letter $b\in
	R_i$ occurring in $\bu$ and write $\bu=b+\bu'$ for some $\bu'\in
	\N^{R_i}$. In the derivation $a\derivs[i]\bu+\bv$, we choose the chain $a_0,\ldots,a_n$ of nonterminals that leads to $b$, i.e. $a_0=a$, $a_n=b$ and such that $a_{j+1}$ is created by replacing $a_{j}$ for $j\in[0,n-1]$. Hence, there are vectors 
	$\bx_1,\ldots,\bx_n\in\N^{R_{[i,k]}}+\Z^{T_{[i,k]}}$ so that we use the productions $a_j\deriv[i]a_{j+1}+\bx_{j+1}$ for $j\in[0,n-1]$ to obtain $b$. Then, we have $a_0\deriv[i]a_1+\bx_1\deriv[i]a_2+\bx_2+\bx_1\deriv[i]\cdots\deriv[i]a_n+\bx_n+\cdots+\bx_1$ and also $\bx_1+\cdots+\bx_n\derivs[i] \bu'+\bv$.
	By the second condition and \cref{derive-equivalent-sum}, for each $j\in[1,n]$, there exists a $\bd_j\in\Delta_a$ such that $a_j\derivs[i] a_{j-1}+\inv{\bx_j}+\bd_j$. Putting these together, we obtain
	\begin{align*}
	\begin{split}
	b=a_n &\derivs[i] a_0+\inv{\bx_1}+\bd_1+\cdots+\inv{\bx_n}+\bd_n \\ &=a+\inv{\bx_1}+\cdots+\inv{\bx_n}+\bd,
	\end{split}
	\end{align*}
	where we define $\bd=\bd_1+\cdots+\bd_n$.
	Since $\bx_1+\cdots+\bx_n\derivs[i]\bu'+\bv$, we therefore have
	$b\derivs[i] a+\inv{(\bu'+\bv)}+\bd$ and thus
	\[ \bu=b+\bu'\derivs[i] a+\inv{(\bu'+\bv)}+\bd+\bu'=a+\bu'+\inv{\bu'}+\inv{\bv}+\bd. \]
	Finally, since $\bu\in\N^{R_i}$ and thus $L(c)\ne\emptyset$ for every $c$ occurring in $\bu$ (and also $a\nreachs c$), we know that there is some $\be\in\N^{R_{[i+1,k]}}+\Z^{T_{[i+1,k]}}$ with $\bu\derivs[i]\be$. But then we have $\bu+\inv{\bu}\derivs[i]\be+\inv{\be}$, where $\be+\inv{\be}\in\Delta_a$. Thus, we have $\bu\derivs[i] a+\be+\inv{\be}+\inv{\bv}+\bd$.
	Now with $\bv'=\inv{\bv}+\be+\inv{\be}+\bd$, we clearly have $\bv'\approx_a\inv{\bv}$. Thus, $G$ is $i$-bidirected.
\end{proof}

\levelBidirectedness*
\begin{proof}
	We proceed by induction on $i$, so suppose $G$ is $(i-1)$-bidirected.

	According to \cref{step-bidirectedness}, it suffices to show that for
	$a\deriv[i]b+\bv$ with $a,b\in R_i$ and
	$\bv\in\N^{R_{[i,k]}}+\Z^{T}$, there exists a
	$\bv'\approx_a\inv{\bv}$ with $b\derivs[i] a+\bv'$. If
	$a\deriv[i]b+\bv$, then this is due to a production $a\to \tilde{a}$
	for some $\tilde{a}\in R_{i-1}$, a derivation $\tilde{a}\derivs[i-1]
	\tilde{b}+\bv$, and a production $\tilde{b}\to b$. 
	Since $G$ is
	$(i-1)$-bidirected, there exists a $\tilde{\bv}\approx_{\tilde{a}}\inv{\bv}$
	and a derivation $\tilde{b}\derivs[i-1] \tilde{a}+\tilde{\bv}$. Since
	$\tilde{a}+\tilde{\bv}\approx_{\tilde{a}}\tilde{a}+\inv{\bv}$, we know from
	\cref{derive-equivalent-sum} that we can derive
	$\tilde{b}\derivs[i]\tilde{a}+\inv{\bv}+\tilde{\bd}$ for some
	$\tilde{\bd}\in\Delta_{\tilde{a}}$. Moreover, since
	$\tilde{\bd}\in\Delta_{\tilde{a}}$, there is a $\bd\in\Delta_a$
	with $\tilde{\bd}\derivs[i-1]\bd$. Hence, we have
	$\tilde{b}\derivs[i-1]\tilde{a}+\inv{\bv}+\bd$.

	Note that since $a,b\in R_i$, we also have productions $b\to \tilde{b}$
	and $\tilde{a}\to a$, so that we also have
	$b\deriv[i]a+\inv{\bv}+\bd$ and thus $\bv'=\inv{\bv}+\bd$ is as desired.
\end{proof}

\subsection{Expressing emptiness in terms of cosets}

\begin{lemma}\label{bidirected-add-da-garbage}
	If $G=(N,T,P)$ is $i$-bidirected, and $a \in R_i$ and $b\in R_i$ with $a\nreachs b$. Then there is some $\bd\in\Delta_a$ such that $a\derivs[i] a+b+\inv{b}+\bd$.
\end{lemma}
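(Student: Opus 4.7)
The strategy is to start from the derivation provided by \cref{bidirected-add-da}, namely $a \derivs[i] a + a + \inv{a}$, and then rewrite the trailing $a + \inv{a}$ into $b + \inv{b} + \bd$ with $\bd \in \Delta_a$, using further level-$i$ derivations.

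\textbf{Step 1: Connecting derivations.} Since $a, b \in R_i$ with $a \nreachs b$ and $G$ is $i$-bidirected, the same reasoning as in the proof of \cref{kirchhoff-graph} (where $\nreachs$ is symmetric on $R_i$ and chains connect pairs in $R_i$ by level-$i$ derivations) yields a derivation $a \derivs[i] b + \bw$ for some $\bw$. Applying property \ref{symmetry-production-inverse} to each production appearing in this derivation gives the mirrored derivation $\inv{a} \derivs[i] \inv{b} + \inv{\bw}$. Chaining these with \cref{bidirected-add-da},
\[
a \derivs[i] a + a + \inv{a} \derivs[i] a + b + \inv{b} + (\bw + \inv{\bw}).
\]

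\textbf{Step 2: Cleaning up $\bw + \inv{\bw}$.} Decompose $\bw = \bw_0 + \bw_+ + \bw_T$ with $\bw_0 \in \N^{R_i}$, $\bw_+ \in \N^{R_{[i+1,k]}}$, and $\bw_T \in \Z^T$. Under inversion, terminals negate, so the contribution $\bw_T + \inv{\bw_T}$ vanishes. Every nonterminal $c$ appearing in $\bw$ was produced along a level-$i$ derivation from $a$, so $a \nreachs c$; consequently $\bw_+ + \inv{\bw_+}$ is a nonnegative sum of generators of $\Delta_a$ and thus lies in $\Delta_a$.

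\textbf{Step 3: Eliminating level-$i$ pairs.} It remains to handle $\bw_0 + \inv{\bw_0} = \sum_c n_c (c + \inv{c})$, where $c$ ranges over letters of $\bw_0$. Each such $c$ is in $R_i$ (it appears on a right-hand side used in the derivation) and satisfies $a \nreachs c$. By property \ref{symmetry-rhs}, $L(c) \ne \emptyset$, so we may pick $\bu_c \in L(c) \subseteq \N^{N_{[i+1,k]}} + \Z^{T_{[i+1,k]}}$. Then $c \derivs[i] \bu_c$, and by \ref{symmetry-production-inverse} also $\inv{c} \derivs[i] \inv{\bu_c}$, giving
\[
c + \inv{c} \derivs[i] \bu_c + \inv{\bu_c}.
\]
Since $\bu_c$ has only $N_{[i+1,k]}$-nonterminals and $T_{[i+1,k]}$-terminals, $\bu_c + \inv{\bu_c} \in \Delta_c$. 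Because $a \nreachs c$ and $\nreachs$ is symmetric on $R_i$ (from $i$-bidirectedness), we have $\Delta_c = \Delta_a$, so $\bu_c + \inv{\bu_c} \in \Delta_a$. Applying this derivation to every $c + \inv{c}$ pair termwise converts $\bw_0 + \inv{\bw_0}$ into an element of $\Delta_a$, and combining with $\bw_+ + \inv{\bw_+} \in \Delta_a$ yields $a \derivs[i] a + b + \inv{b} + \bd$ for some $\bd \in \Delta_a$, as required.

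\textbf{Main obstacle.} The only delicate point is Step~1: extracting a single level-$i$ derivation $a \derivs[i] b + \bw$ from the $\nreachs$-chain, since such chains may a priori traverse higher levels. This is handled exactly as in the proof of \cref{kirchhoff-graph}, relying on the fact that $i$-bidirectedness makes $\nreachs$ symmetric on $R_i$ and hence permits threading the connection through $R_i$ via level-$i$ derivations alone.
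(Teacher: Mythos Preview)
Your proof is correct and takes a genuinely different route from the paper's. The paper argues as follows: from $a\nreachs b$ obtain $a\derivs[i] b+\bu$; apply \cref{bidirected-add-da} to $b$ to get $b\derivs[i] b+b+\inv{b}$; use $i$-bidirectedness to reverse the first derivation, giving $b\derivs[i] a+\bu'$ with $\bu'\approx_a\inv{\bu}$; chain these to reach $a+b+\inv{b}+\bu+\bu'$ and conclude that $\bu+\bu'\in\Delta_a$. You instead apply \cref{bidirected-add-da} to $a$ first, then mirror the derivation $a\derivs[i] b+\bw$ via the involution property~\ref{symmetry-production-inverse} rather than via $i$-bidirectedness. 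Both approaches are valid. Yours has the merit that Step~3 explicitly eliminates the level-$i$ pairs $c+\inv{c}$ in $\bw_0+\inv{\bw_0}$; the paper's proof skips this, asserting $\bu+\bu'\in\Delta_a$ directly, which is literally false when $\bu$ carries level-$i$ nonterminals (though easily repaired by exactly your Step~3 argument). One remark: your ``Main obstacle'' is a misdiagnosis. By definition $c\nreach d$ requires the level of $d$ to be at least that of $c$, so any $\nreachs$-chain from $a\in R_i$ to $b\in R_i$ has \emph{all} intermediate nodes at level $i$; the level-$i$ derivation $a\derivs[i] b+\bw$ therefore follows by simply chaining the single $\deriv[i]$-steps witnessing each link, with no appeal to $i$-bidirectedness or the reasoning in \cref{kirchhoff-graph} needed at that point.
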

\begin{proof}
	Since $a\nreachs b$, there is some
	$\bu\in\N^{N_{[i,k]}}+\Z^{T_{[i,k]}}$ with $a\derivs[i] b+\bu$. Since
	$G$ is $i$-bidirected, this implies $b\derivs[i] a+\bu'$ with
	$\bu'\approx_a\inv{\bu}$. By \cref{bidirected-add-da}, we know that
	$b\derivs[i] b+b+\inv{b}$. Hence, we have $a\derivs[i] b+\bu \derivs[i]
	b+b+\inv{b}+\bu\derivs[i] a+\bu+\bu'+b+\inv{b}$. Now observe that since
	$\bu'\approx_a\inv{\bu}$, we have $\bu+\bu'\in\Delta_a$.  
\end{proof}

\bidirectedDaIa*
\begin{proof}
	Consider some $b\in R_{i+1}$ with $a\nreachs b$. Then there must be an $a'\in R_i$ with $a\nreachs a'$ and a production $a'\to b$. By \cref{bidirected-add-da},
	we have $a'\derivs[i] a'+a'+\inv{a'}$ and thus $a'\derivs[i] a'+b+\inv{b}$. Therefore, we have $b+\inv{b}\in H_a$.
\end{proof}

\bidirectedGroupMonoid*
\begin{proof}
	Since $G$ is bidirected, we know that $D_a\subseteq H_a$. Thus, the
	inclusion ``$\supseteq$'' is clear. For the the converse, it suffices
	to show that for every $b\in R_i$ with $a\nreachs b$ and
	$b\deriv[i]\bu$, the vector $-(-b+\bu)$ belongs to the right-hand side.
	We write $\bu=\bv+\bx$ with $\bv\in\N^{R_i}$ and
	$\bx\in\N^{N_{[i+1,k]}}+\Z^{T}$. Since $G$ is $i$-bidirected, we know
	that $\bv\derivs[i] b+\bx'$ such that $\bx'\in\inv{\bx}+D_b$.
	But this implies $-\bv+b+\bx'$ belongs to the monoid on the
	right-hand side. Moreover, the two elements $-(-b+\bu)$ and $-\bv+b+\bx'$ differ in \[ -(-b+\bu)-(-\bv+b+\bx')=-\bx-\bx'\in -\bx-\inv{\bx}+D_b\subseteq D_b\]
	and since $a\nreachs b$ and $G$ is $i$-bidirected, we also have $b\nreachs a$
	and thus $D_b=D_a$. Thus $-(-b+\bu)$ belongs to the right-hand side.
\end{proof}

\emptinessNotRHS*
\begin{proof}
	Since (2) is a direct consequence of the definition and previous lemmas, it remains to prove~(1).
	Suppose $L(a)\ne\emptyset$. Then there is a derivation
	$a\deriv[i]\bu_1\derivs[i]\bu$ with
	$\bu\in S_i$. In particular, the first derivation step must be due to
	some production $a\to a'$ with $\bu_1\in L(a')$. This means $a'\in
	R_{i-1}$ and thus $\bu_1\in L_{a'}$ according to
	\cref{grammars-to-groups}. Write $\bu_1=b_1+\cdots+b_n+\bv$, where $b_1,\ldots,b_n\in R_i$ and $\bv\in\N^{N_{[i+1,k]}}+\Z^{T_{[i,k]}}$.
	Since $\bu_1\derivs[i]\bu$, we know that for each $j\in[1,n]$, we have $b_j\derivs[i]\bv_j$ for some $\bv_j\in\N^{N_{[i+1,k]}}+\Z^{T_{[i,k]}}$ such that $\bu=\bv_1+\cdots+\bv_n+\bv$. This means in particular that $\bv_j\in M(b_j)$ and thus
	\begin{align*}
	\begin{split}
	S_i\ni \bu&=\bu_1 + (-b_1+\bv_1)+\cdots+(-b_n+\bv_n) \\
	&\in L_{a'}+\langle -b_1+M_{b_j}\rangle+\cdots+\langle -b_n+M_{b_n}\rangle.
	\end{split}
	\end{align*}
	
	Moreover, since $b_1+\cdots+b_n+\bv=\bu_1\in L(a')$, the nonterminals $b_1,\ldots,b_n$ satisfy $a'\nreachs b_j$ for $j\in[1,n]$. Thus, $\bu$ belongs to $K_{a'}$.

	Conversely, suppose $K_{a'}\ne\emptyset$ for some production $a\to a'$.
	Then $a'\in R_{i-1}$ and thus $L_{a'}=L(a')+D_{a'}$ according to
	\cref{grammars-to-groups}. Consider some $\bu\in K_{a'}$ and write
	$\bu=\bu_1+\bv_1+\cdots+\bv_n$ with $\bu_1\in L_{a'}$ and
	$\bv_j=(-1)^{\varepsilon_j}(-b_j+\bx_j)$ for some $b_j\in R_i$,
	$a'\nreachs b_j$, and $\bx_j\in M_{b_j}$.

	In order to construct a derivation, we have to eliminate those summands
	$\bv_j$ with $\varepsilon_j=1$, because they do not directly correspond
	to derivation steps. Hence, we define $\hat{\bv}_1,\ldots,\hat{\bv}_n$
	as follows. For $j\in[1,n]$, if $\varepsilon_j=0$, then $\hat{b}_j=b_j$
	and $\hat{\bx}_j=\bx_j$; if $\varepsilon_j=1$, then $\hat{b}_j=\inv{b_j}$ and $\hat{\bx}_j=\inv{\bx_j}$. Then $\hat{\bv}_j=-\hat{b}_j+\hat{\bx}_j$.
	Then each difference $\hat{\bv}_j-\bv_j$ is either $\mathbf{0}$ or
	$(-\inv{b}_j+\inv{\bx})-(-1)(-b_j+\bx_j)=-\inv{b_j}-b_j+\inv{\bx_j}+\bx_j$. Therefore, with $\bb=b_1+\inv{b_1}+\cdots+b_n+\inv{b_n}$ and
	$\bg=\bx_1+\inv{\bx_1}+\cdots+\bx_n+\inv{\bx_n}$, we have
\[ \bu_1+\hat{\bv}_1+\cdots+\hat{\bv}_n+\bb=\bu_1+\bv_1+\cdots+\bv_n+\bg. \]
Since $\bu_1\in L(a')+D_{a'}$ and $\bb\in\Delta_{a'}$, we have $\bu_1+\bb\in L(a')+D_{a'}$, so that \cref{bidirected-add-da-garbage} implies that we can write $\bu_1+\bb=\by-\bd$ with some $\by\in L(a')$ and $\bd\in \Delta_{a'}$. This implies
\begin{equation} \by+\hat{\bv}_1+\cdots+\hat{\bv}_n = \bu_1+\bv_1+\cdots+\bv_n+\bg+\bd. \label{a}\end{equation}
Now notice that since $\bu_1+\bb=\by-\bd$, we know that $\bb$ must occur in $\by$. In particular, $\hat{b}_1+\cdots+\hat{b}_n$ must occur in $\by$.
Hence, we have a
derivation $a'\derivs[i] \by+\hat{\bv}_1+\cdots+\hat{\bv}_n$. By
\cref{a}, we have thus derived
$\bu_1+\bv_1+\cdots+\bv_n+\bg+\bd=\bu+\bg+\bd$. Since $\bu\in S_i$
and $\bg\in\N^{N_{[i+1,k]}}\subseteq S_i$, it remains to eliminate the
level-$i$ nonterminals in $\bd$. However, since $\bd\in\N^{R_i}$, we can pick
for each $f$ occurring in $\bd$ some $\be_f\in L(e)$, $\be\in\N^{N_{[i+1,k]}}+\Z^{T_{[i+1,k]}}$. Then, if $\bd=f_1+\cdots+f_r$, then we have $a\derivs[i]\bu+\bg+\be_{f_1}+\cdots+\be_{f_r}\in\N^{N_{[i+1,k]}}+\Z^{T_{[i+1,k]}}$. We therefore have $L(a)\ne\emptyset$.
\end{proof}

\subsection{From coset circuits to linear Diophantine equations}\label{appendix-circuits-to-matrices}

\upperBoundsPExptimeExpspace*
\begin{proof}
	 We begin with~(1) and~(2). By \cref{bireach-to-grammars}, it suffices
	 to show that emptiness for bidirected $k$-grammars is decidable in
	 $\EXPTIME$ if $k$ is part of the input, and in $\compP$ for fixed $k$.

	 Given a bidirected $k$-grammar, we construct a coset circuit $C$ as
	 described in \cref{construct-coset-circuit} by alternating emptiness
	 checks for gates and building new gates.  To check gates for
	 emptiness, recall that the circuit has depth $\le ck$ for some
	 constant $c\in\N$. Thus, it remains to decide emptiness of a gate $g$
	 in exponential time, resp.\ in polynomial time for bounded depth
	 circuits.  For each gate $g$ in this coset circuit, we compute a
	 matrix representation for the coset $C(g)$. As argued above, the
	 resulting matrix $\bA\in\Z^{s\times t}$ and vector $\bb\in\Z^s$ will
	 satisfy $s,t\le(r+1)^{ck}$, where $r$ is the largest in-degree of a
	 gate in $C$. The magnitude of $\bA$ and $\bb$ is at most the magnitude
	 of the matrices in the leaves of $C$, which means the entries of $\bA$
	 and $\bb$ only require polynomially many bits.  Therefore, the size of
	 $\bA$ and $\bb$ is polynomial in $(r+1)^{ck}$.  Hence, by
	 \cref{ild-in-p}, we can decide emptiness of $C(g)$ in time polynomial
	 in $(r+1)^{ck}$.

	For (3), we proceed slightly differently.  Since now even $d$ is part of the
	input, we use \cref{bireach-to-grammars-uc-unbounded} instead of
	\cref{bireach-to-grammars}. Then, the dimensions of all vectors and the
	number of productions in the grammar are still polynomial.  Therefore,
	our coset circuit has linear depth and the matrices labeling its leaves
	have entries that require exponentially many bits.  The resulting
	matrix for each gate therefore has an exponential number of rows and
	columns and its entries require at most exponentially many bits. Thus,
	we can again apply \cref{ild-in-p} to check emptiness of the gate in
	exponential time. All together, we obtain an exponential space
	algorithm.
\end{proof}

\end{document}